\newcommand{\Asterisk}{\mathop{\scalebox{1.5}{\raisebox{-0.2ex}{$\ast$}}}}%
\newcommand{\Dom}[1]{{\color{magenta} [DH: #1]}}
\newtheorem{theorem}{Theorem}
\newtheorem{definition}[theorem]{Definition}
\newtheorem{assumption}[theorem]{Assumption}
\newtheorem{apptheorem}{Theorem}[section]
\newtheorem{applemma}[apptheorem]{Lemma}
\newtheorem{appdefinition}[apptheorem]{Definition}
\newtheorem{appcorollary}[apptheorem]{Corollary}
\newtheorem{appexample}[apptheorem]{Example}
\newtheorem{appproperty}[apptheorem]{Property}
\begin{document}
\title{In-situ benchmarking of fault-tolerant quantum circuits. I. Clifford circuits}
\author{Xiao Xiao}
\email{xiaoxiao@umd.edu}
 \affiliation{
 Joint Center for Quantum Information and Computer Science, NIST/University of Maryland,
College Park, Maryland 20742, USA.
}
\author{Dominik Hangleiter}
 \affiliation{Simons Institute for the Theory of Computing, University of California at Berkeley, Berkeley, California 94270, USA.}
  \affiliation{Institute for Theoretical Physics, ETH Z\"urich, 8093 Zürich, Switzerland.}
  
\author{Dolev Bluvstein}
\affiliation{
 Department of Physics, Harvard University, Cambridge, Massachusetts 02138, USA.
}
\author{Mikhail D. Lukin}
\affiliation{
 Department of Physics, Harvard University, Cambridge, Massachusetts 02138, USA.
}
\author{Michael J. Gullans}
\email{mgullans@umd.edu}
 \affiliation{
 Joint Center for Quantum Information and Computer Science, NIST/University of Maryland,
College Park, Maryland 20742, USA.
}
\affiliation{National Institute of Standards and Technology, Gaithersburg, MD 20899, USA.}
\date{\today}
\begin{abstract}
Benchmarking physical devices and verifying logical algorithms are important tasks for scalable fault-tolerant quantum computing. Numerous protocols exist for benchmarking devices before running actual algorithms. In this work, we show that both physical and logical errors of fault-tolerant circuits can even be characterized in-situ using syndrome data. To achieve this, we map general fault-tolerant Clifford circuits to subsystem codes using the spacetime code formalism and develop a scheme for estimating Pauli noise in Clifford circuits using syndrome data. We give necessary and sufficient conditions for the learnability of physical and logical noise from given syndrome data, and show that we can accurately predict logical fidelities from the same data. Importantly, our approach requires only a polynomial sample size, even when the logical error rate is exponentially suppressed by the code distance, and thus gives an exponential advantage against methods that use only logical data such as direct fidelity estimation. We demonstrate the practical applicability of our methods in various scenarios using synthetic data as well as the experimental data from a recent demonstration of fault-tolerant circuits by Bluvstein \textit{et al.} [Nature 626,
7997 (2024)]. 
Our methods provide an efficient, in-situ way of characterizing a fault-tolerant quantum computer to help gate calibration, improve decoding accuracy, and verify logical circuits.
\end{abstract}

\maketitle

\section{Introduction}
Quantum error correction (QEC) is crucial for preserving quantum information in a noisy environment using redundant encodings.
Recent experiments have demonstrated promising results \cite{bluvstein2024logical, acharya2024quantum,krinner2022realizing,ryan2024high,da2024demonstration,mayer2024benchmarking} that realize error-corrected quantum memory and computations. 
In these experiments, information is encoded in stabilizer codes. 
In such codes, certain Pauli operators are measured and the resulting bits of information known as the \emph{syndrome} are decoded to infer the (Pauli) errors that occurred. These errors can then be corrected.

When designing fault-tolerant circuits and developing decoders, the precise physical error model can have a significant impact on the performance of the fault-tolerant computation. 
Typically, the model is assumed to be simple, single-qubit depolarizing noise \cite{fowler2009high,fowler2012proof,chamberland2018flag,bravyi2024high}. 
However, this is an idealized model that does not adequately capture the scenarios present in different physical platforms. 
Understanding and learning error mechanisms and error rates occurring in a physical system is therefore critical in optimizing the performance of FTCs, specifically, for gate tune-up, code and circuit design, and improving decoders. 
A second, closely related challenge is to assess the quality of the outcome of an FTC in the presence of said physical noise channels. 
This is particularly important in the near term, when the accessible code distances limit the suppression of the logical error rate.

In this work, we present a practical and scalable method for learning physical and logical Pauli error rates of fault-tolerant computations that only uses the naturally available syndrome data. We show that the logical error rate of fault-tolerant computations can be learned using a number of samples that scales polynomially in the distance of the code. 
In contrast, direct methods that only use logical data require an exponential number of samples. 
We demonstrate the practicality of our methods by applying them to experimental data from the recent demonstration of fault-tolerant computation~\cite{bluvstein2024logical}.

To achieve this, we develop a new systematic framework for learning physically motivated Pauli errors in general subsystem codes, by further developing the ideas of \textcite{wagner2022pauli, wagner2023learning} which lay the groundwork for our method.
%
We then apply this framework to general, non-adaptive fault-tolerant Clifford circuits using the circuit-to-code mapping \cite{bacon2017sparse, gottesman2022opportunities, delfosse2023spacetime}. 
This mapping converts the problem of correcting errors in Clifford circuits into correcting errors in subsystem codes.

Our work provides a general framework for in-situ characterization of both physical and logical noise in fault-tolerant quantum computations, using syndrome data. 
These problems include fault-tolerant quantum memory experiments as studied in this work, but also magic-state distillation, and classically hard logical circuits such as Clifford operations augmented with magic-state inputs which we study in Part II \cite{XiaoWorkInProgress2026}.
Thus, it enables benchmarking and verifying circuits and algorithms using the natively available syndrome data.


\subsection{Contributions}

Our concrete contributions are as follows. For the noise model, we give a necessary and sufficient condition regarding the learnability of the total Pauli channel as well as individual Pauli channels. 
We prove for general stabilizer code/Clifford circuits that after grouping errors into syndrome classes, the total error rate of each class can be accurately learned with negligible second-order correction. 
Moreover, we show that under circuit-level Pauli noise, if the circuit is fault-tolerant with respect to the assumed noise model, then the logical error rates are learnable solely from the circuit’s syndrome data, enabling efficient estimation of the circuit’s logical fidelity.

We also analyze the sample complexity, under the local sparse noise model assumption, for learning both physical- and logical-level noise from syndrome data as a function of the circuit’s spacetime volume. For logical circuits corresponding to quantum low-density parity-check (qLDPC) codes, we prove that in the low physical-error-rate regime, the required sample complexity for physical error rate learning becomes constant. Finally, we show that in the below-threshold regime, learning the logical error rate from syndrome data achieves an exponential sample-complexity advantage over direct logical-level measurements.

We first demonstrate the efficacy and scalability of our noise-learning algorithm using numerical benchmarks on simulated data. We then demonstrate its practicality by applying it to the experimental data of Ref.~\cite{bluvstein2024logical} where a fault-tolerant protocol for logical GHZ state preparation was demonstrated. 
In particular, we show that the estimated logical error rates are consistent with independently obtained experimental results.

\subsection{Significance}

Our work studies and verifies the effectiveness of the physically motivated Pauli error model in the context of fault-tolerant logical circuits; taking advantage of readily available data from near-term experiments on fault-tolerance. Within this error model, we prove an exponential sampling advantage for learning the logical error rate from syndrome data in a regime of low error rates.
Finally, our work solves an important practical problem by providing a systematic framework for estimating logical and physical noise rates with improved accuracy using small amounts of syndrome data, without the need for independent logical measurements. 
In part II of this work, we show that these ideas can be extended to efficiently verifying and benchmarking classically hard circuits from syndrome data \cite{XiaoWorkInProgress2026}.

\subsection{Related work}

Earlier works attempted to solve the physical noise learning task from syndrome for specific codes and error detection/correction circuits \cite{fowler2014scalable, spitz2018adaptive,google2021exponential}, benchmarking circuits with a single parameter using detector data \cite{hesner2024using}, or for general Pauli noise but in a heuristic way \cite{huo2017learning}. More recently, a series of works \cite{takou2025estimating,remm2025experimentally,blume2025estimating} generalized the formalism of Ref.~\cite{spitz2018adaptive} to other codes, e.g., when the decoding graph can be a hypergraph. However, these works focused on a restricted Pauli error model commonly used in decoding graphs, often referred to as the detector error model. The model typically assumes that each error occurs independently, also known as the inclusive error model (see the discussion in Appendix E of \cite{chao2020optimization}). However, a more natural and general model is to assign each noisy operation a local Pauli channel, where Pauli errors occur exclusively. Moreover, \cite{remm2025experimentally,blume2025estimating} provide analytical solutions to the general detector-error-model learning problem, but they do not address the balance between model complexity and shot noise. In practice, purely analytical solutions can overfit when syndrome data are limited, leading to degraded learning precision.

In a parallel series of works, \textcite{wagner2021optimal, wagner2022pauli, wagner2023learning} rigorously investigated the sufficient conditions under which physical and logical error rates can be extracted from the syndrome of general stabilizer codes in the channel coding setting under a more general and natural Pauli noise model.
However, the identifying conditions are not satisfied in many cases, and no efficient algorithms for estimating noise based on this framework have been developed for the general case. Our work builds upon the formalism developed in those works.

Finally, the above works focus on learning error rates from the syndromes produced during rounds of error correction. None of the prior literature focuses on the problem of learning physical- or logical-level Pauli noise in general fault-tolerant circuits encountered in an algorithm using syndrome data. As a result, a systematic, broadly applicable framework for exploiting syndrome information throughout a logical circuit or algorithm has been lacking prior to our work.

\subsection{Overview}

This paper is structured as follows. In \cref{section: model}, we introduce the Pauli noise model. 
In \cref{section: static code}, we apply the formalism to the static-code setting, where a single round of syndrome information is extracted. In this section, we present necessary and sufficient conditions for the learnability of physical Pauli noise and the sample complexity to achieve the learning task to a desired accuracy. 
In \cref{section: circuit level}, we apply these results to the problem of learning the noise in fault-tolerant circuits by using the mapping of those circuits to the spacetime code. We demonstrate the effectiveness of our method using synthetic and experimental data. 
Finally, in \cref{section: logical error rate and fidelity}, we consider the problem of learning logical error rates in static codes and fault-tolerant computations. We give learnability conditions, demonstrate the method, and show analytically and numerically that a polynomial number of samples is sufficient to learn the logical error rates in circuits that correspond to qLDPC spacetime codes under the local sparse Pauli error model. 
We conclude with a discussion. All proofs and details are deferred to the appendices. 

\tableofcontents

\section{Model}\label{section: model}

We denote the Pauli group without phases over $n$ qubits as $\mathcal{P}_n=\{I, X, Y, Z\}^{\otimes n}$ and focus on learning a  Pauli error channel 
\begin{equation}
    \mathcal{N}(\rho)=\sum_{e\in \mathcal{P}_n} P(e)\,e\rho e^{\dagger},
\end{equation}
acting on an $n$-qubit state $\rho$. 
Here, $P(e)$ are the error rates of the Pauli errors $e\in\mathcal{P}_n$. 
In the most generic case, learning cannot be efficiently possible as there are $4^{n}-1$ independent parameters that need to be specified.
This is why in practice we focus on a \emph{local} error model in which high-weight errors are given by products of individual low-weight errors. 

To be more precise, we define $\Gamma$ as a set of supports such that each $\gamma\in\Gamma$ is a subset of qubits on which a Pauli channel $\mathcal{N}_{\gamma}$ acts non-trivially, i.e.,
\begin{equation}
    \mathcal{N}_{\gamma}(\rho) = \sum_{e\in \mathcal{P}_{n}|\text{supp}(e)\subset\gamma} P_{\gamma}(e) e\rho e^{\dagger}. 
\end{equation}
Here, $P_{\gamma}(e)$ are the error rates of the individual channels satisfying the local condition $P_{\gamma}(e)=0$ if $\text{supp}(e)\not\subset \gamma$. 
The total Pauli noise channel, now denoted as $\mathcal{N}_{\Gamma}$, is the composition of these individual Pauli error channels and acts as 
\begin{equation}
    \mathcal{N}_{\Gamma}(\rho)=\circ_{\gamma\in\Gamma}\mathcal{N}_{\gamma}(\rho).\label{equation: channel composition}
\end{equation}
Since all Pauli channels commute with each other, the order of the composition does not matter here.
An example of this Pauli noise model is illustrated in \cref{fig: conv_factor_graph}. 
Since in general $\Gamma$ could contain overlapping supports, the resulting probability distribution
\begin{equation}
    P(e)=\Asterisk_{\gamma \in \Gamma}P_{\gamma}(e),
    \label{equation: convolution factor}
\end{equation}
of the total Pauli channel $\mathcal{N}_{\Gamma}$ is the convolution over all individual error distributions $P_{\gamma}(e)$, denoted by $\Asterisk$. The convolution over the Pauli group is defined by $P_{\gamma_1}\ast P_{\gamma_2}(e)=\sum_{e'\in\mathcal{P}_{n}}P_{\gamma_1}(e')P_{\gamma_2}(e'e)$. 
The structure of the error rate is sometimes referred to as a "convolutional factor graph"~\cite {mao2005factor, wagner2022pauli}. For analytical convenience, we make the following assumption about each individual channel.

\begin{assumption}\label{assumption: open set probability}
    Let a total Pauli error channel $\mathcal{N}_{\Gamma}$ be a composition of Pauli channels $\mathcal{N}_{\gamma}$ with support $\gamma\in\Gamma$. We assume for any individual channel $\mathcal{N}_{\gamma}$, we have $P_{\gamma}(I)>1/2$, and for any non-identity error $e$ in $\mathcal{N}_{\gamma}$ with $\text{supp}(e)\subseteq \gamma$, $P_{\gamma}(e)>0$.
\end{assumption}

\begin{figure}[t]
\centering
\hspace{-0.6cm}
\begin{subfigure}{0.2\textwidth}
    \phantomcaption
    \stackinset{l}{2pt}{t}{2pt}{\captiontext*}
    {\includegraphics{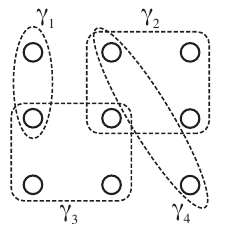}}
    \label{fig: conv_factor_graph}
\end{subfigure}
\hspace{0.5cm}
\begin{subfigure}{0.2\textwidth}
    \phantomcaption
    \stackinset{l}{2pt}{t}{2pt}{\captiontext*}
    {\includegraphics{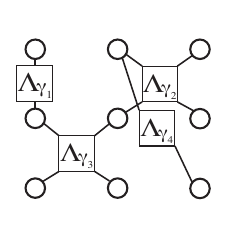}}
    \label{fig: factor_graph}
\end{subfigure}
    \caption{An example of the total Pauli error channel $\mathcal{N}_{\Gamma}$ as the composition of local Pauli channels $\mathcal{N}_{\gamma}$ (enclosed in dotted lines) acting on subsets $\gamma\in\Gamma$ of qubits (circles), where $\Gamma=\{\gamma_1,\gamma_2,\gamma_3,\gamma_4\}$. (a) The error rate $P$ is the convolution of the error rates $P_{\gamma}$ of all local error channels. (b) The factor graph of Pauli eigenvalues of the total channel $\mathcal{N}_{\Gamma}$ can be written as the product of eigenvalues $\Lambda_{\gamma}$ of the local channels after performing Walsh-Hadamard transformation of the error rates in (a).}
    \label{fig: conv_factor_graph_total}
\end{figure}

When considering sample complexity and classical computational cost, we focus on the case when individual channels are local. We emphasize that `local' here refers to the concept of $k$-locality rather than geometric locality. To be quantitative, we say that a Pauli channel $\mathcal{N}_{\Gamma}$ is a $(r_{\Gamma}, c_{\Gamma})$-Pauli channel if each local channel $\mathcal{N}_{\gamma}, \forall \gamma\in\Gamma$, acts on at most $r_{\Gamma}$ qubits and each qubit is only involved in $c_{\Gamma}$ channels. Moreover, we call $\mathcal{N}_{\Gamma}$ a \textit{local} \textit{sparse} Pauli channel if both $r_{\Gamma}$ and $c_{\Gamma}$ are $\mathcal{O}(1)$. In this setting, the noise channel $\mathcal{N}_{\Gamma}$ is fully described by at most $c_{\Gamma}4^{r_{\Gamma}}n=\mathcal{O}(n)$ parameters. In general, there are gauge degrees of freedom in the choice of local Pauli channels that give rise to a given total Pauli channel. We use two sets $\overline{\mu}$ and $\mu$ of irredundant quantities that parametrize the total channel $\mathcal{N}_{\Gamma}$ and the collection of local channels respectively, and show the gauge freedom of $\mu$ with respect to $\overline{\mu}$ in \cref{appendix: log u basis}. 

We use this model to describe circuit-level noise by using a formalism \cite{bacon2017sparse,gottesman2022opportunities,delfosse2023spacetime} that maps a Clifford circuit to a subsystem stabilizer code referred to as the spacetime code, see \cref{section: spacetime code mapping}. Qubits at each circuit layer are mapped to an additional layer of qubits in the spacetime code, and therefore each individual Pauli channel defined here should be thought of as acting on a region of the circuits in space and time. In the main text, we focus on the parametrization of the collection of local channels and the learnable quantities therein, instead of studying the total channel $\mathcal{N}_{\Gamma}$ (see \cref{subsection: syndrome class learnability}). This is because benchmarking is most naturally framed in terms of local noise processes. In the numerical part of this work, we only considered non-overlapping local channels when implementing the learning algorithm, in line with the standard local stochastic Pauli error model typically being used in the error correction and fault-tolerant literature \cite{dennis2002topological, fowler2009high, bravyi2024high}.

\section{Efficient Pauli Noise Learning on Static Code\label{section: static code}}
Our learning algorithm takes the framework of~\textcite{wagner2022pauli} as a starting point. 
They considered subsystem codes in the channel coding setting, where a single round of Pauli errors is applied to the error-correcting code followed by perfect syndrome measurement.
Their work \cite{wagner2022pauli} establishes sufficient conditions on the noise channels and code properties that guarantee the learnability of the total channel $\mathcal{N}_{\Gamma}$ from syndrome data. 
In the following subsection (\cref{sec:basic formalism}), we introduce this basic framework for learning Pauli noise. 
We then adapt this formalism and introduce our algorithm to estimate physical error rates of arbitrary stabilizer subsystem codes from syndrome data in the subsequent subsections. 
We refer to the noise learning problem in this channel coding setting as the \emph{static code} problem. We then show in \cref{section: circuit level,section: logical error rate and fidelity} how to learn \emph{circuit-level} Pauli noise in Clifford circuits by reducing it to a static code problem using the circuit-to-spacetime-code mapping. 


\subsection{Basic formalism}
\label{sec:basic formalism}

We consider a system of $n$ qubits encoded in a subsystem code, which is a generalization of a stabilizer code. 
It is defined by the gauge group $\mathcal{G}$, a non-abelian subgroup of the Pauli group $\mathcal{P}_n$. 
The gauge group is the group of operators that preserve the code space and the encoded logical information. 
Considering general subsystem codes will be important when applying our formalism to circuit-level noise when using the spacetime code mapping in \cref{section: spacetime code mapping}, since in that case there are logically trivial errors in the circuit outside of the stabilizer group of the spacetime code.

For the purpose of learning physical error rates, it suffices to consider the stabilizer group
\begin{equation}
    \mathcal{S}=Z(\mathcal{G})=\{S\in\mathcal{G}|SG=GS, \forall G \in \mathcal{G}\},
\end{equation}
which is the center $Z(\mathcal G)$ of the gauge group $\mathcal{G}$ and coincides with $\mathcal{G}$ for stabilizer codes. 
An encoded state in the code, referred to as a codeword, is stabilized by the stabilizer group, i.e.,
\begin{equation}
    S\ket{\psi}=\ket{\psi},\forall S\in\mathcal{S}.
\end{equation}
We say a stabilizer subsystem code is a $(r_s, c_s)$-quantum low-density parity-check code (qLDPC) when there exists a set of stabilizer group generators such that each generator acts on at most $r_s$ qubits, and there are at most $c_s$ generators acting on each qubit.

The measurement outcomes of stabilizer operators $S \in \mathcal S$, known as the \emph{syndrome}, provide the information needed for decoding and error correction. To obtain a complete syndrome, that is, measurement outcomes for all $S\in\mathcal{S}$, a generating set of stabilizer operators needs to be measured.
Here, we consider the generic case in which only a subset of generators $\{M_1, M_2,...\}$ is measured. 
We denote the resulting measured subgroup as $\mathcal{M}=\langle M_1,M_2,... \rangle\leq \mathcal{S}$. A Pauli error $e\in\mathcal{P}_n$ is detectable if there exists $M\in\mathcal{M}$ that anticommutes with~$e$.

The learning algorithm makes use of the fact that the Pauli eigenvalue, defined as
\begin{equation}
    \Lambda(O)=\sum_{e\in\mathcal{P}_n}[[e, O]]P(e),
\end{equation}
of the channel for all Pauli operators $O\in\mathcal{P}_n$ is a Fourier transform (Walsh-Hadamard transform) of the probability distribution $P(e)$ over the Pauli group $\mathcal{P}_n$. Here we denote $[[A,B]]= 2^{-n}\Tr[ABA^{-1}B^{-1}]$ as the scalar commutator of two Pauli operators $A$ and $B$, which takes the value $1$ if $A$ and $B$ commute and $-1$ if they anticommute. 
Based on the local structure of the total error rate shown in \cref{equation: convolution factor} and the convolution theorem, the Pauli eigenvalue $\Lambda(O)$ of the total channel $\mathcal{N}_{\Gamma}$ has a structure of a factor graph (see \cref{fig: factor_graph})
\begin{align}
    \Lambda(O)&=\widehat{P}(O)=\prod_{\gamma\in \Gamma}\widehat P_{\gamma}(O)=\prod_{\gamma\in \Gamma}\Lambda_{\gamma}(O_\gamma).
    \label{equation: E factorization}
\end{align}
Here, $O_\gamma$ is the tensor product of components of $O \in \mathcal P_n$ restricted to the support $\gamma$, and identity operators on all qubits outside of $\gamma$. 

All the factors on the right-hand side of \cref{equation: E factorization} are contained within the set of all local Pauli eigenvalues, which provides a complete description of the unknown channel $\mathcal{N}_{\Gamma}$. 
Conversely, for any $M \in \mathcal{M}$, the left-hand side can be estimated from repeated error-correction experiments (comprising noise state preparation and syndrome measurement), under
the assumption that the same error channel $\mathcal{N}_{\Gamma}$ is applied each time the codeword is reinitialized. 
To see this, given the initial state $\rho$ as a codeword stabilized by $\mathcal{S}$, we have
\begin{align}
    \langle M\rangle_{\mathcal{N}(\rho)}&=\Tr{M\mathcal{N}(\rho)}\\
    &=\sum_eP(e)\Tr{M e\rho e^{\dagger}}\\
    &=\Lambda(M)\langle M\rangle_{\rho}\\
    &=\Lambda(M),
\end{align}
for any $M\in\mathcal{M}\leq\mathcal{S}$. 
In other words, the Pauli eigenvalue $\Lambda(M)$ is the expectation value of the syndrome measurement outcomes obtained from repeated measurements followed by reinitialization.

Given the estimated Pauli eigenvalues of all $M\in\mathcal{M}$ from the syndrome measurement data, one can attempt to recover the local Pauli eigenvalues $\Lambda_\gamma(M_\gamma)$ by solving for the factors on the right-hand side in \cref{equation: E factorization}, in order to reconstruct the overall Pauli channel $\mathcal{N}_{\Gamma}$. 
\textcite{wagner2022pauli} rigorously prove conditions of learnability of $\mathcal{N}_{\Gamma}$ from syndrome data, and we state their main result (adapted to our model in \cref{section: model}) here.
\begin{theorem}[\textcite{wagner2022pauli}]
     Let a Pauli error channel $\mathcal{N}_{\Gamma}$ be a composition of local Pauli channels with support $\gamma\in\Gamma$ as defined in \cref{equation: channel composition}. Furthermore, assume $P_{\gamma}(I)>\frac{1}{2}$ for all $\gamma$. 
     Then $\mathcal{N}_{\Gamma}$ is learnable from the syndrome statistics if any union $\gamma_1 \cup \gamma_2$ of $\gamma_1,\,\gamma_2\in\Gamma$ only supports detectable errors.
    \label{theorem: wagner}
\end{theorem}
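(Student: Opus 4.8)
The plan is to prove learnability by showing that, under the stated hypothesis, every local Pauli eigenvalue $\Lambda_\gamma(O_\gamma)$ appearing on the right-hand side of \cref{equation: E factorization} can be uniquely recovered from the measurable quantities $\{\Lambda(M)\}_{M\in\mathcal M}$. First I would recall that, by \cref{assumption: open set probability} (or just $P_\gamma(I)>1/2$), every local channel $\mathcal N_\gamma$ has all its Pauli eigenvalues strictly positive: $\Lambda_\gamma(O_\gamma)\ge P_\gamma(I)-\sum_{e\ne I}P_\gamma(e) > 0$. Hence one may pass to logarithms, turning the product in \cref{equation: E factorization} into a sum,
\begin{equation}
    \log\Lambda(O) \;=\; \sum_{\gamma\in\Gamma}\log\Lambda_\gamma(O_\gamma),
\end{equation}
so that the reconstruction problem becomes a \emph{linear} problem over $\mathbb R$: the vector of unknowns $\{\log\Lambda_\gamma(O_\gamma): \gamma\in\Gamma,\ \mathrm{supp}(O)\subseteq\gamma\}$ must be determined from the linear combinations indexed by $M\in\mathcal M$. (The gauge-freedom discussion the paper defers to \cref{appendix: log u basis} is exactly the statement that this linear map may have a kernel; "learnable" means the kernel is trivial on the quantities of interest.)

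Next I would identify, for each support $\gamma$, which single-site/low-weight local eigenvalues need to be pinned down, and then use the hypothesis to produce enough measurable constraints. The key observation is: if $\gamma_1\cup\gamma_2$ supports only detectable errors, then for \emph{every} Pauli $O$ with $\mathrm{supp}(O)\subseteq\gamma_1\cup\gamma_2$ there is some $M\in\mathcal M$ with $[[O,M]]=-1$. I would like to upgrade this to: one can find, inside $\mathcal M$, operators that isolate $\Lambda_{\gamma_1}(O_{\gamma_1})$ from the contributions of all other $\gamma'\ne\gamma_1$. The mechanism is the standard "peeling/Gauss-elimination on the factor graph" argument used in the convolutional-factor-graph literature: given the estimated global eigenvalues, one solves the linear system \eqref{equation: E factorization} locally, support by support, because the detectability condition guarantees that for any pair of supports the relevant generators exist to separate their contributions. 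Concretely, ratios of the form $\Lambda(M)\Lambda(M')/(\Lambda(MM'')\cdots)$ can be arranged to collapse to a single local factor once the pairwise-detectability hypothesis is in force.

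The main obstacle — and the technical heart of the argument — is precisely this last step: translating the combinatorial hypothesis "$\gamma_1\cup\gamma_2$ supports only detectable errors for all pairs" into the algebraic statement "the linear map $\{\log\Lambda_\gamma\}\mapsto\{\log\Lambda(M)\}_{M\in\mathcal M}$ is injective." I expect this to require a careful bookkeeping argument: fix a target $(\gamma_1,O)$; among all $\gamma'$ whose support overlaps $\gamma_1$, one must show the measured subgroup $\mathcal M$ contains enough independent generators to distinguish $O_{\gamma_1}$ from every $O'_{\gamma'}$ that could "alias" it, and the pairwise condition is exactly what rules out such aliasing two supports at a time — one then needs to check that pairwise non-aliasing suffices (no higher-order degeneracies involving three or more supports can occur), which follows because any aliasing relation, being a relation among characters of $\mathcal P_n$, can be reduced to its support on a union of at most two $\gamma$'s by an appropriate choice of test operator. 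I would organize the write-up as: (i) logarithmic linearization; (ii) reduction to injectivity of a $\{0,1\}$-valued design matrix; (iii) the pairwise-detectability $\Rightarrow$ injectivity lemma via the support-reduction argument; (iv) conclude learnability of every $\Lambda_\gamma$, hence of $P_\gamma$ by inverse Walsh–Hadamard transform, hence of $\mathcal N_\Gamma$.
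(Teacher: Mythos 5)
There is a genuine gap, on two levels. First, your reconstruction target is too strong: you set out to recover every individual local eigenvalue $\Lambda_{\gamma}(O_{\gamma})$, but when supports in $\Gamma$ overlap these quantities are simply not identifiable — there is an exact gauge freedom (cf.\ \cref{equation: gauge freedom} and \cref{appendix: log u basis}) that shifts weight between $\log\Lambda_{\gamma_1}$ and $\log\Lambda_{\gamma_2}$ while leaving the total channel, and hence every observable $\Lambda(M)$, unchanged. So any argument that "isolates $\Lambda_{\gamma_1}(O_{\gamma_1})$ from the contributions of all other $\gamma'$" must fail under the theorem's hypothesis, which allows overlapping supports; the claim of \cref{theorem: wagner} is only that the \emph{total} channel $\mathcal N_{\Gamma}$ is learnable, and the correct unknowns are gauge-invariant parameters of the total channel (the paper's transformed eigenvalues $\overline\mu$ indexed by $\overline{\mathcal E}_{\Gamma}$, or Wagner's canonical factors), not the per-channel eigenvalues you list in step (ii).

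Second, and more importantly, the heart of the proof is missing. The easy half is the reduction you gesture at: if two errors had equal (or trivial) syndromes, their product would be an undetectable operator supported on a union of two supports, contradicting the hypothesis; hence all errors in $\overline{\mathcal E}_{\Gamma}$ have distinct, nontrivial syndromes. The hard half is showing that distinct nonzero syndrome columns make the real linear system $\log\vec\Lambda^{(\mathcal M)} = \overline D^{(\mathcal M)}\log\vec{\overline\mu}$ have full column rank — "peeling" or "ratios of $\Lambda$'s collapsing to a single factor" does not establish this, and distinctness of $0$--$1$ columns does \emph{not} generically imply their linear independence. Indeed, if you only use a generating set of $\mathcal M$ the claim is false: for a single-qubit channel and two generators restricting to $Y$ and $Z$ on that qubit, the syndrome columns of $X_1,Y_1,Z_1$ are $(1,1),(0,1),(1,0)$ — distinct, nonzero, yet linearly dependent over $\mathbb R$, so three unknowns cannot be recovered from two equations; one must also use products such as $M_aM_b$, i.e.\ the full group $\mathcal M$. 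The paper closes exactly this gap with the Gram-matrix computation $(D'^{(\mathcal M)})^{T}D'^{(\mathcal M)}=\tfrac{|\mathcal M|}{4}(J+I)$ over the entire measured group (\cref{lemma: rank D} in \cref{appendix: A rank general}), which shows that distinct nonzero syndromes do give independent columns and hence a unique solution for $\overline\mu$ (and one also needs the positivity assumption so that the log-linearization is legitimate, which you do have). Without an argument of this kind — or Wagner's original canonical-factor argument — your steps (ii)--(iii) assert precisely what has to be proved.
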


Another way to phrase this condition is that the collection of local errors $\overline{\mathcal{E}}_{\Gamma}$ must have nontrivial and distinct syndromes. Since the product of two errors with the same syndrome must have a trivial syndrome, it suffices to check the product of pairwise local errors, thus the condition in \cref{theorem: wagner}. The notion of pure distance \cite{wagner2022pauli} is central to understanding the sufficient condition of learnability. It is defined to be the minimum weight of \emph{any} undetectable error
\begin{equation}
d_{\text{pure}}=\min_{e\in N(\mathcal{S})\backslash I} \text{weight}(e), \label{equation: pure distance}
\end{equation}
where $N(\mathcal{S})$ is the normalizer of the stabilizer group. 
Recall that, in contrast, the distance is the minimum weight of an undetectable error which \emph{also} alters the logical information.
If the weight of the product of any two errors $e_1,e_2$ from the local channels is less than the pure distance, then $e_1e_2$ is detectable. Therefore, we see from \cref{theorem: wagner} that error channels with maximum support size up to $\lfloor \frac{d_{\text{pure}}-1}{2} \rfloor$ are learnable from the syndrome expectations of a pure-distance $d_{\text{pure}}$ code.


\subsection{Learning syndrome class total error rate for generic static codes\label{subsection: syndrome class learnability}}
In this section, we consider the general case when the learnability condition in \cref{theorem: wagner} may fail. We provide concrete, efficient algorithms that learn the sum of error rates for errors with the same syndrome and estimate physical error rates for any subsystem code, under extra constraints when necessary.

We first establish the notation used throughout this work. For each local Pauli channel defined in \cref{section: model} with support $\gamma\in\Gamma$, we denote the set of all non-identity Pauli errors of $\mathcal{N}_{\gamma}$ as $\mathcal{E}_{\gamma}$ (e.g. $\mathcal{E}_{\{2\}}=\{X_2, Y_2, Z_2\}$ for a single-qubit error channel on qubit $2$ and $X_2=I\otimes X\otimes I \otimes...$) and the set of all local errors of the total channel $\mathcal{N}_{\Gamma}$ as 
\begin{equation}
\mathcal{E}_{\Gamma}=\bigcup_{\gamma\in\Gamma}\mathcal{E}_{\gamma}.
\end{equation}
When local channels overlap, there can be distinct local channels 
containing the same Pauli error. We distinguish elements that correspond to the same operator but from different channels, i.e., each element $e\in\mathcal{E}_{\Gamma}$ is also implicitly labeled by its associated local channel, denoted $\gamma_e$. We denote as $\overline{\mathcal{E}}_{\Gamma}$ the set that does not distinguish the same Pauli errors from different local channels. $\overline{\mathcal{E}}_{\Gamma}=\mathcal{E}_{\Gamma}$ if and only if the supports in $\Gamma$ do not overlap.

We define the vector of both individual and aggregated local error rates as $\vec{p}_{\gamma}=(P_{\gamma}(e)|e\in\mathcal{E}_{\gamma})$ and $\vec{p}=(P_{\gamma_e}(e)|e\in\mathcal{E}_{\Gamma})$. Similarly, we define the vector of individual and aggregated local Pauli eigenvalues as $\vec{\lambda}_{\gamma}=(\Lambda_{\gamma}(e)|e\in\mathcal{E}_{\gamma})$ and $\vec{\lambda}=(\Lambda_{\gamma_e}(e)|e\in\mathcal{E}_{\Gamma})$ respectively. We adopt a convention in which the entries of these vectors follow the tensor-product ordering of Pauli operators:
\begin{equation}
   (I, X, Y, Z)^{\otimes n}.\label{equation: order}
\end{equation}
For example, the vector $\vec{\lambda}_{\{2,5\}}$ has the following order: $\vec{\lambda}_{\{2,5\}}=(\Lambda(I_2X_5),...,\Lambda(Z_2Y_5),\Lambda(Z_2Z_5))$.

The collection of either local error rates or local Pauli eigenvalues over all local channels is a succinct description of $\mathcal{N}_{\Gamma}$. The two are related by the (inverse) Fourier transformation
\begin{equation}
    \vec{p}_{\gamma}=(W_{4^{|\gamma|}}^{-1})'\,(\vec{1}_{\gamma}-\vec{\lambda}_{\gamma}).\label{equation: local lambda to p}
\end{equation}
Here, $W_{4^k}$ is the Walsh-Hadamard matrix defined recursively as $W_{4^k}=W_{4}\otimes W_{4^{k-1}}$ where
\begin{equation}
    W_{4}=
\begin{pmatrix}
1 & 1 & 1 & 1 \\
1 & 1 & -1 & -1 \\
1 & -1 & 1 & -1 \\
1 & -1 & -1 & 1
\end{pmatrix}.
\end{equation}
$(W_{4^{|\gamma|}}^{-1})'$ refers to the submatrix of $W_{4^{|\gamma|}}^{-1}$ after eliminating the first row and the first column.
Assuming $P_{\gamma}(I)>\frac{1}{2}$ for all local support $\gamma$, we arrive at the matrix equation between expectation values of elements in the measured stabilizer group and the local channel Pauli eigenvalues:
\begin{equation}
    \log\vec{\Lambda}^{(\mathcal{M})}=A^{(\mathcal{M})}\log \vec{\lambda},
    \label{equation: log equation}
\end{equation}
 by taking the logarithm on both sides of \cref{equation: E factorization}. Here, $\vec{\Lambda}^{(\mathcal{M})}=(\Lambda(M))_{M\in\mathcal{M}}$ and the rows and columns of matrix $A^{(\mathcal{M})}$ are labeled by elements $M\in\mathcal{M}$ and $e\in\mathcal{E}_{\Gamma}$ respectively. The matrix elements indexed by row $M$ and column $e$, denoted $A^{(\mathcal{M})}[M,e]$, have the values
\begin{equation}
    A^{(\mathcal{M})}[M,e]=\begin{cases}
        1&\text{if }M_{\gamma_e}=e,\\
        0&\text{else.}
    \end{cases}
\end{equation}
$A^{(\mathcal{M})}$ is not invertible when the learnability condition in \cref{theorem: wagner} fails. For example, this happens when error-correcting codes contain weight-two stabilizer elements or logical operators, such as in the rotated surface code~\cite{horsman2012surface}. Spacetime codes constructed from most Clifford circuits typically contain low-weight gauge operators that also lead to a violation of this condition under simple circuit-level noise models, as will be discussed in \cref{section: circuit level}.

To understand what can be learned when the learnability condition in \cref{theorem: syndrome class learning} fails, we transform the set of local eigenvalues $\lambda$ into another set of parameters $\mu$ which we refer to as the transformed local eigenvalues. We show that $\mu$ approximates the local error rate, and any transformed local eigenvalue corresponding to a unique column of the resulting matrix equation can be identified.

We consider an inverse Walsh-Hadamard transform of $\log \vec{\lambda}$, where $\vec{\lambda}$ is the collection of all local channel Pauli eigenvalues. This results in an equivalent set (vector) of transformed eigenvalues $\mu$ ($\vec{\mu}$) describing all local channels:
\begin{equation}
    \log\vec{\mu}=-2\left(\bigoplus_{\gamma\in\Gamma}(W_{4^{|\gamma|}}^{-1})'\right)\log \vec{\lambda}.
\end{equation}
Since $\log \vec{\lambda}\approx \vec{\lambda}-\vec{1}$, from \cref{equation: local lambda to p} we have the approximation to the local error rates
\begin{equation}
    \log \vec{\mu}\approx -2\vec{p}.\label{equation: approximation of p from mu}
\end{equation}
The deviation is negligible when error rates are low, i.e., at the order of $10^{-2}$ or below, and remains insignificant compared to the shot noise arising from the finite sample size.

With the parametrization using the transformed Pauli eigenvalues, we arrive at a new matrix equation
\begin{equation}
\log\vec{\Lambda}^{(\mathcal{M})}=D^{(\mathcal{M})}\log \vec{\mu},\label{equation: log equation mu}
\end{equation}
where $D^{(\mathcal{M})}$ encodes the commutations between measured stabilizer elements $\mathcal{M}$ and the local errors $\mathcal{E}_{\Gamma}$ and is referred to as the syndrome matrix. Specifically, the entry at row $M\in\mathcal{M}$ and column $e\in\mathcal{E}_{\Gamma}$ is
\begin{equation}
D^{(\mathcal{M})}[M,e]=\langle M,e\rangle,\label{equation: D entries}
\end{equation}
and $\langle e,f \rangle$ equals 0 when $e$ and $f$ commute and $1$ otherwise. We systematically derive the syndrome matrix $D$ in \cref{appendix: log u basis} and show in \cref{appendix: A rank general} that distinct columns of $D$ will be linearly independent. Note that if two errors $e$ and $e'$ are in the same syndrome class, then the columns of $e$ and $e'$ of the matrix $D$ will be identical. Therefore, it is natural to partition the error set $\mathcal{E}_{\Gamma}$ into equivalence classes, which we refer to as ``syndrome error classes'' or simply ``syndrome classes'' and denote the set of classes as $\mathcal{C}$. Errors are in the same class $C\in\mathcal{C}$ if and only if they have the same syndrome. Furthermore, we distinguish the class $C_0$ of trivial syndrome from the set  $\mathcal{C}^{*}=\mathcal{C}\backslash \{C_0\}$ of classes of non-trivial syndromes. As a result, we denote as $\nu_C$ the learnable quantity for each syndrome class $C\in\mathcal{C}^*$, and refer to them as learnable Pauli eigenvalues:
\begin{align}
    \log\nu_C=\sum_{e\in C}\log\mu_{e}\approx -2\sum_{e\in C} p_{e}=-2P_C,\label{equation: learnables}
\end{align}
where we define $P_C=\sum_{e\in C} p_{e}$ as the syndrome class error rate for the class $C$. The learnable Pauli eigenvalues can be obtained by solving the equation
\begin{align}
    \log\vec{\Lambda}^{(\mathcal{M})}=D'^{(\mathcal{M})}\log \vec{\nu}.\label{equation: D' matrix main}
\end{align}
Here, $D'$ is the full-rank matrix eliminating the redundant columns of $D$ and $\vec{\nu}=(\nu_C)_{C\in\mathcal{C}^*}$. Given a syndrome class $C$, we denote $\Gamma_C=\{\gamma\in\Gamma|\mathcal{E}_{\gamma}\cap C\neq\emptyset\}$ as the collection of local channels that contribute to the error class $C$. We then denote $\vec{p}^{(\Gamma_C)}=(P_{\gamma_e}(e)|e\in\mathcal{E}_{\Gamma_C})$ as their error rates grouped into a vector. We have the following results.
\begin{theorem}[Learnability conditions]
    Let a total Pauli error channel $\mathcal{N}_{\Gamma}$ be a composition of Pauli channels $\mathcal{N}_{\gamma}$ with support $\gamma\in\Gamma$. Under \cref{assumption: open set probability}, using the syndrome data we have:
    \begin{enumerate}
        \item The total channel $\mathcal{N}_{\Gamma}$ is learnable if and only if every error $e\in\overline{\mathcal{E}}_{\Gamma}$ has a unique and nontrivial syndrome.
        \item Each individual channel $\mathcal{N}_{\gamma}$ is learnable if and only if every $e\in\mathcal{E}_{\gamma}$ produces a syndrome that is non-trivial and unique among the syndromes of all errors in $\mathcal{E}_{\Gamma}$.
        \item For all non-trivial syndrome error classes $C\in\mathcal{C}^*$, the sum of syndrome error class error rates $P_C=\sum_{e\in C} p_{e}$ can be learned up to an $\mathcal{O}(\|\vec{p}^{(\Gamma_C)}\|^2)$ deviation.
    \end{enumerate}
    \label{theorem: syndrome class learning}
\end{theorem}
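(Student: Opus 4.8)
The plan is to reduce all three claims to the exact linear relation $\log\vec{\Lambda}^{(\mathcal{M})}=D^{(\mathcal{M})}\log\vec{\mu}$ of \cref{equation: log equation mu}, combined with the fact (from \cref{appendix: A rank general}) that distinct columns of $D^{(\mathcal{M})}$ are linearly independent. I would begin with two preliminaries. First, ``learnable from the syndrome data'' means ``determined by the syndrome-outcome distribution,'' and that distribution is the Fourier transform of $M\mapsto\Lambda(M)$ over the abelian group $\mathcal{M}$; hence learnability is equivalent to being determined by $\vec{\Lambda}^{(\mathcal{M})}$. Second, under \cref{assumption: open set probability} the admissible transformed-eigenvalue vectors $\vec{\mu}$ form a nonempty \emph{open} set in $\log$-coordinates, since $P_\gamma(e)>0$ and $P_\gamma(I)>1/2$ are open conditions and $\vec{p}_\gamma\leftrightarrow\vec{\lambda}_\gamma\leftrightarrow\vec{\mu}_\gamma$ are diffeomorphisms onto their images (Walsh--Hadamard transforms, with $\lambda_\gamma(e)\ge 2P_\gamma(I)-1>0$ so the logarithms are defined). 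Together these give the key reduction: a quantity that is a monomial in the $\mu_e$ is learnable if and only if its exponent vector is orthogonal to $\ker D^{(\mathcal{M})}$ --- for the ``if'' direction such a monomial equals a linear combination of the $\log\nu_C$, which are recovered from $\vec{\Lambda}^{(\mathcal{M})}$ via the full-column-rank matrix $D'^{(\mathcal{M})}$; for ``only if,'' any $v\in\ker D^{(\mathcal{M})}$ with nonzero pairing produces a curve $\log\vec{\mu}_0+tv$ through any interior admissible point that stays admissible for small $t$, preserves $\vec{\Lambda}^{(\mathcal{M})}$, yet changes the monomial, exhibiting two genuinely different channels with identical syndrome data.

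Next I would compute the kernel explicitly. Each row $D^{(\mathcal{M})}[M,\cdot]$ is constant on syndrome classes and vanishes on the trivial class $C_0$, so writing $s_C$ for the common column of class $C$ gives $D^{(\mathcal{M})}v=\sum_{C\in\mathcal{C}^*}s_C\big(\sum_{e\in C}v_e\big)$, and linear independence of the $\{s_C\}_{C\in\mathcal{C}^*}$ yields $\ker D^{(\mathcal{M})}=\{\,v:\sum_{e\in C}v_e=0\ \text{for every }C\in\mathcal{C}^*\,\}$, with the coordinates indexed by $C_0$ unconstrained. For Part 2, the monomial $\mu_e$ has exponent vector $\vec{e}_e$, and $v_e$ vanishes on the whole kernel iff $e$ lies in a class $C\in\mathcal{C}^*$ with $|C|=1$; since the data of $\mathcal{N}_\gamma$ is equivalent to the collection $(\mu_e)_{e\in\mathcal{E}_\gamma}$, this proves $\mathcal{N}_\gamma$ is learnable iff every $e\in\mathcal{E}_\gamma$ has a syndrome that is nontrivial and unique among all errors in $\mathcal{E}_\Gamma$. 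For Part 1, I would use the gauge relation $\log\overline{\mu}_{\bar e}=\sum_{\gamma:\,\bar e\in\mathcal{E}_\gamma}\log\mu_{(\bar e,\gamma)}$ (so that $\overline{\mu}$ is equivalent data to $\mathcal{N}_\Gamma$, see \cref{appendix: log u basis}); the exponent vector of $\overline{\mu}_{\bar e}$ is the indicator $w_{\bar e}$ of all labeled copies of $\bar e$, which all carry the syndrome of $\bar e$. Checking the three cases --- trivial syndrome (a kernel vector supported on a copy of $\bar e$ in $C_0$ obstructs learnability), nontrivial syndrome shared with some distinct $\bar e'$ (the kernel vector $+1$ on a copy of $\bar e$ and $-1$ on a copy of $\bar e'$ obstructs it), nontrivial syndrome unique to $\bar e$ (then $\overline{\mu}_{\bar e}=\nu_C$ for the corresponding class) --- gives that $\overline{\mu}_{\bar e}$ is learnable iff $\bar e$ has a nontrivial syndrome unique in $\overline{\mathcal{E}}_\Gamma$, which is Part 1.

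For Part 3, full column rank of $D'^{(\mathcal{M})}$ already makes $\log\vec{\nu}$ exactly recoverable from $\vec{\Lambda}^{(\mathcal{M})}$, so only the systematic bias of $-\tfrac12\log\nu_C$ relative to $P_C$ remains to be bounded. I would refine the approximation $\log\vec{\mu}\approx-2\vec{p}$ of \cref{equation: approximation of p from mu}: using $\log\lambda_\gamma(f)=-(1-\lambda_\gamma(f))+\mathcal{O}\big((1-\lambda_\gamma(f))^2\big)$, the identity $1-\lambda_\gamma(f)=2\sum_{g:\,\langle g,f\rangle=1}P_\gamma(g)=\mathcal{O}(\|\vec{p}_\gamma\|_1)$, and \cref{equation: local lambda to p} (whose matrix has $\mathcal{O}(1)$ entries), one obtains $\log\mu_e=-2p_e+\mathcal{O}(\|\vec{p}_{\gamma_e}\|^2)$; summing over $e\in C$ and using $\{\gamma_e:e\in C\}\subseteq\Gamma_C$ gives $\log\nu_C=-2P_C+\mathcal{O}(\|\vec{p}^{(\Gamma_C)}\|^2)$, which is the claim. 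I expect the main obstacles to be making the necessity arguments fully rigorous through the openness statement (so that a kernel direction genuinely yields two distinct admissible channels, not merely a formal degeneracy) and tracking the gauge map $\vec{\mu}\mapsto\overline{\mu}$ carefully enough in Part 1 that learnability of $\mathcal{N}_\Gamma$ is faithfully encoded as orthogonality of $w_{\bar e}$ to $\ker D^{(\mathcal{M})}$; the localization of the quadratic error to $\Gamma_C$ in Part 3 is routine bookkeeping.
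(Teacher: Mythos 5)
Your proposal is correct and follows essentially the same route as the paper's proof: it parametrizes the channel by the transformed eigenvalues, uses the openness of the admissible set under \cref{assumption: open set probability} together with the kernel/rank structure of the syndrome matrix (distinct columns independent, kernel given by vanishing class sums) for both directions of Parts 1 and 2, and obtains Part 3 from the first-order expansion $\log\mu_e=-2p_e+\mathcal{O}(\|\vec p_{\gamma_e}\|^2)$ summed over a class. The only (cosmetic) difference is that the paper argues Part 1 directly with $\overline{D}^{(\mathcal{M})}$ acting on $\vec{\overline{\mu}}$, whereas you route it through $D^{(\mathcal{M})}$ on $\vec{\mu}$ and the gauge map $\log\overline{\mu}_{\bar e}=\sum_{\gamma}\log\mu_{(\bar e,\gamma)}$, which is an equivalent reformulation.
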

See the proof in \cref{appendix: physical proof}.

Note that the matrix $A^{(\mathcal{M})}$ and $D^{(\mathcal{M})}$ have exponentially ($|\mathcal{M}|=2^{\Theta(n)}$) many rows while there are only polynomial many (transformed) local Pauli eigenvalues $\lambda$ ($\mu$). We prove in \cref{appendix: A rank general} that the number of independent equations is exactly the number of nontrivial syndromes $|\mathcal{C}^*|$ and design an efficient algorithm (\cref{algorithm: reduced stabilizer group}) that finds $|\mathcal{C}^*|$ elements in $\mathcal{M}$ from the product of correlated generators of $\mathcal{M}$. 
We denote the reduced subset of $\mathcal{M}$ as $\mathcal{M}'$. The algorithm ensures that $\mathcal{M}'$ will generate $|\mathcal{C}^*|$ independent equations (see \cref{appendix: minimal equations}):
\begin{equation}
    \log\vec{\Lambda}^{(\mathcal{M}')}=A^{(\mathcal{M}')}\log \vec{\lambda},
    \label{equation: log equation reduced}
\end{equation}
or
\begin{equation}
    \log\vec{\Lambda}^{(\mathcal{M'})}=D'^{(\mathcal{M'})}\log \vec{\nu}.\label{equation: D matrix reduced}
\end{equation}

We conclude this subsection by noting a related quantity for each nontrivial syndrome class $C$, i.e., the detector error rate $P^{\text{detect}}_C$ for a syndrome class $C$ defined as the probability of an odd number of errors in $C$ occurring. As a special case, if one assumes every single Pauli error is independent, then we can exactly learn the detector error rate as  (see \cref{appendix: physical proof})
\begin{equation}
    P^{\text{detect}}_C=\frac{1-\nu_C}{2},\label{equation: detector error rate}
\end{equation}
where $\nu_C$ is the learnable Pauli eigenvalues from \cref{equation: D matrix reduced}.
\begin{figure*}
    \centering
    \includegraphics[width=0.98\linewidth]{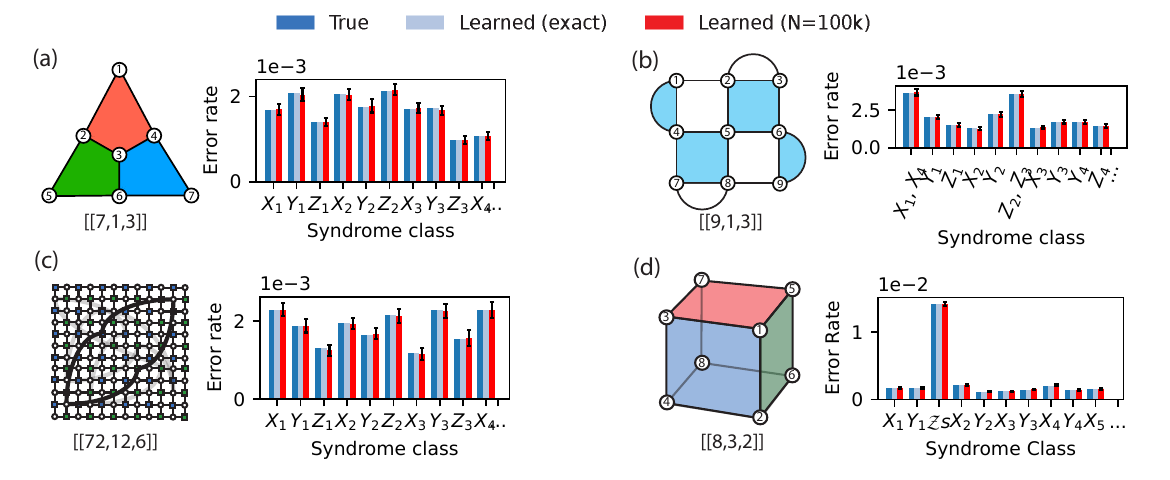}
    \caption{Simulation result on various codes of learning syndrome class total error rates (true error rates in blue) using exact syndrome expectation values (light blue) and expectation values (red) from 100k samples. Here, single-qubit Pauli error rates are randomly generated around the mean. Codes in (a) seven-qubit Steane code and (c) the $[[72,12,6]]$ bivariate bicycle codes with polynomial $A(x,y)=x^3+y+y^2$,\,$B(x,y)=y^3+x+x^2$, have $d_{\text{pure}}\geq 3$, therefore all error rates are learnable. (b) The $3 \times 3$ rotated surface code has weight-2 $XX$ (blue) and $ZZ$ (white) generators on the boundary. (d) The $[[8,3,2]]$ 3D color code have $Z$ stabilizers on the faces and a single $X$ stabilizer generator $X^{\otimes 8}$. Therefore, all $Z$ errors, denoted as $\mathcal{Z}s$, are indistinguishable. Both codes in (b) and (d) have $d_{\text{pure}}=2$, and the sum of error rates in each syndrome error class can be learned.}
    \label{fig: static code}
\end{figure*}
\subsection{Physical error rate estimation\label{section: physical error rate estimation}}
We can further estimate all individual error rates by incorporating a prior belief of the relative strength $r(e)$ of errors $e$ with a non-unique syndrome. For example, these could be obtained from offline benchmarking before running the logical circuit, or in the absence of any prior information, they could be simply uniform constraints  $r(e)=1,\forall e\in\mathcal{E}_{\Gamma}$.
We express this additional information as linear constraints on the error rates:
\begin{equation}
    B\vec{p}=\vec{0}.
    \label{equation: extra constraints}
\end{equation}
More specifically, for each non-trivial syndrome error class $C_i\in \mathcal{C}^{*}$ for $i \in [|\mathcal C^*|]$, we pick a representative error $e_{C_i} \in C$, and then add pairwise constraints between the error rate $p_{e_{C_i}}$ and all other rates $p_{e'}$ for $e'\in {C_i}\backslash e_{C_i}$. Each constraint between $e_{C_i}$ and $e'$ thus corresponds to a row of $B$ as
\begin{equation}
    B_{ij}=
    \begin{cases}
    1 , & \text{if } e_{j}=e_{C_i}, \\
    -\frac{r(e_{C_i})}{r(e_j)} , & \text{if } e_{j}=e', \\
    0,  & \text{else}.
\end{cases}
\label{equation: extra_nontrivial}
\end{equation}
We use the abbreviation $e_j \coloneqq (\vec{\mathcal{E}}_{\Gamma})_{j}$ where $\vec{\mathcal{E}}_{\Gamma}$ is the tuple of all local errors following the order given in \cref{equation: order}. On the other hand, each error $e_{C_0}$ in the trivial syndrome error class $C_0$ is independent of the syndrome data and we could declare them as unlearnable. Alternatively, we could estimate its error rates as the weighted average over the error rate of the set of detectable errors $\mathcal{E}^{*}_{\Gamma}:=\mathcal{E}_{\Gamma}\backslash C_0$ according to the prior belief $r(e)$:
\begin{equation}
    B_{ij}=
    \begin{cases}
    1, & \text{if } e_{j}=e_{C_0}, \\
    \frac{-r(e_{C_0})}{|\mathcal{E}^{*}_{\Gamma}|\,r(e_{j})} , & \text{if } e_{j}\in\mathcal{E}^{*}_{\Gamma}, \\
    0,  & \text{else}.
\end{cases}
\label{equation: extra_trivial}
\end{equation}
This adds $|C_0|$ rows to $B$ in addition to the $\sum_{C\in \mathcal{C}^*} (|C|-1)$ rows from \cref{equation: extra_nontrivial}.

For estimating general Pauli channels, we resort to the system of non-linear equations \cref{equation: log equation reduced}/\cref{equation: D matrix reduced} and \cref{equation: extra constraints}. In practice, the empirical expectation values $\tilde{\Lambda}(M)$ do not lie in the image of $A^{(\mathcal{M})}$ due to shot noise, and we implement least square optimization to solve the two sets of equations:
\begin{equation}
   \begin{aligned}
    &\underset{\vec{p}}{\text{minimize}}\quad\|\log\vec{\tilde{\Lambda}}^{(\mathcal{M}')}-A^{(\mathcal{M}')}\log \vec{\lambda}\|^2+\|B\vec{p}\|^2,\\
    &\text{subject to}\quad \vec{p}\geq 0.
\end{aligned}\label{equation: optimization}
\end{equation}
Alternatively, optimization can be constructed using the transformed eigenvalues $\vec{\mu}$. For low error rates or assuming the independent errors assumption (\cref{equation: detector error rate}), one could reduce the number of variables and solve a simple system of linear equations \cref{equation: D matrix reduced} to estimate syndrome class error rates, i.e., $|\mathcal{C}^{\ast}|$ parameters. Physical error rate can be estimated using \cref{equation: extra constraints} thereafter. For noisy empirical syndrome expectation values, a straightforward pseudoinverse of $D'^{(\mathcal{M}')}$ will give the least-squares estimate of the error rates, but may result in negative error rates for large shot noise. Alternatively, we solve the optimization:
\begin{equation}
   \begin{aligned}
    &\underset{\vec{p}}{\text{minimize}}\quad\|\log\vec{\tilde{\Lambda}}^{(\mathcal{M}')}-D'^{(\mathcal{M}')}\log \vec{\nu}\|^2,\\
    &\text{subject to}\quad \log \vec{\nu}\leq 0.
\end{aligned}\label{equation: optimization D'}
\end{equation}
Most numerical results in this work are generated via \cref{equation: optimization}, but \cref{equation: optimization D'} gives comparable results in the practically relevant error rate regime.

\subsection{Simulation on static code}
We simulate single qubit Pauli noise $\Gamma=\{\gamma_1=\{1\},\gamma_2=\{2\},...,\gamma_n=\{n\}\}$ on stabilizer codes of $n$ qubits and solve the optimization problem in \cref{equation: optimization}. Different codes with various parameters are chosen to show the generality of the algorithm. The simulated error rate of each single Pauli component in $\mathcal{E}_{\Gamma}$ follows a Gaussian distribution with mean $5\times 10^{-3}$ and standard deviation $10^{-3}$.

We begin by simulating the $[[7,1,3]]$ Steane code \cite{steane1996error}, a CSS code with both $X$ and $Z$ stabilizer generators on every plaquette shown in \cref{fig: static code}(a). The code has pure distance $d_{\text{pure}}=3$ as defined in \cref{equation: pure distance} since it has stabilizers with minimum weight 4 and distance 3. This implies that all the single-qubit errors are learnable from the syndrome data. Therefore it has a empty trivial syndrome error class $C_0$, and all non-trivial classes being singlets: $\mathcal{C}^*=\{\{X_1\},...,\{\{Z_7\}\}$.

We first test the Pauli noise learning based on the exact syndrome expectation values for $M\in\mathcal{M}'$ from the simulated true noise $\vec{p}^{\,\text{true}}$:
\begin{align}
    \Lambda(M)=\prod_{i\in\text{supp}(M)}\Lambda_{\{i\}}(M_i)=\prod_{i\in\text{supp}(M)}(W_4\,\vec{p}^{\,\text{true}}_{\{i\}})_{M_i}
\end{align}
Next, we benchmark the learning algorithm with empirical syndrome expectation values $\tilde{\Lambda}(M)$ obtained from sampling syndromes $N=100,000$ times and repeat the learning 30 times with resampled noise rates. 
As shown in \cref{fig: static code}(a), there is a perfect agreement between the simulated true error rates and the learned error rates based on exact syndrome expectation values and good agreement with the learned error rate based on sampled expectation values.

We perform a similar test on the rotated surface code \cite{fowler2012surface}, a family of topological codes, defined on a $d \times d$ square lattice with parameters $[[d^2,1,d]]$. 
Note that its pure distance $d_{\text{pure}}$ is always $2$ for any distance $d$ due to the stabilizer generators on the weight-2 plaquettes on the boundary. 
Every single-qubit error is detectable so $C_0$ is empty, but there are indistinguishable errors in some of the non-trivial syndrome error classes. 
In \cref{fig: static code}(b), we show our learned error rates on the $3 \times 3 $ rotated surface codes and compare with the actual true error rate we simulated. The classes of indistinguishable errors are $\{X_1, X_4\},\{Z_2, Z_3\},\{X_6, X_9\},\{Z_7, Z_8\}$. Theorem \ref{theorem: syndrome class learning} implies that the sum of errors of each syndrome class is learnable. With the exact syndrome expectation value, we observe that all syndrome class error rates are learned accurately. With 100,000 samples, the learned syndrome class error rates closely match the true error rates.

\cref{fig: static code}(c) and (d) show the simulation on the $[[72,12,6]]$ bivariate bicycle code, where every single qubit Pauli channel is learnable, and on the $[[8,3,2]]$ 3D color codes where the sum of all $Z$ errors and every other Pauli errors, is learnable.

\subsection{Sample complexity for physical error rate learning\label{subsection: physical sample complexity main}}

In this section, we study the number of samples $N$ required to learn the syndrome class error rates $P_C$, which are related to the learnable Pauli eigenvalues by $P_C\approx-\frac{\log \nu_C}{2}$. We focus on low error rates for which $-\log\nu_C\approx 1-\nu_C$. The precision in $\nu$ depends on the subset of the stabilizer elements chosen as well as how one solves \cref{equation: D matrix reduced} when we have an overdetermined system of linear equations. A recursive analytic solution is proposed in Ref.~\cite{remm2025experimentally} for learning the detector error rate $P^{\text{detect}}_C$ in \cref{equation: detector error rate} in an error model where every local Pauli error is independent. We show in \cref{subsection: physical sample complexity} that this method can be generalized to solve the linear equations \cref{equation: D matrix reduced} for our more general error model. Furthermore, we show in \cref{theorem: syndrome class learning sample complexity} that the sample size required to achieve a constant additive precision $\epsilon$ of estimating syndrome class error rate can be shown to be independent of the system size for qLDPC codes and local sparse Pauli channel.
\begin{figure}[t]
\centering
\begin{subfigure}{0.95\columnwidth}
    \phantomcaption
    \hspace{-0.4cm}\stackinset{l}{-2pt}{t}{-3pt}{\captiontext*}
    {\includegraphics[width=\textwidth]{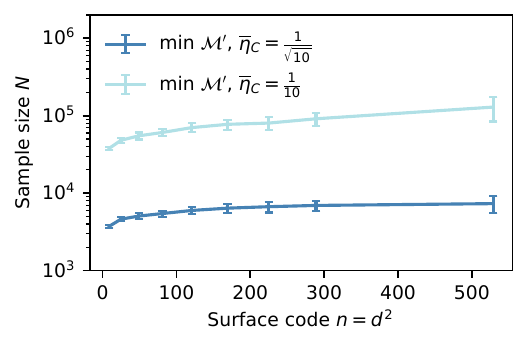}}
    \label{fig: surface code sample size}
\end{subfigure}\\
\vspace{0.5cm}
\begin{subfigure}{0.95\columnwidth}
    \phantomcaption
    \hspace{-0.3cm}\stackinset{l}{-2pt}{t}{-7pt}{\captiontext*}
    {\includegraphics[width=\textwidth]{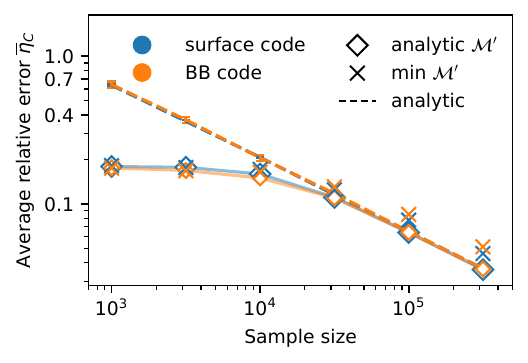}}
    \label{fig: precision of two examples}
\end{subfigure}
 \caption{Numerical results of relative error of learned single-qubit error rates vs sample size. The true error rates are sampled from $p\sim\mathcal{N}(\mu,\sigma^2)$, where $\mu=5/3\times10^{-3},\sigma=1/3\times10^{-3}$.(a) Estimated sample size needed for surface code with various code distances using the optimization \cref{equation: optimization} with uniform constraints $B$ within each syndrome class. This does not involve avoiding overfitting, and only the minimal stabilizer subset is used. Sample size is almost constant with respect to code size $n$. (b) Result of optimization while avoiding overfitting. We test the relative error of learning for surface code $d=23$ (blue) and the $[[72,12,6]]$ bivariate bicycle codes (orange). With the same stabilizer subset $\mathcal{M}'$, the optimization (diamonds) outperforms the analytical solution (dotted lines) in \cref{equation: recursive solving} at low sample sizes ($N\leq 30\,k$) and is comparable to it at larger sample sizes. Using only a minimal stabilizer subset gives similar performance (crosses) at low sample size. The result is the average of 20 sets of error rates from the normal distribution $\mathcal{N}$.}
\label{fig: sample_complexity}
\end{figure}

Given a set of measured stabilizer generators $\{M_i\}^m_i$ for the measured stabilizer group $\mathcal{M}=\langle M_1,\dots, M_m\rangle$, we denote the indices of the stabilizer generators violated for each syndrome class $C\in\mathcal{C}$ as $J(C)\subseteq [m]$. We define $X_{C}=\log\nu_{C}$ and for any $S\subseteq [m]$, let $Y_S=\log\Lambda(\prod_{i\in S}M_i)$. The recursive solution in \cite{remm2025experimentally} for any syndrome class $C\in\mathcal{C}$ can be written as 
\begin{equation}
    X_C=-\sum_{\substack{C'\in\mathcal{C}\\J(C')\supset J(C)}}X_{C'}+\frac{1}{2^{(|J(C)|-1)}}\sum_{S\subseteq J(C)}(-1)^{(|S|-1)}Y_S.\label{equation: recursive solving}
\end{equation}
We show in \cref{appendix: sample complexity} that this solves \cref{equation: D matrix reduced} using Gaussian elimination. The subset $\mathcal{M}'$ of stabilizer elements involved here is the union of elements in the local stabilizer group, i.e., $\mathcal{M}'=\bigcup_{C\in\mathcal{C}}\{\prod_{i\in J'}M_i|J'\subseteq J(C)\}$. For a $(r_s,c_s)$-qLDPC code and a local sparse $(r_{\Gamma}, c_{\Gamma})$-total Pauli error channel $\mathcal{N}_{\Gamma}$, we have $|J(C)|\leq r_{\Gamma}c_s$ and $|\{C'\in\mathcal{C}|J(C')\supset J(C)\}|\leq r_{s}c_{\Gamma}$ for any syndrome class $C$. Each variable $X_C$, therefore, depends on a bounded number of $Y_S$'s, and each involved $Y_S$ has a bounded additive error from shot noise due to all elements in $\mathcal{M}'$ being constant-weight. With this observation, we show the sample complexity for learning syndrome class error rates below:

\begin{theorem}[Sample complexity for syndrome class learning\label{theorem: syndrome class learning sample complexity}]
    Consider a local sparse Pauli channel $\mathcal{N}_{\Gamma}$ on an $n$-qubit qLDPC code. Given $\epsilon>0$, there exists a constant $C_c>0$ such that if $\|\vec{p}\|_{\infty}^2<\epsilon/(2C_c)$, the sample size $N$ required to estimate the syndrome error class $P_C, \forall C\in\mathcal{C}^*$ to additive precision $\epsilon$ using \cref{equation: recursive solving} is $\tilde{\mathcal{O}}(\epsilon^{-2})$, independent of the system size $n$.
\end{theorem}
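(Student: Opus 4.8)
The plan is to track how shot noise in the directly estimated quantities $Y_S = \log \Lambda(\prod_{i\in S} M_i)$ propagates through the closed-form recursion \eqref{equation: recursive solving} into the learnable eigenvalues $X_C = \log\nu_C$, and hence into the syndrome class error rates $P_C \approx -X_C/2$. The key structural inputs, both noted in the paragraph preceding the theorem, are: (i) each stabilizer product appearing in $\mathcal{M}'$ has weight bounded by a constant depending only on $r_s, r_\Gamma$ (since $\mathcal M'$ is built from products of at most $|J(C)| \le r_\Gamma c_s$ generators, each of weight $\le r_s$), and (ii) for a fixed class $C$, the number of terms on the right-hand side of \eqref{equation: recursive solving} is bounded by a constant: the sum over $C' \supset C$ has at most $r_s c_\Gamma$ terms, and the sum over $S \subseteq J(C)$ has at most $2^{r_\Gamma c_s}$ terms, with the prefactor $2^{-(|J(C)|-1)} \le 1$.

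First I would establish a per-$Y_S$ concentration bound. For a constant-weight Pauli $M = \prod_{i\in S} M_i$ of weight $w = O(1)$, the empirical estimate $\tilde\Lambda(M)$ from $N$ syndrome shots is an average of $N$ i.i.d.\ $\pm1$ random variables with mean $\Lambda(M)$, so Hoeffding gives $|\tilde\Lambda(M) - \Lambda(M)| \le \delta$ except with probability $2e^{-N\delta^2/2}$. Under \cref{assumption: open set probability} and the low-error-rate hypothesis, $\Lambda(M)$ is bounded away from $0$ — concretely $\Lambda(M) = \prod_{\gamma} \Lambda_\gamma(M_\gamma) \ge (1 - O(\|\vec p\|_\infty))^{w c_\Gamma}$, which is $\ge 1/2$ once $\|\vec p\|_\infty$ is small enough relative to the constants $w, c_\Gamma$ — so $\log$ is Lipschitz there and $|\tilde Y_S - Y_S| = O(\delta)$ on the same event. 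Taking a union bound over the $O(|\mathcal C^*|) = O(n)$ relevant sets $S$ costs only a $\log n$ factor in the exponent, absorbed into the $\tilde{\mathcal O}$.

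Next I would combine these. Fix $\epsilon > 0$. By linearity of \eqref{equation: recursive solving}, the error in $\tilde X_C$ is a linear combination of the $\tilde Y_S - Y_S$ (for $S \subseteq J(C)$) and the errors $\tilde X_{C'} - X_{C'}$ for the finitely many $C' \supset C$; but the partial order $\supset$ on the finite sets $J(C) \subseteq [m]$ is well-founded, and the recursion descends along it, so unrolling it fully expresses $\tilde X_C - X_C$ as a bounded-degree (constant, since each chain in the poset of $J(C')$-values containing $J(C)$ has length $\le r_\Gamma c_s$ and each level has $\le r_s c_\Gamma$ children) linear combination of $O(1)$ many $\tilde Y_S - Y_S$ terms, each with coefficient $O(1)$. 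Hence $|\tilde X_C - X_C| = O(\delta)$ uniformly in $C$. The systematic bias in using $P_C \approx -X_C/2$ is $O(\|\vec p^{(\Gamma_C)}\|^2) = O(c_\Gamma^2 \|\vec p\|_\infty^2)$ by part 3 of \cref{theorem: syndrome class learning}; choosing the hypothesis $\|\vec p\|_\infty^2 < \epsilon/(2C_c)$ with $C_c$ the constant in that bound makes this contribution $\le \epsilon/2$. Setting $\delta = \Theta(\epsilon)$ and $N = \Theta(\epsilon^{-2}\log(n/\epsilon)) = \tilde{\mathcal O}(\epsilon^{-2})$ then makes the stochastic contribution $\le \epsilon/2$ with high probability, giving total additive error $\le \epsilon$ independent of $n$.

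The main obstacle I anticipate is not any single estimate but the bookkeeping that the recursion's error does not amplify as it descends the poset of syndrome classes: one must verify that the depth of the recursion and the branching factor are both $O(1)$ under the $(r_s,c_s)$-qLDPC and $(r_\Gamma,c_\Gamma)$-sparse assumptions — i.e.\ that $|J(C)|$ is bounded (bounding depth $\le |J(C)| \le r_\Gamma c_s$ and the number of subsets $\le 2^{r_\Gamma c_s}$) and that $|\{C' : J(C') \supsetneq J(C)\}|$ is bounded (bounding branching $\le r_s c_\Gamma$) — so that the accumulated coefficient stays $O(1)$ rather than growing exponentially in $n$. This is exactly the content flagged in the paragraph before the theorem, so the work is to make those two counting claims rigorous and then feed them into a clean induction over the well-founded order $\supsetneq$ on $\{J(C)\}_{C\in\mathcal C}$.
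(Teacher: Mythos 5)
Your proposal is correct and takes essentially the same route as the paper's own proof: Hoeffding bounds on the empirical expectations of the constant-weight stabilizer elements in $\mathcal{M}'$ (whose eigenvalues are bounded away from zero so that the logarithm is Lipschitz), unrolling the recursion \cref{equation: recursive solving} with constant depth and branching under the $(r_s,c_s)$-qLDPC and $(r_\Gamma,c_\Gamma)$-sparse assumptions so that each $X_C$ is a bounded linear combination of $O(1)$ many $Y_S$'s, and splitting the budget $\epsilon$ between the $\mathcal{O}(\|\vec p\|^2)$ bias (controlled via the constant $C_c$ by the hypothesis $\|\vec p\|_\infty^2<\epsilon/(2C_c)$) and the shot noise. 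The only cosmetic difference is that you union-bound explicitly over all $\mathcal{O}(n)$ classes, incurring a $\log n$ factor that the paper likewise absorbs into the $\tilde{\mathcal{O}}$ (and makes explicit only in its logical-level lemma).
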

See the details of the proof in \cref{subsection: physical sample complexity} using Hoeffding's inequality. Note that applying Chernoff bound to the Bernoulli distribution (syndrome bit flip) gives a more precise scaling $N=\tilde{\mathcal{O}}(p_{\text{max}}/\epsilon^2)$, where $p_{\text{max}}=\|\vec p\|_{\infty}$.

\begin{figure*}[t]
\centering
    \includegraphics[width=0.95\textwidth]{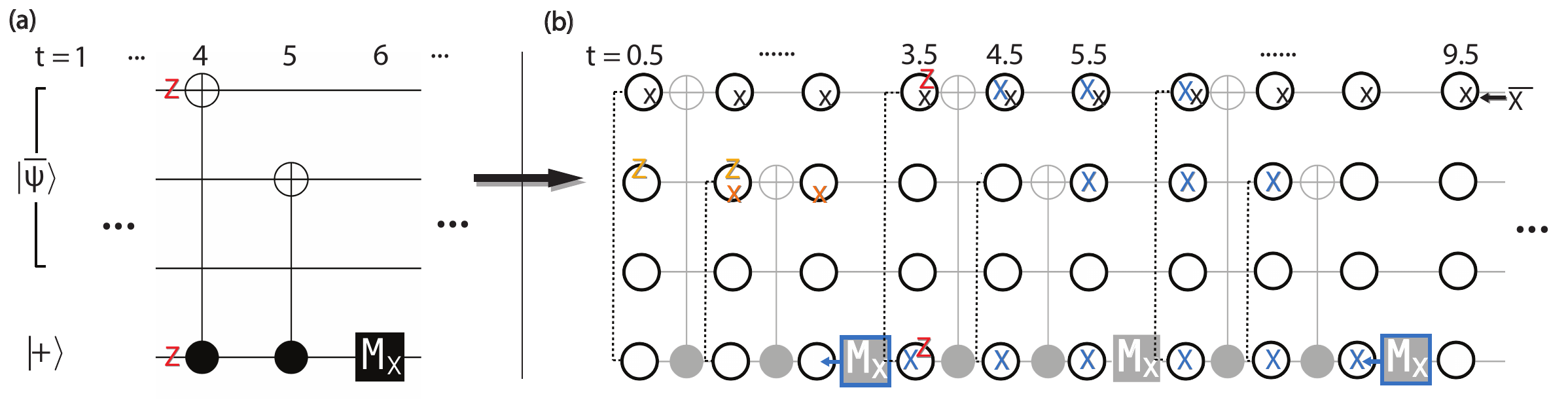}
    \caption{Circuit-to-code mapping illustrated using a syndrome extraction circuit for the 3-qubit repetition code. (a) $r$ rounds of 3-layer syndrome extraction circuits that measure $X_1X_2$ for the repetition code with stabilizer group $\mathcal{S}^{(\text{ini})}=\langle X_1X_2,X_2X_3\rangle$. The last qubit is the ancilla qubit for syndrome readout. (b) The spacetime code with $4\times(3r+1)$ qubits mapped from the circuit in (a). Pauli operator in blue is a measured stabilizer generator of the spacetime code due to the parity constraints on the first and third measurement results (blue squares). The stabilizer elements are local within the associated measurements in the temporal direction. One logical representative $\overline{X}^{\text{(ST)}}$ as the back-culument of $\overline{X}$ at the last layer is shown in black. It will be a measured spacetime logical operator if the first qubit is eventually measured. Correlated $ZZ$ error on the CNOT gate shown in (a) is mapped to $ZZ$ errors at $t=3.5$ (in red). Its syndrome and logical effect are given by the commutation with the spacetime stabilizer and logical operator respectively. The gauge group $\mathcal{G}^{(\text{ST})}$ contains errors with trivial syndrome and logical effect. $\mathcal{G}^{(\text{ST})}$ is non-abelian in general (see a pair of errors in light and dark orange).}
    \label{fig: spacetime code mapping}
\end{figure*}

We test the quality of learning as a function of sample size and system size using both the minimal $\mathcal{M}'$ syndrome data, i.e. the minimal subset $\mathcal{M}'$ with $|\mathcal{M}'|=\rank \mathcal{M}$ from \cref{algorithm: reduced stabilizer group}, and analytical $\mathcal{M}'$, i.e., the subset $\mathcal{M}'$ involved in the analytical solution \cref{equation: recursive solving}. We measure the quality of learning using the average relative error:
\begin{align}
    \overline{\eta}_{C}=\frac{1}{|\mathcal{C}^*|}\sum_{C\in\mathcal{C}^*}\frac{\epsilon_C}{P_C},
\end{align} 
which compares the learned and true syndrome-class error rates over all nontrivial syndrome classes $\mathcal{C}^*$. 

First, we test on the rotated surface code with system sizes $n=3^2,5^2,\dots, 23^2$ and sample sizes from 1000 to 6,000,000. Each qubit is subject to a single-qubit Pauli channel, and the error rate $p\sim\mathcal{N}(\mu,\sigma^2)$ of each Pauli component is sampled from a Gaussian distribution with mean $\mu=5/3\times10^{-3}$ and standard deviation $\sigma=1/3\times10^{-3}$. We clip negative values to $10^{-5}$. We tested the optimization method based on minimal $\mathcal{M}'$ in \cref{equation: optimization} with uniform constraints $B$ within each syndrome class. We fit a linear model of $\overline{\eta}_C$ as a function of sample size on the log scale and estimate the sample size needed to achieve a fixed precision. As shown in \cref{fig: surface code sample size}, the optimization using a minimal stabilizer subset requires nearly constant sample size.

We further compare the optimization method in \cref{equation: optimization D'} with the analytic solution \cref{equation: recursive solving} and observed two main factors influencing the precision. On the one hand, the minimal stabilizer subset $\mathcal{M}'$ used in \cref{equation: optimization D'} contains the least noisy syndromes due to their minimal weight, but also lacks redundancy. On the other hand, we observe that at smaller sample sizes, the system of equations \eqref{equation: recursive solving} tends to overfit, i.e., learning the shot noise rather than the actual noise. Therefore, the gradient-descent-based optimization on linear equations provides two advantages. On the one hand, one can tune the number of stabilizers used as long as the equation satisfies the full-rank condition. Moreover, one can in principle terminate the optimization before it starts to overfit. We adopt the simple approach of calibrating the termination point by simulating relevant error rates and recording the average number of iterations after which the average relative error starts to increase. 

We test the optimization \cref{equation: optimization D'} on a distance-$23$ rotated surface code and the $[[72,12,6]]$ bivariate bicycle codes. We first determine a terminal step number from one fixed error rate and apply this same number of optimization steps to learn 20 other sets of error rates from the same distribution $\mathcal{N}(\mu,\sigma^2)$. In \cref{fig: precision of two examples}, we observe that below $N=30\,k$, optimization significantly outperforms the analytic solution, with the two agreeing once $N$ becomes large. Moreover, we observe that by optimizing while minimizing overfitting, using the minimal $\mathcal{M}'$ also has comparable performance in this low sample size regime compared with using the larger set from the analytic solution.

\section{\label{section: circuit level}Circuit-level noise learning}

To learn circuit-level Pauli noise for Clifford circuits, we utilize the circuit-to-code mapping, a formalism of converting the problem of correcting circuit-level Pauli errors into the problem of correcting Pauli errors on the corresponding \emph{spacetime code}~\cite{bacon2017sparse, gottesman2022opportunities, delfosse2023spacetime}. As we focus on non-adaptive fault-tolerant Clifford circuits here, the method introduced in \cref{section: static code} for static codes with a single round of syndrome extraction can then be applied to the spacetime code. 

We begin by reviewing the spacetime code formalism and identifying the correct noise model on the spacetime code from the circuit-level error model in \cref{section: spacetime code mapping}. We then demonstrate our method using simulations of circuit-level noise in \cref{section: circuit level}, and apply it to experimental data in \cref{section: physical_learning_GHZ}.

\subsection{Mapping to the spacetime code\label{section: spacetime code mapping}}

We first introduce the setup of the spacetime code. Given a Clifford circuit with $n$ qubits and $T$ time steps, the spacetime code is a subsystem stabilizer code of $n^{\text{(ST)}}=n(T+1)$ qubits. We label each qubit by its spacetime location $(q,t)$ where $q$ denotes the qubit (i.e., the spatial label in the circuit) and $t$ denotes the time-coordinate index. The first and last layer of spacetime qubits are placed before the first gate layer and after the last gate layer respectively. The temporal labels $t$ of the spacetime qubits are $0.5, 1.5, ...,T+0.5$, where the $0.5$ shift here emphasizes that the spacetime qubits are placed before and after the gates. 
We illustrate the spacetime mapping for a simple 4-qubit Clifford circuit that extracts a syndrome of the 3-qubit repetition code in \cref{fig: spacetime code mapping}.

We now introduce the construction of the measured stabilizer and the logical group in the spacetime code, as these are relevant to the noise-learning algorithm. One way to capture the effect of circuit Pauli errors on syndrome and logical measurements is by checking the commutation relations of the circuit errors that are forward propagated to the syndrome and logical measurements at particular spacetime locations.
A more natural way from the perspective of the noise model, however, is to capture these commutation relations in the structure of the spacetime code, while errors stay local in their spacetime location, just like in the static code case. This is possible because the same commutation holds between the errors localized at their spacetime locations and the backward-propagated measurements to the relevant layer. 

Formally, the spacetime stabilizer and logical operators are objects referred to as \emph{back-cumulant} \cite{delfosse2023spacetime} (or \emph{spackle} \cite{bacon2017sparse, gottesman2022opportunities}) of some Pauli operator $O\in\mathcal{P}_{n(T+1)}$ and we denote it as $\overleftarrow{O}$.
For $t\in\{1,...,T+1\}$, the components $\overleftarrow O_{t} \in \mathcal P_n$ of $\overleftarrow{O} \in \mathcal P_{n(T+1)}$  
 at the layer $t$
are given by
\begin{align}(\overleftarrow{O})_{t-0.5} = \prod_{t'=T+1}^{t}U^{\dagger}_{t,t'}O_{t'-0.5}U_{t,t'}.
\end{align}
Here, $U_{t,t'} = \prod_{s = t}^{t'-1}U_t$ is a product of the gates $U_s$ from layer $t$ to $t'-1$ and we  let $U_{t,t'}=I$ if $t'\leq t$. 

We begin by constructing the spacetime stabilizer operators. 
We assume the input to the Clifford circuit, up to a possible flip of the syndrome values, is a codeword of a stabilizer code with stabilizer group $\mathcal{S}^{(\text{ini})}$, i.e., a state $\ket \psi$ satisfying $S^{(\text{ini})}\ket{\psi}=\pm\ket{\psi}, \forall S^{(\text{ini})}\in\mathcal{S}^{(\text{ini})}$. We fictitiously add a layer of perfect syndrome measurement $S^{(\text{ini})}\in\mathcal{S}^{(\text{ini})}$ at $t=0$ to record the initial syndrome. 
This is then followed by the Clifford gates and Pauli measurements in the circuits. 
One can efficiently track how the stabilizer group and the group of logical operators change after each operation's action~\cite{aaronson2004improved}.

The spacetime stabilizer operators are then constructed from the linear constraints that the measurement results of a noise-free circuit satisfy. 
Given an initial stabilizer group with $s$ generators and $m$ measurements in the circuit, we denote the measured operators by $\vec E=\left(S^{(\text{ini})}_1,...,S^{(\text{ini})}_s,M^{(\text{circ})}_1,...,M^{(\text{circ})}_m\right)$. 
The measurement results of these operators are stored in a binary vector $\vec o\in\mathbb{Z}_2^{s+m}$, 
where we assign $o_i=0$ when the value of $E_i$ is $+1$ and $o_i=1$ if it is $-1$. In the absence of circuit errors, whenever a stabilizer element is measured, the measurement result is deterministic and depends on previous measurement outcomes. The measurement result $\vec o$ in the circuit therefore needs to satisfy consistency relations, or \emph{parity checks} $\mathcal{O}^{\perp}=\{\vec u\}$, in that every set of noise-free measurement outcomes $\vec o$ satisfies $\vec u\cdot\vec o=0$ for all $\vec u \in \mathcal O^\perp$. 
Circuit errors can be detected from flips of those parity checks, which coincides with the notion of detector \cite{fowler2014scalable,derks2024designing}.

The generators of the measured spacetime stabilizer group $\mathcal{M}^{\text{(ST)}}$ are then constructed from the parity constraints $\vec u\in\mathcal{O}^{\perp}$ of the circuit as
\begin{align}
    M^{\text{(ST)}}(\vec u)=\overleftarrow{\prod^{s+m}_{i=s+1}\kappa_{t_{E_i}-0.5}(E_i^{u_i})},\label{equation: spacetime stabilizer from u}
\end{align}
where $\kappa_t(O) \in \mathcal{P}_{n(T+1)}$ denotes the operator obtained by embedding $O \in \mathcal{P}_n$ at layer $t$ and tensor product with identity $I_n$ for all other layers. $t_{E_i}$ denotes the layer where $E_i$ is measured (for initial stabilizer generators $S^{(\text{ini})}$, $t_{S^{(\text{ini})}}=0$). One notable property of $M^{\text{(ST)}}(\vec u)$ is that it is bounded by the time involved in the parity constraint $\vec u\in\mathcal{O}^{\perp}$ (Lemma 8 in \cite{delfosse2023spacetime}), i.e., for all $(q,t)\in\text{supp}(M^{\text{(ST)}}(\vec u))$ we have
\begin{equation}
\min_{\,j\in[s+m]\;\mid\;u_j=1}\; t_{E_j} <t<\max_{\,j\in[s+m]\;\mid\;u_j=1}\; t_{E_j}.\label{equation: local spacetime s property}
\end{equation}

We now introduce the measured logical group of the spacetime code. We emphasize that we want to understand the logical errors only on the relevant logical measurements of the circuit. This makes the spacetime code a subsystem code since, in general, the set of undetectable yet logically trivial errors in the circuits corresponds to a non-abelian subgroup in $\mathcal{P}_{n(T+1)}$ for the spacetime code. See example of such errors in \cref{fig: spacetime code mapping}(b). These logically trivial errors generate the gauge group $\mathcal{G}^{(\text{ST})}$ of the spacetime code. See \cref{appendix: spacetime code} for the construction. 

In a subsystem code, logical Pauli operators that commute with the gauge group encode the logical information and are called the bare logical operators. In contrast, dressed logical operators commute with the stabilizer group but may anticommute with some gauge operators \cite{pastawski2015fault}; these are the logical errors of the code. Unless otherwise specified, we use “logical operators” to mean bare logical operators. In this work, we focus on a subgroup of spacetime logical operators $\mathcal{L}'^{\text{(ST)}}$ corresponding to the logical Pauli measurements in the circuit. In a Clifford circuit with $m_L$ logical measurements, we denote the logical measurement outcome as $\vec o^L\in\mathbb{Z}_2^{m_L}$. In a noiseless circuit, each logical Pauli measurement result $o^L_i$ will depend on a set of physical measurement results as $o^L_i=\vec u^L\cdot \vec o$ for some binary vector $\vec u^L\in\mathbb{Z}_2^{s+m}$. Here, the non-zero elements of $\vec u^L$ correspond to the physical measurements involved. $\vec u^L$ corresponds to a generator
\begin{align}
    L'^{\text{(ST)}}(\vec u^L)=\overleftarrow{\prod^{s+m}_{i=s+1}\kappa_{t_{E_i}-0.5}(E_i^{u^L_i})}
\end{align}
of the group of measured logical operators $\mathcal{L}'^{\text{(ST)}}$ in the spacetime code.
See \cref{fig: spacetime code mapping} for an example. The detailed generic algorithm of constructing $\mathcal{O}^{\perp}$, $\mathcal{M}^{\text{(ST)}}$, and $\mathcal{L}^{\text{(ST)}}$ is explained in \cref{appendix: spacetime code}. 


Pauli errors in the circuit can also be mapped onto the spacetime code. A circuit error represented as $e^{(\text{circ})}=(e_t\in\mathcal{P}_n\mid t\in [T+1])$, i.e., inserting Pauli error $e_t$ before each circuit layer $t$, can be mapped to the error $e^{\text{(ST)}}\in\mathcal{P}_{n(T+1)}$ in the spacetime code as
\begin{equation}
    e^{\text{(ST)}}=\prod_{t\in[T+1]}\kappa_{t-0.5}(e_t).\label{equation: circuit to spacetime error mapping}
\end{equation}
Throughout the remainder of the text, we identify circuit errors $e^{(\text{circ})}$ with their corresponding spacetime-code Pauli errors $e^{(\mathrm{ST})}\in\mathcal{P}_{n(T+1)}$ via equation \cref{equation: circuit to spacetime error mapping} and its inverse relation $e^{(\text{circ})}=(e^{(\text{ST})}_t\mid t\in[T+1])$.

We now define the circuit-level Pauli error model in a way that admits a natural representation as a Pauli noise model (\cref{section: model}) on the spacetime code by identifying each spacetime location $(q,t)$ in the circuit with the spacetime qubit labeled by $(q,t)$.
\begin{definition}[Circuit-level Pauli error model]\label{definition: Circuit-level Pauli error model}
    We define a circuit-level Pauli model by a set of spacetime supports $\Gamma^{(\text{ST})}=\{\gamma_i\}_i$. Each support $\gamma\in\Gamma^{(\text{ST})}$ is a set of spacetime locations $\{(q_1,t_1),(q_2,t_2),...\}$ in the circuit labeling the spacetime qubits. Each support is associated with a Pauli error distribution $P_{\gamma}:\mathcal{P}_{n(T+1)}\to [0,1]$ and $P_{\gamma}(e)=0$ if $\text{supp}(e)\not\subset \gamma$. The circuit-level Pauli error model is then a distribution of Pauli errors over the whole circuit given by the convolution of local distributions $P_{\Gamma^{(\text{ST})}}=\Asterisk_{\gamma\in\Gamma^{(\text{ST})}}P_{\gamma}$.
\end{definition}

Pauli error models on the spacetime code correspond to circuit-level noise models on the circuit with errors correlated in space and time, depending on the structure of the set of spacetime supports $\Gamma^{(\text{ST})}$. Therefore, a general Pauli noise model could, in principle, capture features of correlated errors, leakage errors, and non-Markovian errors, but the practical application for more complicated error models will be out of the scope of this paper. In the remaining numerical and experimental results, we instead consider the Pauli noise model on the spacetime code according to the standard circuit model in the fault-tolerance literature, where errors are only correlated between qubits involved in each circuit location defined below.
\begin{definition}[Circuit location]
    \label{definition: circuit location}
    We define circuit locations as the indivisible operations in a circuit, including idling, state preparation, gates, and measurements. Each location is associated with a time step and a set of qubits it acts upon. Locations at the same time step do not share any qubit.
\end{definition}
\begin{definition}[Standard circuit-level Pauli error model]
    \label{definition: standard circuit Pauli error model}
     The standard circuit-level Pauli noise model is a restricted version of the circuit-level Pauli noise model in \cref{definition: Circuit-level Pauli error model}. In this mode, a $w$-qubit Pauli error channel is placed before every $w$-qubit circuit location except for state preparation locations, where the error channel is applied afterwards. The channel acts on the same $w$ qubits involved in the corresponding location.
\end{definition}

The standard circuit-level Pauli noise model, apart from the state preparations at the beginning, corresponds to a non-overlapping Pauli error channel $\mathcal{N}_{\Gamma}$ on the spacetime code. Specifically, $\mathcal{N}_{\Gamma}$ is the composition of $w$-qubit Pauli channels, each acting on the $w$ spacetime qubits at time $t-0.5$ corresponding to a circuit location involving those $w$ qubits at time $t$. For example, a noisy CNOT gate at circuit layer $t$ on qubit $i$ and $j$ corresponds to a local Pauli channel $\mathcal{N}_{\{(i,t-0.5),(j,t-0.5)\}}$ on the spacetime code.


\begin{figure*}[t]
\centering
\begin{subfigure}{0.8\textwidth}
\phantomcaption
\stackinset{l}{-20pt}{t}{0pt}{\captiontext*}
    {\includegraphics[width=\textwidth]{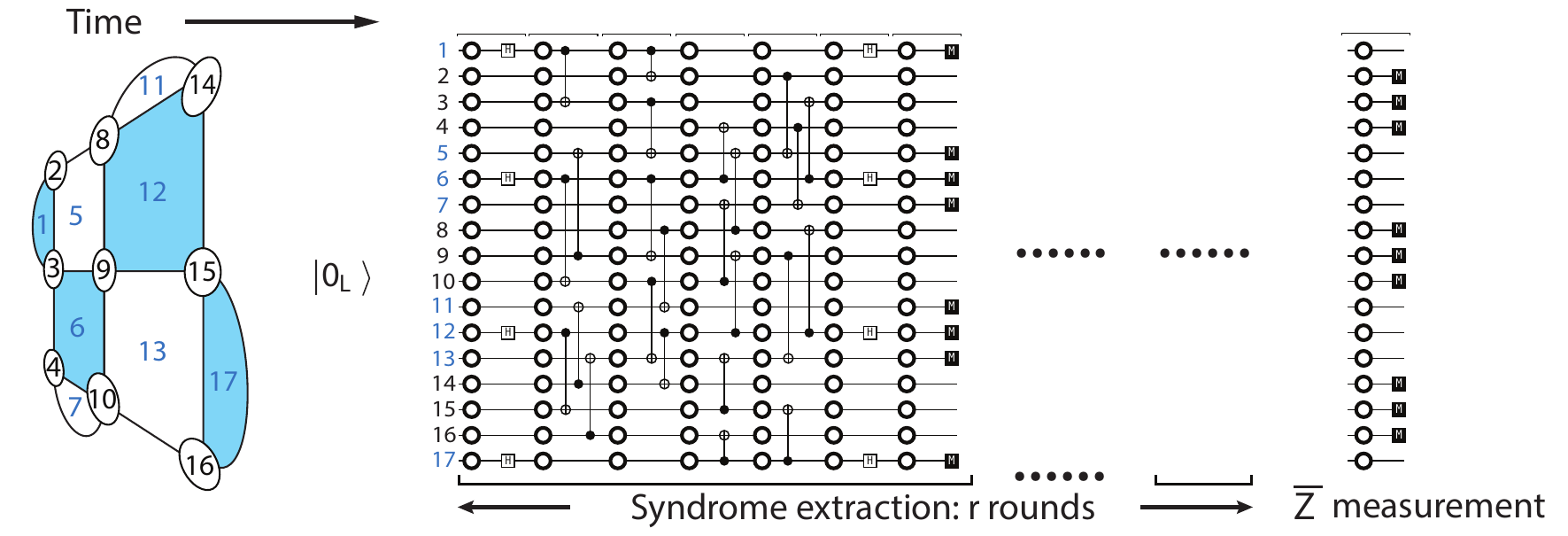}}
    \label{fig: spacetime_surface_code_circuit}
\end{subfigure}
\begin{subfigure}{0.9\columnwidth}
\phantomcaption
\stackinset{l}{0pt}{t}{0pt}{\captiontext*}
    {\hspace{0.2cm}\includegraphics[width=\textwidth]
    {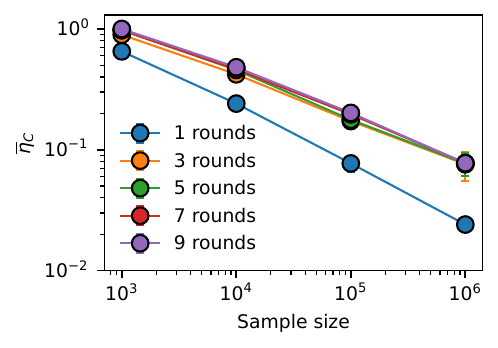}}
    \label{fig: spacetime_syndrome_class_relative}
\end{subfigure}
\hspace{0.5in}
\begin{subfigure}{0.9\columnwidth}
\phantomcaption
\stackinset{l}{-5pt}{t}{-2pt}{\captiontext*}
    {\includegraphics[width=\textwidth]{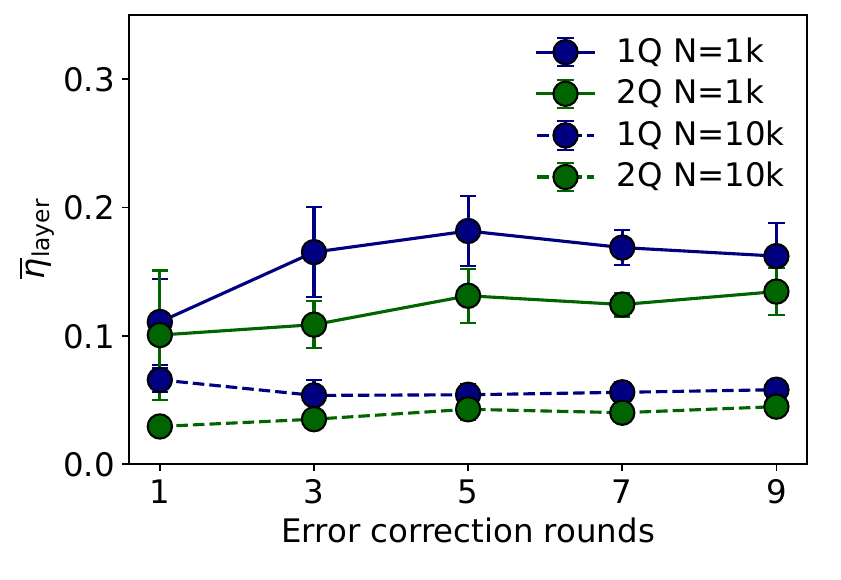}}
    \label{fig: spacetime_layer_relative}
\end{subfigure}
 \caption{(a) Error correction circuit for the $3 \times 3$ rotated surface code and its corresponding spacetime code. The surface code starts from its logical $\ket{\overline 0}$ state and we simulate $r$ rounds of syndrome extractions followed by a transversal $Z$ measurement. The learning algorithm converts the whole circuit into a spacetime code where spacetime qubits (empty circles) are placed before every layer of gates. (b) The relative error $\overline{\eta}_{C}$ of the learned syndrome class total error rate averaged over all classes. We simulated for different rounds of syndrome extraction. (c) relative error $\overline{\eta}_{\text{layer}}$ of the estimated average single-qubit (1Q) and two-qubit (2Q) error rates per layer. The true circuit-level error rates are sampled from a Gaussian distribution ($\mu=0.001/3,\sigma=0.0002/3$ and $\mu=0.01/15,\sigma=0.002/15$ for the Pauli error components of single-qubit and two-qubit locations respectively).}
\label{fig: spacetime surface code}
\end{figure*}

\subsection{\label{subsection: simulation}Simulation on circuit-level noise learning}

We first test our noise learning algorithm on the spacetime code of the Clifford circuits by performing simulations on surface code quantum memory circuits, constructed as follows. We initialize in the initial logical state of the $3 \times 3$ rotated surface code in the logical $\ket{\overline 0}$ state, and place one ancilla qubit in the state $\ket{0}$ for each plaquette of the rotated surface code. 
A single round of syndrome measurement has seven layers of gates, consisting of performing Hadamard gates on the $X$ syndrome ancillas, using CNOT gates to entangle ancilla and data qubits, and measuring ancilla qubits in the correct basis to read out the syndrome bits, see \cref{fig: spacetime_surface_code_circuit}. 
Here, the order of CNOTs is chosen to maximize parallelism and ensure fault tolerance \cite{google2023suppressing}. 
We perform several rounds of syndrome extraction before a final transversal measurement in the computational basis to perform a fault-tolerant logical $\overline{Z}$ measurement. 
For simplicity, we do not perform ancilla qubit reset between rounds of syndrome extraction, but simply keep track of all measurement records \cite{geher2024reset}. The spacetime stabilizer measurement values (detectors) are calculated based on these records.

Given a Clifford circuit, there are different ways of selecting a set of generators for the measured spacetime stabilizer group $\mathcal{M}^{\text{(ST)}}$ corresponding to different bases that implement the same parity check matrix $\mathcal{O}^{\perp}$ on the measurement outcomes. With the property in \cref{equation: local spacetime s property}, we optimize the set of generators such that the weight of each $M^{(\text{ST})}(\vec u)$ is constant by constructing a generating set of $\{\vec u\}$ for $\text{span}(\mathcal{O}^{\perp})$ that involve measurements in a local time window. Specifically, for the surface code syndrome extraction circuit in \cref{fig: spacetime_surface_code_circuit}, we define the local spacetime stabilizer generators based on the following parity constraints: For the first two rounds ($r=1,2$) of syndrome extractions, we set each $u$ to encode the parity of an ancilla measurement result and the corresponding initial stabilizer value. For syndrome measurements at round $r>2$, we set each $u$ to encode the parity of an ancilla result at round $r$ and round $r-2$. Lastly, the transversal $Z$ measurement at the end also creates $Z$ type spacetime generators. For each $Z$ stabilizer generator $S_z$ of the surface code, we set an additional parity check $u$ as the parity of the final data qubit bit strings for $S_z$ and the corresponding ancilla measurement at the last two rounds.
We construct the local spacetime generators with these parity checks $u$ according to \cref{equation: spacetime stabilizer from u}. A $r$-round $3 \times 3$ surface code EC circuit has $8r+4$ local spacetime stabilizer generators. We feed these generators into \cref{algorithm: reduced stabilizer group} to find the minimal set of equations in \cref{equation: log equation reduced}. Under the standard circuit-level Pauli noise model for a 9-round syndrome extraction circuit, we obtain $961$ independent equations corresponding to $|\mathcal{C}^*|=961$ non-trivial syndrome error classes.

We simulate the standard circuit-level Pauli error model where each Pauli component of single-qubit and two-qubit channels has an error rate following a Gaussian distribution centered at near-threshold error rates $0.001/3$ and $0.01/15$, fluctuating with standard deviation $0.0002/3$ and $0.002/15$ respectively. After converting the circuit into a spacetime code, we apply our learning algorithm \cref{equation: optimization} for static code under various extra constraints $B$ using sample sizes $N=10^3,10^4,10^5,10^6$. 

To evaluate the quality of circuit noise learning, we first study how well we can learn the total error rates of the syndrome class of the corresponding spacetime code. Errors within each syndrome class correspond to circuit-level errors that violate the same parity constraints (detectors) $\forall u\in\mathcal{O}^{\perp}$. Here, we use the optimization method \cref{equation: optimization} instead of the approximation method in \cref{equation: optimization D'} as we observe the result from optimization is slightly better, although the latter is more efficient in terms of classical computational cost. We use the simple extra assumption $B$ which assumes uniform noise for all local Pauli errors, i.e. $r(e)=1, \forall e\in \mathcal{E}_{\Gamma}$ in \cref{equation: extra_nontrivial} and \cref{equation: extra_trivial}. In \cref{fig: spacetime_syndrome_class_relative}, we show the average relative error $\overline{\eta}_{C}=\frac{1}{|\mathcal{C}^*|}\sum_{C\in\mathcal{C}^*}\eta_C$ of the predicted syndrome class total error rates to the true error rates. With around $N=1000$ samples, the learning algorithm can learn the orders of magnitude of the syndrome class total error rates. By increasing the sample size to $N = 10^4$ to $10^5$, we achieve better relative precision, and the estimation quality remains stable as the number of syndrome extraction rounds increases. We note that we expect the accuracy can be substantially improved for sample size $N=1000 \text{ to } 10000$ if techniques of reducing shot noise are used, according to the comparison of circles and dotted lines shown in the \cref{fig: precision of two examples} in the code capacity setting.

We also estimate the average error rate for single-qubit locations and two-qubit locations (CNOT gates) per layer, and we expect it to require a smaller sample size to reach the same relative precision compared with estimating syndrome class total error rates. We implement the optimization \cref{equation: optimization} with $B$ following the average noise ratio of single-qubit and two-qubit gates, i.e., $r(e^{(1Q)})=1,r(e^{(2Q)})=10$. We consider $\overline{\eta}_{\text{layer}}^{(1Q)}$ and $\overline{\eta}_{\text{layer}}^{(2Q)}$ which represent the relative error of our estimated single-qubit and two-qubit error rates per layer averaged over all layers, and they are shown in \cref{fig: spacetime_layer_relative}. With $1000$ samples under this noise setup, the learning algorithm can estimate single-qubit and two-qubit gate average error rates per layer within $0.2$ relative error.

\begin{figure}[t]
\centering
\hspace{0.0cm}
\begin{subfigure}{0.95\columnwidth}
    \phantomcaption
    \stackinset{l}{0pt}{t}{-5pt}{\captiontext*}
    {\includegraphics[width=\textwidth]{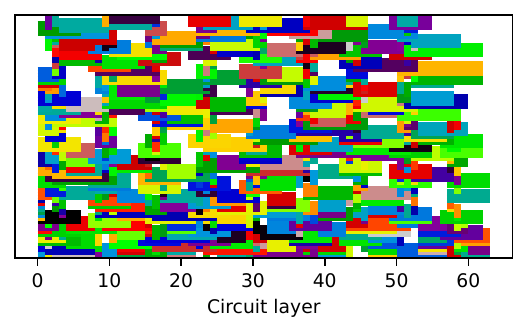}}
    \label{fig: surface temporal span}
\end{subfigure}
\begin{subfigure}{0.95\columnwidth}
    \phantomcaption
    \hspace{0cm}\stackinset{l}{0pt}{t}{-5pt}{\captiontext*}
    {\includegraphics[width=\textwidth]{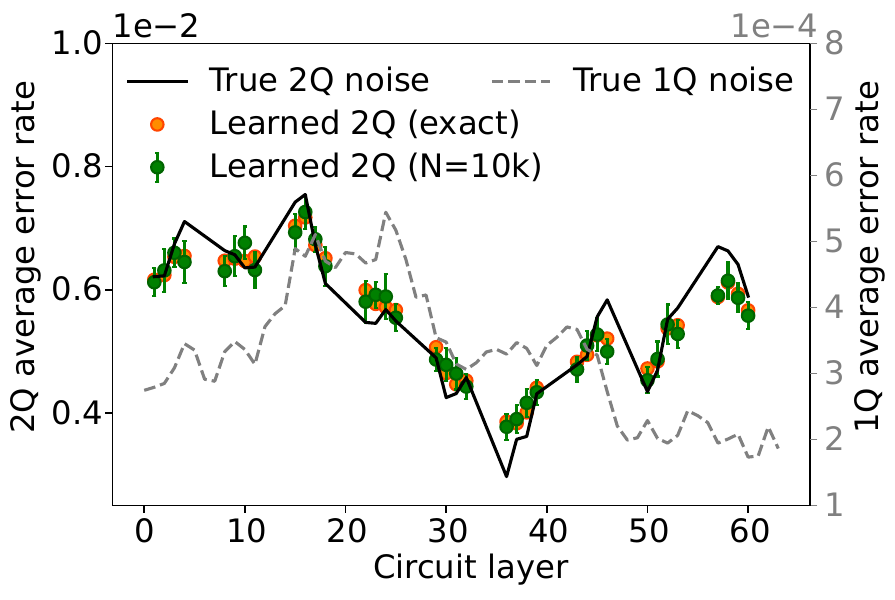}}
    \label{fig: two_qubit_drift}
\end{subfigure}
\caption{(a) The temporal span of 961 non-trivial syndrome error classes $\mathcal{C}^*$ for the 9-round $3 \times 3$ surface code EC circuit (\cref{fig: spacetime_surface_code_circuit}) with circuit level noise. The vertical position is arbitrary as we focus on the resolution in time, and the colors are only for distinguishing different classes. 331 classes are contained in a single layer and 297 classes involve only two consecutive layers. (b) The estimated average two-qubit (2Q) error rates per layer using the simulated syndrome data. We apply a time-varying error rate that follows a random walk for 1Q and 2Q errors independently. The extra constraints encoded in $B$ (\cref{equation: optimization}) is set according to the initial ratio of $p^{(1Q)}_0$ and $p^{(2Q)}_0$. We observe that the learned error rates both from the exact syndrome (orange dots) and from $N=10,000$ samples (green dots) match the true error rates and serve as good estimates of the time-dependent noise.}
\label{fig: time-dependent noise}
\end{figure}

\begin{figure*}
\centering
\begin{subfigure}{0.75\textwidth}
\phantomcaption
\stackinset{l}{-10pt}{t}{0.5pt}{\captiontext*}
    {\includegraphics[width=\textwidth]{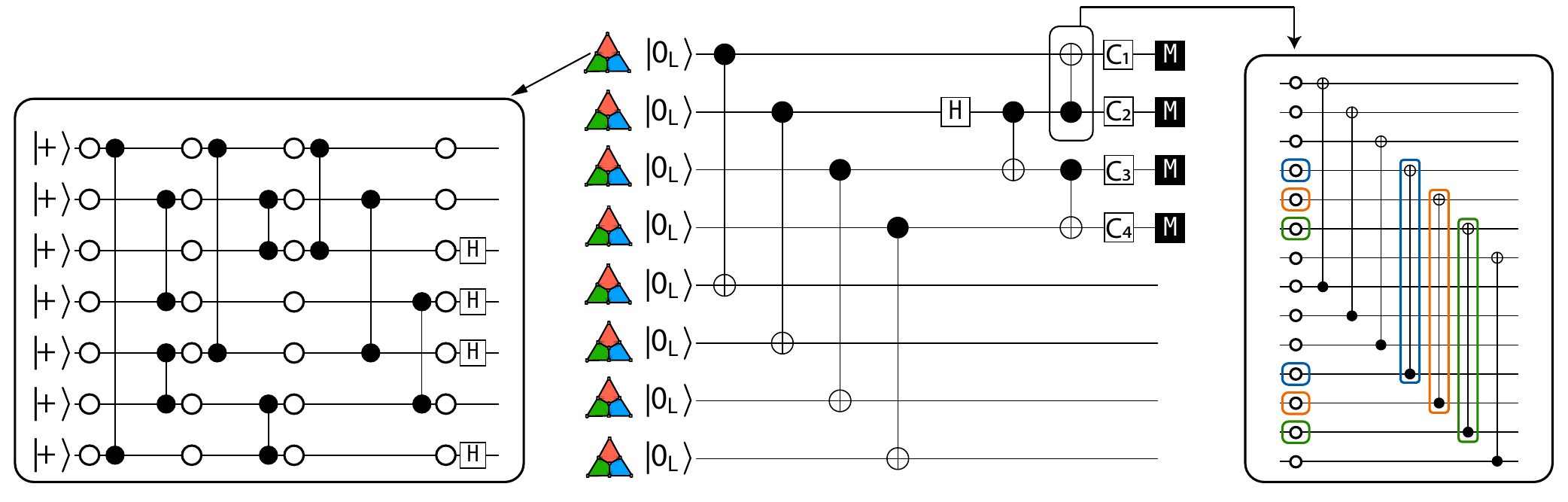}}
    \label{fig: logical circuit}
\end{subfigure}
\begin{subfigure}{0.75\textwidth}
    \phantomcaption
    \stackinset{l}{-10pt}{t}{0.5pt}{\captiontext*}
    {\includegraphics[width=\textwidth]{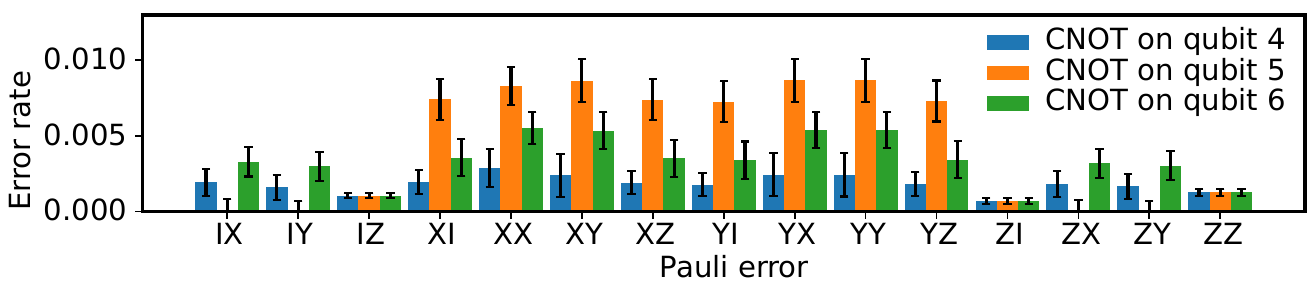}}
    \label{fig: physical circuit learned}
\end{subfigure}
\caption{Logical GHZ state preparation experiment and physical error rate learning. (a) The box on the left shows the encoding circuit for logical $\ket{\overline 0}$ of the $[[7,1,3]]$ Steane code. A series of logical circuits for preparing the 4-qubit encoded GHZ state and logical stabilizer measurements are shown in the middle. Here, logical Hadamard and CNOT gates are transversal gates acting on code blocks. The lower four code blocks fault-tolerantly verify the prepared $\ket{\overline 0}$. The last two layers implement direct fidelity estimation in experiments where logical Clifford gates are applied and then qubits are measured transversally in the Z basis. The box on the right shows the physical circuit of the transversal CNOT gates of the enclosed logical CNOT gate. Empty circles shown in the physical circuits are qubits of the spacetime code, susceptible to errors from gates to their right in the diagram. (b) Learned Pauli error rates of the CNOT gates enclosed in the middle logical circuit for the logical $\overline{ZZZZ}$ measurement circuits are shown. The learning algorithm is applied based on the syndrome data of the spacetime stabilizer (detectors) constructed without the knowledge of the initial logical state of the data blocks.}
\label{fig: logical circuits and learning}
\end{figure*}

Furthermore, we investigate how well the learning algorithm could perform under time-dependent noise. In time series analysis, one could model discrete time-dependent noise as an average over a series of Markov chains $p_t=\frac{1}{M}\sum^{M}_{m=1}p^m_t,\,t=0,.., T$. Here $p^m_t$ is of the form \cite{granger1980long,erland2007constructing}:
\begin{equation}
    p^m_t=\theta^m p^m_{t-1}+\sigma\eta_t,\text{ }t=1,\ldots,T.
\end{equation}
where $\eta_t$ are i.i.d random variables following the standard normal distribution and $\theta^m$ is a parameter from $0$ to $1$. One could approximate the noise signal with spectral density $S(f)\sim 1/f^{\alpha}$ if $\theta^m$ following a certain distribution parametrized by $\alpha$ (Eq. (15) in \cite{erland2007constructing}). With $\theta^m=0$, we have non-correlated noise, also known as white noise, whereas when $\theta^m=1$, we have noise exhibiting the 1D random walk ($S(f)\sim 1/f^{2}$). For simplicity, we simulate the $3 \times 3$ surface code circuit with unbiased Pauli errors that only distinguish single qubit operations (including measurement and idling noise) and two quibit gates with initial total error rate $p^{(1Q)}_0=3\times 10^{-4}$ and $p^{(2Q)}_0=5\times 10^{-3}$, respectively. We model 1Q and 2Q noise separately, each being uniform in every layer and evolving as a random walk with parameter $M=1, \theta=1, \frac{\sigma}{p_0}=0.1$.

Similarly, we use the gate-based extra constraints in the $B$ matrix for the optimization, i.e. $r(e^{(1Q)})/r(e^{(2Q)})=p^{(1Q)}_0/p^{(2Q)}_0$. We note that we have worse predictions of average error rates per layer for less noisy processes, e.g., single-qubit gates, when error rates are highly time-dependent. On the other hand, the algorithm can capture the error rates for the noisy process, i.e., two-qubit gates in this case. In \cref{fig: two_qubit_drift}, we show the result of the learned two-qubit gate errors from both exact syndrome expectations and syndrome expectations from $N=10K$ samples. The learned error rates averaged at every layer are consistent with the true error rates, even at this small sample size. This exceeds our naive expectation that noise only resolved between different syndrome extraction rounds, as we only have one round of measurement at the end of the depth-$7$ syndrome extraction circuit. The resolution can be explained by the short temporal span of the syndrome error classes, as shown in \cref{fig: surface temporal span}. Out of 961 syndrome error classes, 331 classes are contained in a single layer and 297 classes involve only two consecutive layers, whereas a single EC cycle includes 7 layers of gates. 

\subsection{Learning from experimental data\label{section: physical_learning_GHZ}}

We consider learning the circuit-level noise model of a fault-tolerant circuit performed on the Harvard neutral atom platform \cite{bluvstein2024logical} that prepares a logical 4-qubit GHZ state, as shown in \cref{fig: logical circuit}. The logical circuit is encoded in four blocks of seven-qubit Steane code by the non-fault-tolerant $\ket{\overline 0}$ preparation circuit and then fault-tolerantly verified by the other four blocks of logical ancillas in $\ket{\overline 0}$. Then, the logical GHZ state is prepared using logical Hadamard and CNOT gates, which are transversal for the Steane code. To correct errors and verify logical fidelity, the direct fidelity estimation \cite{flammia2011direct} on the logical level was performed by measuring all
elements of the logical stabilizer group $\langle\overline{ZZII}, \overline{IZZI}, \overline{IIZZ}, \overline{XXXX}\rangle$ of the logical GHZ state. The logical measurements in the $\overline{Z}$ basis, i.e. $\langle\overline{ZZII}, \overline{IZZI}, \overline{IIZZ}\rangle$, can be done simultaneously with a single logical circuit; therefore, 9 logical circuits are executed in total.
The initial stabilizer generators of the circuit are
$X_1, X_2, X_5, X_6, X_3X_4,X_3X_7$, which propagate to the stabilizer generators of the Steane code for each data code block, along with all single-qubit X operators for the ancilla blocks. These are provided as input to \cref{algorithm: spacetime stabilizer} to construct the measured spacetime stabilizer group of each logical circuit.

To get a reasonable estimate of gate error rates, we impose extra constraints $B$ using previously reported gate fidelities \cite{evered2023high} as a reference. 
Specifically, for errors within each syndrome class of the spacetime code, we differentiate them by the prior belief $r(e)$ \cref{equation: extra_nontrivial,equation: extra_trivial}, i.e., a relative strength between errors of different circuit location type, as $r=1,3,10,20$ 
for idle, single-qubit, two-qubit, and measurement errors respectively. Each logical circuit is repeated and postselected on successful initialization, resulting in a sample size $N$ of less than 3000 times, and these sampled measurement bitstrings are used to obtain the expectation values of the spacetime code syndrome, i.e., the empirical Pauli eigenvalues. Part of the learned logical CNOT gate error rates are shown in \cref{fig: physical circuit learned}. 
To estimate the uncertainty of the learned noise, we simulate sampling syndrome with the same number of samples as in the experiment and apply the learning procedure to these simulated syndrome data. The resulting error bars represent the standard deviation of the learned error rates over 30 repetitions. We can distinguish the relative noise strength of different gates and Pauli components with a certain confidence. This can be useful to identify potential circuit defects to perform gate tune-up. One of the features we observe in common with the learned error rates in this collection of circuits is that the gate fidelity tends to diminish over time. We hypothesized that this is caused by atom loss. See \cref{appendix: more plots} for the complete estimated error rates for the $\overline{ZZZZ}$ measurement circuit.

\section{\label{section: logical error rate and fidelity}Learning logical error rates and logical fidelity from syndrome data}

In the previous section, we showed how to explicitly characterize physical circuit-level errors in fault-tolerant circuits using syndrome data. In this section, we ask if that same data can also be used to benchmark the quality of the implemented \emph{logical circuits}. We extend the results of \textcite{wagner2023learning} from the channel-coding setting to non-adaptive Clifford circuits, allowing arbitrary stabilizer code codewords as inputs and logical measurements. Given a Clifford circuit that is fault-tolerant to the assumed circuit-level Pauli error model (\cref{definition: Circuit-level Pauli error model,definition: fault tolerant}), we first prove that with respect to the set of logical measurements, the total error rate of the classes of logically equivalent errors with a given syndrome is learnable using the syndrome expectation values. We emphasize that this is a decoder-independent quantity.

Consequently, with a decoder, we show that logical error rates for the set of logical measurements are learnable. In practice, we take the estimated physical error rates that satisfy the syndrome expectation, as described in \cref{section: physical error rate estimation}, and perform Monte Carlo simulations using any relevant decoder to learn the logical error rates. Notably, the input state of the circuit can be any codeword (e.g., non-stabilizer state) since characterizing errors of logical measurements in Clifford circuits can be determined from the commutation relation between circuit errors and the spacetime code logical operators (see \cref{appendix: spacetime code}).

\subsection{Logical error rate learnability\label{section: logical learnability}}
We consider an ideal logical circuit with a set of $m_L$ logical measurements $\mathcal{M}_L=\{\overline{M}_i\}_i$ implemented by a physical fault-tolerant Clifford circuit. For example, for the logical GHZ preparation circuit shown in \cref{fig: logical circuit} with $Z$ basis measurements, the set of logical measurements is $\mathcal{M}_L=\{\overline{ZIII}, \overline{IZII}, \overline{IIZI},\overline{IIIZ}\}$ at the final layer. The ideal logical circuits can be thought of as a quantum-to-classical channel. Given an input state, the channel can be described by a distribution over the output classical bitstrings $\vec o^{(\text{ideal})}_L\in\mathbb{Z}_2^{m_L}$ corresponding to the ideal logical measurement outcome. When there is a circuit-level Pauli error channel, the logical output is obtained by first applying additional Pauli correction to the circuit based on the observed syndrome using a decoder and then determining the corrected logical outcome. The errors and the subsequent correction introduce an additional bit-flip channel on the ideal logical measurement outcomes, which can be described by a distribution $P_{\text{fail}}$ over the possible logical errors, i.e., bit flips $\vec f_L\in\mathbb{Z}_2^{m_L}$. The final logical output is then given by $\vec o^{(\text{ideal})}_L\oplus \vec{f_L}$. We now show that given any decoder, the bit-flip channel on logical measurement outcome, described by the logical error rate $P_{\text{fail}}(\vec f_L)$, is learnable from syndrome data under a fault-tolerant condition.

Given the physical error rates, we quantify the probability of a logical error happening for logical measurements using the spacetime code picture introduced in \cref{section: spacetime code mapping}. We map the physical circuit into an $n^{(\text{ST})}$-qubit spacetime subsystem code. In particular, we have the parity checks $\mathcal{O}^{\perp}$ in the physical measurements mapped to the measured spacetime stabilizer group $\mathcal{M}^{(\text{ST})}$, and the logical measurements $\mathcal{M}_L$ mapped to the group of measured spacetime logical operators $\mathcal{L}'^{(\text{ST})}$ using \cref{algorithm: spacetime stabilizer}. The effect of any circuit Pauli error on the syndrome and logical measurement can be captured by the commutation between the corresponding error $e\in\mathcal{P}_{n^{(\text{ST})}}$ in the spacetime code and the spacetime stabilizer and logical elements (see \cref{lemma: Equivalence of syndrome in circuit and spacetime code} and \cref{lemma: Equivalence of logical error in the spacetime code} in \cref{appendix: spacetime code}). As a result, an error $e\in\mathcal{P}_{n^{(\text{ST})}}$ is undetectable and logically trivial to the logical measurements if and only if $e$ is in the gauge group $\mathcal{G}'^{\text{(ST)}}=N(\mathcal{L}'^{\text{(ST)}})$ of the spacetime code. Therefore, any two Pauli errors that differ by elements of the spacetime code gauge group are indistinguishable and logically equivalent. For each Pauli error $e$ in the spacetime code, we define its \emph{logical error class} as the coset $e\mathcal{G}'^{\text{(ST)}}$ in $\mathcal{P}_{n^{(\text{ST})}}$ and we denote it as $\overline{C}_{\vec{s},\vec{\ell}}$. Here, $\vec{s}\in\mathbb{Z}_2^{r}$ and $\vec{\ell}\in\mathbb{Z}_2^{m_L}$ are the syndrome and logical component of the error $e$ respectively:
\begin{align}
    s_i(e)&=\langle e,M^{(\text{ST})}_i\rangle\\
    \ell_i(e)&=\langle e,L'^{(\text{ST})}_i\rangle,
\end{align}
where $\{M_i^{(\text{ST})}\}^r_i$ generates the measured stabilizer group. $\{L_i'^{(\text{ST})}\}^{m_L}_i$ is the set of measured spacetime logical operators from the logical measurements. Furthermore, we denote the probability of each logical error class as
\begin{equation}
\overline{P}\left(\vec{s},\vec{\ell}\right)=\sum_{e\in\overline{C}_{\vec{s},\vec{\ell}}}P(e)=\sum_{g\in\mathcal{G}'^{(\text{ST})}}P(e_{\vec s,\vec \ell}\,g),
\end{equation}
where $e_{\vec s,\vec \ell}$ is any error satisfying $\vec{s}(e)=\vec{s}$ and $\vec{\ell}(e)=\vec{\ell}$.
Note that the number of degrees of freedom of $\overline{P}$ is the numbers of cosets of $\mathcal{G}'^{\text{(ST)}}$, i.e., reduced by a factor of $|\mathcal{G}'^{\text{(ST)}}|$ compared with the physical error rate $P(e)$. In contrast to generic physical error rates that is determined by all Pauli eigenvalues in $\mathcal{P}_{n^{\text{(ST)}}}$, one only needs Pauli eigenvalues $\Lambda_{\mathcal{L}'}$ of the logical operators in $\mathcal{L}'^{\text{(ST)}}$ to determine $\overline{P}$. In fact, they are related by a Walsh-Hadamard transform $W$:
\begin{equation}
    \overline{P} \xleftrightarrow{\quad W \quad} \Lambda_{\mathcal{L}'}.\label{equation: logical_from_inverse_WH}
\end{equation}
See \cite{wagner2023learning} and \cref{appendix: circuit logical learning} for details. \textcite{wagner2023learning} showed the sufficient condition of the learnability of $\overline{P}$ in the channel coding setting.
\begin{theorem}[\textcite{wagner2023learning}]
     Let a Pauli error channel $\mathcal{N}_{\Gamma}$ be a composition of Pauli channels with support $\gamma\in\Gamma$ as defined in \cref{equation: channel composition}. The probabilities $\overline{P}$  of logical error classes are learnable from the syndrome data if any union $
     \gamma_1 \cup \gamma_2$ of two channel supports $\gamma_1, \gamma_2\in\Gamma$ only supports trivial logical operators.
    \label{theorem: wagner_logical}
\end{theorem}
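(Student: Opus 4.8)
\emph{Proof proposal.} The plan is to show that the logical--class probabilities $\overline{P}$ are determined by --- i.e.\ are an explicitly computable function of --- the measured syndrome statistics $\{\Lambda(M):M\in\mathcal{M}\}$; this parallels the proof of \cref{theorem: wagner}/\cref{theorem: syndrome class learning}, but now only ``modulo the logical degrees of freedom.'' First I would reduce to a statement about Pauli eigenvalues. By the computation behind \cref{equation: logical_from_inverse_WH} (and \cref{appendix: circuit logical learning}), $\overline{P}$ is recovered by a Walsh--Hadamard transform over $\mathbb{Z}_2^{r}\times\mathbb{Z}_2^{m_L}$ from the eigenvalues $\Lambda(O)$ with $O$ ranging over the group $\langle\mathcal{M},\mathcal{L}'\rangle$ generated by the measured stabilizers and the measured logical operators: for a character $(\vec a,\vec b)$ one has $\widehat{\overline{P}}(\vec a,\vec b)=\Lambda\bigl((\prod_i M_i^{a_i})(\prod_j L_j^{b_j})\bigr)$, using only multiplicativity of the scalar commutator in each argument. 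Hence it suffices to prove that $\Lambda(O)$ is determined by the syndrome data for \emph{every} $O\in\langle\mathcal{M},\mathcal{L}'\rangle$.

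Next I would invoke the factor-graph structure. By \cref{equation: E factorization}, and by the same manipulation that yields \cref{equation: log equation mu} (since $\Lambda_\gamma(O_\gamma)$ depends on $O$ only through the commutations $\langle O,e\rangle$ with $e\in\mathcal{E}_\gamma$), one has for every Pauli $O$
\[ \log\Lambda(O)=\sum_{\gamma\in\Gamma}\log\Lambda_\gamma(O_\gamma)=\sum_{e\in\mathcal{E}_\Gamma}\langle O,e\rangle\,\log\mu_e , \]
all logarithms being well defined because \cref{assumption: open set probability} keeps each $\Lambda_\gamma(\cdot)>0$. Writing $\rho_O:=(\langle O,e\rangle)_{e\in\mathcal{E}_\Gamma}\in\{0,1\}^{\mathcal{E}_\Gamma}$, this exhibits $\log\Lambda(O)$ as the pairing of the fixed (unknown) vector $(\log\mu_e)_e$ with $\rho_O$. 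It therefore remains to show that $\rho_O\in\operatorname{span}_{\mathbb{Q}}\{\rho_M:M\in\mathcal{M}\}$ for every $O\in\langle\mathcal{M},\mathcal{L}'\rangle$; this gives $\log\Lambda(O)=\sum_M c_M\log\Lambda(M)$ with fixed coefficients $c_M$, whence $\Lambda(O)$ --- and thus $\overline{P}$ --- is a function of the syndrome data.

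The crux, and the step I expect to be the main obstacle, is to pin down that span and locate $\rho_O$ inside it using the hypothesis. Each $\rho_M$ is constant on every syndrome class $C\in\mathcal{C}$ and vanishes on the trivial class $C_0$, and full column rank of the reduced syndrome matrix $D'$ of \cref{equation: D' matrix main} (\cref{appendix: A rank general}; cf.\ \cref{theorem: syndrome class learning}) shows $\operatorname{span}\{\rho_M\}$ is precisely the space of \emph{all} such vectors. So I only need $\rho_O$ to be constant on each syndrome class and zero on $C_0$. If $e\in C_0$, then $e$ is undetectable with $\text{supp}(e)\subseteq\gamma_e=\gamma_e\cup\gamma_e$, so the hypothesis forces $e$ to be a trivial logical operator, hence to commute with all of $\mathcal{M}$ and all of $\mathcal{L}'$ and thus with $O$; so $\langle O,e\rangle=0$. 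If $e,e'$ lie in the same class, then $ee'$ is undetectable (it commutes with all of $\mathcal{M}$) and supported on $\gamma_e\cup\gamma_{e'}$, so the hypothesis again forces $ee'$ to be a trivial logical operator, hence to commute with every element of $\langle\mathcal{M},\mathcal{L}'\rangle\ni O$, giving $\langle O,e\rangle=\langle O,e'\rangle$. Therefore $\rho_O\in\operatorname{span}\{\rho_M\}$, $\Lambda(O)$ is learnable, and so is $\overline{P}$.

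Two points I would stress. The hypothesis is used precisely to dispatch these two cases, and it is the weakest form that works: it still permits undetected-but-logically-trivial errors on $\gamma_1\cup\gamma_2$, which is exactly why one cannot instead apply \cref{theorem: wagner} to the enlarged stabilizer group $\langle\mathcal{M},\mathcal{L}'\rangle$ (that would wrongly outlaw such errors). And, in contrast to the physical-rate estimate of \cref{section: physical error rate estimation}, no low-error-rate approximation is needed here --- the whole argument is exact at the level of Pauli eigenvalues.
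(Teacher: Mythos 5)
Your proposal is correct and follows essentially the same route as the paper's own argument (\cref{appendix: circuit logical learning}): reduce $\overline{P}$ via Walsh--Hadamard to the Pauli eigenvalues $\Lambda(O)$ for $O$ in the measured logical group, write $\log\Lambda(O)$ through the transformed-eigenvalue system $D^{(\cdot)}\log\vec{\mu}$, and use the full-column-rank property of $D'^{(\mathcal{M})}$ (\cref{lemma: rank D}) together with the hypothesis — which forces every undetectable single error or pairwise product to be gauge-trivial, so the syndrome classes are not refined by the logical operators — to place each logical row in the row space of the stabilizer rows. Your explicit characterization of that row space as the class-constant vectors vanishing on $C_0$ is just a slightly more hands-on phrasing of the paper's rank-comparison between $D^{(\mathcal{L}')}$ and $D^{(\mathcal{M})}$, so the two proofs coincide in substance.
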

We emphasize that the values of $\overline{P}(\vec s, \vec \ell)$ are decoder-independent quantities that depend only on the logical equivalence relation in the code, i.e., the gauge group and the physical error rates. To define the logical error rate $P_{\text{fail}}$ of the logical measurements, we need a decoder. A decoder takes the syndrome $\vec{s}$ of the circuit as input and returns a logical class $\vec{\ell}_{\text{dec}}(\vec{s})$. Then, the probability of a logical error corresponds to the bit-flip $\vec f_L\in\mathbb{Z}_2^{m_L}$ of the ideal logical measurement outcome average over all possible syndromes is
\begin{equation}\label{equation: logical error rate}
    P_{\text{fail}}\left(\vec f_L\right)=\sum_{\vec{s}} \overline{P}\left(\vec{s},\vec f_L \oplus \vec{\ell}_{\text{dec}}(\vec{s})\right).
\end{equation}
We show that syndrome expectation values in the circuit suffice to determine the logical error class probability, and hence the logical error rates, of fault-tolerant Clifford circuits (in the sense of \cref{definition: fault tolerant}) under the assumed circuit-level Pauli noise model.
\begin{definition}[Fault tolerance]
    Consider a circuit-level Pauli error model defined by a set of spacetime supports $\Gamma^{(\text{ST})}$ as in \cref{definition: Circuit-level Pauli error model}. We denote the set of individual spacetime Pauli errors of the model by $\mathcal{E}_{\Gamma^{(\text{ST})}}$. We say a non-adaptive Clifford circuit is fault tolerant to a circuit-level Pauli error model for a set of logical measurements $\mathcal{M}_L$ if and only if for any undetectable error $e$, either in $\mathcal{E}_{\Gamma^{(\text{ST})}}$ or as a product $e=e_1e_2$ for any $e_1,e_2\in\mathcal{E}_{\Gamma^{(\text{ST})}}$, the error $e$ on the circuit does not change the distribution of logical measurements outcome in $\mathcal{M}_L$ for any input codeword to the circuit.\label{definition: fault tolerant}
\end{definition}

\begin{theorem}[Learnability of logical errors]\label{theorem: logical error rate learnability}
    Given a non-adaptive Clifford circuit with a set of logical measurements $\mathcal{M}_L$ under a circuit-level Pauli noise model (satisfies \cref{assumption: open set probability} on the spacetime code), the logical error class probabilities $\overline{P}$ for $\mathcal{M}_L$ can be learned from the syndrome expectation value if and only if the circuit is fault-tolerant (\cref{definition: fault tolerant}) to the noise model for $\mathcal{M}_L$. As a consequence, given a decoder of the circuit, the associated logical error rates $P_{\text{fail}}$ for the set of logical measurements $\mathcal{M}_L$ can be learned from the syndrome data if the circuit is fault-tolerant to the noise model for $\mathcal{M}_L$.
    \label{theorem: logical learning}
\end{theorem}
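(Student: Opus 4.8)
The plan is to transport everything onto the spacetime code $(\mathcal{M}^{(\text{ST})},\mathcal{L}'^{(\text{ST})},\Gamma^{(\text{ST})})$ obtained from the circuit via the mapping of \cref{section: spacetime code mapping}, prove the ``iff'' statement about $\overline{P}$ there, and then read off the $P_{\text{fail}}$ claim. I would split the biconditional: for ``fault tolerant $\Rightarrow$ learnable'' I reduce to \cref{theorem: wagner_logical}, and for ``not fault tolerant $\Rightarrow$ not learnable'' I exhibit two admissible noise models with identical syndrome statistics but distinct $\overline{P}$, using the $\mu$-parametrization of \cref{subsection: syndrome class learnability}.

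\textbf{Translating fault tolerance and the easy direction.} First I restate \cref{definition: fault tolerant} purely in spacetime-code terms. By \cref{lemma: Equivalence of syndrome in circuit and spacetime code}, a circuit error is undetectable iff its spacetime image commutes with every $M\in\mathcal{M}^{(\text{ST})}$; by \cref{lemma: Equivalence of logical error in the spacetime code}, it leaves the $\mathcal{M}_L$-outcome distribution invariant \emph{for every input codeword} iff it commutes with every generator $L'^{(\text{ST})}_j$ of $\mathcal{L}'^{(\text{ST})}$ (quantifying over all codewords is what excludes accidental invariance: an anticommuting error deterministically flips the $j$-th logical bit on the appropriate logical eigenstate input). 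Hence fault tolerance is equivalent to: every undetectable $e\in\mathcal{E}_{\Gamma^{(\text{ST})}}$ and every undetectable product $e_1e_2$ with $e_1,e_2\in\mathcal{E}_{\Gamma^{(\text{ST})}}$ commutes with all $L'^{(\text{ST})}_j$. Under \cref{assumption: open set probability} every Pauli supported on a single $\gamma$ lies in $\mathcal{E}_\gamma$, and every Pauli supported on $\gamma_1\cup\gamma_2$ is a product of an element of $\mathcal{E}_{\gamma_1}$ and one of $\mathcal{E}_{\gamma_2}$ (restrict to $\gamma_1$ and to $\gamma_2\setminus\gamma_1$); so this condition says exactly that no nontrivial operator of $\mathcal{L}'^{(\text{ST})}$ is supported on any $\gamma_1\cup\gamma_2$, which is precisely the hypothesis of \cref{theorem: wagner_logical} for the spacetime subsystem code. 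This gives learnability of $\overline{P}$ from the syndrome expectation values.

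\textbf{Necessity.} For the converse I use the $\mu$-variables: extending \cref{equation: log equation mu} from $\mathcal{M}$ to all Paulis via the factorization \cref{equation: E factorization}, $\log\Lambda(O)=\sum_{e\in\mathcal{E}_{\Gamma^{(\text{ST})}}}\langle O,e\rangle\log\mu_e$ for any Pauli $O$; the syndrome data determine exactly the quantities $\log\nu_C=\sum_{e\in C}\log\mu_e$, $C\in\mathcal{C}^*$, and nothing more (the rank statement behind \cref{theorem: syndrome class learning}, cf.\ \cref{appendix: A rank general}), while $\overline{P}$ is fixed, through the bijective Walsh--Hadamard transform \cref{equation: logical_from_inverse_WH}, by the eigenvalues $\Lambda(O)$ for $O$ in the group generated by $\mathcal{M}^{(\text{ST})}\cup\mathcal{L}'^{(\text{ST})}$. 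Suppose the circuit is not fault tolerant; by the translation above there is an undetectable $e^\ast$ that anticommutes with some $L'^{(\text{ST})}_{j_0}$, either a single $e_1\in\mathcal{E}_{\gamma_1}$ (then $e_1\in C_0$) or a product $e_1e_2$ with $e_i\in\mathcal{E}_{\gamma_i}$. Fix any reference model satisfying \cref{assumption: open set probability} with all $\mu_e\in(0,1)$, and perturb: in the single-error case set $\mu_{e_1}\mapsto(1-t)\mu_{e_1}$; in the product case (there $e_1,e_2$ share a syndrome, so lie in one syndrome class, and $\langle L'^{(\text{ST})}_{j_0},e_1\rangle\neq\langle L'^{(\text{ST})}_{j_0},e_2\rangle$) set $\log\mu_{e_1}\mapsto\log\mu_{e_1}+\delta$, $\log\mu_{e_2}\mapsto\log\mu_{e_2}-\delta$. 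For small $t,\delta$ the perturbed $\mu$'s still correspond to valid local channels on the same $\Gamma^{(\text{ST})}$ satisfying the assumption, by continuity of the invertible map between $\{\mu_e\}_{e\in\mathcal{E}_\gamma}$ and $P_\gamma$. Because $\langle M,e_1\rangle=\langle M,e_2\rangle$ for all $M\in\mathcal{M}^{(\text{ST})}$ (and $\langle M,e_1\rangle=0$ in the single-error case), every $\log\nu_C$ is unchanged, so the two models have identical syndrome statistics; but $\log\Lambda(L'^{(\text{ST})}_{j_0})$ changes by $\log(1-t)$, resp.\ by $(\langle L'^{(\text{ST})}_{j_0},e_1\rangle-\langle L'^{(\text{ST})}_{j_0},e_2\rangle)\delta\neq0$, so the corresponding Walsh--Hadamard component of $\overline{P}$ differs. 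Hence $\overline{P}$ is not a function of the syndrome statistics.

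\textbf{Consequence and main obstacle.} Given any decoder $\vec{\ell}_{\text{dec}}$, \cref{equation: logical error rate} writes $P_{\text{fail}}$ as an explicit linear functional of $\overline{P}$ with decoder-determined coefficients, so learnability of $\overline{P}$ under fault tolerance immediately yields the $P_{\text{fail}}$ claim (the exponential sum over syndromes being a computational, not an identifiability, matter). I expect the delicate points to be: (i) the translation step --- making the passage from ``does not change the $\mathcal{M}_L$-outcome distribution for \emph{any} codeword'' to ``commutes with $\mathcal{L}'^{(\text{ST})}$'' fully rigorous via \cref{lemma: Equivalence of syndrome in circuit and spacetime code,lemma: Equivalence of logical error in the spacetime code}, including the subsystem-code bookkeeping hidden in $\mathcal{G}'^{(\text{ST})}=N(\mathcal{L}'^{(\text{ST})})$ (bare vs.\ dressed logicals); and (ii) confirming in the necessity argument that the $\mu$-perturbation stays inside the admissible model class, i.e.\ preserves both the inequalities $P_\gamma(I)>1/2$, $P_\gamma(e)>0$ of \cref{assumption: open set probability} and the support structure $\Gamma^{(\text{ST})}$ --- an open-set/continuity argument that needs the explicit invertible change of variables between $\{\mu_e\}$ and $P_\gamma$.
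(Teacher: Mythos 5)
Your overall route is the paper's route: map to the spacetime code, translate fault tolerance into commutation with $\mathcal{M}^{(\text{ST})}$ and $\mathcal{L}'^{(\text{ST})}$ via \cref{lemma: Equivalence of syndrome in circuit and spacetime code,lemma: Equivalence of logical error in the spacetime code} (the paper packages this as \cref{corollary: Correctness condition for spacetime code}), use the fact that $\overline{P}$ is in bijection with $\{\Lambda(L')\}_{L'\in\mathcal{L}'^{(\text{ST})}}$ through \cref{equation: logical_from_inverse_WH}, and obtain $P_{\text{fail}}$ from $\overline{P}$ and the decoder. Your necessity argument is essentially the paper's: the paper shows $\rank D^{(\mathcal{L}')}>\rank D^{(\mathcal{M})}$ when fault tolerance fails and then picks $\vec\epsilon\in\ker D^{(\mathcal{M})}\setminus\ker D^{(\mathcal{L}')}$, validity of the perturbed model following from \cref{lemma: valid mu is an openset}; your single-error and $\pm\delta$ two-error perturbations are just explicit such kernel vectors, which is fine and arguably more transparent.

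The one place where you genuinely diverge is the sufficiency direction, and there your reduction has a gap. \cref{theorem: wagner_logical} is a channel-coding statement in which the code's syndrome is observed; but for the spacetime subsystem code with gauge group $\mathcal{G}'^{(\text{ST})}=N(\mathcal{L}'^{(\text{ST})})$ the center is $Z(\mathcal{G}'^{(\text{ST})})=\mathcal{L}'^{(\text{ST})}$ itself (since $\mathcal{L}'^{(\text{ST})}$ is abelian), and the data you are allowed to use are only the expectations of the proper subgroup $\mathcal{M}^{(\text{ST})}\leq\mathcal{L}'^{(\text{ST})}$ --- the expectations of the logical elements are precisely what you are trying to infer, not part of the syndrome. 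So even after your (correct) observation that fault tolerance is equivalent to ``every undetectable Pauli supported on some $\gamma_1\cup\gamma_2$ lies in $\mathcal{G}'^{(\text{ST})}$'', invoking \cref{theorem: wagner_logical} as a black box does not by itself give learnability from $\{\Lambda(M)\}_{M\in\mathcal{M}^{(\text{ST})}}$ alone; you would have to reprove it in this measured-subgroup setting. The paper closes exactly this gap directly: under fault tolerance the syndrome classes of $\mathcal{E}_{\Gamma^{(\text{ST})}}$ with respect to $\mathcal{M}^{(\text{ST})}$ and with respect to $\mathcal{L}'^{(\text{ST})}$ coincide, so by \cref{lemma: rank D} $\rank D^{(\mathcal{L}')}=\rank D^{(\mathcal{M})}$, and since $\mathcal{M}^{(\text{ST})}\subseteq\mathcal{L}'^{(\text{ST})}$ every row of $D^{(\mathcal{L}')}$ is a linear combination of rows of $D^{(\mathcal{M})}$, i.e.\ every $\Lambda(L')$ is a function of the measured syndrome expectations. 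Your argument becomes complete if you replace the citation by this short rank/syndrome-class-coincidence step (or an equivalent rederivation of Wagner's result for a measured subgroup of the center); everything else in your proposal, including the two ``delicate points'' you flag, is already covered by the paper's lemmas.
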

The proof using the spacetime code mapping is in \cref{appendix: circuit logical learning}. A naive way of learning the logical error rate would be to solve for $\Lambda_{\mathcal{L}'}$ from the syndrome expectation values and then performing the inverse Walsh-Hadamard transformation in \cref{equation: logical_from_inverse_WH} involving an exponentially large matrix. In practice, we first learn any set of physical error rates compatible with the syndrome expectations and then perform Monte Carlo simulations to estimate the logical error rate.

In addition to benchmarking logical measurements, estimating the fidelity of a prepared logical state is also of interest. We show that given the learned logical noise rates, we can estimate the logical fidelity \cite{hangleiter2024fault, krinner2022realizing, combes2017logical, bluvstein2024logical} of the noisy output state $\tilde{\rho}$ with the target output state $\rho=\proj{\overline{\psi}}$, which is assumed to be pure. We define the logical fidelity $F_L$ as the physical fidelity after correcting the output state $\tilde{\rho}$ back to the code space, see \cite{hangleiter2024fault} for a detailed discussion. We denote by $\mathcal{S}$ the stabilizer group of the output state $\rho$, and by $\mathcal{L}=N(\mathcal{S})$ the group of logical operators. We denote the set of representatives of each coset of $\mathcal{S}$ in $\mathcal{L}$ as $\hat{\mathcal{L}}$. In the case when the logical circuits are performing direct fidelity estimation by measuring logical operators in different bases \cite{flammia2011direct}, we have
\begin{align}
F_L&=\Tr\{R(\tilde{\rho})\rho\}\\
&=\frac{1}{2^n}\sum_{O\in\mathcal{P}_n}\langle O\rangle_{R(\tilde{\rho})}\langle O\rangle_{\rho}\\
&=\frac{1}{2^k}\sum_{L'\in\hat{\mathcal{L}}}\langle L'\rangle_{R(\tilde{\rho})}\langle L'\rangle_{\rho}
\label{equation: direct fidelity estimation}.
\end{align}
Here, $R$ is the CPTP map of syndrome measurement followed by error correction, and we denote by $\langle O\rangle_{\rho}$ the expectation value of $O$ with respect to the target output state $\rho$. We can learn the logical fidelity $F_L$ from the syndrome data of the family of circuits. This is because $P_{\text{fail}}$ can be learned according to \cref{theorem: logical error rate learnability}, and $\langle L'\rangle_{R(\tilde{\rho})}$ depends on the Walsh-Hadamard transformation of $P_{\text{fail}}$ and the ideal expectation value $\langle L'\rangle_{\rho}$ as
\begin{equation}
    \langle L'\rangle_{R(\tilde{\rho})}=\left(\sum_{\vec f_L \in\mathbb{Z}_2^{m_L}}(-1)^{\vec \alpha(L')\cdot \vec f_L}P_{\text{fail}}(\vec f_L)\right)\langle L'\rangle_{\rho}.
\end{equation}
where $\vec \alpha(L')\in\mathbb{Z}_2^{m_L}$ is the binary vector whose nonzero entries indicate which measurement outcomes are used to compute $\langle L' \rangle$. When $\rho$ is a stabilizer state, denote its full stabilizer group as $\mathcal{S}_{\rho}>\mathcal{S}$ and denote by $\hat{\mathcal{S}}_{\rho}$ a set of representatives of the cosets of $\mathcal{S}$ in $\mathcal{S}_{\rho}$ then the logical fidelity reduces to $F_L=\frac{1}{2^k}\sum_{S\in\hat{\mathcal{S}}_{\rho}}\langle S\rangle_{R(\tilde{\rho})}$. In \cref{subsection: logical error rates in experiments}, we compared the learned logical fidelity from syndrome data with the experimentally measured fidelity (\cref{fig: logical circuit}) of the logical GHZ state from logical direct fidelity estimation.

When there is no sufficient set of logical measurement bases implemented at the end, syndrome constraints will not be sufficient to uniquely identify the logical fidelity. One can still estimate the logical fidelity by using additional constraints (\cref{equation: extra_trivial}) to first estimate all physical error rates and simulate the noisy circuit with a final round of perfect syndrome measurement and recovery. This is still expected to be accurate, particularly when several rounds of error correction are performed to accumulate sufficient syndrome data, except for the final round.

\begin{figure}[t]
\centering
\hspace{-0.8cm}
\begin{subfigure}{0.95\columnwidth}
    \phantomcaption
    \stackinset{l}{0pt}{t}{-5pt}{\captiontext*}
    {\includegraphics[width=\textwidth]{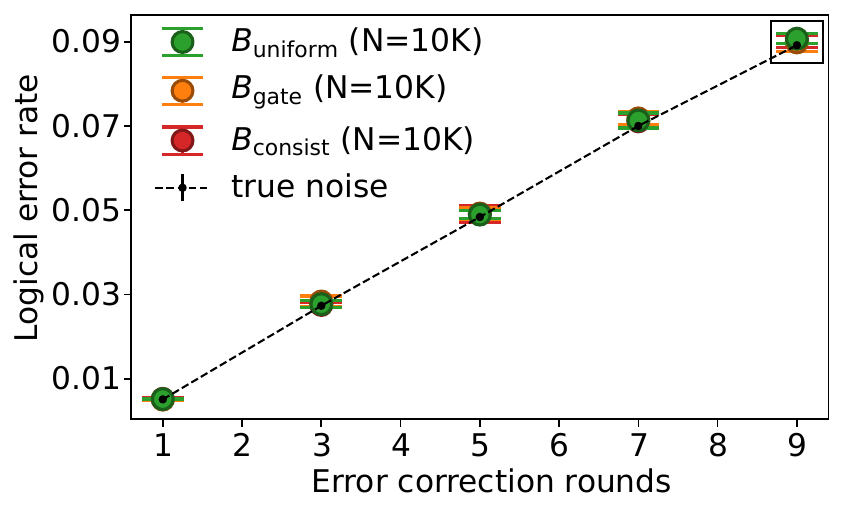}}
    \label{fig: surface_circuit_logical}
\end{subfigure}
\begin{subfigure}{0.95\columnwidth}
    \phantomcaption
    \hspace{-0.5cm}\stackinset{l}{0pt}{t}{-5pt}{\captiontext*}
    {\includegraphics[width=\textwidth]{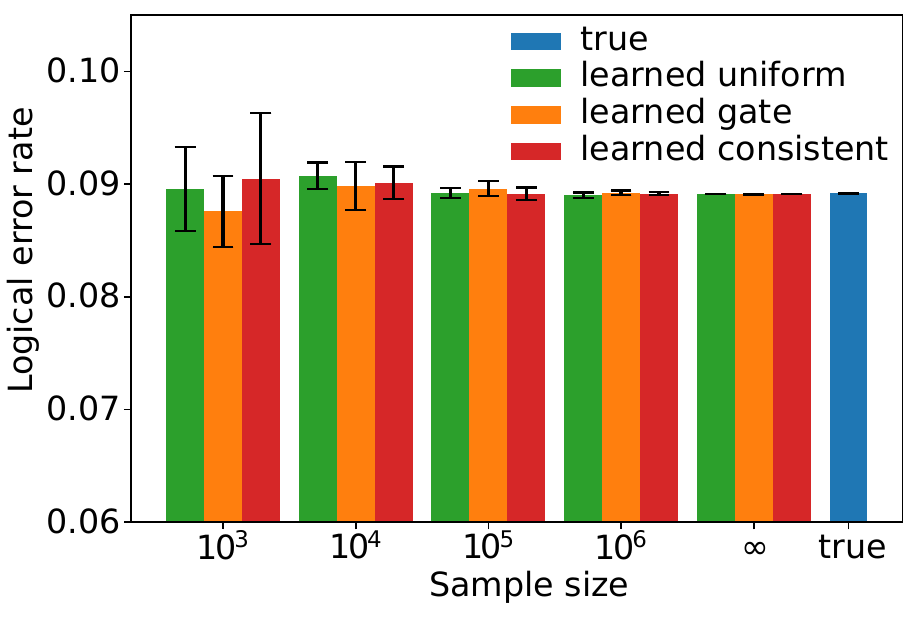}}
    \label{fig: r9_logical_sample_error}
\end{subfigure}
\caption{(a) Logical error rate estimation from simulation of surface code error correction circuits. We apply two-qubit Pauli error channels on two-qubit gates and single-qubit Pauli channels on other locations with near-threshold error rates fluctuating (20\%) around 0.01 and 0.001, respectively. Logical error rates are estimated by Monte Carlo sampling. (b) A zoomed-in plot of the learned logical error rates corresponds to 9 rounds of syndrome extraction (box in (a)) for different extra constraints $B$ and sample sizes. Learned logical error rates for different $B$ constraints match the true value.}
\label{fig: surface_logical}
\end{figure}

\begin{figure*}[t]
\centering
\hspace{-1cm}
\begin{subfigure}{0.31\textwidth}
    \phantomcaption
    \stackinset{l}{2pt}{t}{-10pt}{\captiontext*}
    {\hspace{0.25cm}\vspace{1cm}\includegraphics[width=\textwidth]{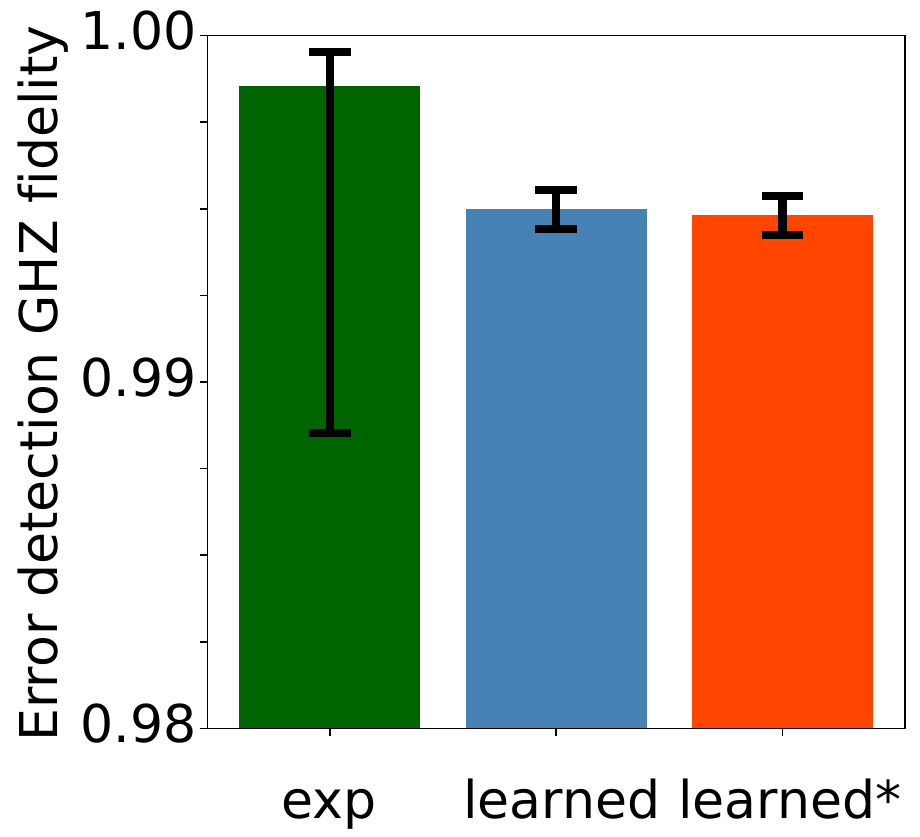}}
    \label{fig: post_select}
\end{subfigure}
\hspace{0.2cm}
\begin{subfigure}{0.29\textwidth}
\phantomcaption
\stackinset{l}{2pt}{t}{-5pt}{\captiontext*}
    {\hspace{0.25cm}\includegraphics[width=\textwidth]{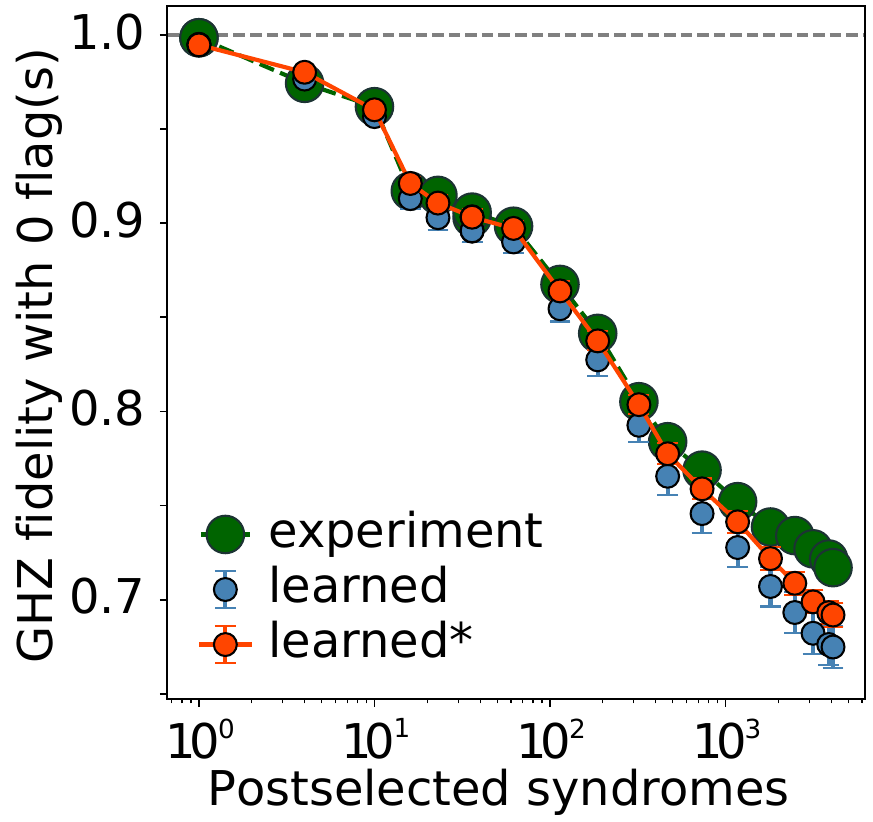}}
    \label{fig: flag_0}
\end{subfigure}
\hspace{0.2cm}
\begin{subfigure}{0.285\textwidth}
    \phantomcaption
    \stackinset{l}{2pt}{t}{-5pt}{\captiontext*}
    {\hspace{0.25cm}\includegraphics[width=\textwidth]{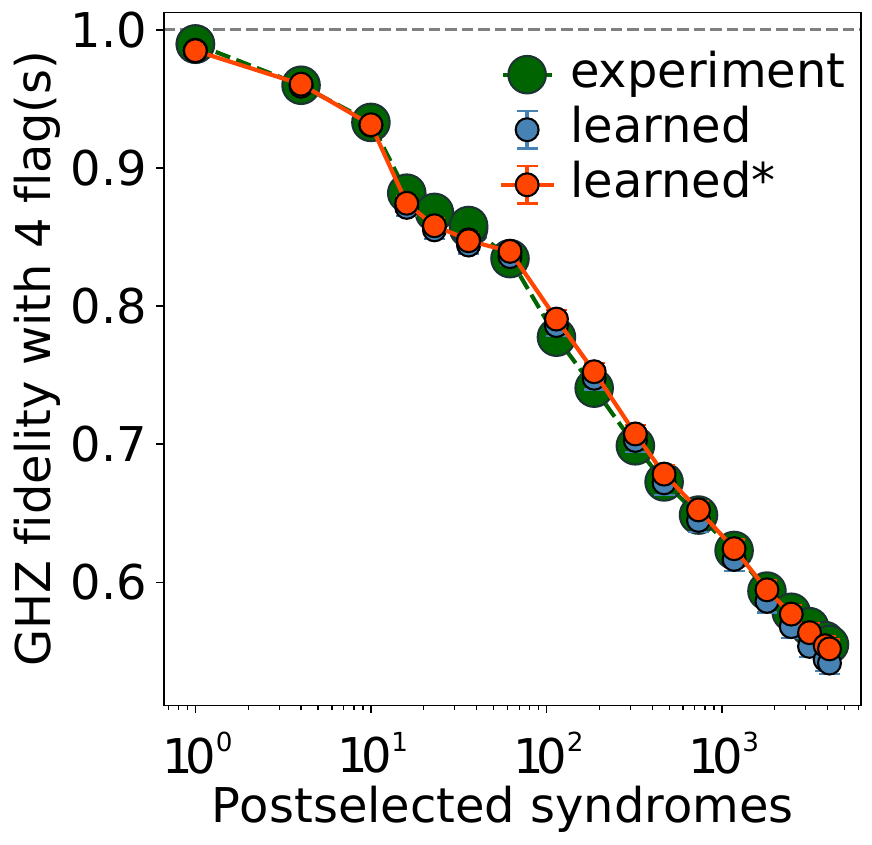}}
    \label{fig: flag_3}
\end{subfigure}
 \caption{Comparison of logical direct fidelity estimation between experimental results and learned noise models from syndromes. The left plot shows the error detecting results for the Logical GHZ fidelity from experimental measurement, learned logical fidelity, and the learned fidelity with extra uniformity constraints, denoted with $^*$. Where on the right plot, the same comparison of the partial postselection on syndrome is performed conditioned on the number of ancilla blocks having non-trivial syndrome (flag), see \cref{fig: logical fidelity learning (full)} for results of postselection on various numbers of flags.}
\label{fig: logical fidelity learning}
\end{figure*}

\subsection{Benchmarks}
We first test the logical error rate learning with the data from various simulated surface code syndrome extraction circuits in \cref{subsection: simulation}, ranging from a single round $r=1$ up to $r=9$ rounds of syndrome extractions. The logical error rate at the end of the 9 circuits is evaluated from the simulation of the syndrome measurement, decoding, and final correction of the whole circuit with Stim and Pymatching \cite{higgott2022pymatching,gidney2021stim}. The physical error rates are estimated using \cref{equation: optimization}, where we use different $B$ matrices for the extra constraints. We denote by  $B_{\text{gate}}$ the one that assumes a fixed ratio between pairs of errors in each syndrome error class consistent with the error rates without added Gaussian fluctuation. The other matrix asserts uniform noise within each syndrome error class, denoted as $B_{\text{uniform}}$. Finally, we use $B_{\text{consist}}$, which asserts constraints consistent with the true noise. As shown in \cref{fig: surface_logical}, the learned circuit noise predicts the logical error rate well with uncertainty $\epsilon_{P_{\text{fail}}}\leq\pm 3 \cdot 10^{-4}$ for a single round syndrome extraction, and $\epsilon_{P_{\text{fail}}}\leq\pm 2.2 \cdot 10^{-3}$ for nine rounds of syndrome extractions.
Importantly, notice that the logical failure rate is independent of the extra constraints used to fix the error rates of individual Pauli errors in the same syndrome classes. 

\subsection{Logical error rates in experiments\label{subsection: logical error rates in experiments}}
We now use our algorithm to learn the logical error rates from the syndrome data of the experiments reported in Ref.~\cite{bluvstein2024logical}. We estimate the logical fidelity from the learned physical error rates (using \cref{equation: optimization}) of all 9 logical GHZ preparation circuits with different logical measurement bases shown in \cref{fig: logical circuit}, \cref{section: physical_learning_GHZ}. We perform Monte Carlo simulations on stim \cite{gidney2021stim} to sample measurement bitstrings from our estimated physical error rates of all the circuits.
The simulated logical measurement results, corrected using the correlated decoding technique \cite{bluvstein2024logical,cain2024correlated}, are extracted. We note that these simulated logical measurement results encode the learned logical error rates, and the logical bitstrings from the experimental data are never used. In addition, we implemented ``sliding-scale error detection'', where we rank data block syndromes based on the probability of their most likely error and perform partial postselection based on a cutoff threshold, see \cite{bluvstein2024logical} for details. To mitigate shot noise from the syndrome expectation value due to the small sample size, we also implemented the optimization in \cref{equation: optimization} with added intraclass uniformity constraints (\cref{appendix: Intraclass uniformity}) denoted by '$^*$' in \cref{fig: flag_0,fig: flag_3}.

Comparing our learned GHZ state logical fidelity from the circuit syndrome with the experimental result obtained from direct fidelity estimation, we find them in good agreement, as shown in \cref{fig: logical fidelity learning}. For postselection on the trivial syndrome, the estimated logical fidelity is within the experimental error bars. The protocol also gives a good estimation in the case of error correction with partial postselection, as most experimental results fall in the range error bar of the learned data, and we see better agreement for the learned data with extra uniformity constraints. We speculate that the deviation from the experimental result is due to shot noise from the limited sample size and our simplified noise model. Nevertheless, the overall good agreement shows that the local Pauli error model is a valid error model in the setting of error-corrected logical quantum information processing for these experiments.

\subsection{Exponential sample complexity advantage at very low logical error rates}

\begin{figure}[t]
\centering
\hspace{-0.3cm}
\begin{subfigure}{0.25\textwidth}
    \phantomcaption
    \stackinset{l}{10pt}{t}{-5pt}{\captiontext*}
    {\includegraphics[width=\textwidth]{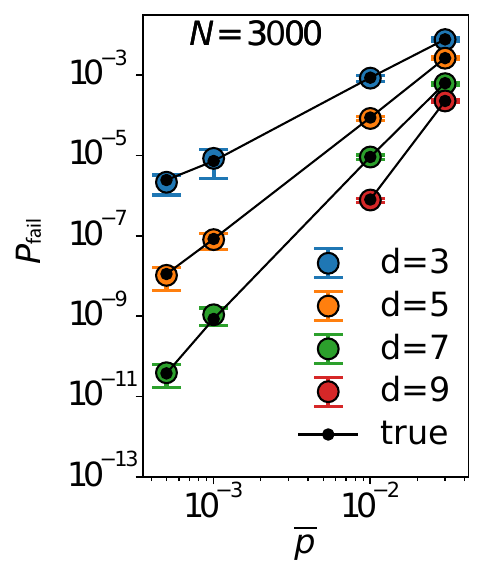}}
    \label{fig: logical_error_N3000}
\end{subfigure}
\begin{subfigure}{0.235\textwidth}
    \phantomcaption
    \stackinset{l}{0pt}{t}{-5pt}{\captiontext*}
    {\includegraphics[width=\textwidth]{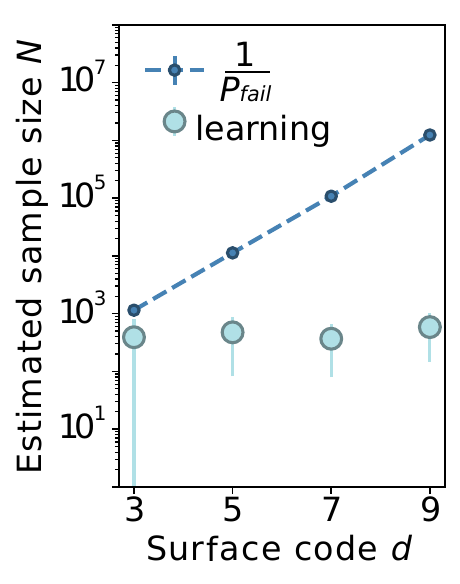}}
    \label{fig: logical_sample_surface}
\end{subfigure}
\label{fig: sample_advantange_logical}
\caption{(a) Rotated surface code logical error rate $P_{\text{fail}}$ initialized in $\ket{\overline 0}$ for various distances $d$ and physical single-qubit error rates in the code capacity setting (average error rate $\overline p$ with $\sigma=0.3\,\overline p$). 
The black markers are the true logical error rates from the minimum weight perfect matching (MWPM) decoder. The colored markers are the learned values from $N = 3000$ samples of syndrome data. (b) Estimated sample size needed for direct fidelity estimation at the logical level (dark blue) versus the one needed (from interpolation of results shown in (a) with different $N$ at average physical error rate $10^{-2}$) for the logical error rate learning algorithm (light blue) if one requires $\eta=1$ relative error.}
\end{figure}

The threshold theorem of fault-tolerant quantum computation \cite{dennis2002topological,fowler2012proof,fowler2012surface} shows that there exist families of codes with growing distance $d$, such that if the physical error rate $p$ is below some threshold $p_{\text{th}}$ then one can exponentially suppress the logical error rate as a function of the distance of the code. Under the uniform single-qubit error model assumption and in the asymptotically low physical error rate regime, the logical error rate follows the formula \cite{bravyi2013simulation, fowler2012analytic}:
\begin{equation}
    P_{\text{fail}}= A_d \left(\frac{p}{p_{\text{th}}}\right)^{\frac{d+1}{2}},\label{equation:  logical scaling}
\end{equation}
where $A_d=\mathrm{poly}(d)$ is some constant independent of $p$. 
When estimating such exponentially suppressed logical error rates directly at the logical level, for example, using direct fidelity estimation, an exponential number of samples is required to resolve the signal. Quantitatively, sampling from a binomial distribution with probability $p_{\text{bin}}$, the sample size $N$ needed to obtaining an estimation of $p_{\text{bin}}$ with relative error $\eta$ is given by
\begin{equation}
    N=\frac{1-p_{\text{bin}}}{\eta^2 p_{\text{bin}}}\approx\frac{1}{\eta^2 p_{\text{bin}}}.\label{equation: binomial scaling}
\end{equation}
Since multiplicative errors scale linearly under exponentiation, the asymptotic formula in \cref{equation:  logical scaling} suggests that achieving a logical-level relative error within $\eta_L$ only requires improving the physical-level multiplicative precision by a factor proportional to the code distance. This is much more desirable since the physical error rates typically do not decrease with increasing number of qubits, and we have
\begin{equation}
    N_{\text{physical}}\approx \frac{d^2}{\eta_L^2 p}.
\end{equation}

In this section, we make the statement precise for Clifford circuits under the local sparse Pauli error model described in \cref{section: model}.
\begin{theorem}[Sample complexity of logical error estimation\label{theorem: logical sample complexity}]
    Consider a non-adaptive logical Clifford circuit with $n$ physical qubits and of physical depth $T$ that corresponds to a qLDPC spacetime code. We assume the circuit is fault-tolerant to a circuit-level noise that corresponds to a local sparse Pauli channel $\mathcal{N}_{\Gamma}$ on the spacetime code, for a set of logical measurements $\mathcal{M}_L$. Denote the vector of true error rates by $\vec{p}$ and the minimal syndrome class error rate by $P_{C\text{min}}$. Then we can learn the logical error rates $P_{\text{fail}}$ for $\mathcal{M}_L$ with relative error within $\eta_L$ using a sample size 
    \begin{equation}
        N=\tilde{\mathcal{O}}\left(\frac{n^2T^2}{\eta_L^2P^2_{Cmin}}\right),
    \end{equation}
    if
    \begin{equation}\label{equation: p upper bound}
    \|\vec{p}\|_{\infty}\leq \min\left(\frac{2}{L_H, K_{G_B}},\quad\left(\frac{\eta_L P_{Cmin}}{4|\Gamma|C_c}\right)^{\frac{1}{2}}\right),
    \end{equation}
       for constants $L_H, K_{G_B}, C_c$ defined in \cref{equation: define K_{G_B},equation: define L_H,equation: define C_c}.
\end{theorem}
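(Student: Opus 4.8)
The plan is to bound the relative error of $P_{\mathrm{fail}}$ by propagating it through the three stages of the estimation pipeline. Stage (i): from $N$ rounds of syndrome data, estimate the learnable syndrome-class quantities $\nu_C$, equivalently the class rates $P_C$ via \cref{equation: learnables}. Stage (ii): from the $P_C$ together with the prior-ratio constraints $B$ of \cref{section: physical error rate estimation}, reconstruct a full vector $\vec p$ of physical error rates, i.e.\ apply the (constrained pseudo-inverse) map $G_B$. Stage (iii): evaluate the map $H\colon \vec p\mapsto P_{\mathrm{fail}}$, which by \cref{equation: logical error rate,equation: logical_from_inverse_WH} is the composition of the convolution $\vec p\mapsto \overline{P}$ with the fixed-decoder map $\overline{P}\mapsto P_{\mathrm{fail}}$, realized in practice by Monte Carlo simulation of the decoded circuit. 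Fault tolerance enters only through \cref{theorem: logical error rate learnability}: it guarantees that $\overline{P}$, and hence $P_{\mathrm{fail}}$ for any decoder, is a function of the syndrome-learnable data alone, so that stage (iii) is well-defined on the output of stage (i). The classical Monte Carlo error of stage (iii) is driven below $\eta_L P_{\mathrm{fail}}$ with $\mathrm{poly}$ classical samples and does not enter $N$.

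The crux is a Lipschitz estimate for $H\circ G_B$ in relative (log--log) coordinates. First I would write, for a fixed decoder, $P_{\mathrm{fail}}=\sum_{\sigma\ \mathrm{fails}}\prod_{\gamma\in\Gamma}P_\gamma(\sigma_\gamma)$, the sum running over Pauli error configurations $\sigma=(\sigma_\gamma)_\gamma$; then $\partial\log P_{\mathrm{fail}}/\partial\log p_e$ equals $\Pr[\sigma_{\gamma_e}=e\mid\mathrm{fail}]$ up to a correction of order $\|\vec p\|_\infty$ coming from the normalization of $P_{\gamma_e}(I)$ (here \cref{assumption: open set probability} gives $P_{\gamma_e}(I)>1/2$). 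Summing over $e\in\mathcal E_\Gamma$ gives $\sum_e\bigl|\partial\log P_{\mathrm{fail}}/\partial\log p_e\bigr|\le \mathbb{E}[\mathrm{weight}(\sigma)\mid\mathrm{fail}] + \mathcal O(|\mathcal E_\Gamma|\,\|\vec p\|_\infty) \le |\Gamma| + \mathcal O(|\mathcal E_\Gamma|\,\|\vec p\|_\infty)$, since at most one error occurs per channel. Hence a uniform relative perturbation of size $\eta_p$ on all $p_e$ perturbs $P_{\mathrm{fail}}$ multiplicatively by at most $\exp\!\bigl((|\Gamma|+\mathcal O(|\mathcal E_\Gamma|\,\|\vec p\|_\infty))\,\eta_p\bigr)$. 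Composing with a constant-factor bound $K_{G_B}$ on the relative-error amplification of $G_B$, valid in the small-error regime, and absorbing constants into the Lipschitz constant $L_H$ of \cref{equation: define L_H}, a relative error $\eta_\nu$ on the $\nu_C$'s produces a relative error at most $L_H K_{G_B}\,|\Gamma|\,\eta_\nu$ on $P_{\mathrm{fail}}$; this is exactly where the hypothesis $\|\vec p\|_\infty\le 2/(L_H K_{G_B})$ is used, keeping the exponential and the Taylor estimates within their linear regime.

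It then remains to budget the error of stage (i). By \cref{theorem: syndrome class learning,theorem: syndrome class learning sample complexity}, $N=\tilde{\mathcal O}(\epsilon_0^{-2})$ samples suffice to estimate every $P_C$ with shot-noise additive error at most $\epsilon_0/2$ and residual systematic bias at most $C_c\|\vec p^{(\Gamma_C)}\|^2$, the latter being the $\mathcal O(\|\vec p^{(\Gamma_C)}\|^2)$ term of \cref{theorem: syndrome class learning}. I would require the systematic bias to be $\le\epsilon_0/2$ as well, which by \cref{equation: define C_c} is exactly ensured by $\|\vec p\|_\infty\le(\eta_L P_{Cmin}/(4|\Gamma|C_c))^{1/2}$ once we set $\epsilon_0=\Theta(\eta_L P_{Cmin}/|\Gamma|)$. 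With that choice the relative error on each $P_C$ is $\eta_\nu=\Theta(\eta_L/|\Gamma|)$, so by the previous paragraph the relative error on $P_{\mathrm{fail}}$ is at most $\eta_L$, while $N=\tilde{\mathcal O}(|\Gamma|^2/(\eta_L^2 P_{Cmin}^2))$. Finally, since the spacetime code has $n(T+1)$ qubits and $\mathcal N_\Gamma$ is local sparse (each spacetime qubit lies in $\mathcal O(1)$ channels, each channel acting on $\mathcal O(1)$ qubits), we have $|\Gamma|=\mathcal O(nT)$ and $|\mathcal E_\Gamma|=\mathcal O(nT)$; substituting gives $N=\tilde{\mathcal O}(n^2T^2/(\eta_L^2 P_{Cmin}^2))$ and puts the two smallness conditions into the stated form.

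The main obstacle is the Lipschitz estimate of the second paragraph: proving that the composite map from syndrome data to $P_{\mathrm{fail}}$ amplifies relative errors by only $\mathcal O(nT)$, uniformly over all fault-tolerant circuits, decoders, and admissible local sparse channels, and in particular that $G_B$ stays well-conditioned so that reconstructing individual $p_e$ from the class sums $P_C$ does not blow up. The delicate point is that $P_{\mathrm{fail}}$ is decoder-dependent whereas the syndrome-learnable data ($\overline{P}$, equivalently the logical Pauli eigenvalues $\Lambda_{\mathcal L'}$) is not; the bridge is \cref{equation: logical error rate} together with \cref{theorem: logical error rate learnability}, which expresses $\log P_{\mathrm{fail}}$ through the learnable $\log\nu_C$ precisely under the fault-tolerance hypothesis. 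Once this is in place, the sample-complexity count and the $|\Gamma|=\mathcal O(nT)$ bound follow routinely from \cref{theorem: syndrome class learning,theorem: syndrome class learning sample complexity}.
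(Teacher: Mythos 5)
Your proposal is correct and follows essentially the same route as the paper's proof: learn the syndrome-class error rates to additive precision $\Theta(\eta_L P_{C\min}/(nT))$ (Hoeffding on the recursive solution, with the condition on $\|\vec p\|_\infty$ absorbing the second-order bias $C_c\|\vec p\|_\infty^2$), split them into individual rates via the $B$ constraints, and propagate the resulting relative error to $P_{\text{fail}}$ through the fault-path product over the $|\Gamma|=\mathcal{O}(nT)$ local channels, with fault tolerance (\cref{theorem: logical error rate learnability}) guaranteeing that syndrome-consistent rates reproduce the true logical error rates. The only notable difference is bookkeeping: the paper obtains the $|\Gamma|$ amplification by bounding each fault-path probability directly by $(1+\eta_p)^{|\Gamma|}-1$ rather than your log-derivative argument, and the condition $\|\vec p\|_\infty\le 2/(L_H K_{G_B})$ is used there via the quantitative inverse function theorem (\cref{lemma: existence of solution}) to guarantee the existence of a reference vector $\vec p\,'$ exactly satisfying both the syndrome and the $B$ constraints—i.e., the well-conditioning of stage (ii) that you flag as the delicate point—rather than as a linear-regime bound on the logical map itself.
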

See \cref{subsection: logical sample complexity} for details of the proof. Again, applying the Chernoff bound gives a more precise scaling $N=\tilde{\mathcal{O}}(\frac{p_{\text{max}}}{P^2_{Cmin}})$, where $p_{\text{max}}=\|\vec p\|_{\infty}$, with respect to the physical error rates. It reduces to $\mathcal{O}(1/p)$ under the uniform error rate case as expected.

The logical error rate estimation method considered in the proof makes use of the solution in
\cref{equation: recursive solving} and approximates the syndrome-class error rate to first order via $P_C = -\tfrac{1}{2}\log \nu_C + \mathcal{O}(\|\vec p\|^2)$.
Since better physical error rate learning precision is required as spacetime volume increases in the proof, we require the upper bound on $\|\vec p\|_\infty$ to decrease as $\sim 1/(nT)$. This is reflected in the number of local channels $|\Gamma|$ in the denominator of the upper bound in \cref{equation: p upper bound}.
Nevertheless, for a fixed sufficiently low physical error rate, the theorem still yields an
exponential sampling advantage for finite-size systems. This first-order approximation can be avoided by using optimization methods to solve \cref{equation: optimization}. Therefore, we believe the sampling advantage holds more generally in this case.

In the regime where the physical error rate is well below the fault tolerance threshold, the sample complexity scaling approaches $\mathcal{O}(d^2)$ instead of $\mathcal{O}(n^2T^2)$. This scaling arises because the fault patterns are dominated by the minimum weight uncorrectable errors in the circuit, which require $\mathcal{O}(d)$ independent error events happening in the set of local errors $\mathcal{E}_{\Gamma}$.

\begin{figure}[t]
\centering
\hspace{-0.3cm}
\begin{subfigure}{0.27\textwidth}
    \phantomcaption
    \stackinset{l}{10pt}{t}{-5pt}{\captiontext*}
    {\includegraphics[width=\textwidth]{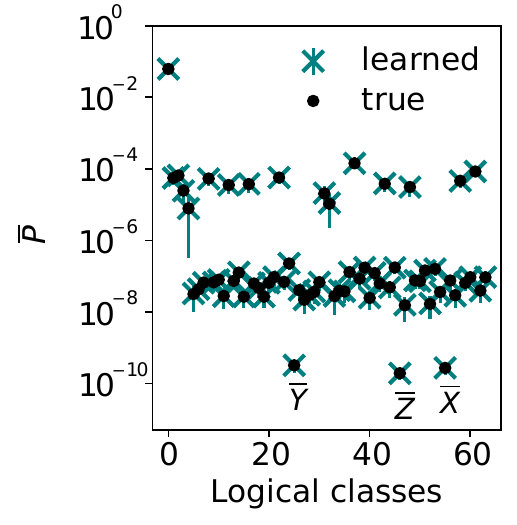}}
    \label{fig: five_qubit_logical_class}
\end{subfigure}
\begin{subfigure}{0.2\textwidth}
    \phantomcaption
    \stackinset{l}{0pt}{t}{-5pt}{\captiontext*}
    {\includegraphics[width=\textwidth]{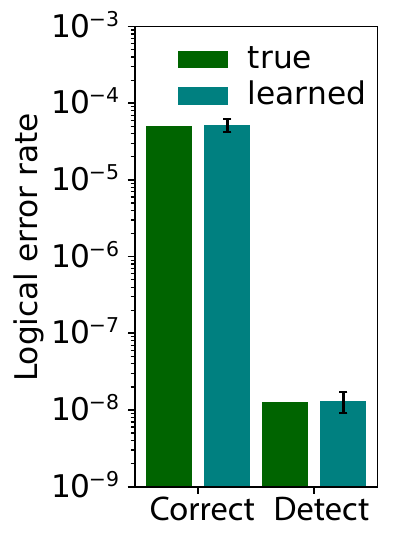}}
    \label{fig: five_qubit_correct_detect}
\end{subfigure}
    \caption{(a) Learned $64$ logical class error rate of the $[[5,1,3]]$ code. Each true single-qubit Pauli error rate is sampled from $p\sim\mathcal{N}(\mu,\sigma^2)$, where $\mu=10^{-3},\sigma=5\times10^{-4}$. With $N=8000$ samples of syndrome data, every logical class error rate can be estimated close to the true value in multiplicative precision. (b) The logical error rate of both error correction and error detection in (a) can be estimated to high precision.}
    \label{fig: five_qubit_logical_class}
\end{figure}
To test if the advantage holds in practice, we perform numerical simulations to study the accuracy of logical error rate learning on rotated surface codes of distance $d=3,5,7,9$ in the code capacity setting. We sample single-qubit error rates from a gaussian distribution $\mathcal{N}\sim(\mu=10^{-3},\sigma=\frac{1}{2}\times10^{-3}) $. As shown in \cref{fig: sample_advantange_logical}, with $N=3000$ samples, one can estimate the logical error rate with good relative precision at the order of $10^{-11}$ if the code/circuit is fault-tolerant to the noise model. We simulate learning error rates with different sample sizes and different surface code distances and perform a linear fit of the multiplicative precision with the sample size for each code on the log scale. In \cref{fig: logical_sample_surface}, we show the estimated samples needed for relative error $\eta=1$ and see a clear advantage compared to $1/P_{\text{fail}}$ as in \cref{equation: binomial scaling}.

Finally, learning the probability of different logical error classes corresponding to different Pauli errors with a certain syndrome is also of interest. This is because post-selection on syndromes is also widely used in many protocols and is a general means of improving logical fidelity in the near term~\cite{bluvstein2024logical,bluvstein2025architectural,gidney2024magic,Rosenfeld2025magic,sales2025experimental}.
All logical error class probabilities can be learned with good relative precision since they can all be written as a polynomial in the physical error rates, similarly to the logical error rate averaged over syndromes. We observed from numerical simulations, as shown in \cref{fig: five_qubit_logical_class}, that all $64$ logical class error rates $\overline{P}$ of the $[[5,1,3]]$ of different orders of magnitude can be learned with $N=8000$ samples to good relative precision. This allows one to estimate the logical error rate conditioned on any subset of syndromes one postselects on, using syndrome data from a small number of samples.

\section{Discussion}

In this paper, we study the problem of learning Pauli noise in fault-tolerant Clifford circuits from syndrome data.  To this end, we analyze the noise-learning problem for general subsystem codes and map Clifford circuits to spacetime (subsystem) codes. We rigorously show that the total error rates of every equivalence class of circuit-level Pauli noise, i.e., the syndrome class where errors have the same syndrome, can be learned from the syndrome data of a fault-tolerant Clifford circuit. In fact, we are able to prove that local sparse Pauli channels in the circuit can be learned from a constant number of samples when the spacetime code of the circuit is qLDPC. 

Complementing these analytical results, we design a general algorithm to estimate circuit-level Pauli error rates of Clifford circuits given syndrome data. We do so by turning the noise learning problem into a least square optimization on a system of noisy non-linear equations (without approximation)/linear equations (with approximation), which allows us to adaptively adjust the redundancy as well as terminate early to reduce overfitting.
To improve the accuracy of our learning algorithm, one could design better optimization schemes and potentially dynamically adjust the effective degrees of freedom in the noise model to further reduce overfitting from shot noise.

Our work provides a complementary method for benchmarking gate errors compared to randomized benchmarking, gate-set tomography, and cycle error reconstruction \cite{knill2008randomized,merkel2013self,nielsen2021gate,carignan2023error,fazio2025characterizing}. These methods effectively magnify physical error rates by increasing the number of gate layers. In the context of our work, this is essentially trading measurements with gates, thus holding an advantage if measurements are slower than gate implementations. However, compared to these methods, our approach requires no pre-calibration and provides in situ estimates of the gate error rate for the target computational task. In addition, our method is able to capture non-Markovian or time-dependent effects such as drift, while offline benchmarking methods typically make a Markovian or time-independent assumption about the noise.

We further show that the logical error rate of the circuit is determined by the syndrome data if the circuit is fault-tolerant to the noise model we are learning. This allows us to learn logical error rates and estimate the logical fidelity of circuits without information from a logical measurement. This includes logical circuits with Clifford gates encoded in any stabilizer code, subsystem codes, and dynamical codes containing rounds of fault-tolerant syndrome extraction. 
Notably, we show that our method of extracting logical error rates from syndrome information provides an exponential sample advantage compared with tomography/fidelity estimation at the logical level only. This becomes important when logical error rates are suppressed by larger code distances. We do so using numerics and give a proof in the regime of low error rates. We note that an important problem remains unsolved: developing efficient classical algorithms to estimate the logical error rate of larger systems given the learned physical error rates. A series of work based on the splitting method and Metropolis sampling algorithms can potentially be such a complementary tool to our work \cite{bravyi2013simulation,mayer2025rare,beverland2025fail}.

Our work opens up several directions for future research. Learning circuit-level noise directly benefits decoding since one could obtain a more accurate prior for the decoder. The improvement for the optimal decoder will be optimal within the learned noise model, since we uniquely determine the total Pauli error rates for each logical error class. On the other hand, one could develop a framework beyond the standard Pauli noise model to account for more realistic and specific noise models, such as coherent errors and crosstalk, where syndrome data may not provide enough information but will still be an important input to the learning protocol. Moreover, our logical error rate learnability condition can break down if there are correlated errors with higher weights outside the code's correctability. Since our framework assumes a certain locality structure of the noise, there can be extra identification work to validate the noise structure assumption based on other detection \cite{harrington2025synchronous} or the syndrome itself \cite{google2021exponential,google2023suppressing,tan2024resilience}. If such long-range correlated events are rare, one can postselect on their absence and still learn the Pauli error model describing the background physical error rates and the corresponding logical error rates.

An exciting future direction is to go beyond non-adaptive Clifford circuits and estimate physical and logical errors in circuits with quantum advantage or circuits enabling universal quantum computation. In our upcoming work, Part II, \cite{XiaoWorkInProgress2026}, we progress towards this goal by adapting the noise-learning algorithm to verify logical Clifford circuits with magic state injection and extending Pauli noise learning to non-Clifford circuits. Investigating these future directions would potentially lead to a more comprehensive noise learning algorithm and an efficient verification of quantum advantage in the context of encoded logical circuits.

Note: After finalizing this work, we became aware of Ref.~\cite{ZhengLogicalNoiseWorkInProgress2026}, where the authors study learning the logical error rate from syndrome data.
\begin{acknowledgments}
We thank Steven Flammia and Martin Kliesch for helpful discussions. 
This research was supported in part by NSF QLCI grant OMA-2120757, the Defense Advanced Research Projects Agency (DARPA) under Agreement No. HR00112490357, and ARO grant W911NF-23-1-0258.
DH is grateful for support from a Simons postdoctoral fellowship through DOE QSA and NSF QLCI Grant No.\ 2016245, and from the Swiss National Science Foundation through Ambizione Grant No.\ 223764.
We acknowledge the use of large language models to improve clarity and grammar of original text.
\end{acknowledgments}

\bibliography{references}

\onecolumngrid

\newpage
\begin{appendix}

\section{Overview}

\begin{figure*}[h]
\centering
    \includegraphics[width=1\textwidth]{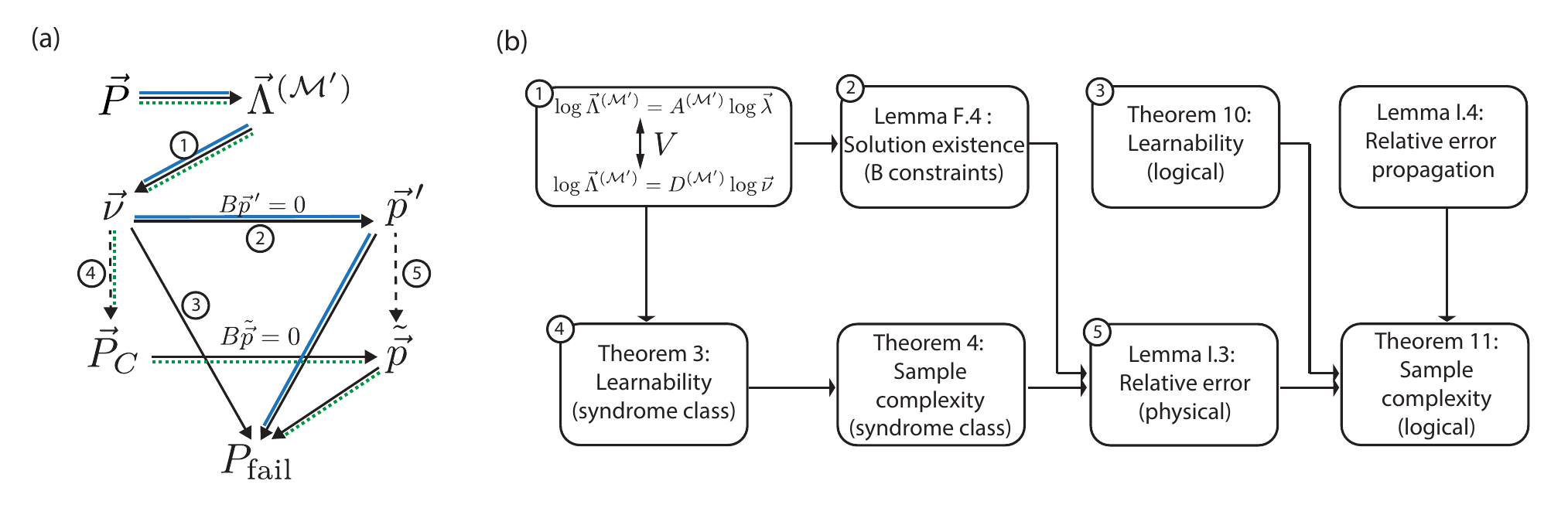}
    \caption{An overview of the structure of the lemma and theorems. (a) The graph relates different quantities throughout the physical and logical-level noise learning process. $\vec{p}$ are the true physical error rates. $\vec{\Lambda}$ are the syndrome expectation values (Pauli eigenvalues of the channel). $\vec{\nu}$ are the learnable transformed eigenvalues that approximate the syndrome class error rates $\vec{P}_C\approx-\frac{\log \vec{\nu}}{2}$. $\vec{p}\,'$ are the error rates that are consistent with $\vec{\nu}$ and the extra constraints $B\vec{p}\,'=0$ within each syndrome class. $\tilde{\vec{p}}$ are the estimated physical error rates based on the estimated $\vec{P}_C$ and the same extra constraints. A solid black arrow from $A$ to $B$ indicates that $B$ can be derived from $A$, whereas a dashed black arrow indicates that $A$ and $B$ are approximately related in the low-error regime. Some edges are labeled with the lemmas and theorems used in (b). The solid blue line shows the path of learning the logical error rate without low-error rate approximation by solving the non-linear equation (circled 1 and 2) using optimization methods, but we don't have a proof for the sample complexity in this case. The dotted green line shows the path along which one can linearize $\log \nu$ and ignore second-order corrections to obtain error rates from linear equations; we can then prove the sample complexity in this case for sufficiently low physical error rates. (b) Lemmas and theorems, starting from the syndrome equations, eventually lead to the sample-complexity scaling for learning the logical error rate from the syndrome.}
    \label{fig: lemma_illustration}
\end{figure*}

The appendices present the proofs of the theorems in the main text, along with the lemmas used in the proofs and the algorithms used for noise learning. \cref{fig: lemma_illustration} illustrates the structure of the main results in this paper. In particular, we show two routes that start from the syndrome expectation values generated from the true error rates to learning physical error rates, and eventually learning the logical error rates. The first route, in solid blue lines, represents an exact solution that uses nonlinear equation solving via optimization. The second route in dotted green lines represents the first-order linear approximation method, where we are able to prove, in a certain error regime, an exponential advantage of sampling complexity in learning the logical error rate from syndrome data, compared with learning from logical measurement results.

We now explain the structure of the appendices. The essential results are represented by \cref{theorem: syndrome class learning}, \cref{theorem: logical learning}, \cref{theorem: syndrome class learning sample complexity}, \cref{theorem: logical sample complexity}. These are results on the learnability at the physical and logical levels, along with their sample complexities. 

\cref{appendix: log u basis} introduces the basis transform from Pauli eigenvalues $\Lambda$ to the transformed eigenvalues $U$, which allows us to have a minimal set of parameters describing the total error channel and the local channels. The matrix $D$ of the linear equations matrix in this new basis has a nice structure that allows us to prove learnability conditions. \cref{appendix: A rank general} describes the structure of the linear equations from syndrome data, and proves the rank of the corresponding matrices, i.e., the minimum number of irredundant stabilizer elements needed to extract noise information from a given Pauli noise model. \cref{appendix: minimal equations} describes an efficient algorithm to find the minimal subset of stabilizer elements.

\cref{appendix: physical proof} presents the proof of \cref{theorem: syndrome class learning}, where we show that syndrome-class error rates can be approximately determined from the syndrome data under the assumption of low error rates. This observation also provides a way to linearize the equation between syndrome class error rates and syndrome data directly. We show that in the case of the detector error model/inclusive error model \cite{remm2025experimentally, takou2025estimating} where all errors are assumed to be uncorrelated, the detector error rate, i.e., probability that an odd number of errors occur for each syndrome class, can be learned exactly for any stabilizer code. 

\cref{appendix: existence of solution with B constraints} proves \cref{lemma: existence of solution}, which shows that there always exist error rates that satisfy both the syndrome constraints and extra constraints under low physical error rates. \cref{appendix: spacetime code} describes the spacetime code formulation that maps Clifford circuits to stabilizer/subsystem codes in more detail, and we give an algorithm that generates the stabilizer and logical operators of the spacetime code. \cref{appendix: circuit logical learning} proves \cref{theorem: logical learning}, which shows that if the Clifford circuit is fault-tolerant to the Pauli error model we consider, then any error rates in this model that satisfy the syndrome constraints must give rise to the same true logical error rate with respect to the logical measurements in the circuit.

\cref{appendix: sample complexity} proves the sample complexity scaling for learning physical syndrome class error rate (\cref{theorem: syndrome class learning sample complexity}) and logical error rates (\cref{theorem: logical sample complexity}) in the low physical error rate regime.  We show that the original solution \cite{remm2025experimentally} for solving detector error models assuming independent error mechanisms can be used to learn syndrome class error in our more general Pauli model with constant sample complexity (with increasing system size) if the spacetime code of the circuit is qLDPC and the Pauli channel is local and sparse. On top of that, we show that the logical error rate can be learned with a polynomial sample size, where the polynomial overhead arises from the relative error propagation from the physical to the logical levels. This then has the exponential advantage in sample complexity for learning logical error rates from syndrome data over direct fidelity/tomography estimation at the logical level. This is because logical error rates decrease exponentially with the distance of the code and therefore require an exponentially large sample size to learn the logical error rate within a finite relative error. The proof requires a decreasing upper bound on the physical error rates as the system size increases; in the practically relevant case when physical error rates are fixed, this sample-complexity advantage is only proven up to some finite size. However, we believe this advantage holds in general when using the non-linear optimization method without a first-order approximation. Finally, \cref{appendix: Intraclass uniformity} describes a method used in the experimental analysis of adding intraclass uniform constraints that reduce overfitting in the higher shot noise regime.

\section{The transformed eigenvalue basis\label{appendix: log u basis}}
One of the central questions we want to understand is given the Pauli eigenvalues $\Lambda(M)$ of the Pauli channel $\mathcal{N}_{\Gamma}$ for elements in the measured stabilizer group $\mathcal{M}$, what can we learn about the total channel $\mathcal{N}_{\Gamma}$ and the individual noise channels $\mathcal{N}_{\gamma}, \forall \gamma\in\Gamma$? \textcite{wagner2022pauli,wagner2023learning} used the canonical factors $F$ of the Pauli eigenvalues, which is a linear transformation $\log \Lambda\to \log F$, to parametrize the total error channel $\mathcal{N}_{\Gamma}$ and showed that there is effectively $|\mathcal{E}_{\Gamma}|$ degrees of freedom. They further proved the sufficient condition of learnability of $N_{\Gamma}$. Our work is based on another linear transformation of $\log \Lambda$, denoted as $\log U$, and we refer to $U$ as the transformed eigenvalues. $\log U$ serve as another parametrization of the channel $\mathcal{N}_{\Gamma}$ and shares similar properties to the canonical factors $\log F$. Based on $\log U$, we have a simple proof of the learnability results \cref{theorem: wagner} with the additional necessary condition for learnability, and more importantly, answer the questions (\cref{theorem: syndrome class learning}) of what we can generally learn about the parameters of the local channels $\mathcal{N}_{\gamma}$, when the learnability condition in \cref{theorem: wagner} is not satisfied. See \cref{fig: basis transform} for an illustration of the basis transforms. In this appendix section, we first introduce the canonical factors $\log F$ \cite{wagner2022pauli,wagner2023learning}, and then introduce the transformed eigenvalues $\log U$ used in this work.

\begin{figure*}[h]
\centering
    \includegraphics[width=0.42\textwidth]{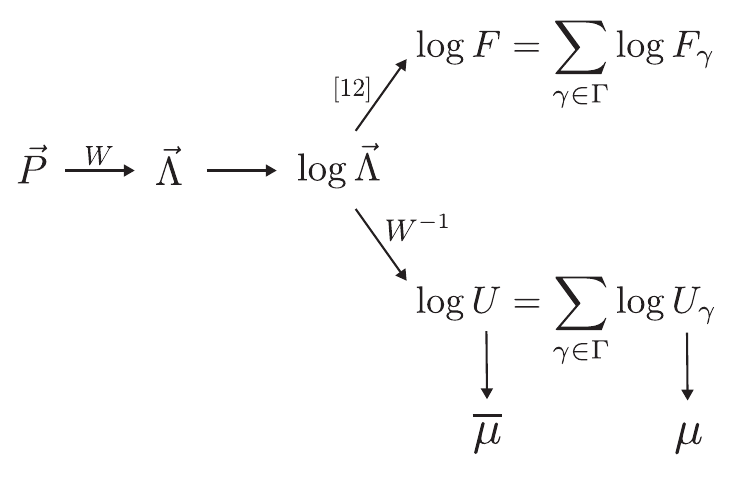}
    \caption{An overview of the basis transform to the transformed eigenvalues $U$. The Pauli eigenvalues $\vec{\Lambda}$ of the error channel are obtained from the Walsh-Hadamard transform of the true physical error rates. \textcite{wagner2023learning} performs a transformation to get the canonical moments $F$ that parametrize the total channel $\mathcal{N}_{\Gamma}$. Our work performs an inverse Hadamard transform on $\log \vec{\Lambda}$ instead and obtains the transformed eigenvalues $\overline{\mu}$ and $\mu$ as parametrization of the total channel $\mathcal{N}_{\Gamma}$ and the local channels $\mathcal{N}_{\gamma}$ respectively.}
    \label{fig: basis transform}
\end{figure*}

\subsection{$\log F$: Basis of the canonical factor  \cite{wagner2023learning}}
Given the convolutional factor graph structure \cref{fig: conv_factor_graph_total} of the noise model, we have the eigenvalue $\Lambda(O)$ of any Pauli $O\in \mathcal{P}_n$ for the total error channel $\mathcal{N}_{\Gamma}$ decomposed as a factor graph (see \cref{fig: conv_factor_graph_total}):
\begin{align}
    \Lambda(O)=\prod_{\gamma\in\Gamma}\Lambda_{\gamma}(O)=\prod_{\gamma\in\Gamma}\Lambda_{\gamma}(O_{\gamma})
\end{align}
where the local eigenvalues $\Lambda_{\gamma}(O)$ of $O$ only depends on its restriction on the support $\gamma$, i.e. $\Lambda_{\gamma}(O)=\Lambda_{\gamma}(O_{\gamma})$.

To understand the degrees of freedom in both the total Pauli error rates $P$ and in each local error rate $P_{\gamma}$, \textcite{wagner2023learning} considers factors in another basis where we refer to the canonical basis from the canonical factors as defined typically in a factor graph \cite{wagner2023learning,flammia2020efficient,abbeel2006learning}.

\begin{appdefinition}[Canonical factor \cite{wagner2023learning}]\label{definition: canonical moments}
    The canonical factors $F(O), \forall O\in\mathcal{P}_n$ of the total Pauli channel with eigenvalues $\Lambda(O), \forall O\in\mathcal{P}_n$ is defined as:
    \begin{equation}
        F(O)=\prod_{O'\leq O}\Lambda^{(-1)^{|O|-|O'|}}(O').
    \end{equation}
    where $a\leq b$ if and only if $a$ is a Pauli substring of $b$, i.e., $b_{\mathrm{supp}(a)}=a$.
\end{appdefinition}

Equivalently, this can be written as a linear transformation of $\log \Lambda(O), \forall O\in \mathcal{P}_n$:
\begin{equation}
    \log F(O) = \sum_{O'\leq O}(-1)^{|O|-|O'|}\log \Lambda(O')\label{equation: log canonical global}
\end{equation}
Similarly, the local canonical factors $F_{\gamma}(O), \forall O\in\mathcal{P}_n$ of a local Pauli channels in a system of $n$ qubits with support $\gamma$ with eigenvalues $\Lambda_{\gamma}(O)=\Lambda_{\gamma}(O_{\gamma}), \forall O\in\mathcal{P}_n$ is defined as:
    \begin{equation}
        \log F_{\gamma}(O)=\sum_{O'\leq O} (-1)^{|O|-|O'|} \log\Lambda_{\gamma}(O').\label{equation: log canonical local}
    \end{equation}

\begin{appproperty}\label{property: canonical factors}
The (local) canonical factors $F$ ($F_{\gamma}$) have the following properties \cite{wagner2023learning}:
\begin{enumerate}
    \item The canonical factors are the product of local canonical factors, or equivalently:
        \begin{equation}
            \log F(O)=\sum_{\gamma\in\Gamma}\log F_{\gamma}(O).
            \label{equation: global canonical factors as local canonical factors}
        \end{equation}
    \item Local canonical factors $\log F_{\gamma}(O)=0$, if $\text{supp}(O)\not\subseteq \gamma$. For the global canonical factors we have $\log F(O)=0$, if $\text{supp}(O)\not\subseteq \gamma, \forall \gamma\in\Gamma$.
    \item The linear transformation in \cref{equation: log canonical global} and \cref{equation: log canonical global} is full rank. The eigenvalue of local Pauli channels for operator $O$ can be written as products of all the canonical factors of substrings of $O$: $\log\Lambda_{\gamma}(O)=\sum_{O'\leq O}\log F_{\gamma}(O')$. Similarly, the eigenvalues of the total channel are given by \cite{wagner2023learning}: $\log\Lambda(O)=\sum_{O'\leq O}\log F(O')$.
    
\end{enumerate}   
\end{appproperty}

As a result, $\{F(e)|\forall e\in\mathcal{E}_{\Gamma}\}$ is an irredundant set that parametrizes the total Pauli channel $\mathcal{N}_{\Gamma}$. Similarly, for each local channel $\gamma\in\Gamma$, $\{F_{\gamma}(e)|e\in\mathcal{E}_{\gamma}\}$ parametrizes the local channel $\mathcal{N}_{\gamma}$. Note that there is gauge freedom in general for each local channel if one only fixes the total channel $\mathcal{N}_{\Gamma}$. One benefit of looking at the canonical factors $F$ as opposed to the Pauli eigenvalue $\Lambda$ is that the gauge freedom is straightforward, as shown in \cref{equation: global canonical factors as local canonical factors}. More explicitly, whenever we have two non-overlapping distinct channel supports $\gamma_1,\gamma_2\in \Gamma$ and Pauli error $e$ such that $\mathrm{supp}(e)\subseteq \gamma_1\cap \gamma_2$, we have associated gauge freedom $\alpha$:
\begin{equation}\label{equation: gauge freedom}
\log F_{\gamma_1}(e) \longrightarrow \log F_{\gamma_1}(e)+\alpha \quad \text{and} \quad \log F_{\gamma_2}(e) \longrightarrow \log F_{\gamma_2}(e)-\alpha,
\end{equation}
such that the total channel stays invariant.
And given the total Pauli channel $\mathcal{N}_{\Gamma}$, one could determine the local Pauli channel $\mathcal{N}_{\gamma}$ if and only if $\gamma\cap\gamma'=\emptyset$ for all other local supports $\gamma'\in\Gamma$. This is the case mostly considered in the main text when mapping standard circuit-level Pauli noise to the spacetime code.

\subsection{$\log U$: Basis of the transformed eigenvalues}
Here, we introduced the alternative basis that allows us to design the linearized approximate algorithm for noise learning and prove the theorems in this work. In the main text, we defined the set of transformed local eigenvalues $\mu$, used to approximate the physical error rates. In this section, we introduce the formalism more systematically by first introducing the associated global quantities, the transformed eigenvalues of the total channel $U(O), \forall O\in \mathcal{P}_n$, whose logarithm is the inverse Walsh-Hadamard transform of the Pauli eigenvalues $\log \Lambda$ of the total channel $\mathcal{N}_{\Gamma}$:
\begin{equation}
    \log U(O)= -\frac{2}{|\mathcal{P}_n|}\sum_{O'\in\mathcal{P}_n}[[O',O]]\log \Lambda(O'_{\gamma}).\label{equation: U from Lambda}
\end{equation}
The extra $-2$ factor here is a convention that gives a simple binary form of the matrix $D$ in \cref{equation: D entries}. For each individual channel $\mathcal{N}_{\gamma}$, we similarly define the transformed eigenvalues $U_{\gamma}(O)$ of the individual channel, or simply the transformed eigenvalues as
\begin{equation}
    \log U_{\gamma}(O)= -\frac{2}{|\mathcal{P}_n|}\sum_{O'\in\mathcal{P}_n}[[O',O]]\log \Lambda_{\gamma}(O'_{\gamma}).\label{equation: U from Lambda local}
\end{equation}
Here are some properties of the transformed eigenvalues $U$ and $U_{\gamma}$ analogous to \cref{property: canonical factors}:

\begin{applemma}[Properties of $U$ and $U_{\gamma}$]\label{lemma: properties of U}
For any $O\in\mathcal{P}_n$:
\begin{enumerate} 
    \item $\log U(O)=\sum_{\gamma\in\Gamma}\log U_{\gamma}(O)$.
    \item For any local support $\gamma\in\Gamma$, $\log U_{\gamma}(O)=0$, if $\text{supp}(O)\not\subseteq \gamma$. And $\log U(O)=0$, if $\forall \gamma\in\Gamma$, $\text{supp}(O)\not\subseteq \gamma$.
\end{enumerate}   
\end{applemma}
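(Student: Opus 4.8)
The plan is to derive both properties directly from the definitions \cref{equation: U from Lambda,equation: U from Lambda local}, using two elementary ingredients: (i) the map $\log\Lambda\mapsto\log U$ is a fixed linear transformation (an inverse Walsh--Hadamard transform over $\mathcal{P}_n$), so it commutes with sums over $\gamma$; and (ii) the scalar commutator $[[\cdot,\cdot]]$ is a multiplicative character of the Pauli group $\mathcal{P}_n\cong\mathbb{Z}_2^{2n}$, so a complete sum of $[[O',O]]$ over any tensor factor vanishes unless $O$ is trivial on that factor. Under \cref{assumption: open set probability} all Pauli eigenvalues are strictly positive, so every logarithm below is well defined.

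For property~1, I would start from the logarithm of the eigenvalue factorization \cref{equation: E factorization}, i.e.\ $\log\Lambda(O')=\sum_{\gamma\in\Gamma}\log\Lambda_\gamma(O'_\gamma)$ for every $O'\in\mathcal{P}_n$. Substituting this into the definition of $\log U(O)$ and interchanging the two finite sums over $O'$ and $\gamma$ gives
\begin{equation}
\log U(O)=\sum_{\gamma\in\Gamma}\left(-\frac{2}{|\mathcal{P}_n|}\sum_{O'\in\mathcal{P}_n}[[O',O]]\log\Lambda_\gamma(O'_\gamma)\right)=\sum_{\gamma\in\Gamma}\log U_\gamma(O),
\end{equation}
which is exactly the claimed additivity.

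For property~2, I would fix $\gamma\in\Gamma$ and decompose every $O'\in\mathcal{P}_n$ into its restriction $O'_\gamma$ to $\gamma$ and $O'_{\gamma^{c}}$ to the complement $\gamma^{c}$, using the identification $\mathcal{P}_n\cong\mathcal{P}_{|\gamma|}\times\mathcal{P}_{n-|\gamma|}$. Since $\Lambda_\gamma(O'_\gamma)$ depends only on $O'_\gamma$, and since $[[O',O]]=[[O'_\gamma,O_\gamma]]\,[[O'_{\gamma^{c}},O_{\gamma^{c}}]]$ by multiplicativity of the scalar commutator over tensor products, the defining sum \cref{equation: U from Lambda local} factorizes as
\begin{equation}
\log U_\gamma(O)=-\frac{2}{|\mathcal{P}_n|}\left(\sum_{O'_\gamma\in\mathcal{P}_{|\gamma|}}[[O'_\gamma,O_\gamma]]\log\Lambda_\gamma(O'_\gamma)\right)\left(\sum_{O'_{\gamma^{c}}\in\mathcal{P}_{n-|\gamma|}}[[O'_{\gamma^{c}},O_{\gamma^{c}}]]\right).
\end{equation}
The second factor is a complete sum of the character $P\mapsto[[P,O_{\gamma^{c}}]]$ over the abelian group $\mathcal{P}_{n-|\gamma|}$, hence equals $|\mathcal{P}_{n-|\gamma|}|$ if $O_{\gamma^{c}}=I$ and $0$ otherwise (the character is trivial iff $O_{\gamma^{c}}$ commutes with every Pauli, i.e.\ iff $O_{\gamma^{c}}=I$). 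Therefore $\log U_\gamma(O)=0$ whenever $O_{\gamma^{c}}\neq I$, i.e.\ whenever $\mathrm{supp}(O)\not\subseteq\gamma$. The statement for $\log U(O)$ then follows immediately from property~1: if $\mathrm{supp}(O)\not\subseteq\gamma$ for every $\gamma\in\Gamma$, then $\log U(O)=\sum_{\gamma\in\Gamma}\log U_\gamma(O)=0$.

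There is no real obstacle here; this is the standard orthogonality-of-Pauli-characters computation underlying the Walsh--Hadamard/Pauli-eigenvalue correspondence. The only points requiring care are making the restriction notation $O'_\gamma$, $O'_{\gamma^c}$ and the product decomposition $\mathcal{P}_n\cong\mathcal{P}_{|\gamma|}\times\mathcal{P}_{n-|\gamma|}$ precise, and recording the elementary identity $\sum_{P\in\mathcal{P}_k}[[P,Q]]=|\mathcal{P}_k|\,\delta_{Q,I}$ as a preliminary sublemma so that the factorization step in property~2 is fully justified. (One should also note the harmless typo in \cref{equation: U from Lambda}, where $\log\Lambda(O'_\gamma)$ should read $\log\Lambda(O')$; the argument above reads it in that way.)
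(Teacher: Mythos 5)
Your proposal is correct and follows essentially the same route as the paper: property 1 via linearity of the (inverse) Walsh--Hadamard transform applied to $\log\Lambda=\sum_{\gamma}\log\Lambda_{\gamma}$, and property 2 via factorizing the sum over $\mathcal{P}_n$ into the parts on $\gamma$ and $\gamma^{c}$ and using the vanishing character sum $\sum_{O'_{\gamma^{c}}}[[O'_{\gamma^{c}},O_{\gamma^{c}}]]=0$ unless $O_{\gamma^{c}}=I$, then invoking property 1. Your reading of the typo in the definition of $\log U(O)$ (replacing $\log\Lambda(O'_{\gamma})$ by $\log\Lambda(O')$) matches the intended definition, so no gap remains.
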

\begin{proof}
To prove the first property, note that for the Pauli eigenvalues $\Lambda(O)=\prod_{\gamma\in\Gamma}\Lambda_{\gamma}(O)$, therefore we have $\log \Lambda(O)=\sum_{\gamma\in\Gamma}\log\Lambda_{\gamma}(O)$. Since the logarithm of the transformed eigenvalues $\log U$ and $\log U_{\gamma}$ are the same linear transformation of $\log \Lambda$ and $\log \Lambda_{\gamma}$ respectively, property $1$ follows directly.

To prove property 2, for any support $\gamma\in\Gamma$:
\begin{align}
    \log U_{\gamma}(O)&= -\frac{2}{|\mathcal{P}_n|}\sum_{O'\in\mathcal{P}_n}[[O',O]]\log \Lambda_{\gamma}(O'_{\gamma})\\
    &=-\frac{2}{|\mathcal{P}_n|}\sum_{O'_{\gamma^c}\in\mathcal{P}_{\gamma^c}}[[O'_{\gamma^c},O_{\gamma^c}]]\sum_{O'_{\gamma}\in\mathcal{P}_{\gamma}}[[O'_{\gamma},O_{\gamma}]] \log \Lambda_{\gamma}(O'_{\gamma})\\
    &=\frac{-2}{4^{|\gamma|}}\sum_{O'_{\gamma}\in\mathcal{P}_{\gamma}}[[O'_{\gamma},O_{\gamma}]] \log \Lambda_{\gamma}(O'_{\gamma}).
\end{align}
where we denote the complement of the support $\gamma$ as $\gamma^c=[n]\backslash \gamma$, and $\mathcal{P}_{\gamma}$ is the subgroup of $\mathcal{P}_n$ containing all Pauli operators supported on $\gamma$.
Since $\sum_{O'_{\gamma^c}\in\mathcal{P}_{\gamma^c}}[[O'_{\gamma^c},O_{\gamma^c}]]=0$ if $O\notin \mathcal{E}_{\gamma}$, $\log U_{\gamma}(O)=0$ if $\mathrm{supp}(O)\not\subseteq \gamma$. And if $\forall \gamma\in\Gamma$, $\text{supp}(O)\not\subseteq \gamma$, then $\log U(0)=0$ due to property 1.
\end{proof}

This motivate us to define the set of transformed local eigenvalues for the total channel as $\overline{\mu}=\{U(e)|e\in\overline{\mathcal{E}}_{\Gamma}\}$ and the corresponding vector $\vec{\overline{\mu}}=(U(e)|e\in\overline{\mu})$ with entries in the order of \cref{equation: order}. This set of transformed local eigenvalues is the set of values of $\log U(e)$ of the local errors $e\in\overline{\mathcal{E}}_{\Gamma}$, and it is the set of non-trivial factors that determines the total channel $\mathcal{N}_{\Gamma}$. This is because for all $e\notin \overline{\mathcal{E}}_{\Gamma}\cup\{I\}$, we have $\log U(e)=0$ according to \cref{lemma: properties of U}, while for $e=I$, $\log U(I)$ depends on the values in $\mu$:
\begin{equation}\label{equation: dependence on log U(I)}
    \log U(I)=-\sum_{e\in \overline{\mathcal{E}}_{\Gamma}} \log U(e),
\end{equation}
since $\log \Lambda(I)\propto \sum_{e\in\mathcal{P}_n}\log U(e)=\sum_{e\in\overline{\mathcal{E}}_{\Gamma}+\{I\}} \log U(e)=0$. In particular, for any Pauli eigenvalue $\Lambda(e), \forall e\in\mathcal{P}_{n}$ we have:
\begin{align}
    \log \Lambda(e)&= -\frac{1}{2}\sum_{e'\in\mathcal{P}_n}[[e',e]]\log U(e')\\
                &=-\frac{1}{2}\left(\log U(I) + \sum_{e'\in\overline{\mathcal{E}}_{\Gamma}}[[e',e]]\log U(e')\right)\\
                &=\sum_{e'\in\overline{\mathcal{E}}_{\Gamma}}\frac{-[[e',e]]+1}{2}\log U(e')\label{equation: source of J minus W}\quad(\cref{equation: dependence on log U(I)})\\
                &=\sum_{e'\in\overline{\mathcal{E}}_{\Gamma}}\langle e',e\rangle\log U(e')\\
                &=\sum_{e'\in\overline{\mathcal{E}}_{\Gamma}}\langle e',e\rangle\log \overline{\mu}_{e'}.
\end{align}
Here, we define the binary commutator for Pauli operators $\langle A,B\rangle$ as
\begin{equation}
    \langle A,B\rangle=\begin{cases}
        0&\text{if $A$, $B$ commute}\\
        1&\text{else.}\\
    \end{cases}
\end{equation}
Therefore, for any set $\mathcal{H}$ of Pauli operators, we can write
\begin{equation}\label{equation: eigenvalues and mu bar}
    \log \vec{\Lambda}^{(\mathcal{H})}=\overline{D}^{(\mathcal{H})} \log \vec{\overline{\mu}},
\end{equation}
where $\overline{D}^{(\mathcal{H})}$ is a $|\mathcal{H}|$ by $|\overline{\mathcal{E}}_{\Gamma}|$ matrix with entries: $\overline{D}^{(\mathcal{H})}[h,e]=\langle h,e\rangle, h\in\mathcal{H}, e\in\overline{\mathcal{E}}_{\Gamma}$.

Similarly, to parametrize local channels, we define the set $\mu=\{U_{\gamma_e}(e)|e\in \mathcal{E}_{\Gamma}\}$ and the corresponding vector $\vec{\mu}=(U(e)|e\in \mu)$ and we have
\begin{equation}
    \log \vec{\Lambda}^{(\mathcal{H})}=D^{(\mathcal{H})} \log \vec{\mu},\label{equation: syndrome matrix}
\end{equation}
where $D^{(\mathcal{H})}$ is a $|\mathcal{H}|$ by $|\mathcal{E}_{\Gamma}|$ matrix with entries: $D^{(\mathcal{H})}[h,e]=\langle h,e\rangle, h\in\mathcal{H}, e\in\mathcal{E}_{\Gamma}$. We refer to the matrix $D$ as the \emph{syndrome matrix} for the collection of local channels.

Note that there is intrinsic gauge freedom in $\log \mu$ of the form similar to \cref{equation: gauge freedom}:
\begin{equation}
    \log U_{\gamma_1}(e) \longrightarrow \log U_{\gamma_1}(e)+\alpha \quad \text{and} \quad \log U_{\gamma_2}(e) \longrightarrow \log U_{\gamma_2}(e)-\alpha,
\end{equation}
whenever we have two non-overlapping distinct channel supports $\gamma_1,\gamma_2\in \Gamma$ and Pauli error $e$ such that $\mathrm{supp}(e)\subseteq \gamma_1\cap \gamma_2$. However, working with the local channels using the transformed local eigenvalues $\log \mu$ is more natural for physical level benchmarking, since $\log \mu$ is an approximation of the Pauli error rates associated with physical circuit operation (see \cref{appendix: physical proof}).

In \cref{appendix: A rank general}, we prove the rank of matrices $A$ and $D$, which gives the minimal number of measured stabilizer elements needed to solve for the system of linear equations.

\section{\label{appendix: A rank general}Rank of matrix $A$ and $D$}

Given a Pauli error model with a set of supports $\Gamma$ and the group of measured stabilizer $\mathcal{M}$, we constructed \cref{equation: log equation} that connects local Pauli eigenvalues to the syndrome expectation values. In this section, we present more details of the structure of this system of linear equations represented by the matrix $A^{(\mathcal{M})}$ by performing a basis transform to the syndrome matrix $D^{(\mathcal{M})}$ (\cref{equation: syndrome matrix} in \cref{appendix: log u basis}) and show that the rank of $A^{(\mathcal{M})}$ and $D^{(\mathcal{M})}$ is the number of non-trivial syndrome classes $|\mathcal{C}^*|$. Therefore one only need to select a subset $\mathcal{M}'\subseteq\mathcal{M}$ of $|\mathcal{C}^*|$ many stabilizer elements such that the rows among $A^{(\mathcal{M}')}$ and $D^{(\mathcal{M}')}$ are all independent. We give a formal definition of syndrome class with respect to a set of Pauli operators:
\begin{appdefinition}[Syndrome error class]\label{definition: syndrome class}
    Given a set of Pauli operators $\mathcal{H}$ and the set of Pauli errors $\mathcal{E}_{\Gamma}$, we define the syndrome vector for any $e\in\mathcal{E}_{\Gamma}$ as $\vec{s}(e)=\left(\langle e, h\rangle \right)_{h\in\mathcal{H}}$. We define the set of syndrome classes $\mathcal{C}^{(\mathcal{H})}$ of $\mathcal{E}_{\Gamma}$ with respect to $\mathcal{H}$ as a set of equivalent classes of $\mathcal{E}_{\Gamma}$ induced by the relation $e\sim e'$ if and only if $\vec{s}(e)=\vec{s}(e')$. We denote by $C_0$ the trivial syndrome class in $\mathcal{C}^{(\mathcal{H})}$, i.e. the class $\{e\in\mathcal{E}_{\Gamma}|\vec{s}(e)=\vec{0}\}$. And we denote the set of non-trivial classes $\mathcal{C}^{*(\mathcal{H})}=\mathcal{C}^{(\mathcal{H})}\backslash \{C_0\}$. When the set $\mathcal{H}$ is the measured stabilizer group $\mathcal{M}$, we omit the superscript.
\end{appdefinition}

From \cref{equation: source of J minus W}, we note that matrix $A$ and the syndrome matrix $D$ differ by a full rank
transformation:
\begin{applemma}
$\rank D^{(\mathcal{H})}=\rank A^{(\mathcal{H})}$ for any set of Pauli operators $\mathcal{H}$.
    \label{lemma: rank equal}
\end{applemma}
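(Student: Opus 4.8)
The plan is to recognize that $A^{(\mathcal H)}$ and $D^{(\mathcal H)}$ are two matrix representations of one and the same linear map, written in two coordinate systems on the domain that are related by an invertible change of variables; since rank is preserved under right-multiplication by an invertible matrix, $\operatorname{rank}A^{(\mathcal H)}=\operatorname{rank}D^{(\mathcal H)}$ will follow immediately. Concretely, I would first recall that the transformed local eigenvalues are obtained from the local eigenvalues by the block-diagonal linear map $T := -2\bigoplus_{\gamma\in\Gamma}(W_{4^{|\gamma|}}^{-1})'$, i.e.\ $\log\vec\mu = T\log\vec\lambda$ (the logarithms are well defined because \cref{assumption: open set probability} forces $\Lambda_\gamma(O)\ge 2P_\gamma(I)-1>0$ for every $\gamma$ and every $O$).

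Next I would show $T$ is invertible, for which it suffices that each block $(W_{4^k}^{-1})'$ is nonsingular. Writing $N=4^k$ and using $W_N^{2}=NI$, one gets $(W_N^{-1})'=N^{-1}W_N'$ where $W_N'$ is the submatrix of $W_N$ obtained by deleting the identity-indexed row and column. Expanding $W_N^{2}=NI$ blockwise in the partition $W_N=\big(\begin{smallmatrix}1&\vec 1^{\top}\\ \vec 1& W_N'\end{smallmatrix}\big)$ gives $(W_N')^{2}=NI_{N-1}-\vec1\,\vec1^{\top}$, whence $(\det W_N')^{2}=\det(NI_{N-1}-\vec1\,\vec1^{\top})=N^{N-2}\neq0$; equivalently, this invertibility is exactly the statement that \cref{equation: local lambda to p} is a bijective reparametrization of each local channel $\mathcal N_\gamma$. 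Then I would observe that the row identity behind \cref{equation: log equation} (namely $\log\Lambda(h)=\sum_\gamma\log\Lambda_\gamma(h_\gamma)$, valid for any Pauli $h$ by \cref{equation: E factorization}) and the row identity behind \cref{equation: syndrome matrix} (the local analogue of the computation in \cref{equation: source of J minus W}, summed over $\gamma$ using \cref{lemma: properties of U}) both hold for an arbitrary row-index set $\mathcal H$, not just the measured group $\mathcal M$. Hence, for every admissible $\log\vec\lambda$,
\[
 A^{(\mathcal H)}\log\vec\lambda=\log\vec\Lambda^{(\mathcal H)}=D^{(\mathcal H)}\log\vec\mu=D^{(\mathcal H)}T\log\vec\lambda,
\]
and since \cref{assumption: open set probability} makes the admissible $\log\vec\lambda$ range over an open set (which spans the ambient space), this forces the matrix identity $A^{(\mathcal H)}=D^{(\mathcal H)}T$. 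Invertibility of $T$ then yields $\operatorname{rank}A^{(\mathcal H)}=\operatorname{rank}\big(D^{(\mathcal H)}T\big)=\operatorname{rank}D^{(\mathcal H)}$.

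I expect the only genuinely nontrivial step to be verifying that $T$ is invertible — that is, the small linear-algebra fact that striking the identity-indexed row and column from the Walsh--Hadamard matrix leaves a nonsingular matrix; once that is in hand, the rest is bookkeeping with the definitions of $A$, $D$, and the basis change. A secondary point I would state carefully is that the relations \cref{equation: log equation} and \cref{equation: syndrome matrix}, although written in the main text for $\mathcal M$, are row-wise identities valid verbatim for any Pauli set $\mathcal H$, so nothing in the argument uses that $\mathcal H$ is a group.
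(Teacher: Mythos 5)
Your proposal is correct and takes essentially the same route as the paper: both exhibit $A^{(\mathcal H)}$ and $D^{(\mathcal H)}$ as differing by right-multiplication with the invertible block-diagonal change of variables between $\log\vec{\lambda}$ and $\log\vec{\mu}$ (the paper writes $D^{(\mathcal{H})}=A^{(\mathcal{H})}V$ with $V$ as in \cref{equation: Matrix V expression}, which is exactly the inverse of your $T$). The only difference is that you explicitly prove nonsingularity of each block via $(W_N')^2=NI_{N-1}-\vec 1\,\vec 1^{\top}$ and the open-set argument for $A^{(\mathcal H)}=D^{(\mathcal H)}T$, details the paper asserts rather than spells out.
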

\begin{proof}
The full rank matrix $V$ of the linear transformation $D^{(\mathcal{H})}=A^{(\mathcal{H})}V$ is given by:
    \begin{equation}
        V=\bigoplus_{\gamma\in\Gamma}(J_{4^{|\gamma|}-1}-W'_{4^{|\gamma|}})/2
        \label{equation: Matrix V expression}
    \end{equation}
    where we denote $W'_{4^{|\gamma|}}$ as the order $4^{|\gamma|}$ Walsh-Hadamard matrix deleting the first row and first column. $J_n$ is a $n$ by $n$ matrix with all entries being $1$. For example, consider $V$ on a single qubit, then we have
    \begin{equation}
        W_4=\begin{pmatrix}
1 & 1 & 1 & 1 \\
1 & 1 & -1 & -1 \\
1 & -1 & 1 & -1 \\
1 & -1 & -1 & 1 \\
\end{pmatrix}
\mapsto
W'_4=\begin{pmatrix}
 1 & -1 & -1 \\
 -1 & 1 & -1 \\
 -1 & -1 & 1 \\
\end{pmatrix}\label{equation: reduced WH}
\end{equation}
and
\begin{equation}
    V=\frac{1}{2}
\begin{pmatrix}
 1 & 1 & 1 \\
 1 & 1 & 1 \\
 1 & 1 & 1 \\
\end{pmatrix}
-\frac{1}{2}
\begin{pmatrix}
 1 & -1 & -1 \\
 -1 & 1 & -1 \\
 -1 & -1 & 1 \\
\end{pmatrix}
=
\begin{pmatrix}
 0 & 1 & 1 \\
 1 & 0 & 1 \\
 1 & 1 & 0 \\
\end{pmatrix}.
\end{equation}
For example, the syndrome matrix $D^{(\mathcal{H})}$ of single-qubit errors can be constructed in such a way where we look at the three columns of any qubit, its $X$ column is the sum of the $Y$ and $Z$ column of $A^{(\mathcal{H})}$, its $Y$ column is the sum of the $X$ and $Z$ column of $A^{(\mathcal{H})}$, and its $Z$ column is the sum of the $X$ and $Y$ column of $A^{(\mathcal{H})}$. \cref{equation: Matrix V expression} generalizes this procedure to an arbitrary set of local channel supports $\Gamma$.
\end{proof}

Since $D^{(\mathcal{M})}$ encodes the commutation of each error with the measured stabilizer group, the columns of $D$ for errors in the same syndrome error class are identical. We define the reduced syndrome matrix $D'$ by removing columns of these redundant errors in $D$:
\begin{appdefinition}
    The reduced syndrome matrix $D'^{(\mathcal{H})}$ of a given set of Pauli error support $\Gamma$ and a set of Pauli operators $\mathcal{H}$ is a $|\mathcal{H}|$ by $|\mathcal{C}^{*(\mathcal{H})}|$ matrix where each column correspond to one representative in every non-trivial syndrome error class $C\in\mathcal{C}^{*(\mathcal{H})}$, and the syndrome classes are defined for the set $\mathcal{E}_{\Gamma}$. The reduced syndrome matrix $\overline{D}'^{(\mathcal{H})}$ for the total channel is defined similarly but the syndrome classes $\mathcal{C}^{*(\mathcal{M})}$ is defined for the error set $\overline{\mathcal{E}}_{\Gamma}$ instead.
    \label{definition: reduced D}
\end{appdefinition}

\begin{applemma}
    $\rank \overline{D}^{(\mathcal{M})} = \rank \overline{D}'^{(\mathcal{M})} = \rank D^{(\mathcal{M})} = \rank D'^{(\mathcal{M})} = |\mathcal{C}^{*(\mathcal{M})}|$ for any subgroup of the Pauli group $\mathcal{M}\leq\mathcal{P}_n$, where $\mathcal{C}^{*(\mathcal{M})}$ is the set of non-trivial syndrome classes (\cref{definition: syndrome class}) for the set of errors $\mathcal{E}_{\Gamma}$.
    \label{lemma: rank D}
\end{applemma}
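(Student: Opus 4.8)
\emph{Proof plan.} The plan is to collapse all four rank equalities to a single statement: the columns of the reduced syndrome matrix $D'^{(\mathcal M)}$ are linearly independent over $\mathbb R$. First I would note that passing from $D^{(\mathcal M)}$ to $\overline D^{(\mathcal M)}$ only identifies duplicated columns — two elements of $\mathcal E_\Gamma$ that are the same Pauli operator, even if they come from different local channels, have the same commutator $\langle M,e\rangle$ with every $M\in\mathcal M$ and hence identical columns — so $\rank D^{(\mathcal M)}=\rank\overline D^{(\mathcal M)}$. Likewise, by construction (\cref{equation: D entries}) the columns of errors lying in a common syndrome class are identical, and the column of the trivial class $C_0$ is the zero vector; deleting these redundancies turns $D^{(\mathcal M)}$ (resp. $\overline D^{(\mathcal M)}$) into $D'^{(\mathcal M)}$ (resp. $\overline D'^{(\mathcal M)}$) without changing the rank, and the reduced matrices have exactly $|\mathcal C^{*(\mathcal M)}|$ columns. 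The count $|\mathcal C^{*(\mathcal M)}|$ is the same whether it is taken for $\mathcal E_\Gamma$ or for $\overline{\mathcal E}_\Gamma$, since the syndrome of an error depends only on the underlying Pauli operator. Hence it suffices to show the $|\mathcal C^{*(\mathcal M)}|$ columns of $D'^{(\mathcal M)}$ are $\mathbb R$-linearly independent.

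Next I would exploit the group structure of $\mathcal M$. For a fixed error $e$, the map $M\mapsto\langle M,e\rangle$ is a $\mathbb Z_2$-linear functional on $\mathcal M$, since $\langle M_1M_2,e\rangle\equiv\langle M_1,e\rangle+\langle M_2,e\rangle\pmod 2$. Two errors lie in the same class of $\mathcal C^{(\mathcal M)}$ precisely when they induce the same functional, and $C_0$ corresponds to the zero functional. Applying the entrywise affine substitution $x\mapsto 1-2x$ sends the $\{0,1\}$-valued column $\vec v_C$ of a class $C$ to the $\{\pm1\}$-valued vector $\vec w_C:=\vec 1-2\vec v_C$, which is a genuine group character $\chi_C:\mathcal M\to\{\pm1\}\subset\mathbb C^\times$; the zero functional maps to the trivial character $\vec w_{C_0}=\vec 1$. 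Because $\vec w_C=\vec w_{C'}\iff\vec v_C=\vec v_{C'}\iff C=C'$, the characters $\{\vec w_C : C\in\mathcal C^{*(\mathcal M)}\}\cup\{\vec w_{C_0}\}$ are pairwise distinct (here it is essential that $C_0\notin\mathcal C^{*(\mathcal M)}$, so $\vec 1$ is not already among them).

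Finally I would conclude by orthogonality of characters. Since $\mathcal P_n$ is taken without phases, $\mathcal M$ is a finite abelian group, and distinct characters are orthogonal in $\mathbb R^{|\mathcal M|}$: $\sum_{M\in\mathcal M}\chi_C(M)\chi_{C'}(M)=\sum_M(\chi_C\chi_{C'})(M)$ equals $|\mathcal M|$ if $\chi_C=\chi_{C'}$ and $0$ otherwise, because $\chi_C\chi_{C'}$ is again a $\pm1$ character, trivial iff $\chi_C=\chi_{C'}$. Hence the $\vec w$'s above are $\mathbb R$-linearly independent. Now suppose $\sum_{C\in\mathcal C^{*(\mathcal M)}}a_C\vec v_C=0$; substituting $\vec v_C=\tfrac12(\vec w_{C_0}-\vec w_C)$ gives $\tfrac12\bigl(\sum_C a_C\bigr)\vec w_{C_0}-\tfrac12\sum_C a_C\vec w_C=0$, and linear independence of $\vec w_{C_0},\{\vec w_C\}$ forces $a_C=0$ for every $C\in\mathcal C^{*(\mathcal M)}$. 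Thus $D'^{(\mathcal M)}$ has full column rank $|\mathcal C^{*(\mathcal M)}|$, which together with the first paragraph yields $\rank\overline D^{(\mathcal M)}=\rank\overline D'^{(\mathcal M)}=\rank D^{(\mathcal M)}=\rank D'^{(\mathcal M)}=|\mathcal C^{*(\mathcal M)}|$. The one subtle point to get right is the bookkeeping around $C_0$: the shift $x\mapsto 1-2x$ is affine rather than linear, so the argument only goes through because excluding the trivial syndrome class makes the extra vector $\vec 1$ it introduces a genuinely new character instead of a relabeling of an existing column.
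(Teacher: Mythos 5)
Your proof is correct, and it establishes the key fact (full column rank of $D'^{(\mathcal M)}$) by a route that differs in mechanics from the paper's. The paper works with the Gram matrix: it counts, for each pair of column errors, how many $M\in\mathcal M$ anticommute with one or both of them (via a generating set of $\mathcal M$ and a parity count over subsets), obtaining $(D'^{(\mathcal M)})^T D'^{(\mathcal M)}=\tfrac{|\mathcal M|}{4}(J+I)$, which is manifestly invertible. You instead map each $\{0,1\}$ column to the $\pm1$-valued vector $M\mapsto(-1)^{\langle M,e\rangle}$, observe these are pairwise distinct characters of the (elementary abelian) group $\mathcal M$, none equal to the trivial character because the classes in $\mathcal C^{*(\mathcal M)}$ have nontrivial syndrome, and invoke orthogonality of characters; the affine shift $x\mapsto 1-2x$ is handled correctly by including $\vec w_{C_0}=\vec 1$ in the independent family, which is exactly where excluding $C_0$ is needed. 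The two arguments rest on the same underlying orthogonality relations (the paper's counts $|\mathcal M|/2$ and $|\mathcal M|/4$ are these relations in disguise), but yours trades the explicit combinatorial count for the standard character-theory statement, which is shorter and makes the role of the trivial class more transparent, while the paper's computation yields the explicit Gram matrix as a by-product. Your preliminary reductions (duplicate columns between $D$ and $\overline D$, zero column for $C_0$, and the equality of class counts for $\mathcal E_\Gamma$ versus $\overline{\mathcal E}_\Gamma$) match the paper's treatment, which dismisses those equalities as immediate from the definitions.
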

\begin{proof}
    The first three equalities are trivial from \cref{definition: reduced D} and \cref{equation: eigenvalues and mu bar,equation: syndrome matrix}. For the last equality, one could show that $(D'^{(\mathcal{M})})^T(D'^{(\mathcal{M})})$ is full rank. The observation is that
    \begin{align}
    [&(D'^{(\mathcal{M})})^T(D'^{(\mathcal{M})})]_{i,j}\\
    =&\left|\{M\in\mathcal{M}|[[M,e_i]]=[[M,e_j]]=-1\}\right|\\
    =&
    \begin{cases}
        |\mathcal{M}|/2& \text{if }i=j,\\
        |\mathcal{M}|/4& \text{if }i\neq j.\\
    \end{cases}\label{equation: anticommute counting}
    \end{align}
    Here, $e_i$ is the error correspond to the $i$th column of $D'^{(\mathcal{M})}$. To justify the last equality \cref{equation: anticommute counting}, we select a generating set $\mathcal{Q}$ of $\mathcal{M}$ such that $\mathcal{M}=\langle\mathcal{Q}\rangle$ and so that $|\mathcal{M}|=2^{|\mathcal{Q}|}$. Then every $M\in\mathcal{M}$ corresponds to an element $\mathcal{Q}^{(M)}$ of the power set $2^{\mathcal{Q}}$:
    \begin{equation}
        M=\prod_{Q\in \mathcal{Q}^{(M)}}Q.
    \end{equation}
    We define $\mathcal{Q}_{e_k}=\{Q\in\mathcal{Q}|\,[[Q,e_k]]=-1\}$, i.e. the set of generators that anticommute with $e_k$.
    To find the number of $M$ that anticommute with $e_i$, $\mathcal{Q}^{(M)}$ has to satisfy:
    \begin{equation}
    |\mathcal{Q}^{(M)}\cap \mathcal{Q}_{e_i}|\equiv 1 \mod 2,    
    \end{equation}
    so the number of choices of $\mathcal{Q}^{(M)}$ is:
    \begin{equation}
        2^{|\mathcal{Q}|-|\mathcal{Q}_{e_i}|}\sum_{\text{odd }j}\binom{|\mathcal{Q}_{e_i}|}{j}=2^{|\mathcal{Q}|-1}=|\mathcal{M}|/2.
    \end{equation}
    We denote by $\mathcal{Q}_1\backslash \mathcal{Q}_2$ the relative complement of $\mathcal{Q}_1$ and $\mathcal{Q}_2$, i.e., the set of elements in $\mathcal{Q}_1$ that are not in $\mathcal{Q}_2$. To find the number of $M$ that anticommute with $e_i$ and $e_j$ ($i\neq j$), $\mathcal{Q}^{(M)}$ has to satisfy:
    \begin{align}
        |\mathcal{Q}^{(M)}\cap \mathcal{Q}_{e_i}\cap\mathcal{Q}_{e_j}|+|\mathcal{Q}^{(M)}\cap(\mathcal{Q}_{e_i}\backslash\mathcal{Q}_{e_j})|
        \equiv|\mathcal{Q}^{(M)}\cap \mathcal{Q}_{e_i}\cap\mathcal{Q}_{e_j}|+|\mathcal{Q}^{(M)}\cap(\mathcal{Q}_{e_j}\backslash\mathcal{Q}_{e_i})|
        \equiv 1\mod 2,
        \label{equation: condition of anticommute both}
    \end{align}
    so one could first choose any subset of generators in $(\mathcal{Q}_{e_i}\cup\mathcal{Q}_{e_j})^c$ and $\mathcal{Q}_{e_i}\cap\mathcal{Q}_{e_j}$, and then select generators in $\mathcal{Q}_{e_i}\backslash\mathcal{Q}_{e_j}$ and $\mathcal{Q}_{e_j}\backslash\mathcal{Q}_{e_i}$ accordingly to satisfy \cref{equation: condition of anticommute both}. The number of choices of $\mathcal{Q}^{(M)}$ is therefore:
    \begin{align}
        &2^{|Q|-|\mathcal{Q}_{e_i}\cup\,\mathcal{Q}_{e_j}|+|\mathcal{Q}_{e_i}\cap\mathcal{Q}_{e_j}|}\left(\sum_{\text{odd (even) }r}\binom{|\mathcal{Q}_{e_i}\backslash\mathcal{Q}_{e_j}|}{r}\right)\left(\sum_{\text{odd (even) }s}\binom{|\mathcal{Q}_{e_j}\backslash\mathcal{Q}_{e_i}|}{s}\right)\\
        =&|\mathcal{M}|/4.
    \end{align}
    Therefore from \cref{equation: anticommute counting} we have:
    \begin{equation}
        (D'^{(\mathcal{M})})^T(D'^{(\mathcal{M})})=\frac{|\mathcal{M}|}{4}(J_{|\mathcal{C}^*|}+I_{|\mathcal{C}^*|}),
    \end{equation}
    where $J_n$ is the $n$ by $n$ matrix of all ones and $I_n$ is the $n$ by $n$ identity matrix. As a result, $(D'^{(\mathcal{M})})^T(D'^{(\mathcal{M})})$ is full rank thus $D'^{(\mathcal{M})}$ has full column rank which is $|\mathcal{C}^{*(\mathcal{M})}|$.
\end{proof}

\begin{appcorollary}
    $\rank A^{(\mathcal{M})}=\rank D^{(\mathcal{M})}=|\mathcal{C}^{*(\mathcal{M})}|$.
    \label{lemma: rank of A}
\end{appcorollary}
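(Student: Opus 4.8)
The plan is to obtain this as an immediate consequence of the two preceding lemmas, specialized to the case $\mathcal H = \mathcal M$. First I would invoke \cref{lemma: rank equal} with $\mathcal H = \mathcal M$: since $D^{(\mathcal M)} = A^{(\mathcal M)} V$ for the explicit block-diagonal matrix $V = \bigoplus_{\gamma\in\Gamma}(J_{4^{|\gamma|}-1}-W'_{4^{|\gamma|}})/2$, and since each block $(J_{4^{|\gamma|}-1}-W'_{4^{|\gamma|}})/2$ is invertible (the reduced Walsh--Hadamard matrix $W'$ is invertible and the specific combination with the all-ones matrix $J$ preserves this, as checked for the single-qubit block and extended by the tensor-product structure), $V$ is full rank and hence $\rank A^{(\mathcal M)} = \rank D^{(\mathcal M)}$.

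Next I would invoke \cref{lemma: rank D}, which identifies $\rank D^{(\mathcal M)}$ with $\rank D'^{(\mathcal M)}$ (the reduced syndrome matrix obtained by keeping one representative column per non-trivial syndrome class — legitimate because columns of $D$ indexed by errors in the same syndrome class are literally equal), and then shows $D'^{(\mathcal M)}$ has full column rank $|\mathcal C^{*(\mathcal M)}|$ via the Gram-matrix computation $(D'^{(\mathcal M)})^T D'^{(\mathcal M)} = \tfrac{|\mathcal M|}{4}\,(J_{|\mathcal C^*|}+I_{|\mathcal C^*|})$, whose eigenvalues $\tfrac{|\mathcal M|}{4}(|\mathcal C^*|+1)$ and $\tfrac{|\mathcal M|}{4}$ are all strictly positive. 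Chaining the two gives
\begin{equation}
\rank A^{(\mathcal M)} \;=\; \rank D^{(\mathcal M)} \;=\; \rank D'^{(\mathcal M)} \;=\; |\mathcal C^{*(\mathcal M)}|,
\end{equation}
which is the claim.

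There is essentially no genuine obstacle at the level of the corollary itself — all the real work was already discharged in establishing that $V$ is invertible (\cref{lemma: rank equal}) and in the anticommutation-counting argument that pins down the Gram matrix of $D'^{(\mathcal M)}$ (\cref{lemma: rank D}). If I had to name a place to be careful, it would be the bookkeeping ensuring that the syndrome classes used to index the columns of $D'^{(\mathcal M)}$ in \cref{lemma: rank D} are exactly the classes $\mathcal C^{*(\mathcal M)}$ appearing in the corollary's statement, i.e.\ that both are taken with respect to the full measured group $\mathcal M$ and the error set $\mathcal E_\Gamma$ (not $\overline{\mathcal E}_\Gamma$); once the definitions are aligned, the identification is forced.
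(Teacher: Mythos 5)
Your proposal is correct and follows essentially the same route as the paper, which proves the corollary by directly chaining \cref{lemma: rank equal} (via the full-rank transformation $V$) with \cref{lemma: rank D} (via the Gram-matrix argument for $D'^{(\mathcal M)}$). The only extra content in your write-up is a re-justification of facts already discharged inside those two lemmas, so nothing further is needed.
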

\begin{proof}
    This follows directly from \cref{lemma: rank equal} and \cref{lemma: rank D}.
\end{proof}

\section{\label{appendix: minimal equations}Minimal number of equations for the syndrome data}
In this section, we explain the algorithm used to find the minimum subset $\mathcal{M}'$ of stabilizer elements needed for the noise learning. This is motivated because to solve \cref{equation: log equation} or \cref{equation: log equation mu}:
\begin{align}
    \log\vec{\Lambda}^{(\mathcal{M})}&=A^{(\mathcal{M})}\log \vec{\lambda},\\
    \log\vec{\Lambda}^{(\mathcal{M})}&=D^{(\mathcal{M})}\log \vec{\mu},
\end{align}
we encounter the scaling issue as there are exponentially many equations ($|\mathcal{M}|=2^{n-k}$) if assuming full syndrome is measured, i.e., $\mathcal{M}=\mathcal{S}$. (See the relation of the two sets of equations in \cref{appendix: physical proof}). However, we focus on the case when $\mathcal{N}_{\Gamma}$ is a ($r_{\Gamma}, c_{\Gamma}$)-Pauli channel, i.e. local Pauli noise models with $|\gamma|_{\max}=r_{\Gamma}=O(1)$ and $|\Gamma|=\mathcal{O}(n)$. In this local noise model setting, the total number of noise model parameters scales linearly with the system size $n$. From  \cref{definition: reduced D}, \cref{lemma: rank equal}, \cref{lemma: rank D}, and \cref{lemma: rank of A} we have:
\begin{align}
    \text{rank }A^{(\mathcal{M})}=\text{rank }D^{(\mathcal{M})}=\text{rank }A^{(\mathcal{M'})}=\text{rank }D^{(\mathcal{M'})}=\text{rank }D'^{(\mathcal{M'})}=|\mathcal{C}^*|
\end{align}
where $|\mathcal{C}^*|$ is the number of non-trivial syndrome classes that also scales linearly with $n$ in this setting.

Here, we show an algorithm to select a subset $\mathcal{M}'\subseteq\mathcal{M}$ of size $|\mathcal{C}^*|$ satisfying $\rank A^{(\mathcal{M}')} =|\mathcal{C}^*|$ so that one could solve the following linear size system of equations (\cref{equation: log equation reduced} in main text) instead:
\begin{align}
    \log\vec{\Lambda}^{(\mathcal{M}')}&=A^{(\mathcal{M}')}\log \vec{\lambda}\\
    \log\vec{\Lambda}^{(\mathcal{M})}&=D^{(\mathcal{M}')}\log \vec{\mu}.
\end{align}
 We find $\mathcal{M}'$ in the reduced search space since one only needs to consider products of "correlated" stabilizer generators, as shown in proposition \ref{proposition: correlated stabilizer suffice}. 
\begin{applemma}
    Given a set of generators of $\mathcal{M}$ and the set of error supports $\Gamma$, we call a pair of elements in $\mathcal{M}$ ``correlated'' if they both overlap with some region $\gamma\in\Gamma$. To construct matrix $A^{(\mathcal{M}')}$ with maximal rank, it suffices to pick stabilizer elements from a product of generators such that for any $M_i$ in the product, there exists $M_j\neq M_i$ in the factors of the product that is correlated with $M_i$.
    \label{proposition: correlated stabilizer suffice}
\end{applemma}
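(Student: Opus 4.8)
The plan is to show that any stabilizer element $M \in \mathcal{M}'$ needed to achieve maximal rank of $A^{(\mathcal{M})}$ can be replaced by (or built from) products of "correlated" generators, so that restricting the search to such products loses no rank. The key structural fact is the locality property of the spacetime stabilizers in \cref{equation: local spacetime s property} together with the column structure of $A^{(\mathcal{M})}$: a row of $A^{(\mathcal{M})}$ labeled by $M$ has a nonzero entry in column $e \in \mathcal{E}_{\Gamma}$ exactly when $M_{\gamma_e} = e$, i.e.\ when $M$ restricted to the support $\gamma_e$ equals the error $e$. So the "information content" of the row for $M$ is entirely determined by the restrictions $M_{\gamma}$ for those $\gamma \in \Gamma$ that $M$ overlaps nontrivially.

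First I would set up the following observation. Write $M = \prod_{i \in T} M_i$ as a product of a chosen generating set indexed by $T$. Define a graph on the factors $\{M_i : i \in T\}$ where we connect $M_i$ and $M_j$ if they are "correlated," i.e.\ both overlap some common $\gamma \in \Gamma$. If this graph is disconnected, say with a connected component $T' \subsetneq T$ whose factors collectively overlap a set of regions $\Gamma'$ that is disjoint (as a collection of supports in $\Gamma$, up to the relevant overlaps) from the regions touched by the complementary factors $T \setminus T'$, then $M$ factorizes as $M = M^{(T')} M^{(T \setminus T')}$ with $M^{(T')} = \prod_{i\in T'} M_i$, and crucially the support of $M^{(T')}$ on each $\gamma\in\Gamma$ either matches $M_\gamma$ or is trivial, and similarly for the complement — the two pieces touch disjoint collections of error supports. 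Consequently the row of $A^{(\mathcal{M})}$ for $M$ is the coordinate-wise "OR" of the rows for $M^{(T')}$ and $M^{(T\setminus T')}$ on disjoint sets of columns; more precisely, over $\mathbb{F}_2$ (or as $0/1$ vectors with disjoint supports), the row for $M$ is the sum of the rows for $M^{(T')}$ and $M^{(T\setminus T')}$. Hence $M$ contributes nothing to the row space that is not already contributed by its two correlated sub-products. Iterating, it suffices to take products whose correlation graph is connected — equivalently, products in which every factor $M_i$ is correlated with at least one other factor $M_j$ in the product, which is precisely the claimed condition.

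The remaining step is to check that this row-space reduction actually preserves rank: restricting to connected-correlation-graph products still spans the full row space of $A^{(\mathcal{M})}$, hence the maximal rank $|\mathcal{C}^*|$ from \cref{lemma: rank of A} is attained within this restricted search space. This follows by induction on the number of connected components of the correlation graph of a product, using the disjoint-support decomposition above as the inductive step, with the base case being products that are already connected. The main obstacle — and the place where care is needed — is the bookkeeping of "disjointness": two factors can be uncorrelated yet still both act on the same physical qubit at different spacetime layers, so one must phrase everything in terms of the regions $\gamma \in \Gamma$ that a stabilizer element overlaps (i.e.\ the set $\{\gamma : M_\gamma \neq I\}$), not in terms of raw qubit supports, and verify that when the correlation graph splits, the two groups of factors really do restrict to $M_\gamma$ or to $I$ on each individual $\gamma$. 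The spacetime-locality bound \cref{equation: local spacetime s property} is what makes this partition well-behaved in the circuit setting, since it keeps each $M^{(\mathrm{ST})}(\vec u)$ confined to the time window of its defining parity check, limiting which $\gamma$'s it can touch.
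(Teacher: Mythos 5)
Your proposal is correct and follows essentially the same route as the paper: the paper likewise observes that if $M=M_iM_j$ for an uncorrelated pair then $A[M]=A[M_i]+A[M_j]$ (rows with disjoint column supports, since each $\gamma$ is touched by at most one factor) and iterates this pairwise replacement, which is just the pairwise version of your connected-component induction on the correlation graph. The only cosmetic difference is your appeal to the spacetime-locality bound in \cref{equation: local spacetime s property}, which is not needed here — the lemma is a statement about the static matrix $A^{(\mathcal{M})}$ and the supports $\gamma\in\Gamma$ alone.
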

\begin{proof}
We denote the row corresponding to element $H\in\mathcal{H}$ of the matrix $A^{(\mathcal{H})}$ as $A^{(\mathcal{H})}[H]$.
    For any $M\in\mathcal{M}$ that corresponds to an independent row of $A$ and can be written as $M=M_iM_j$ for some uncorrelated pair $M_i, M_j \in \mathcal{M}$, i.e. $\text{supp}(M_1)\cap \text{supp}(M_2)\cap \gamma=\emptyset,\forall\gamma\in\Gamma$, we have $\Lambda(M)=\Lambda(M_i)\Lambda(M_j)$ and $A[M]=A[M_i]+A[M_j]$. Therefore, replacing $M$ by $M_i$ is a row operation on $A$ unless $M_i$ is linearly dependent on other rows, in which case one could replace $M$ by $M_j$. We could repeat the process until all rows of $A$ are the product of overlapping stabilizer generators without decreasing the rank of $A$.
\end{proof}

This motivates us to define the stabilizer correlation graph which the algorithm \ref{algorithm: reduced stabilizer group} utilizes, a graph that encodes the correlation between a set of measured stabilizer generators given the individual error events determined by $\Gamma$:  
\begin{appdefinition}[Stabilizer correlation graph]
    Given a set of measured stabilizer generators, the stabilizer correlation graph $G$ is a graph with vertices corresponding to the measured stabilizer generators. Two vertices share an edge if and only if the two generators have overlapping support $\gamma\in\Gamma$.
    \label{correlation_graph}
\end{appdefinition}

Given the graph $G$, all elements of $\mathcal{M}$ satisfying the condition of  Lemma \ref{proposition: correlated stabilizer suffice} correspond to the connected subgraphs of $G$. One can implement the algorithm below and guarantee to output a subset $\mathcal{M}'\subseteq\mathcal{M}$ with $A^{(\mathcal{M}')}$ reaches the maximal rank $\rank A^{(\mathcal{M}')}=\rank A^{(\mathcal{M})}=|\mathcal{C}^*|$. One can use similar algorithms based on the reduced syndrome matrix $D'$.

\begin{algorithm}[H]
\caption{\label{algorithm: reduced stabilizer group}Constructing the matrix $A^{(\mathcal{M}')}$}
\KwIn{
\begin{enumerate}
    \item Local generators of $S$
    \item Syndrome error classes $\mathcal{C}(\mathcal{S},\Gamma)$ and its non-trivial classes $\mathcal{C}^*$.
\end{enumerate}
}
Construct the stabilizer correlation graph $G$ from Def. \ref{correlation_graph}.\\
Let $v=0$, $\mathcal{M}'=\{\}$.\\
Let $A^{(\mathcal{M}')}$ be the empty matrix.\\
\While{$\rank A^{(\mathcal{M}')} < |\mathcal{C}^*|$}
{
$v = v + 1$\\
\For{connected subgraph $g$ of order $v$ in $G$}
{
Add stabilizer elements $M$ that correspond to $g$ to $\mathcal{M}'$.\\
Construct matrix $A^{(\mathcal{M}')}$.\\
}
}
\Return{$\mathcal{M}'$ and $A^{(\mathcal{M}')}$}
\end{algorithm}

\begin{applemma}
    The runtime of \cref{algorithm: reduced stabilizer group} is $\text{Poly}(n)$ for qLDPC codes under local sparse Pauli channel $\mathcal{N}_{\Gamma}$.
\end{applemma}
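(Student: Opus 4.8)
The plan is to establish three facts and combine them: the stabilizer correlation graph $G$ of \cref{correlation_graph} has $\mathcal{O}(n)$ vertices and bounded degree; the while loop of \cref{algorithm: reduced stabilizer group} terminates at a constant value of $v$; and each pass through the inner loop costs only polynomial time. For an $(r_s,c_s)$-qLDPC code the number of measured generators is $m=\mathcal{O}(n)$, since $\sum_i|\text{supp}(M_i)|\le c_s n$ and every generator has weight at least $1$, so $G$ has $\mathcal{O}(n)$ vertices. For the degree, fix a generator $M_i$: it touches at most $r_s$ qubits, each qubit lies in at most $c_\Gamma$ error regions $\gamma\in\Gamma$, each such region contains at most $r_\Gamma$ qubits, and each of those qubits lies in at most $c_s$ generators; hence $M_i$ is correlated with at most $\Delta:=r_s c_\Gamma r_\Gamma c_s=\mathcal{O}(1)$ other generators. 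Building $G$ from the qubit-to-generator and qubit-to-region incidence lists — which together have total size $\mathcal{O}(n)$ in the local sparse setting — then takes $\mathcal{O}(n)$ time.

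The crucial step is bounding the value $v^\star$ of $v$ at which the loop halts. By \cref{proposition: correlated stabilizer suffice}, the stabilizer elements coming from connected subgraphs of $G$ already suffice to reach the maximal rank $\rank A^{(\mathcal{M})}=|\mathcal{C}^*|$ of \cref{lemma: rank of A}, so it remains only to control the \emph{order} of the subgraphs required. For any non-trivial syndrome class $C\in\mathcal{C}^*$, pick a representative error $e\in C$; it is supported in a single region $\gamma_e\in\Gamma$. Every generator in $J(C)=\{i:\langle e,M_i\rangle=1\}$ must overlap $\text{supp}(e)\subseteq\gamma_e$, so all generators of $J(C)$ overlap the common region $\gamma_e$, are therefore pairwise correlated, and $J(C)$ induces a complete — in particular connected — subgraph of $G$ of order $|J(C)|\le r_\Gamma c_s=\mathcal{O}(1)$. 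Consequently the explicit subset $\mathcal{M}'=\bigcup_{C\in\mathcal{C}}\{\prod_{i\in J'}M_i:J'\subseteq J(C)\}$ analyzed in \cref{appendix: sample complexity}, which solves \cref{equation: D matrix reduced} and hence attains full rank $|\mathcal{C}^*|$, consists entirely of products attached to connected subgraphs of order at most $r_\Gamma c_s$, since every $J'\subseteq J(C)$ is again a clique. As \cref{algorithm: reduced stabilizer group} has, by the end of the iteration $v=r_\Gamma c_s$, added the elements attached to \emph{all} connected subgraphs of order at most $r_\Gamma c_s$ — a superset of this $\mathcal{M}'$ — the rank reaches $|\mathcal{C}^*|$ no later than $v=r_\Gamma c_s$, so $v^\star\le r_\Gamma c_s=\mathcal{O}(1)$.

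Finally one counts the work. In a graph of maximum degree $\Delta$, the number of connected vertex subsets of order at most $v^\star$ is $m\cdot\Delta^{\mathcal{O}(v^\star)}=\mathcal{O}(n)$, and a standard enumeration lists them with $\mathcal{O}(1)$ amortized cost each. For each such subgraph the associated product $M$ has weight at most $v^\star r_s=\mathcal{O}(1)$, so the corresponding row of $A^{(\mathcal{M}')}$ is $\mathcal{O}(1)$-sparse and is computed in $\mathcal{O}(1)$ time; the matrix has $|\mathcal{E}_\Gamma|=\mathcal{O}(n)$ columns, and maintaining a running reduced row echelon form while inserting these $\mathcal{O}(n)$ rows costs at most $\mathcal{O}(n^2)$ per insertion, hence $\mathcal{O}(n^3)$ in total, while $|\mathcal{C}^*|$ is obtained once in $\mathcal{O}(n)$ by bucketing the errors of $\mathcal{E}_\Gamma$ by their bounded-length syndromes. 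Combining the three bounds yields runtime $\mathrm{Poly}(n)$. The main obstacle is the second step: one must notice that a syndrome class couples only generators that all touch a single error region, which forces the products needed for full rank to come from bounded-order (indeed complete) subgraphs of $G$; everything else reduces to bounded-degree graph enumeration and sparse Gaussian elimination.
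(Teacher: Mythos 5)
Your proposal is correct and follows essentially the same route as the paper: the paper's (very terse) proof also bounds the maximal subgraph order $v_{\max}$ by the number of generators a single local error can flip, invoking the analytic solution of \cref{equation: recursive solving} (\cref{lemma: recursive solving solves the linear equation}) to guarantee that products of generators within each $J(C)$ already attain the full rank $|\mathcal{C}^*|$, and then notes that the search at constant order is polynomial. Your additions — that $J(C)$ forms a clique in the correlation graph, the degree bound $r_s c_\Gamma r_\Gamma c_s$, and the explicit enumeration and elimination costs — are faithful elaborations of steps the paper asserts without detail, not a different argument.
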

\begin{proof}
    The upper bound of the order $v_{\text{max}}$ of the connected subgraph needed is given by the maximum number of stabilizer generators that a single local error can flip, as shown in \cref{equation: recursive solving} and proven in \cref{lemma: recursive solving solves the linear equation}.
\end{proof}

 For each order $v$ of the connected subgraphs being searched, the runtime of the algorithm scales $\text{Poly}(n)$ and $\exp(v)$. However, the upper bound of the order $v_{\text{max}}$ of the connected subgraph needed is given by the maximum number of stabilizer generators that a single local error can flip, as shown in \cref{equation: recursive solving} in \cref{lemma: recursive solving solves the linear equation}. Therefore, the maximum order $v_{\text{max}}$ of the search is a constant for qLDPC codes. For example, the upper bound of the order $v$ is $4$ for single-qubit errors on the rotated surface code of any size using the typical local stabilizer generators.

The learning algorithm can be summarized as solving two sets of equations, one is the system of equations characterized by $A^{(\mathcal{M}')}$ or $D'^{(\mathcal{M}')}$ in \cref{equation: log equation reduced,equation: D matrix reduced} with respect to the log of the local Pauli eigenvalues $\log \vec{\lambda}$. The other one is the extra constraints characterized by matrix $B$ in \cref{equation: extra constraints}, which are linear constraints in the probability distribution/local Pauli eigenvalues $\vec{\lambda}$.
The number of rows of $B$, i.e. the total number of these extra constraints, is $|C_0|+\sum\limits_{C\in\mathcal{C}^{*}}(|C|-1)$. The number of independent equations from syndrome expectation values is $|\mathcal{C}^*|$. As a result, the total number of constraints equals:
\begin{align}  
&\#\,\text{constraints}\\
=&|C_0|+\sum\limits_{C\in\mathcal{C}^{*}}(|C|-1)+|\mathcal{C}^*|\\
=&|\mathcal{{E}}_{\Gamma}|-1
\end{align}
which is exactly the degrees of freedom in the Pauli noise model. In \cref{appendix: existence of solution with B constraints}, we prove that a solution under these two sets of equations must exist for low error rates.

\section{\label{appendix: physical proof}Proof of syndrome class error rate learnability condition}

To understand the learnability of the total channel, we first look at the parametrization of the total channel $\mathcal{E}_{\Gamma}$ with the set of transformed local eigenvalues of the total channel $\overline{\mu}$ defined from the $\log U$ basis. In particular, we have the logarithm of Pauli eigenvalues of $\mathcal{N}_{\Gamma}$ written as a linear function of $\log \overline{\mu}$ as in \cref{equation: eigenvalues and mu bar}:
\begin{equation}
    \log \vec{\Lambda}^{(\mathcal{H})}=\overline{D}^{(\mathcal{H})} \log \vec{\overline{\mu}}.
\end{equation}

\begin{applemma}\label{lemma: valid mu is an openset}
    For a Pauli channel $\mathcal{N}_{\Gamma}$ with a set of support $\Gamma$ satisfies that for any $\gamma\in\Gamma, e\in\mathcal{E}_{\gamma}$ $P_{\gamma}(e)>0$ and $P_{\gamma}(I)>1/2$ (\cref{assumption: open set probability}), then the set of transformed eigenvalues $\vec \mu$ for the collection of individual channels $\{\mathcal{N}_{\gamma}\}_{\gamma\in\Gamma}$ is an open set in $\mathbb{R}^{|\mathcal{E}_{\Gamma}|}$. Similarly, the set of transformed eigenvalues $\vec{ \overline{\mu}}$ for the total channel $\mathcal{N}_{\Gamma}$ is an open set in $\mathbb{R}^{|\overline{\mathcal{E}}_{\Gamma}|}$.
\end{applemma}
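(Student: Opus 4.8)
The plan is to realize the set of transformed eigenvalues as the image of an \emph{explicitly} open set under a composition of homeomorphisms and surjective (hence open) linear maps, so that openness is transported step by step. First I would fix a parametrization of the collection of valid local channels by the tuple $\vec p=(P_{\gamma_e}(e))_{e\in\mathcal E_\Gamma}\in\mathbb R^{|\mathcal E_\Gamma|}$, each identity weight being determined by $P_\gamma(I)=1-\sum_{e\in\mathcal E_\gamma}P_\gamma(e)$. In this parametrization \cref{assumption: open set probability} is equivalent to the finite system of \emph{strict} inequalities $P_\gamma(e)>0$ for every $\gamma\in\Gamma$ and $e\in\mathcal E_\gamma$, together with $\sum_{e\in\mathcal E_\gamma}P_\gamma(e)<1/2$ for every $\gamma\in\Gamma$. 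Hence the domain $\mathcal D\subseteq\mathbb R^{|\mathcal E_\Gamma|}$ of valid configurations is a finite intersection of open half-spaces, and is open.

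Next I would follow the chain $\vec p\mapsto\vec\lambda\mapsto\log\vec\lambda\mapsto\log\vec\mu\mapsto\vec\mu$. The estimate $\Lambda_\gamma(O)\ge P_\gamma(I)-\sum_{e\in\mathcal E_\gamma}P_\gamma(e)=2P_\gamma(I)-1>0$ shows that on $\mathcal D$ every local Pauli eigenvalue is strictly positive, so the entrywise logarithm is legitimate. By \cref{equation: local lambda to p} the map $\vec p\mapsto\vec\lambda$ is block diagonal over $\Gamma$ with invertible blocks, hence an affine isomorphism of $\mathbb R^{|\mathcal E_\Gamma|}$; it sends the open set $\mathcal D$ onto an open subset of the (open) positive orthant $\mathbb R^{|\mathcal E_\Gamma|}_{>0}$. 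Entrywise $\log$ is a homeomorphism $\mathbb R^{|\mathcal E_\Gamma|}_{>0}\to\mathbb R^{|\mathcal E_\Gamma|}$, the passage $\log\vec\lambda\mapsto\log\vec\mu$ is the linear isomorphism $-2\left(\bigoplus_{\gamma\in\Gamma}(W_{4^{|\gamma|}}^{-1})'\right)$, and entrywise $\exp$ is a homeomorphism back onto $\mathbb R^{|\mathcal E_\Gamma|}_{>0}$. Composing these, the set of all $\log\vec\mu$ is open in $\mathbb R^{|\mathcal E_\Gamma|}$, and therefore so is the set of all $\vec\mu$; this proves the first claim. The one nonroutine ingredient is invertibility of $(W_{4^{k}}^{-1})'$, which follows by writing $W_{4^k}$ in $2\times 2$ block form (the first row and column being the all-ones vector $\mathbf 1$ over the remaining $4^k-1$ indices, lower-right block $W'$): the relation $W_{4^k}^2=4^kI$ forces $(W')^2=4^kI-\mathbf 1\mathbf 1^{\top}$, whose eigenvalues are $1$ and $4^k$, so $W'$—and hence $(W_{4^k}^{-1})'=4^{-k}W'$—is invertible.

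For the total channel I would invoke \cref{lemma: properties of U}, which gives $\log\vec{\overline{\mu}}=T\log\vec\mu$ for the $\{0,1\}$-matrix $T$ whose row indexed by $e\in\overline{\mathcal E}_\Gamma$ has a $1$ in column $(\gamma,e')$ exactly when $e'=e$ as Pauli operators. The rows of $T$ are the indicator vectors of the blocks of the partition of the index set $\mathcal E_\Gamma$ by underlying operator, so $T$ is surjective and therefore an open linear map. Applying $T$ to the open set of all $\log\vec\mu$ obtained above, and then applying entrywise $\exp$, shows that the set of all $\vec{\overline{\mu}}$ is open in $\mathbb R^{|\overline{\mathcal E}_\Gamma|}$.

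The main thing to be careful about is that every arrow in the chain be an open map on the \emph{full} ambient Euclidean space, not merely a continuous injection into it: this is why it matters that the Walsh--Hadamard blocks are square and full rank (so openness is preserved in $\mathbb R^{|\mathcal E_\Gamma|}$, not merely onto a subspace) and that $T$ is onto $\mathbb R^{|\overline{\mathcal E}_\Gamma|}$. The strictness of the inequalities in \cref{assumption: open set probability} is used essentially—both to keep $\mathcal D$ open and to keep the local eigenvalues bounded away from $0$ so that $\log$ is defined on a neighborhood—and with this structure in hand no appeal to invariance of domain is required.
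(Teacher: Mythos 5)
Your proposal is correct and follows essentially the same route as the paper's proof: the set of valid error rates $\vec p$ is open, it is carried to the set of $\vec\mu$ through the invertible affine/log/linear chain of transformations, and openness of the set of $\vec{\overline{\mu}}$ follows from the surjective—hence open—linear map given by property 1 of \cref{lemma: properties of U}. Your write-up is in fact slightly more careful than the paper's, which deduces openness of the image from the map being a ``continuous bijection''; you instead verify that each step is an open map on the full ambient space (including explicit invertibility of the reduced Walsh--Hadamard blocks and full row rank of the summation matrix), so no appeal to invariance of domain is needed.
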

\begin{proof}
    Given \cref{assumption: open set probability}, we have all the possible local error rates $\vec{p}$ as an open set in $\mathbb{R}^{|\mathcal{E}_{\Gamma}|}$. And we have
    \begin{equation}
        \vec{\mu}=V^{-1}\log\vec{\lambda}=V^{-1}\log (\vec{1}-2V\vec{p})
    \end{equation}
    where $V$ is the invertible matrix in $\cref{equation: Matrix V expression}$. Therefore, the map between $\vec{p}$ and $\vec{\mu}$ is a continuous bijection, and therefore the set of possible transformed eigenvalues $\vec \mu$ is also an open set in $\mathbb{R}^{|\mathcal{E}_{\Gamma}|}$.

    In addition, the logarithm of the transformed eigenvalues for the total $\log\vec{\overline{\mu}}$ of the total channel is related to $\log\vec{\mu}$ by a surjective maps (see the first properties of \cref{lemma: properties of U}), so the set of possible $\vec{\overline{\mu}}$ is also an open set, in $\mathbb{R}^{|\overline{\mathcal{E}}_{\Gamma}|}$.
\end{proof}

\noindent\textbf{\cref{theorem: syndrome class learning} (Restated)}.  
Let a total Pauli error channel $\mathcal{N}_{\Gamma}$ be a composition of Pauli channels $\mathcal{N}_{\gamma}$ with support $\gamma\in\Gamma$. Under \cref{assumption: open set probability}, we have:
    \begin{enumerate}
        \item The total channel $\mathcal{N}_{\Gamma}$ can be learned if and only if every error $e\in\overline{\mathcal{E}}_{\Gamma}$ has a unique and nontrivial syndrome.
        \item Each individual channel $\mathcal{N}_{\gamma}$ is learnable if and only if every $e\in\mathcal{E}_{\gamma}$ produces a syndrome that is non-trivial and unique among the syndromes of all errors in $\mathcal{E}_{\Gamma}$.
        \item For all non-trivial syndrome error classes $C\in\mathcal{C}^*$, the sum of syndrome error class error rates $P_C=\sum_{e\in C} p_{e}$ can be learned up to an $\mathcal{O}(\|\vec{p}^{(\Gamma_C)}\|^2)$ deviation.
    \end{enumerate}

\begin{proof}
To prove the first statement regarding the learnability condition of the total channel $\mathcal{N}_{\Gamma}$, note that there is an invertible linear map (Walsh-Hadamard transform) between the logarithm of Pauli eigenvalues $\log \Lambda(e)$ of $\mathcal{N}_{\Gamma}$ and the transformed eigenvalues $\log U(e)$ for all $e\in\mathcal{P}_n$. On the other hand, given the restricted structure of the total channel $\mathcal{N}_{\Gamma}$ with the set of support $\Gamma$, we proved in \cref{appendix: log u basis} that the transformed eigenvalues $\log U(e)=0$ if $e\neq I$ and $e\not\in\overline{\mathcal{E}}_{\Gamma}$, and $\log U(I)$ depends on $\{\log U(e)|e\in\overline{\mathcal{E}}_{\Gamma}\}$. Therefore, we define a local subset of the transformed eigenvalues $\overline{\mu}=\{\log U(e)|e\in\overline{\mathcal{E}}_{\Gamma}\}$ and two total channels $\mathcal{N}_{\Gamma}$ and $\mathcal{N}'_{\Gamma}$ are equal if and only if their corresponding "$\overline{\mu}$ set" $\overline{\mu}$ and $\overline{\mu}'$ are equal.

On the other hand, we have the relation between the syndrome expectation and the transformed local eigenvalues $\vec{\overline{\mu}}$ for the total channel
\begin{equation}
\log\vec{\Lambda}^{(\mathcal{M})}=\overline{D}^{(\mathcal{M})}\vec{\overline{\mu}}.
\end{equation}
where $\overline{D}^{(\mathcal{M})}$ is the $|\mathcal{M}|$ by $|\overline{\mathcal{E}}_{\Gamma}|$ syndrome matrix of the error set $\overline{\mathcal{E}}_{\Gamma}$. First, note that the set of expectation values of all the measured stabilizer elements $\vec{\Lambda}^{(\mathcal{M})}$ is a sufficient statistic of the syndrome data. This is because it is a Walsh-Hadamard transformation of the probability distribution over the $m$-bit syndrome, where $m$ is the number of generators of the measured stabilizer group $\mathcal{M}$.

According to \cref{lemma: rank D}, the rank of $\overline{D}^{(\mathcal{M})}$ is the number of non-trivial syndrome classes of $\mathcal{E}_{\Gamma}$ (or equivalently $\overline{\mathcal{E}}_{\Gamma}$). In addition, if the values of the transformed local eigenvalues $\vec{\overline{\mu}}$ is valid given the assumption about the channel $\mathcal{N}_{\Gamma}$, then the neighborhood of $\vec{\overline{\mu}}$ is also valid according to \cref{lemma: valid mu is an openset}. As a result, given the syndrome expectation values $\vec{\Lambda}^{(\mathcal{M})}$, we have a continuous family of valid parametrization $\log \vec{\mu'}=\log \vec{\mu} + \vec{\epsilon}$ around a valid point $\log \vec{\mu}$, where $\vec{\epsilon}\in \mathrm{ker}\left(\overline{D}^{(\mathcal{M})}\right)$. Therefore, we can conclude that $\overline{\mu}$ and thus the total channel $\mathcal{N}_{\Gamma}$ is learnable if and only if the number of non-trivial syndrome classes is equal to $|\overline{\mathcal{E}}_{\Gamma}|$, i.e. all syndromes in $\overline{\mathcal{E}}_{\Gamma}$ are non-trivial and distinct.
The second statement regarding the learnability of local channels can be proven similarly by arguing that all transformed local eigenvalues $\mu$ from a local channel $\mathcal{N}_{\gamma}$ can be learned if and only if the errors in $\mathcal{E}_{\gamma}$ has distinct and non-trivial syndrome among all syndromes of $\mathcal{E}_{\Gamma}$ by looking at columns of the syndrome matrix $D^{(\mathcal{M})}$ in \cref{equation: log equation mu}.

To prove what we can learn about the local channels when the necessary and sufficient condition of learnability potentially fails, we look at the transformation of the basis in \cref{equation: log equation} regarding all the local Pauli eigenvalues $\vec{\lambda}$:
\begin{equation}
    \log\vec{\Lambda}^{(\mathcal{M})}=A^{(\mathcal{M})}\log \vec{\lambda}.
\end{equation}
to this basis of the syndrome matrix $D^{(\mathcal{H})}$ with the transformed eigenvalues in \cref{equation: syndrome matrix}:
\begin{align}
    \log\vec{\Lambda}^{(\mathcal{M})}=D^{(\mathcal{M})}\log \vec{\mu},\label{equation: D matrix appendix}
\end{align}
where the transformation $V$ is given in \cref{equation: Matrix V expression} so that we have:
\begin{align}
    D^{(\mathcal{M})}&=A^{(\mathcal{M})}V\\
    \log\vec{\mu}&=V^{-1}\log \vec{\lambda}\label{equation: mu from lambda}.
\end{align}
Explicitly, $V^{-1}$ is the direct sum of a constant times the Walsh-Hadamard matrix (\cref{equation: reduced WH}):
\begin{equation}
    V^{-1}=-2\bigoplus_{\gamma\in\Gamma}\frac{1}{4^{|\gamma|}}W'_{4^{|\gamma|}}. \label{equation: V inverse}\end{equation}

In general, for an error $e$ supported on $\gamma\in\Gamma$, the transformed eigenvalues $\mu_{e}$ can be written as a fraction where the numerator/denominator are products of Pauli eigenvalues of errors supported on $\gamma$ that anticommute/commute with $e$ respectively, then raised the power of $\frac{1}{2^{(2|\gamma|-1)}}$. Moreover, if $e$ has a unique syndrome, then the column of $D^{(\mathcal{M})}$ corresponding to $e$ is independent of other columns (\cref{lemma: rank D}). Since $D^{(\mathcal{M})}$ encodes the syndrome of all local errors $e\in\mathcal{E}_{\Gamma}$ in its columns, the learnable quantities are therefore the sum of $\log \mu$ of each syndrome error class. Denoting $\mathcal{C}^{*}$ as the set of non-trivial syndrome classes, we can learn
\begin{equation}
    \{\sum_{e\in C}\log\mu_{e}|C\in\mathcal{C}^{*}\}.
\end{equation}

At low physical error rate for each local channel $\gamma\in\Gamma$, $\log \lambda=\lambda - 1 + \mathcal{O}(\|\vec{p}\|^2)$ and  $\log \mu$ is just the local Walsh-Hadamard transform of $\log \lambda$. Therefore according to \cref{equation: V inverse} we have:
\begin{equation}
    -\frac{1}{2}\log \mu_{e}=p_{e}+\mathcal{O}(\|\vec{p}^{(\gamma)}\|^2)
\end{equation}
for any $e\in\mathcal{E}_{\Gamma}$, where $\vec{p}^{(\gamma)}=(p_{e})_{e\in\mathcal{E}_{\gamma}}$. For each syndrome error class $C\in\mathcal{C}^*$, we have the learnable quantity referred to as the learnable eigenvalues
\begin{equation}
    \log \nu_C = \sum_{e\in C}\log \mu_{e}=-2\sum_{e\in C}p_{e}+\mathcal{O}(\|\vec{p}^{(\Gamma_C)}\|^2).\label{equation:syndrome error true}
\end{equation}
A straightforward approximate estimator for the sum of syndrome class error rates is given by
\begin{equation}
    \sum_{e\in C}\tilde{p}_{e}=-\frac{1}{2}\log\nu_C
\end{equation}
and the magnitude of the bias is of second order in the local error rates, i.e., $|\sum_{e\in C}\tilde{p}_{e}-\sum_{e\in C}p_{e}|=\mathcal{O}(\|\vec{p}^{(\Gamma_C)}\|^2)$.
\end{proof}

\begin{applemma}[Independent detector error model learnability]
    If the local Pauli channel can be written as a composition of independent single-Pauli channels:
\begin{equation}
    \mathcal{N}(\rho)=\circ_{e\in\mathcal{E}_{\Gamma}}\mathcal{N}^{(e)}(\rho)
\end{equation}
where $\mathcal{N}^{(e)}(\rho)=(1-p_{e})\rho+p_{e} e\rho e^{\dagger}$. Then the probability of an odd number of errors in each syndrome error class, or the detector error rate \cite{takou2025estimating}, can be exactly learned.
\end{applemma}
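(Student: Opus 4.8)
The plan is to reduce the claim to the exact identity $P^{\text{detect}}_C = \tfrac12(1-\nu_C)$, where $\nu_C$ is the learnable eigenvalue appearing in \cref{equation: D matrix reduced}, and then use that $\nu_C$ is recovered \emph{exactly} from the syndrome expectation values since $D'^{(\mathcal{M}')}$ has full column rank by \cref{lemma: rank D}. The crucial observation is that in the inclusive model each elementary channel $\mathcal{N}^{(e)}$ contains only the identity and the single Pauli $e$, and for such a degenerate channel the otherwise only-approximate relation $\log\mu_e \approx -2p_e$ from \cref{theorem: syndrome class learning} holds exactly in the form $\mu_e = 1-2p_e$.

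First I would record the Pauli eigenvalues of a single elementary channel: $\Lambda_{\gamma_e}(O) = (1-p_e)+p_e[[e,O]] = (1-2p_e)^{\langle e,O\rangle}$, i.e. it equals $1$ when $O$ commutes with $e$ and $1-2p_e$ when $O$ anticommutes with $e$. Substituting $\log\Lambda_{\gamma_e}(O) = \langle e,O\rangle\log(1-2p_e)$ into the definition of the transformed eigenvalues (\cref{equation: U from Lambda local}, equivalently \cref{equation: mu from lambda,equation: V inverse}) and using that the entries of the reduced Walsh--Hadamard matrix are the scalar commutators $[[O',O]]$, together with the character identities $[[O',O]][[e,O]]=[[O'e,O]]$ and $\sum_{O\in\mathcal{P}_{\gamma_e}}[[F,O]] = 4^{|\gamma_e|}\,\mathbf 1\{F=I\}$, the two inner sums collapse and I obtain $\log U_{\gamma_e}(O') = \mathbf 1\{O'=e\}\,\log(1-2p_e)$; hence $\mu_e = 1-2p_e$, while the remaining transformed eigenvalues of $\mathcal{N}^{(e)}$ are $1$.

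Next, by the multiplicativity $\log U = \sum_{\gamma}\log U_{\gamma}$ of \cref{lemma: properties of U}, the learnable eigenvalue of each syndrome class is $\log\nu_C = \sum_{e\in C}\log\mu_e = \sum_{e\in C}\log(1-2p_e)$, so $\nu_C = \prod_{e\in C}(1-2p_e)$, the product running over all (labelled) elementary error events assigned to class $C$. Finally I would invoke the standard XOR-of-independent-Bernoullis identity: letting $N_C$ be the number of errors in $C$ that occur, $\mathbb{E}[(-1)^{N_C}] = \prod_{e\in C}\mathbb{E}[(-1)^{B_e}] = \prod_{e\in C}(1-2p_e)$, and since $\mathbb{E}[(-1)^{N_C}] = 1-2\Pr[N_C\text{ odd}]$, we get $P^{\text{detect}}_C = \Pr[N_C\text{ odd}] = \tfrac12(1-\nu_C)$. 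Because $\nu_C$ is determined exactly (not merely to first order) by the syndrome data, so is $P^{\text{detect}}_C$, which is precisely \cref{equation: detector error rate}.

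The main obstacle is the exactness bookkeeping in the second step: one must verify that the second-order corrections that ordinarily appear in $\log\mu_e = -2p_e + \mathcal{O}(\|\vec p^{(\gamma)}\|^2)$ vanish identically here because $\mathcal{E}_{\gamma_e}$ is a singleton, so that $\nu_C$ is literally the product of the $(1-2p_e)$ rather than its low-order approximant; everything else (the single-Pauli eigenvalue formula, the inverse-transform computation, the Bernoulli XOR identity, and the full-rank recoverability of $\nu_C$) is routine. I would also remark that the argument is indifferent to whether two distinct elementary channels carry the same Pauli operator, as allowed in the inclusive model: such repeated events simply contribute repeated factors $(1-2p_e)$ to $\nu_C$, and the XOR identity automatically accounts for the cancellation $e\,e=I$ when both occur.
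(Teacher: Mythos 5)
Your proposal is correct and follows essentially the same route as the paper's proof: compute the transformed eigenvalue of each single-Pauli channel exactly as $\mu_e = 1-2p_e$, multiply over a syndrome class to get $\nu_C=\prod_{e\in C}(1-2p_e)$, and identify this with $1-2P^{\text{detect}}_C$ via the XOR-of-independent-Bernoullis identity, with $\nu_C$ recoverable exactly from the syndrome expectations. You simply spell out the character-sum computation and the Bernoulli identity that the paper leaves implicit, which is fine.
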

\begin{proof}
    Note that if each local channel can be further decomposed into a composition of channels $\mathcal{N}^{(e)}$ involving a single Pauli error $e$, then each transformed eigenvalue $\mu^{(e)}$ is of the simple form:
\begin{equation}
    \mu^{(e)}_{e'} = \begin{cases}
        1-2p_e & \text{if } e=e',\\
        1 & \text{else.}
    \end{cases}
\end{equation}
Then the detector error rate associated with the syndrome error class $C$ is:
\begin{equation}
    P^{\text{detect}}_C=\frac{1-\prod_{e\in C}(1-2p_{e})}{2}=\frac{1}{2}(1-\nu_C)\label{equation: detector error rates learnability}
\end{equation}
which is learnable as shown in \cref{equation: learnables}.
\end{proof}

\begin{appexample}
    For single-qubit Pauli channels $\mathcal{N}_{\{1\}}(\rho)=\sum\limits_{e\in\{I,X_1,Y_1,Z_1\}}p_{e} e\rho e^{\dagger}$, the transformed eigenvalue $\mu_{X_1}$ is
\begin{equation}
    \mu_{X_1}=\left(\frac{\Lambda(Y_1)\Lambda(Z_1)}{\Lambda(X_1)}\right)^{\frac{1}{2}}.
\end{equation}
But when the Pauli channel contains only one error, e.g. $\mathcal{N}^{(X_1)}(\rho)=(1-p_{X_1})\rho + p_{X_1}X_1\rho X_1^{\dagger}$, we have
\begin{equation}
   \Lambda^{(X_1)}(e)=(-1)^{\langle X_1,e\rangle}(1-2p_{X_1})=
\begin{cases}
1 & \text{if } X_1 \text{ and }e \text{ commute},\\
1-2p_{X_1} & \text{if } X_1 \text{ and }e \text{ anticommute}.
\end{cases}
\end{equation}
So we have $\mu^{(X_1)}_{X_1}=1-2p_{X_1}$ and $\mu^{(X_1)}_P=1$ for other single-qubit Pauli P.
\end{appexample}
As a result, given a set of local errors $\mathcal{E}_{\Gamma}$, \cref{equation: log equation mu} can be reduced to:
\begin{equation}
    \log\begin{bmatrix}
\Lambda(M_1) \\
\Lambda(M_2) \\
\vdots \\
\end{bmatrix}=
\begin{bmatrix}
\langle M_1, e_1 \rangle & \langle M_1, e_2 \rangle & \cdots \\
\langle M_2, e_1 \rangle & \langle M_2, e_2 \rangle & \cdots \\
\vdots & \vdots & \ddots
\end{bmatrix}
\log
\begin{bmatrix}
\mu_{e_1} \\
\mu_{e_2} \\
\vdots
\end{bmatrix}
=\begin{bmatrix}
\langle M_1, e_1 \rangle & \langle M_1, e_2 \rangle & \cdots \\
\langle M_2, e_1 \rangle & \langle M_2, e_2 \rangle & \cdots \\
\vdots & \vdots & \ddots
\end{bmatrix}
\log
\begin{bmatrix}
1 - 2p_{e_1} \\
1 - 2p_{e_2} \\
\vdots
\end{bmatrix}
\label{equation: D inclusive}.
\end{equation}
where each $e_i$ are in the set $\mathcal{E}_{\Gamma}$.
As a result, what we could learn is
\begin{equation}
    \nu_{C}=\prod_{e\in C}\mu_{e}=\prod_{e\in C}(1-2p_{e})=\exp(D'^{+}\log\vec{\Lambda}^{(\mathcal{M}')}).
\end{equation}
where $D'^{+}$ is the pseudoinverse of $D'$ in \cref{equation: D' matrix main}. The probability of a syndrome of $C$ being flipped (the detector error rates associated with the hyperedges in a decoder graph) is
\begin{equation}
    P^{\text{detect}}_C=\frac{1}{2}(1-\nu_C).
\end{equation}

\section{Existence of a family of logically equivalent physical error rates\label{appendix: existence of solution with B constraints}}
In this section, we prove that under certain low-error-rate conditions for the true error rates, there exists a family of error rates parametrized by the extra constraints $B$ matrix, i.e. \cref{equation: extra constraints}, and satisfies the system of equations given by the ideal syndrome expectations, i.e. \cref{equation: log equation reduced} or \cref{equation: D matrix reduced}.

By grouping error rates into equivalence classes $C\in\mathcal{C}$ according to their syndrome, we see in \cref{lemma: rank of A} that the syndrome data gives us $|\mathcal{C^{*}}|$ independent linear equations. Here $\mathcal{C^{*}}=\mathcal{C}\backslash C_0$ is the set of non-trivial syndrome error classes. And in \cref{theorem: syndrome class learning}, we show that we can learn the sum of the error rates $P_C$ in each class $C\in\mathcal{C^*}$ up to a second-order deviation in the physical error rates. However, we would like to estimate each error rate associated with each local error $\mathcal{N}_{\gamma}, \forall \gamma\in\Gamma$ for several reasons. For one, it gives us a direct sense of gate fidelity in the circuit, rather than syndrome-class error rates (or detector error rates), which are only natural in decoding because they are related to the weights of edges in the decoder graph. Secondly, fixing a set of error rates from a family of possible error rates determined by the syndrome data is part of the logical error rate learning algorithm, since we estimate the logical error rate via Monte Carlo sampling. Finally, as shown in \cref{subsection: logical sample complexity}, our proof of the bounds on the sampling complexity for learning the logical error rate is based on a single instance (determined by extra constraints) of all possible error rates.

We view the system of equations \cref{equation: D matrix reduced,equation: extra constraints} as a vector-valued function on a real vector space and prove the existence of such a family of error rates using a simplified version of a general quantitative inverse function theorem:

\begin{applemma}[Theorem 4.1 of \textcite{xinghua1999convergence}]\label{lemma: banach quantitative inverse function theorem}
Let $X,Y$ be Banach spaces and let $f:D\subset X\to Y$ be differentiable.
Fix $x_0\in D$ and assume $f'(x_0)^{-1}$ exists. $\mathcal{B}_r(x_0)$
denotes the ball of radius $r$, centered at $x_0$. Let $L>0$ be such that
$\mathcal{B}_{\frac{1}{L}}(x_0)\subset D$ and
\begin{equation}\label{eq:wang48}
\bigl\|\,f'(x_0)^{-1} f'(x) - I\,\bigr\| \le L\,\|x-x_0\|,
\qquad \forall x\in \mathcal{B}_{\frac{1}{L}}(x_0).
\end{equation}
Then the local inverse $f^{-1}_{x_0}$ exists and is differentiable on the open ball of radius $r=\frac{1}{2L\,\|f'(x_0)^{-1}\|}$:
\begin{equation}\label{eq:wang49}
\mathcal{B}_{r}\left(f(x_0)\right) \subset f\!\left(\mathcal{B}_{\frac{1}{L}}(x_0)\right).
\end{equation}
\end{applemma}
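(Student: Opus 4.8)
The plan is to prove this as a Newton--Kantorovich-type quantitative inverse function theorem by a contraction-mapping argument applied to the Picard--Newton operator with frozen derivative. Fix $y$ with $\|y - f(x_0)\| < r$ and set $\alpha := \|f'(x_0)^{-1}\|\,\|y - f(x_0)\| < 1/(2L)$. I would introduce $T(x) := x - f'(x_0)^{-1}\bigl(f(x) - y\bigr)$, whose fixed points are exactly the solutions of $f(x) = y$, and show that $T$ is a strict contraction on a suitable closed ball $\overline{\mathcal B}_{\rho}(x_0) \subset \mathcal B_{1/L}(x_0)$ which $T$ maps into itself, then invoke the Banach fixed-point theorem.

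For the contraction estimate, I would write $f(x) - f(x') = \int_0^1 f'\bigl(x' + t(x - x')\bigr)(x - x')\,dt$, so that $T(x) - T(x') = \int_0^1 \bigl[I - f'(x_0)^{-1} f'\bigl(x' + t(x - x')\bigr)\bigr](x - x')\,dt$; hypothesis \eqref{eq:wang48} then gives $\|T(x) - T(x')\| \le L\rho\,\|x - x'\|$ on $\overline{\mathcal B}_\rho(x_0)$, a strict contraction for $\rho < 1/L$. For the self-mapping property I would use the sharper segment bound $\|I - f'(x_0)^{-1}f'(x_0 + t(x - x_0))\| \le Lt\|x - x_0\|$ to get $\|T(x) - T(x_0)\| \le \tfrac{L}{2}\|x - x_0\|^2 \le \tfrac{L\rho^2}{2}$, while $\|T(x_0) - x_0\| = \|f'(x_0)^{-1}(y - f(x_0))\| \le \alpha$; hence $\|T(x) - x_0\| \le \tfrac{L\rho^2}{2} + \alpha$, and any $\rho$ with $\tfrac{L}{2}\rho^2 - \rho + \alpha \le 0$ suffices. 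Since $\alpha < 1/(2L)$ this quadratic has real roots $\rho_\pm = \bigl(1 \pm \sqrt{1 - 2L\alpha}\bigr)/L$ with $\rho_- < 1/L < \rho_+$, so I would take $\rho = \rho_-$, which gives the self-mapping and $L\rho_- = 1 - \sqrt{1 - 2L\alpha} < 1$ simultaneously. Banach's theorem then yields a unique fixed point $x^* \in \overline{\mathcal B}_{\rho_-}(x_0) \subset \mathcal B_{1/L}(x_0)$ with $f(x^*) = y$, establishing $\mathcal B_r(f(x_0)) \subset f(\mathcal B_{1/L}(x_0))$.

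To finish I would verify injectivity and smoothness of the local inverse $f^{-1}_{x_0}$. Injectivity on any $\overline{\mathcal B}_\rho(x_0)$ with $\rho < 1/L$ follows from the same identity: $f(x) = f(x')$ forces $x - x' = T(x) - T(x')$, hence $\|x - x'\| \le L\rho\|x - x'\|$ and $x = x'$; moreover $f'(x_0)^{-1}(f(x) - f(x')) = (x - x') - (T(x) - T(x'))$ gives the lower bound $\|f(x) - f(x')\| \ge \tfrac{1 - L\rho}{\|f'(x_0)^{-1}\|}\,\|x - x'\|$, so $f^{-1}_{x_0}$ is Lipschitz on $\mathcal B_r(f(x_0))$. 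Since \eqref{eq:wang48} forces $\|I - f'(x_0)^{-1}f'(x)\| < 1$ throughout $\mathcal B_{1/L}(x_0)$, $f'(x)$ is invertible by a Neumann series there, and the usual computation then shows $f^{-1}_{x_0}$ is differentiable with derivative $f'\bigl(f^{-1}_{x_0}(y)\bigr)^{-1}$.

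The step I expect to be the main obstacle is not deep but is the crux of the quantitative bookkeeping: choosing $\rho$ so that the self-mapping and strict-contraction conditions hold at once. The naive choice $\rho = 1/L$ only gives contraction factor exactly $1$, so one must exploit the strict inequality $\|y - f(x_0)\| < r$ --- equivalently $\alpha < 1/(2L)$ --- to pass to $\rho = \rho_- < 1/L$ via the quadratic inequality above, and this is precisely where the factor $1/2$ in $r = 1/(2L\|f'(x_0)^{-1}\|)$ enters.
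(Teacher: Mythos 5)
Your proposal is essentially correct, but there is nothing in the paper to compare it against: the paper states this lemma as an imported result (Theorem 4.1 of Wang, 1999) and gives no proof of it, using it only as a black box to derive its finite-dimensional corollary (\cref{lemma: quantitative inverse function theorem}). Your self-contained argument — freezing the derivative at $x_0$, applying Banach's fixed-point theorem to $T(x)=x-f'(x_0)^{-1}(f(x)-y)$ on the closed ball of radius $\rho_-=\bigl(1-\sqrt{1-2L\alpha}\bigr)/L$, and checking the self-mapping via the quadratic $\tfrac{L}{2}\rho^2-\rho+\alpha\le 0$ — is the standard Newton--Kantorovich-type route to exactly this statement, and the bookkeeping is right: $\alpha<1/(2L)$ guarantees real roots, $\rho_-<1/L$, contraction constant $L\rho_-<1$, and the factor $1/2$ in $r=1/(2L\|f'(x_0)^{-1}\|)$ enters precisely where you say it does. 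The injectivity, the lower bound $\|f(x)-f(x')\|\ge\frac{1-L\rho}{\|f'(x_0)^{-1}\|}\|x-x'\|$, the Neumann-series invertibility of $f'(x)$ on $\mathcal{B}_{1/L}(x_0)$, and the differentiability of the inverse all go through as you describe.

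Two minor points of rigor. First, since the hypothesis only assumes $f$ differentiable (not $C^1$), the Bochner-integral representation $f(x)-f(x')=\int_0^1 f'(x'+t(x-x'))(x-x')\,dt$ is not automatically justified; replace it by the vector-valued mean value inequality (applied to $t\mapsto T(x_0+t(x-x_0))$ with the continuous majorant $g(t)=Lt\|x-x_0\|^2$), which yields the same bounds $\|T(x)-T(x')\|\le L\rho\|x-x'\|$ and $\|T(x)-T(x_0)\|\le\tfrac{L}{2}\|x-x_0\|^2$ without assuming continuity of $f'$. Second, the inverse is only \emph{locally} Lipschitz on $\mathcal{B}_r(f(x_0))$ — your constant $\frac{\|f'(x_0)^{-1}\|}{1-L\rho}$ degenerates as $\rho\to 1/L$ near the boundary of the ball — but local Lipschitz continuity is all that the differentiability argument requires, so the conclusion is unaffected.
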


\begin{applemma}[Quantitative inverse function theorem for $C^1$ functions over $R^n$]\label{lemma: quantitative inverse function theorem}
Let $f : \mathbb{R}^n \to \mathbb{R}^n$ be continuously differentiable ($C^1$) and $\|\cdot\|$ be a norm on $\mathbb{R}^n$ and also its induced operator norm on matrices. 
Write $J_f(\vec{x})$ as the Jacobian matrix at $\vec{x}$, and $\vec{x}_0 \in \mathbb{R}^n$ is a point such that $J_f(\vec{x}_0)$ is invertible.
Define $\delta= \frac{1}{\|J_f(\vec{x}_0)^{-1}\|} > 0$. Let $R > 0$ and a constant $K$ be such that
\begin{equation}
    \|J_f(\vec{x})-J_f(\vec{x}_0)\|\leq K\|\vec{x}-\vec{x}_0\|, \qquad\forall \vec{x} \in \mathcal{B}_R(\vec{x}_0).
\end{equation}
If $R\geq \frac{\delta}{K}$, then there exists a function $g : \mathcal{B}_{\frac{\delta^2}{2K}}(f(\vec{x}_0)) \to \mathbb{R}^n$ such that
\[
f(g(\vec{y})) = \vec{y}
\quad\text{for all } \vec{y} \in\mathcal{B}_{\frac{\delta^2}{2K}}(f(\vec{x}_0)).
\]
\end{applemma}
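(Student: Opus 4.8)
The plan is to deduce this finite-dimensional statement directly from the Banach-space version, \cref{lemma: banach quantitative inverse function theorem}, by choosing $X=Y=\mathbb{R}^n$ with the given norm $\|\cdot\|$ and its induced operator norm, and by translating the hypotheses into the form required there. The map $f$ is $C^1$, hence Fr\'echet differentiable with $f'(\vec{x})=J_f(\vec{x})$, so the regularity hypothesis of \cref{lemma: banach quantitative inverse function theorem} is met. The one piece of bookkeeping is to convert the two-sided Lipschitz-type bound on $J_f$ into the ``relative'' bound $\|f'(\vec{x}_0)^{-1}f'(\vec{x})-I\|\le L\|\vec{x}-\vec{x}_0\|$ that the cited theorem uses.

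First I would set $L := K/\delta = K\,\|J_f(\vec{x}_0)^{-1}\|$. Then, for any $\vec{x}\in\mathcal{B}_R(\vec{x}_0)$, submultiplicativity of the operator norm together with the hypothesis $\|J_f(\vec{x})-J_f(\vec{x}_0)\|\le K\|\vec{x}-\vec{x}_0\|$ gives
\begin{equation}
\bigl\|\,J_f(\vec{x}_0)^{-1}J_f(\vec{x})-I\,\bigr\|
=\bigl\|\,J_f(\vec{x}_0)^{-1}\bigl(J_f(\vec{x})-J_f(\vec{x}_0)\bigr)\,\bigr\|
\le \|J_f(\vec{x}_0)^{-1}\|\,K\,\|\vec{x}-\vec{x}_0\|
= L\,\|\vec{x}-\vec{x}_0\|.
\end{equation}
The inclusion $\mathcal{B}_{1/L}(\vec{x}_0)\subset D=\mathbb{R}^n$ is automatic, and moreover $1/L=\delta/K\le R$ by the hypothesis $R\ge \delta/K$, so the bound above in particular holds on all of $\mathcal{B}_{1/L}(\vec{x}_0)$. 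Thus all hypotheses of \cref{lemma: banach quantitative inverse function theorem} are verified with this $L$.

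Applying \cref{lemma: banach quantitative inverse function theorem} yields a differentiable local inverse $g:=f^{-1}_{\vec{x}_0}$ defined on the ball of radius
\begin{equation}
r=\frac{1}{2L\,\|f'(\vec{x}_0)^{-1}\|}=\frac{1}{2L\,\|J_f(\vec{x}_0)^{-1}\|}=\frac{\delta}{2L}=\frac{\delta^2}{2K},
\end{equation}
with $f(g(\vec{y}))=\vec{y}$ for all $\vec{y}\in\mathcal{B}_{\delta^2/(2K)}(f(\vec{x}_0))$, which is exactly the claim. I do not expect a genuine obstacle here; the only points requiring a little care are the algebraic identification $r=\delta^2/(2K)$ and making sure the radius $R$ supplied in our hypothesis is large enough to cover the ball $\mathcal{B}_{1/L}(\vec{x}_0)$ on which the cited theorem needs the relative-derivative estimate — both of which follow immediately once $L=K/\delta$ is chosen. (If one prefers a self-contained argument avoiding the citation, the same conclusion follows from a standard Newton–Kantorovich / contraction-mapping fixed-point argument applied to $\Phi_{\vec y}(\vec x)=\vec x - J_f(\vec x_0)^{-1}(f(\vec x)-\vec y)$ on $\mathcal{B}_{1/L}(\vec{x}_0)$, using the same relative bound to show $\Phi_{\vec y}$ is a contraction into the ball; I would include this only as a remark.)
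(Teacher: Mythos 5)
Your proposal is correct and follows essentially the same route as the paper's own proof: both set $L=K/\delta=K\|J_f(\vec{x}_0)^{-1}\|$, convert the Lipschitz bound on $J_f$ into the relative bound via submultiplicativity, check $R\ge 1/L$, and invoke \cref{lemma: banach quantitative inverse function theorem} to get the inverse on the ball of radius $\frac{1}{2L\|J_f(\vec{x}_0)^{-1}\|}=\frac{\delta^2}{2K}$. No issues.
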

\begin{proof}
    Let $f'(x)=J_f(x)$ so that the notation matches the one used in the general case in \cref{lemma: banach quantitative inverse function theorem}. If $\|J_f(x)-J_f(x_0)\|\leq K \|x-x_0\|$,  $\in \mathcal{B}_R(x_0)$, then we have
    \begin{align}
    \|f'(x_0)^{-1}f'(x)-I\|&=\|f'(x_0)^{-1}(f'(x)-f'(x_0))\|\\
    &\leq \|f'(x_0)^{-1}\|\|f'(x)-f'(x_0)\|\\
    &\leq K\|f'(x_0)^{-1}\|\|x-x_0\|\\
    &=L\|x-x_0\|
    \end{align}
    for any $x\in\mathcal{B}_R(x_0)$, where we define $L=K\|f'(x_0)^{-1}\|=\frac{K}{\delta}$. Moreover, if $R\geq \frac{\delta}{K}$, then $R\geq \frac{1}{L}$. Therefore according to \cref{lemma: banach quantitative inverse function theorem}, there exists inverse $g=f^{-1}$ in open ball of radius $\frac{1}{2L\|f'(x_0)^{-1}\|}=\frac{\delta^2}{2K}$.
\end{proof}

Define the detectable local error rates as $\vec{p}\,^*=(P_{\gamma_e}(e)|e\in\mathcal{E}_{\Gamma}^*)$, where $\mathcal{E}_{\Gamma}^*=\mathcal{E}_{\Gamma}\backslash C_0$ is the set of detectable local errors. Define a function $H: R^{|\mathcal{E}^*_{\Gamma}|}\to R^{|\mathcal{C}^{*}|}$ being the map to the values of the learnable eigenvalues in \cref{theorem: syndrome class learning}, as a function of the local error rates of the detectable errors
\begin{equation}
H(\vec{p}\,^*)=\log \vec{\nu}(\vec{p}\,^*).
\end{equation}
We denote $H_i$ as the component of the function $H$ for the syndrome class $C_i\in\mathcal{C}^{*}$.
To use \cref{lemma: quantitative inverse function theorem}, we define another function $G_B(x): R^{|\mathcal{C}^{*}|} \to R^{|\mathcal{C}^{*}|}$ and 
\begin{equation}
    G_B(\vec{x})=H(K_B\vec{x})=\log \vec{\nu}.
\end{equation}
Here $\vec{x}$ refers to the non-trivial syndrome error class error rates and $x_i=P_{C_i}$ denotes the syndrome class error rate for the $i$ th syndrome class $C_i$.
$K_B$ is a matrix such that $\vec{p}\,^*=K_B\vec{x}$. To be more specific, note that the function $G_B$ depends on the extra constraints $\vec{B}\vec{p}=\vec{0}$ and it gives rise to a linear map from syndrome error rate to the detectable error rates $\vec{p}\,^*=K_B\vec{x}$. Here, each row $i$ of the matrix $K_B$ has a single element at column $j$, corresponding to the fraction of the error rate $p^*_i$ in its syndrome class $P_{C_j}$:
\begin{equation}
    [K_B]_{i,j}=\frac{r_{e_i}}{\sum_{e\in C_j}r_e}\mathbb{I}_{C_j}(e_i),\label{equation: element of K_B}
\end{equation}
where $r_e$ are the relative strengths of the errors $e$ that construct matrix $B$ in \cref{equation: extra_nontrivial,equation: extra_trivial}, and $\mathbb{I}_C$ is the indicator function for the syndrome class, i.e. $\mathbb{I}_C(e)=1$ if error $e$ is in the syndrome class $C$, and $0$ otherwise.

We define the second-order derivative, i.e., Hessian, of a function $f(\vec{x}): \mathbb{R}^{n}\to \mathbb{R}^{n}$ as $J^2_{f}(\vec{x})$, which is a rank-3 tensor with entries:
\begin{equation}
    [J^2_{f}(\vec{x})]_{i,j,k}=\frac{\partial f_{i}}{\partial x_j\partial x_k}.
\end{equation}
We define the induced norm $\|J^2_{f}\|_{\infty}$ as
\begin{equation}
    \|J^2_{f}\|_{\infty}=\max_{\|\vec{v}\|_{\infty}=1}\|J^2_{f} \vec{v}\|_{\infty},
\end{equation}
where the norm on the RHS is the usual vector-induced norm for matrices, and $J^2_{f} \vec{v}$ is a matrix obtained by the tensor contraction $(J^2_{f} \vec{v})_{i,j}=\sum_k [J^2_{f}]_{i,j,k} v_k$.
The norm can be written explicitly as
\begin{equation}
    \|J^2_{f}\|_{\infty}=\max_{i}\sum_{j,k}\left|\frac{\partial f_i}{\partial x_j \partial x_k}\right|.
\end{equation}
\begin{applemma}
    Consider a local sparse $(r_{\Gamma},c_{\Gamma})-$Pauli channel on an $n$-qubit qLDPC code under \cref{assumption: open set probability}. Then $\|J_H\|_{\infty}=\mathcal{O}(1)$ and  $\|J^2_{G_B}\|_{\infty}=\mathcal{O}(1)$ for any extra constraints $B\vec{p}=0$.\label{lemma: bounded first and second derivative}
\end{applemma}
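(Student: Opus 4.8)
The plan is to establish both bounds by the same mechanism: the learnable eigenvalues $\nu_C$ depend on only a bounded number of the local error rates (because of the qLDPC and local-sparse assumptions), and the dependence is smooth with derivatives controlled by \cref{assumption: open set probability}, which keeps every $\Lambda$ and every $\mu$ uniformly bounded away from $0$. Concretely, from \cref{equation:syndrome error true} (and \cref{equation: mu from lambda,equation: V inverse}) each component $H_i = \log\nu_{C_i} = \sum_{e\in C_i}\log\mu_e$, where each $\log\mu_e$ is a fixed $\mathbb Z$-linear combination of $\{\log\Lambda_{\gamma_e}(e'):e'\text{ supported on }\gamma_e\}$, and each $\log\Lambda_\gamma(e')$ is in turn a linear function of $\log\vec{\lambda}_\gamma$ which depends smoothly on $\vec p^{(\gamma)}$ via $\vec\lambda_\gamma = \vec 1_\gamma - 2W'_{4^{|\gamma|}}\vec p^{(\gamma)}$. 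I would first make this chain of dependencies explicit and observe that $H_i$ depends only on those error rates $p_e$ with $e$ supported on some $\gamma\in\Gamma_{C_i}$, and $|\Gamma_{C_i}|$ is bounded by a constant $r_s c_\Gamma$ (the same bound used in \cref{theorem: syndrome class learning sample complexity}), with each $|\gamma|\le r_\Gamma = \mathcal O(1)$. Hence $H_i$ is a function of at most a constant number $N_0 = r_s c_\Gamma\cdot 4^{r_\Gamma}$ of the coordinates of $\vec p^{*}$.

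Second, I would bound the first derivatives of $H$. Since $\partial_{p_e}\log\Lambda_\gamma(e') = -2(W'_{4^{|\gamma|}})_{e',e}/\Lambda_\gamma(e')$ and \cref{assumption: open set probability} forces $P_\gamma(I)>1/2$, hence $\Lambda_\gamma(e') = \sum_{e''}(-1)^{\langle e',e''\rangle}P_\gamma(e'') \ge 2P_\gamma(I)-1 > 0$ is uniformly bounded below by a constant, every entry $\partial H_i/\partial p_e$ is bounded by a constant; combined with the sparsity (only $N_0$ nonzero entries in row $i$) this gives $\|J_H\|_\infty = \max_i\sum_e|\partial H_i/\partial p_e| = \mathcal O(1)$. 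Third, for the Hessian of $G_B$, I use $G_B = H\circ K_B$ with $K_B$ linear and row-stochastic-like: from \cref{equation: element of K_B}, every entry of $K_B$ lies in $[0,1]$ and each column has at most $\mathcal O(1)$ nonzero entries (again by local sparsity, since a given error rate $p_e$ belongs to exactly one syndrome class). By the chain rule $J^2_{G_B}(\vec x) = K_B^{T}\,J^2_H(K_B\vec x)\,K_B$ (contracted appropriately), so it suffices to bound $\|J^2_H\|_\infty$. Differentiating the expressions above once more, the second derivatives $\partial^2 H_i/\partial p_e\partial p_{e'}$ are nonzero only when $e,e'$ are supported on a common $\gamma\in\Gamma_{C_i}$ and are again products/quotients of the $\Lambda_\gamma$'s times bounded combinatorial factors, hence bounded by a constant; the number of nonzero $(j,k)$ pairs in slice $i$ is at most $N_0^2 = \mathcal O(1)$, so $\|J^2_H\|_\infty = \mathcal O(1)$, and the stochastic bounds on $K_B$ propagate this to $\|J^2_{G_B}\|_\infty = \mathcal O(1)$ with a constant independent of $B$.

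The main obstacle is not any single calculation but making the uniform lower bound on the $\Lambda_\gamma$'s (and correspondingly on the $\mu_e$'s, which appear inside logarithms) genuinely independent of $n$ and of $B$: one must check that \cref{assumption: open set probability} alone — rather than any quantitative gap that could degrade with system size — suffices, using that $\Lambda_\gamma(e')\ge 2P_\gamma(I)-1$ and that this quantity depends only on the single local channel $\mathcal N_\gamma$, whose size is $\mathcal O(1)$. A secondary bookkeeping point is to verify the combinatorial counting $|\Gamma_{C_i}|\le r_s c_\Gamma$ and that $K_B$'s columns are $\mathcal O(1)$-sparse; both follow from the $(r_s,c_s)$-qLDPC and $(r_\Gamma,c_\Gamma)$-local-sparse definitions exactly as in the setup preceding \cref{theorem: syndrome class learning sample complexity}, so I would simply cite that discussion rather than re-derive it.
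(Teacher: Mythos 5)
Your proposal is correct and takes essentially the same route as the paper's proof: you bound the first and second derivatives of each $H_i=\sum_{e\in C_i}\log\mu_e$ using the constant size of the syndrome classes and of $\Gamma_{C_i}$ together with a uniform lower bound on the local Pauli eigenvalues (which the paper, exactly like you, extracts from \cref{assumption: open set probability} as a constant $\lambda_{\min}$ depending only on a single $\mathcal{O}(1)$-sized channel), and then transfer the bound to $G_B=H\circ K_B$ via the linearity of $K_B$. The only cosmetic differences are that the paper bounds the Lipschitz constant of $J_{G_B}$ rather than contracting the Hessian tensor directly, and that the property of $K_B$ actually used is $\|K_B\|_{\infty}\le 1$ (each row has a single entry at most $1$), which is the row-sum fact you call "row-stochastic-like" rather than the column sparsity you also mention.
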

\begin{proof}
    For a local sparse $(r_{\Gamma},c_{\Gamma})-$Pauli channel on an LDPC code, the size of the syndrome class $C\in\mathcal{C}^*$ is bounded by a constant $|C|_{\text{max}}$. For each syndrome class $C$, the collection of all errors in the involved channels $\Gamma_C=\{\gamma|\mathcal{E}_{\gamma}\cap C\neq\emptyset\}$, denoted by $\mathcal{E}_{\Gamma_C}$, is also of constant size. Due to \cref{assumption: open set probability}, we assume the total error rate of each channel $\sum_{e\in\mathcal{E}_{\gamma}}p_{e}< \frac{1}{2}$, we have each local Pauli eigenvalue lower bounded by a positive constant $\lambda_{\text{min}}$: $\lambda_{e}=1-2\sum_{e'\in\mathcal{E}_{\gamma}|[[e,e']]=-1}p_{e'}\geq\lambda_{\text{min}}>0$ for all local channel $\gamma\in\Gamma$ and $e\in\mathcal{E}_{\gamma}$.
    
    For a syndrome class $C_i\in \mathcal{C}$, we first look at each term $\log \mu_{e}$ of the component of the function $H_i=\log \nu_{C_i}=\sum_{e\in C_i}\log\mu_{e}$. Note that for any error $e\in C_i$ with its individual channel $\mathcal{N}_{\gamma_e}$,
    \begin{equation}
        \log \mu_{e}=\frac{2}{4^{|\gamma_e|}} \sum_{e'\in\mathcal{E}_{\gamma_e}}[[e',e]]\log\lambda_{e'}
    \end{equation}
    and
    \begin{equation}
        \left|\frac{\partial \log \mu_{e}}{\partial p^*_i}\right|\leq\sum_{e'|e'\text{ anticommute }e_i}\frac{4}{4^{\gamma_e}}\lambda^{-1}_{e'}\leq \frac{4}{\lambda_{\text{min}}}.
    \end{equation}
    Therefore for each component $H_i$, we have
    \begin{equation}
        \left|\frac{\partial H_i}{\partial p^*_j}\right|=\sum_{e\in C_i}\left|\frac{\partial \log\mu_{e}}{\partial p^*_j}\right|\leq\frac{4|C|_{\text{max}}}{\lambda_{\text{min}}}
    \end{equation}
    So we have an upper bound for the Jacobian
    \begin{equation}
        \|J_H\|_{\infty}\leq\sum_{e\in\mathcal{E}_{\Gamma_C}}\frac{4|C|_{\text{max}}}{\lambda_{\text{min}}}=\mathcal{O}(1).
    \end{equation}
    For the second-order derivative
    \begin{align}
        \left|\frac{\partial \log \mu_{e}}{\partial p^*_i\partial p^*_j}\right|
        \leq\sum_{e'|e'\text{ anticommute }e_i,e_j}\frac{8}{4^{\gamma_e}}\lambda^{-2}_{e'}\leq \frac{8}{\lambda^2_{\text{min}}}.
    \end{align}
    Similarly,
    \begin{equation}
    \|J^2_H\|_{\infty}\leq\sum_{e,e'\in\mathcal{E}_{\Gamma_C}}\frac{8|C|_{\text{max}}}{\lambda^2_{\text{min}}}=\mathcal{O}(1).
    \end{equation}
    To upperbound $\|J^2_{G_B}\|_{\infty}$, we use the chain rule:
    \begin{align}
        \|J_{G_B}(\vec{x})-J_{G_B}(\vec{y})\|_{\infty}&=\|(J_H(K_B\vec{x})-J_H(K_B\vec{y}))K_B\|_{\infty}\\
        &\leq \|J^2_H\|_{\infty}\|K_B\|_{\infty}\|\vec{x}-\vec{y}\|_{\infty}\|K_B\|_{\infty}\\
        &=\|J^2_H\|_{\infty}\|K_B\|^2_{\infty}\|\vec{x}-\vec{y}\|_{\infty}\\
        &\leq \|J^2_H\|_{\infty}\|\vec{x}-\vec{y}\|_{\infty}
    \end{align}
where we used the fact that $\|K_B\|_{\infty}\leq 1$, since each row of $K_B$ has a single non-zero element less than 1 (\cref{equation: element of K_B}).
Therefore, we have $\|J^2_{G_B}\|_{\infty}\leq \|J^2_H\|_{\infty}=\mathcal{O}(1)$.

\end{proof}

\begin{applemma}[Existence of physical error under $B$ constraints]\label{lemma: existence of solution}
     Consider a local sparse Pauli channel $\mathcal{N}_{\Gamma}=\circ_{\gamma\in\Gamma}\mathcal{N}_{\gamma}$, which acts on a qLDPC stabilizer/subsystem code with measured stabilizer subgroup $\mathcal{M}$. There exists a constant $0<p_{\text{max}}\leq\frac{1}{4^{r_{\Gamma}}}$ such that if the true error rates $\vec{p}$ in $\mathcal{N}_{\Gamma}$ satisfy $\|\vec{p}^*\|_{\infty}< p_{\text{max}}$, where $\vec{p}\,^*$ denotes the detectable error rates, then there exists a solution for $\vec{p}^{*}$ of the system of equations that consists of \cref{equation: D' matrix main}:
    \begin{equation}
        \log\vec{\Lambda}^{(\mathcal{M})}=D'^{(\mathcal{M})}\log \vec{\nu},
    \end{equation}
    and any set of extra constraints \cref{equation: extra constraints}:
    \begin{equation}
    B\vec{p}=\vec{0}.
    \end{equation}
\end{applemma}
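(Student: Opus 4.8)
The plan is to deduce this from the quantitative inverse function theorem \cref{lemma: quantitative inverse function theorem} applied to the map $G_B$ at the base point $\vec x_0 = \vec 0$. First I would reformulate the problem: the constraints $B\vec p = \vec 0$ simply force $p_e/r_e$ to be constant on each syndrome class, so the set of $\vec p^{\ast}$ satisfying them is exactly the image of the linear map $K_B$ from \cref{equation: element of K_B}, i.e.\ $\vec p^{\ast} = K_B\vec x$ with $x_i = P_{C_i}$. Recalling that the syndrome expectations of the true channel obey \cref{equation: D' matrix main} with $\vec\nu = \vec\nu^{\text{true}}$, and that $D'^{(\mathcal M)}$ has full column rank (\cref{lemma: rank D}), the syndrome equation is equivalent to $\log\vec\nu = \log\vec\nu^{\text{true}}$. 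Writing $G_B(\vec x) = H(K_B\vec x) = \log\vec\nu$ as in the main text, solving the combined system thus reduces to finding $\vec x^{\ast}$ with $G_B(\vec x^{\ast}) = \vec y^{\ast} := \log\vec\nu^{\text{true}} = H(\vec p^{\text{true},\ast})$, and then setting $\vec p^{\ast} = K_B\vec x^{\ast}$ (together with the trivial-class rates fixed by \cref{equation: extra_trivial}).

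Next I would establish invertibility of $J_{G_B}$ at $\vec 0$. Using the first-order expansion $\log\mu_e = -2p_e + \mathcal O(\|\vec p\|^2)$ together with the row structure of $K_B$ — each $[K_B]_{e,j}$ is nonzero only for the single class index $j$ with $e\in C_j$, and $\sum_{e\in C_i}[K_B]_{e,i} = 1$ — one obtains $\partial_{x_j}H_i|_{\vec 0} = -2\sum_{e\in C_i}[K_B]_{e,j} = -2\delta_{ij}$, so $J_{G_B}(\vec 0) = -2I$ and $G_B(\vec 0) = \vec 0$. Hence $\delta := 1/\|J_{G_B}(\vec 0)^{-1}\|_\infty = 2$. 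On the ball where $K_B\vec x$ still defines a valid channel (total rate $<1/2$ per local support, which holds for $\|\vec x\|_\infty$ below a constant $c = \mathcal O(1)$), \cref{lemma: bounded first and second derivative} gives $\|J^2_{G_B}\|_\infty =: K_{G_B} = \mathcal O(1)$, so $J_{G_B}$ is Lipschitz there with constant $K := \max(K_{G_B},\, 2/c) = \mathcal O(1)$ (enlarging the constant only weakens the bound). With $R := c$ we then have $R \ge \delta/K$, and \cref{lemma: quantitative inverse function theorem} produces a local inverse $g$ of $G_B$ on the ball $\mathcal B_{\delta^2/(2K)}(\vec 0) = \mathcal B_{2/K}(\vec 0)$.

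Finally I would bound the target and fix $p_{\max}$. Because the spacetime code is qLDPC and $\mathcal N_\Gamma$ is local sparse, each $\log\nu_C^{\text{true}} = \sum_{e\in C}\log\mu_e^{\text{true}}$ is a sum of $\mathcal O(1)$ terms of size $\mathcal O(\|\vec p^{\ast}\|_\infty)$ (since $\lambda_e = 1 - \mathcal O(\|\vec p^{\ast}\|_\infty)$), so $\|\vec y^{\ast}\|_\infty \le C_0\|\vec p^{\ast}\|_\infty$ with $C_0 = \mathcal O(1)$; by \cref{theorem: syndrome class learning}, in fact $\vec y^{\ast} = -2\vec P_C^{\text{true}} + \mathcal O(\|\vec p^{\ast}\|^2)$. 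Choosing $p_{\max} = \min(1/4^{r_\Gamma},\, c,\, 2/(C_0 K))$ — all independent of $n$ — the hypothesis $\|\vec p^{\ast}\|_\infty < p_{\max}$ puts $\vec y^{\ast}$ inside $\mathcal B_{2/K}(\vec 0)$, so $\vec x^{\ast} := g(\vec y^{\ast})$ and $\vec p^{\ast} := K_B\vec x^{\ast}$ solve the system; shrinking $p_{\max}$ further if needed, $\vec x^{\ast} = -\tfrac12\vec y^{\ast} + \mathcal O(\|\vec y^{\ast}\|^2) = \vec P_C^{\text{true}} + \mathcal O(\|\vec p^{\ast}\|^2)$ stays nonnegative, so $\vec p^{\ast}$ is a genuine error-rate vector. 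I expect the main obstacle to be not any single deep step but the coordinated choice of the constant $p_{\max}$: it must keep the domain valid, satisfy the quantitative IFT hypothesis $R\ge\delta/K$, and fit $\vec y^{\ast}$ inside the image ball, with every constant remaining $\mathcal O(1)$ in the system size — which is exactly what makes this lemma usable in the later, size-independent sample-complexity bounds.
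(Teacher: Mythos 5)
Your proposal is correct and follows essentially the same route as the paper's proof: apply the quantitative inverse function theorem to $G_B$ at $\vec x_0=\vec 0$ using $J_{G_B}(\vec 0)=-2I$, invoke the bounded-derivative lemma to get $\mathcal O(1)$ constants, bound $\|\vec y^{\ast}\|_\infty$ linearly in $\|\vec p^{\ast}\|_\infty$ via the Jacobian bound on $H$, and set $p_{\max}$ as a minimum of size-independent constants. Your explicit reduction of the combined system via the full column rank of $D'^{(\mathcal M)}$ and the added nonnegativity check on $\vec x^{\ast}$ are welcome details that the paper leaves implicit, but they do not change the argument.
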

\begin{proof}
    According to \cref{theorem: syndrome class learning}, we have the Taylor expansion $G_B(\vec{x})=-2\vec{x}+\mathcal{O}(\|\vec{p}\|^2)$. Therefore at $\vec{x}_0=\vec{0}$, the Jacobian $J_{G_B}(\vec{x}_0)=-2I$ is invertible and
    \begin{equation}
        \delta_{G_B}=\frac{1}{\|J_{G_B}(\vec{x}_0)^{-1}\|_{\infty}}=2.
    \end{equation} Since the size of each syndrome class is bounded by a constant in the qLDPC code and local sparse Pauli channel setting, there is a constant $P^C_{\text{max}}$ for the syndrome class error rates, i.e., we consider $\|\vec{x}\|_{\infty}\leq P^C_{\text{max}}$ for the function $G_B(\vec{x})$.
    
    We define an upperbound $K_{G_B}$ of the norm of the second-order derivative of $G_B$ as
    \begin{equation}\label{equation: define K_{G_B}}
        K_{G_B}=\max\left(\max_{\vec{x}\in\mathcal{B}_{P^C_{\text{max}}}(\vec{0})} \|J^2_{G_B}(\vec{x})\|_{\infty},\quad\frac{\delta_{G_B}}{P^C_{\text{max}}}\right).
    \end{equation}
    Let $R_p=\frac{1}{4^{r_{\Gamma}}}$, we define an upperbound $L_H$ of the norm of the Jacobian of $H$ as:
    \begin{equation}
        L_H=\max\left(\max_{\vec{x}\in\mathcal{B}_{R_p}(\vec{0})} \|J_H(\vec{x})\|_{\infty},\quad\frac{\delta^2_{G_B}}{2K_{G_B}R_p}\right).\label{equation: define L_H}
    \end{equation}
    Both $K_{G_B}$ and $L_H$ is $\mathcal{O}(1)$ according to \cref{lemma: bounded first and second derivative}. Therefore, we have,
    \begin{equation}
        \|J_{G_B}(\vec{x})-J_{G_B}(\vec{0})\|_{\infty}\leq K_{G_B}\|\vec{x}\|_{\infty}, \qquad\forall \vec{x} \in \mathcal{B}_{P^C_{\text{max}}}(\vec{0})
    \end{equation}
    and $P^C_{\text{max}}\geq \frac{\delta_{G_B}}{K_{G_B}}$ according to \cref{equation: define K_{G_B}}. As a result, $G_B(\vec{x})=\vec{y}$ has unique solution when $\vec{y}\in \mathcal{B}_{r_y}(\vec{0})$ where the radius $r_y=\frac{\delta^2_{G_B}}{2K_{G_B}}=\frac{2}{K_{G_B}}$ according to \cref{lemma: quantitative inverse function theorem}. 
    To show the radius for $\vec{p}\,^*$ around $\vec{0}$ such that $\vec{y}=H(\vec{p}\,^*)\in \mathcal{B}_{r_y}(\vec{0})$, note that we have
    \begin{equation}
        \|J_H(\vec{p}\,^*)\|\leq L_H, \qquad \forall \vec{p}\,^*\in\mathcal{B}_{R_p}(\vec{0}).
    \end{equation}
    Let $p_{\text{max}}=\frac{r_y}{L_H}=\frac{2}{K_{G_B}L_H}$ then $p_{\text{max}}\leq R_p=\frac{1}{4^{r_{\Gamma}}}$ according to \cref{equation: define L_H}. $\|\vec{y}\|_{\infty}=\|H(\vec{p}\,^*)\|_{\infty}\leq \|J_H(\vec{p}\,^*)\|_{\infty}\|\vec{p}\,^*\|_{\infty}\leq L_H\|\vec{p}\,^*\|_{\infty}$  will be within radius $r_y$ around $\vec{0}$ if $\|\vec{p}\,^*\|_{\infty}<p_{\text{max}}$, thus a solution exists.
\end{proof}

\section{\label{appendix: spacetime code}Spacetime code}
We can map a non-adaptive Clifford circuit into a spacetime code \cite{bacon2017sparse, gottesman2022opportunities, delfosse2023spacetime}, which we introduce briefly in this section. We also note that the spacetime code is highly related to the detector error model that is being widely used in the field of fault tolerance \cite{fowler2014scalable,derks2024designing}. In particular, the measurement result of the generators of the measured stabilizer group in the spacetime code corresponds to the detectors in the detector error model. Here, we used the spacetime code formalism since the learning protocol in the static code case can be carried out here directly. For simplicity, we omit the superscript $^{(\text{ST})}$ here for operators in the spacetime code.

Given a circuit with $n$ qubits and $T$ time steps, the spacetime code constructed from the Clifford circuit is a subsystem stabilizer code of $n(T+1)$ qubits, where we put layers of $n$ qubits in between every pair of neighboring circuit layers, as well as a layer before and a layer after the circuit. We assign a temporal label $t=0.5,1.5,..., T+0.5$ for each layer of spacetime qubits. We borrow the notation (change from $\eta$ to $\kappa$) in \cite{bacon2017sparse,delfosse2023spacetime} and denote $\kappa_{t-0.5}(O)\in\mathcal{P}_{n(T+1)}$ as the Pauli operator in the spacetime code by placing Pauli $P\in\mathcal{P}_n$ at $t$th layer of the spacetime code tensor product with $I_n\in\mathcal{P}_n$ on all other layers. For any spacetime operator $O\in\mathcal{P}_{n(T+1)}$, $O_{t-0.5}\in\mathcal{P}_n$ denotes the component of $O$ at the $t$th layer of the spacetime code. Moreover, let $t\in [T], t\leq t'\leq T+1$,, and we denote the unitary gate at layer $t$ as $U_{t}$, the product of gates from layer $t$ to $t'-1$ as $U_{t,t'}$, and let $U_{t,t'}=I$ if $t'\leq t$.

The goal of the construction is that circuit-level Pauli errors can be considered as errors acting on these spacetime qubits such that both the syndrome and logical effect of the error can be captured in the static spacetime code. As a subsystem code, the structure of the conventional spacetime code is typically defined via the gauge group $\mathcal{G}$ of the spacetime code \cite{bacon2017sparse,gottesman2022opportunities}. It is defined as the set of Pauli errors in the spacetime code for which the corresponding errors in the circuit do not trigger any detection events and are logically trivial. In the conventional case, when one wants to preserve all logical operators of the input state stabilizer code, or simply the input code, $\mathcal{G}$ can be constructed by adding the following Pauli operators as generators:
\begin{enumerate}
    \item At the input layer, include the input code stabilizer generators as $\kappa_{0.5}(S^{(\text{ini})}_i)$ in $\mathcal{G}^{\text{(ST)}}$.
    \item For each measurement $M^{(\text{circ})}$ in the circuit at layer $t=t(M^{(\text{circ})})$, include $\kappa_{t+0.5}(M^{(\text{circ})})$.
    \item For each layer $t=1,...,T$, add $\kappa_{t-0.5}(P_i)\kappa_{t+0.5}( U_{t,t+1}P_iU^{\dagger}_{t,t+1})$, where $P_i$ runs over a generating set of $N(\mathcal{M}^{(\text{circ})}_t)$, i.e., the normalizer of the group of the measured operator $\mathcal{M}_t$ at layer $t$. For example, if there is no measurement at layer $t$, then we let $P_i$ be $X_1,Z_1,...,X_n,Z_n$, and if there is a $Y_2Z_3$ measurement at layer $l$, $P_i$ runs over $\{X_1,Z_1, Y_2,Z_3, X_2X_3, X_4,Z_4,...,Z_n\}$.
\end{enumerate}

As in \cref{section: logical error rate and fidelity} of the main text, we are only interested in the logical errors affecting the circuit's logical measurement. This motivates us to extend the gauge group introduced above. We will first construct the measured stabilizer group $\mathcal{M}$ and the measured logical group $\mathcal{L}'$ (\cref{equation: measured spacetime logical subgroup}) using the output of \cref{algorithm: spacetime stabilizer}. Then concretely, the extended gauge group can be obtained by replacing operators in step 1 with operators of the form $\kappa_{0.5}(O)$ where $O$ goes over a generating set of $N(\mathcal{L}'^{(\text{ini})})$. Here we define $\mathcal{L}'^{(\text{ini})}=\{L'_{0.5}\mid L'\in\mathcal{L}'\}$ as the subgroup of the logical group of the input code associated with the logical measurement of the circuit.

We now introduce \cref{algorithm: spacetime stabilizer} below. The algorithm keeps track of the stabilizer group, destabilizer group, and the logical group at each layer and records the relevant measurement indices whenever a stabilizer or logical operator is measured. These records, in the form of two matrices $\mathcal{O}^{\perp}$ and $\mathcal{K}^L$ are then used to construct the measured stabilizer and logical group of the spacetime code. The stabilizer group $\mathcal{S}$ of the spacetime code is the center of the conventional gauge group $\mathcal{G}$, but in general, not all elements in $\mathcal{S}$ are measured, and the measured stabilizer group $\mathcal{M}$ is a subgroup of $\mathcal{S}$. Similarly, for a Clifford circuit, the logical measurements only correspond to a subset of the initial logical operators, and therefore, the measured logical operators in the spacetime code are only a subset of the normalizer of the conventional gauge group. We use the following notation in the algorithm below: For a binary vector $\vec a\in \mathbb{Z}_2^m$, we use the notation $S^{\vec{a}}=[S^{a_1}, S^{a_2}, ... ,S^{a_m}]$, and $\mathcal{S}\gets\mathcal{S}\odot S'^{\vec{a}}$ represents the process of replacing the list of operators $\mathcal{S}=[S_1,...,S_m]$ by $\mathcal{S}\odot S'^{\vec{a}}=[S_1S'^{a_1},S_2S'^{a_2},...,S_mS'^{a_m}]$.

\begin{algorithm}[H]
    \caption{Measured stabilizer generators and logical generators for the spacetime code\label{algorithm: spacetime stabilizer}}
    \KwIn{
    \begin{enumerate}
        \item The Clifford circuit with $m$ measurements $\{M^{(\text{circ})}_1,...,M^{(\text{circ})}_m\}$ with input state encoded in a [[n,k]] stabilizer code.
        \item Initial stabilizer: $\mathcal{S}=[S^{(\text{ini})}_1,...,S^{(\text{ini})}_s]$ and initial destabilizer: $\mathcal{D}=[D^{(\text{ini})}_1,...,D^{(\text{ini})}_s]$.
        \item Initial $X$ logical operators: $\overline{\mathcal{X}}=[\overline{X}^{(\text{ini})}_1,...,\overline{X}^{(\text{ini})}_k]$ and initial $Z$ logical operators: $\overline{\mathcal{Z}}=[\overline{Z}^{(\text{ini})}_1,...,\overline{Z}^{(\text{ini})}_k]$.
    \end{enumerate}
    }
    \KwOut{
    \begin{enumerate}
        \item $\mathcal{O}^{\perp}$, parity checks of the output bitstrings.
        \item $\mathcal{K}^L$, sets of operator indices of measured logical operators.
    \end{enumerate}
    }
    Initialize $\vec{c}_s,\vec{c}_d\in\mathbb{Z}_2^s$,  $\vec{c}_{\overline{x}},\vec{c}_{\overline{z}}\in\mathbb{Z}_2^k$. Let $\{\vec e_i\}^{s+m}_i$ be the set of unit vectors in $\mathbb Z_2^{s+m}$.\\
    Initialize two sets $\mathcal{O}^{\perp}=\{\}$ and $\mathcal{K}^L=\{\}$.\\
    \For{$i= 1,..,s$}{Initialize $\vec v_s^{(i)}=\vec e_i \in \mathbb{Z}_2^{s+m}$.}
    \For{$i= 1,..,k$}{Initialize $\vec v_{\overline{x}}^{(i)}=\vec v_{\overline{z}}^{(i)}=\vec 0 \in \mathbb{Z}_2^{s+m}$.}

    \For{$r=1,...,m$}
    {
        Propagate $M^{(\text{circ})}_r$ to the beginning of the circuit as $M_*$.\\
        Let $\vec{c}_s=(\langle M_*, \mathcal{D}[i]\rangle)^s_{i=1}$, $\vec{c}_d=(\langle M_*, \mathcal{S}[i]\rangle)^s_{i=1}$, $\vec{c}_{\overline{x}}=(\langle M_*, \overline{\mathcal{Z}}[i]\rangle)^k_{i=1}$, $\vec{c}_{\overline{z}}=(\langle M_*, \overline{\mathcal{X}}[i]\rangle)^k_{i=1}$.\\
        
        \If{$\vec{c}_d\neq \vec{0}$ (destabilizer measurement)}
        {
            Let $i^*:=\min\{i\in [s]:(c_d)_i = 1\}$ and $D_*=S_{i^*}$.\\
            Update tableau $\mathcal{S}\gets\mathcal{S}\odot D_*^{\vec{c}_d}$, $\mathcal{D}\gets\mathcal{D}\odot D_*^{\vec{c}_s}$, $\overline{\mathcal{X}}\gets\overline{\mathcal{X}}\odot D_*^{\vec{c}_{\overline{z}}}$, $\overline{\mathcal{Z}}\gets\overline{\mathcal{Z}}\odot D_*^{\vec{c}_{\overline{x}}}$.\\
            For all $i=1,...,s$, $j=1,...,k$, update operation vectors $\vec v^{(i)}_s\mathrel{\texttt{+=}}(c_d)_i\cdot \vec v^{(i^*)}_s$, $\vec v^{(j)}_{\overline{x}}\mathrel{\texttt{+=}}(c_{\overline{z}})_j\cdot \vec v^{(i^*)}_s$, $\vec v^{(j)}_{\overline{z}}\mathrel{\texttt{+=}}(c_{\overline{x}})_j\cdot \vec v^{(i^*)}_s$\\
            Delete the $i^*$th element of $\mathcal{S}$ and $\mathcal{D}$, append $M_*$, $D_*$ to $\mathcal{S}$ and $\mathcal{D}$ respectively.\\
            Update $\vec v_s^{(i)}\gets\vec v_s^{(i+1)}$ for $i=i^*,...s-1$. Let $\vec v_s^{(s)}=\vec e_{s+r}$.\\
        }
        \If{$\vec{c}_d=\vec{0}$ and $\vec{c}_{\overline{x}}\neq\vec{0}$ (or $\overline{X}\leftrightarrow \overline{Z}$ if $\vec{c}_{\overline{x}}=\vec{0}$ and $\vec{c}_{\overline{z}}\neq\vec{0}$) (logical measurement)} 
        {
            Let $i^*:=\min\{i\in [k]:(c_{\overline{x}})_i = 1\}$ and $D_*=\overline{Z}_{i^*}$.\\
            Update tableau $\mathcal{D}\gets\mathcal{D}\odot D_*^{\vec{c}_s}$, $\overline{\mathcal{X}}\gets\overline{\mathcal{X}}\odot D_*^{\vec{c}_{\overline{z}}}$, $\overline{\mathcal{Z}}\gets\overline{\mathcal{Z}}\odot D_*^{\vec{c}_{\overline{x}}}$.\\
            For all $j=1,...,k$, update operation vectors $\vec v^{(j)}_{\overline{x}}\mathrel{\texttt{+=}} (c_{\overline{z}})_j\cdot \vec v^{(i^*)}_{\overline{x}}$, $\vec v^{(j)}_{\overline{z}}\mathrel{\texttt{+=}}(c_{\overline{x}})_j\cdot \vec v^{(i^*)}_{\overline{x}}$\\
            Delete the $i^*$th element of $\overline{\mathcal{X}}$ and $\overline{\mathcal{Z}}$, append $M_*$, $D_*$ to $\mathcal{S}$ and $\mathcal{D}$ respectively.\\
            Append $\vec u^L=\vec e_{s+r} + \sum^s_{i} (c_s)_i \vec v^{(i)}_s + \sum^k_{i} [(c_{\overline{x}})_i \vec v^{(i)}_{\overline{x}} + (c_{\overline{z}})_i \vec v^{(i)}_{\overline{z}}]$ to $\mathcal{K}^{L}$.\\
            Update $\vec v_{\overline{x}}^{(i)}\gets\vec v_{\overline{x}}^{(i+1)},\vec v_{\overline{z}}^{(i)}\gets\vec v_{\overline{z}}^{(i+1)}$ for $i=i^*,...k-1$. Let $\vec v_s^{(s)}=\vec e_{s+r}$.\\
            $s\gets s+1$, $k\gets k-1$.
        }
        \If{$\vec{c}_d=\vec{c}_{\overline{x}}=\vec{c}_{\overline{z}}=\vec{0}$ (stabilizer measurement)}
        {
            Let $i^*:=\min\{i\in [s]:(c_s)_i = 1\}$ and $D_*=D_{i^*}$.\\
            Update tableau $\mathcal{D}\gets\mathcal{D}\odot D_*^{\vec{c}_s}$.\\
            Delete the $i^*$th element of $\mathcal{S}$ and $\mathcal{D}$, append $M'$, $D_*$ to $\mathcal{S}$ and $\mathcal{D}$ respectively.\\
            Append $\vec u=\vec e_{s+r} + \sum^s_{i} (c_s)_i \vec v^{(i)}_s$ to $\mathcal{O}^{\perp}$.\\
            Update $\vec v_s^{(i)}\gets\vec v_s^{(i+1)}$ for $i=i^*,...s-1$. Let $\vec v_s^{(s)}=\vec e_{s+r}$.
        }
        
    }
\Return{$\mathcal{O}^{\perp}$, $\mathcal{K}^L$}
\end{algorithm}
\cref{algorithm: spacetime stabilizer} is an extended version of Algorithm 1 in Ref.~\cite{delfosse2023spacetime}, as we also included the necessary information to construct the spacetime logical operators. Moreover, this also improves the classical computation cost from $O(n^3)$ to $O(n^2)$ by keeping track of destabilizers and calculating commutation instead of doing Gaussian elimination.

A key operation on spacetime Pauli operators is called (back-)cumulant \cite{delfosse2023spacetime} (spackle \cite{bacon2017sparse}), which is used to construct spacetime stabilizer and logical generators. The cumulant and back-cumulant of a spacetime operator $F$ denoted by $\overrightarrow{F}$ and $\overleftarrow{F}$ are defined by
\begin{align}
    (\overrightarrow{F})_{l-0.5} &= \prod_{l'=1}^{l'<l}U_{l',l}F_{l'-0.5}U^{\dagger}_{l',l},\\
    (\overleftarrow{F})_{l-0.5} &= \prod_{l'=T+1}^{l'>l}U^{\dagger}_{l,l'}F_{l'-0.5}U_{l,l'}.
\end{align}

Consider a Clifford circuit with initial stabilzer generators $\mathcal{S}=\{S^{(\text{ini})}_1,...,S^{(\text{ini})}_s\}$, destabilizer: $\mathcal{D}=\{D^{(\text{ini})}_1,...,D^{(\text{ini})}_s\}$, $X$ logical operators: $\overline{\mathcal{X}}=\{\overline{X}^{(\text{ini})}_1,...,\overline{X}^{(\text{ini})}_k\}$ and $Z$ logical operators: $\overline{\mathcal{Z}}=\{\overline{Z}^{(\text{ini})}_1,...,\overline{Z}^{(\text{ini})}_k\}$. We define a tuple of operators of the circuit with length $s+m$:
\begin{equation}\label{equation: B operators}
\vec{E}=
(
S^{(\text{ini})}_1,...,S^{(\text{ini})}_s, M^{(\text{circ})}_1,...,M^{(\text{circ})}_m)
\end{equation}
and define the $i$ th operator in $\vec E$ as $E_i$. We denote $t_j$ as the temporal index of operator $E_j$ in the circuit, and for initial stabilizer and logical operators, $t_j=0$.
The measured spacetime stabilizer generators $M\in\mathcal{M}$ are given by
\begin{equation}
        M(\vec u)=\overleftarrow{\prod_{j=s+1}^{s+m} \kappa_{t_j-0.5}(E_j^{u_j})}, \forall \vec u\in\mathcal{O}^{\perp}.\label{equation: spacetime stabilizer backpropagation}
\end{equation}
The measured spacetime logical generators $L'\in\mathcal{L}'$ are
\begin{equation}\label{equation: spacetime logical backpropagation}
L'(\vec u^L)=\overleftarrow{\prod_{j=s+1}^{s+m}\kappa_{t_j-0.5}(E^{u_j}_j)}, \forall \vec u^L\in \mathcal{K}^L.
\end{equation}
There can also be unmeasured logical operators in the spacetime code. Given any initial logical operator $L^{(\text{ini})}$ that has not been measured, we have the corresponding spacetime logical operator $L=\overrightarrow{\kappa_{0.5}(L^{(\text{ini})})}$.

The measured stabilizer group of the spacetime code is given by
\begin{equation}
    \mathcal{M}=\langle M(\vec u) \rangle_{\vec u\in\mathcal{O}^{\perp}}.
\end{equation}
The measured logical operators $\mathcal{L}'^{\text{(ST)}}$ is the group generated by all the measured stabilizer generators and the relevant spacetime logical generators $\{L'_1,...,L'_{m_L}\}$ in \cref{equation: spacetime logical backpropagation}:
\begin{equation}
    \mathcal{L}'=\langle M_1,...,L'_1,...,L'_{m_L}\rangle.\label{equation: measured spacetime logical subgroup}
\end{equation}

 Below is a summary of the structure of the spacetime code in our work. Given a set of logical operators measured in the circuit, we map them to a generating set of measured spacetime logical operators according to \cref{algorithm: spacetime stabilizer} and \cref{equation: spacetime logical backpropagation}. These operators generate the abelian subgroup of measured spacetime logical operators $\mathcal{L}'$ according to \cref{equation: measured spacetime logical subgroup}. The linear constraints ('detectors') $\vec u \in\mathcal{O}^{\perp}$ of the measurements in the ideal circuits are mapped to the measured spacetime stabilizer group $\mathcal{M}$. The group of dressed logical operators $\mathcal{L}'_{\text{dr}}$ contains all the undetectable operators. It is defined as the normalizer of the measured spacetime stabilizer $\mathcal{L}'_{\text{dr}}=N(\mathcal{M})$. The gauge group $\mathcal{G}'$ for $\mathcal{L}'$ corresponds to the set of errors in the circuit that are undetectable, but also logically trivial with respect to the logical measurements. It is defined as $\mathcal{G}'=N(\mathcal{L}')$, i.e., it consists of operators in the spacetime code that commute with the measured logical group $\mathcal{L}'$, which encompasses the measured stabilizer group $\mathcal{M}$. See diagram in \cref{fig: spacetime_code_commutation}.

We note the two distinct roles/perspectives of Pauli operators here. Pauli operators, on the one hand, are Hermitian operators that can encode the information about the state, e.g., stabilizers of a state or Pauli measurement of a state. On the other hand, it is a unitary operator that acts on the state as an operation/error. The (bare) logical group $\mathcal{L}'$ is the first case where it encodes the logical information being measured, whereas the dressed logical group $\mathcal{L}'_{\text{dr}}$, or more precisely the quotient group $\mathcal{L}'_{\text{dr}}/\mathcal{G}'$, can be interpreted as the logical operation/error on the logical states. Although the construction of measured spacetime logical operators $\mathcal{L}'$ is not necessary in the physical error rate learning algorithm, $\mathcal{L}'$ is needed in the proof of logical error learnability in \cref{theorem: logical error rate learnability} in \cref{appendix: circuit logical learning}.

\begin{figure*}[h]
\centering
    \includegraphics[width=0.95\textwidth]{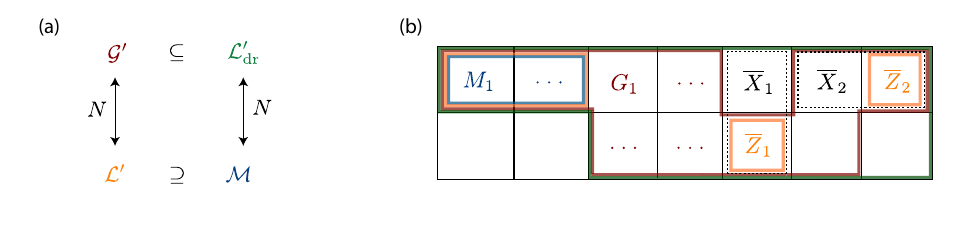}
    \caption{(a) Four subgroups of the spacetime subsystem codes and their relation with the diagram in \cite{wagner2023learning}. $\mathcal{M}$ is the measured spacetime stabilizer group, which is contained in the measured spacetime logical subgroup $\mathcal{L}'$. $\mathcal{L}'_{\text{dr}}=N(\mathcal{M})$ is the dressed logical subgroup, which contains the gauge group $\mathcal{G}'=N(\mathcal{L}')$. $N$ here denotes the normalizer in the Pauli group, and we have $A=N(B) \iff B=N(A)$ for any two Pauli subgroups $A, B$. (b) Example of the commutation between generators of different subgroups in (a) and the inclusion relation of these subgroups (colored boundaries). Each box represents a Pauli operator, and each column represents a pair of anticommuting operators, while operators in the same row commute. Here we consider an $n$-qubit spacetime code of a circuit with two logical qubits, and we measure $\overline{Z}_1$ at an odd time step while measuring $\overline{Z}_2$ at an even time step. The resulting spacetime logical Pauli $\overline{X}_2$ and $\overline{Z}_2$ for the second logical qubit commute. However, we can still define distance as $d^{(\mathcal{L}')}=\min_{e\in\mathcal{L}'_{\text{dr}}\backslash \mathcal{G}'} \text{weight}(e)$, and the inclusion relation in (a) still holds.}
    \label{fig: spacetime_code_commutation}
\end{figure*}

We prove below the equivalent condition of error detectability and logical errors on logical measurements in the circuit, in terms of the commutation relation between errors and measured stabilizer/logical operators in the spacetime code.

\begin{applemma}[\textcite{delfosse2023spacetime} Adjoint of the accumulator]\label{lemma: prop_adjoint}
For any pair of spacetime operators $A,B\in\mathcal{P}_{n(T+1)}$, we have
\begin{equation}
    \langle \overrightarrow{A},B\rangle=\langle A,\overleftarrow{B}\rangle.
\end{equation}
\end{applemma}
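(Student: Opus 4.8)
The plan is to reduce the identity to the case in which $A$ and $B$ are each supported on a single layer, and then to settle that case by invoking Clifford-conjugation invariance of the symplectic form on Paulis.

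First I would record two elementary facts. (i) The scalar commutator is $\mathbb Z_2$-bilinear: for phase-free Paulis $\langle PQ,R\rangle=\langle P,R\rangle\oplus\langle Q,R\rangle$, and likewise in the second slot, since $[[PQ,R]]=[[P,R]]\,[[Q,R]]$. (ii) Both $\overrightarrow{(\cdot)}$ and $\overleftarrow{(\cdot)}$ are group homomorphisms of $\mathcal P_{n(T+1)}$: each acts layer-wise by Clifford conjugation (a Pauli-group automorphism) followed by a product over the source layers, and the phase-free Pauli group at a fixed layer is abelian, so reordering is legitimate and the operations distribute over products. Writing $A=\prod_a\kappa_{a-0.5}(A_{a-0.5})$ and $B=\prod_b\kappa_{b-0.5}(B_{b-0.5})$ as products of their single-layer parts, (i) and (ii) give $\langle\overrightarrow A,B\rangle=\bigoplus_{a,b}\langle\overrightarrow{\kappa_{a-0.5}(A_{a-0.5})},\,\kappa_{b-0.5}(B_{b-0.5})\rangle$ and, symmetrically, $\langle A,\overleftarrow B\rangle=\bigoplus_{a,b}\langle\kappa_{a-0.5}(A_{a-0.5}),\,\overleftarrow{\kappa_{b-0.5}(B_{b-0.5})}\rangle$, so it suffices to match the summands, i.e.\ to prove the lemma for single-layer operators.

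Next I would treat the single-layer case. Let $A$ live at layer $a-0.5$ and $B$ at layer $b-0.5$. From the definition, $\overrightarrow A$ is supported on layers $t-0.5$ with $t\ge a$, with component $U_{a,t}A_{a-0.5}U_{a,t}^{\dagger}$ there; dually $\overleftarrow B$ is supported on layers $t-0.5$ with $t\le b$, with component $U_{t,b}^{\dagger}B_{b-0.5}U_{t,b}$. If $a>b$ the supports never meet — $\overrightarrow A$ shares no layer with $B$, and $A$ shares no layer with $\overleftarrow B$ — so both sides vanish. If $a\le b$ the only common layer of $\overrightarrow A$ and $B$ is $b-0.5$, and the only common layer of $A$ and $\overleftarrow B$ is $a-0.5$, whence
\[
\langle\overrightarrow A,B\rangle=\langle U_{a,b}A_{a-0.5}U_{a,b}^{\dagger},\,B_{b-0.5}\rangle,\qquad
\langle A,\overleftarrow B\rangle=\langle A_{a-0.5},\,U_{a,b}^{\dagger}B_{b-0.5}U_{a,b}\rangle .
\]
Conjugating both arguments of the first commutator by $U_{a,b}^{\dagger}$ — a Clifford, hence commutation-preserving and Pauli-to-Pauli — turns it into the second, finishing the single-layer case (for $a=b$ this is trivial since $U_{a,a}=I$, and the endpoint convention of whether a cumulant includes its own layer is immaterial provided it is applied consistently to $\overrightarrow{(\cdot)}$ and $\overleftarrow{(\cdot)}$). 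Reassembling via the bilinear expansion above yields $\langle\overrightarrow A,B\rangle=\langle A,\overleftarrow B\rangle$.

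I expect the only points requiring care to be the homomorphism property in step one — i.e.\ justifying that the layer-wise reordering of the conjugated products is valid, which is exactly where working in the phase-free Pauli group is used — and keeping the support ranges of $\overrightarrow A$ and $\overleftarrow B$ straight so that the $a\le b$ versus $a>b$ dichotomy is genuinely exhaustive. Neither is a real obstacle, so I anticipate the argument goes through cleanly.
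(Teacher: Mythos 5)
Your proof is correct. One thing to be aware of: the paper does not actually prove this lemma — it imports it verbatim from Delfosse and Paetznick (the cited spacetime-code paper), so there is no in-paper argument to compare against. Your route — decompose $A$ and $B$ into single-layer factors, use that the phase-free Pauli group is abelian so the cumulant maps are homomorphisms and the binary commutator is $\mathbb{Z}_2$-bilinear, and then settle the single-layer case by conjugating both arguments with the Clifford $U_{a,b}^{\dagger}$ — is the standard argument and is sound as written. The case analysis ($a>b$ gives disjoint layer supports on both sides, $a\le b$ gives exactly one common layer on each side) is exhaustive, and your parenthetical about the own-layer endpoint convention is well taken: the paper's main-text definition of the back-cumulant includes the $t'=t$ term while the appendix version is written with strict inequalities, and, as you note, the identity holds under either convention provided $\overrightarrow{(\cdot)}$ and $\overleftarrow{(\cdot)}$ use the same one (mixing conventions would break the equal-layer case). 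So your write-up supplies a clean self-contained proof of a statement the paper only cites.
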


\begin{applemma}[\textcite{delfosse2023spacetime} Effect of faults]\label{lemma: effect of faults}
Given an Clifford circuit with $m$ measurements $\{M_1^{(\text{circ})},...,M_m^{(\text{circ})}\}$ and circuit-level Pauli error $e\in\mathcal{P}_{n(T+1)}$ in the spacetime code, define the output measurement bitstrings in the circuit as $\vec o^{(\text{circ})}\in\mathbb{Z}_2^m$ and its probability distribution with and without error $e$ as $\mathbb{P}^{(e)}(\vec o^{(\text{circ})})$ and $\mathbb{P}(\vec o^{(\text{circ})})$ respectively, we have
\begin{equation}
\mathbb{P}^{(e)}(\vec o^{(\text{circ})})=\mathbb{P}(\vec o^{(\text{circ})}+\vec f^{(\text{circ})}(e)),
\end{equation}
where $\vec f^{(\text{circ})}(e)=(f^{(\text{circ})}(e)_1,\ldots,f^{(\text{circ})}(e)_m)\in\mathbb{Z}_2^{m}$, $f^{(\text{circ})}(e)_j=\left\langle\overrightarrow{e}_{t(M^{(\text{circ})}_j)-0.5},\,M^{(\text{circ})}_j\right\rangle$, and we denote $t(M_j^{(\text{circ})})$ as the temporal label when $M_j^{(\text{circ})}$ occurs in the circuit.
\end{applemma}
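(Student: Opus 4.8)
The plan is to prove this by propagating the Pauli fault forward through the circuit in the Heisenberg picture and recording its effect on each projective measurement. First I would reduce to a single-layer fault: writing $e=\prod_{t}\kappa_{t-0.5}(e_{t-0.5})$ and inserting the single-layer faults $\kappa_{t-0.5}(e_{t-0.5})$ one at a time, it suffices to prove the statement for a fault supported on one spacetime layer, because the cumulant map $F\mapsto\overrightarrow{F}$ is a group homomorphism and the $\mathbb{Z}_2$-valued commutator $\langle\cdot,\cdot\rangle$ is bilinear, so the general flip vector is the sum of the single-layer contributions and $\overrightarrow{e}_{t-0.5}$ is the product of the single-layer cumulant components.

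For a single fault $Q:=e_{t_0-0.5}$ inserted immediately before circuit layer $t_0$, I would push $Q$ rightward through the circuit, layer by layer. Passing a Clifford layer $U_s$ replaces $Q$ by its conjugate, so that just before any layer $t$ the accumulated operator is exactly the cumulant component $(\overrightarrow{\kappa_{t_0-0.5}(Q)})_{t-0.5}$ by the definition of $\overrightarrow{\cdot}$ (and it is the identity before layer $t_0$). Passing a projective Pauli measurement of $M^{(\text{circ})}_j$ is handled by the elementary identity $\Pi^{M}_{b}\,R=R\,\Pi^{M}_{b\oplus\langle M,R\rangle}$ for the branch projectors $\Pi^{M}_{b}=\tfrac12\bigl(I+(-1)^{b}M\bigr)$ and any Pauli $R$: applied to the unnormalized branch operators, this shows that inserting $R$ before measuring $M$ is equivalent to relabeling that one outcome $b\mapsto b\oplus\langle M,R\rangle$ while leaving $R$ unchanged and free to continue propagating to the right of the projector. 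Iterating over all measurements in temporal order, and crucially keeping the entire \emph{joint} collection of unnormalized post-selection branch operators rather than marginals, one finds that the faulty circuit's branch with recorded outcomes $\vec o^{(\text{circ})}$ equals the fault-free branch with outcomes $\vec o^{(\text{circ})}+\vec f^{(\text{circ})}(e)$, conjugated by the fully propagated Pauli, where the flip on measurement $j$ is $\langle(\overrightarrow{\kappa_{t_0-0.5}(Q)})_{t(M^{(\text{circ})}_j)-0.5},M^{(\text{circ})}_j\rangle$. Summing over $t_0$ gives $f^{(\text{circ})}(e)_j=\langle\overrightarrow{e}_{t(M^{(\text{circ})}_j)-0.5},M^{(\text{circ})}_j\rangle$. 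Since conjugation through Cliffords and the outcome relabelings are deterministic, $\vec f^{(\text{circ})}(e)$ depends only on $e$ and the non-adaptive circuit, not on the random outcomes; taking the total trace of the final branch operators and using that Paulis are unitary (so the conjugation drops out under the trace) yields $\mathbb{P}^{(e)}(\vec o^{(\text{circ})})=\mathbb{P}(\vec o^{(\text{circ})}+\vec f^{(\text{circ})}(e))$.

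The hard part will be carrying out the measurement-commutation step rigorously at the level of the full joint distribution: one must verify that pushing a Pauli past a projector only relabels a single bit without disturbing or correlating the other branches, maintain all unnormalized branch operators simultaneously throughout the induction, and be careful with the temporal-index conventions (which spacetime layer a fault versus a measurement belongs to, measurements sharing a layer but acting on disjoint qubits, and the endpoint layers $t=0.5$ and $t=T+0.5$). Once the commutation identity and the cumulant bookkeeping are aligned, the remainder is routine; alternatively one may simply cite the corresponding statement of \textcite{delfosse2023spacetime}, where it is established.
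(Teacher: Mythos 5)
Your proposal is sound, but note that the paper does not actually prove this lemma: it is imported verbatim from \textcite{delfosse2023spacetime} (as is \cref{lemma: prop_adjoint}) and used as a black box in the proofs of \cref{lemma: Equivalence of syndrome in circuit and spacetime code,lemma: Equivalence of logical error in the spacetime code}. What you supply is essentially the standard argument behind the cited result: reduce to a single-layer fault (legitimate, since the binary commutator is bilinear and any anticommutation phases picked up when faults from different layers are merged, or when a Pauli is conjugated through a Clifford, are global phases that affect neither the branch projectors nor the conjugation action), propagate the fault forward through unitaries by conjugation so that the accumulated operator just before layer $t$ is exactly $(\overrightarrow{e})_{t-0.5}$, and push it through each Pauli measurement with the identity $\Pi^{M}_{b}R=R\,\Pi^{M}_{b\oplus\langle M,R\rangle}$, which relabels one outcome bit and leaves the Pauli intact; tracing the unnormalized branch operators then kills the residual conjugation and yields the stated shift of the joint outcome distribution, with determinism of $\vec f^{(\text{circ})}(e)$ guaranteed by non-adaptivity. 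What your route buys is a self-contained proof rather than a citation; what it costs is exactly the bookkeeping you flag, and I would add one concrete convention to nail down: a fault inserted at the same layer as a measurement (spacetime position $t(M_j)-0.5$) must be included in the cumulant component used for the flip, i.e., you need the convention $U_{t,t'}=I$ for $t'\leq t$ so that $(\overrightarrow{e})_{t(M_j)-0.5}$ contains $e_{t(M_j)-0.5}$ itself, and measurement layers must act as identity in the propagation. With those conventions fixed, your induction over measurements in temporal order goes through and reproduces the cited statement.
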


Let the initial stabilizer generator values be $\vec{o}^{(\text{ini})}=(o^{(\text{ini})}_1,...,o^{(\text{ini})}_s)$ and let the measurement outcome of the $m$ measurements $\{M^{(\text{circ})}_1,...,M^{(\text{circ})}_m\}$ be $\vec{o}^{(\text{circ})}=(o^{(\text{circ})}_1,...,o^{(\text{circ})}_m)$. For a parity check $\vec u\in\mathcal{O}^{\perp}$, let $\vec u^{(\text{ini})}=(u_1,...,u_s)$, and $\vec u^{(\text{circ})}=(u_{s+1},...,u_{s+m})$. Similarly, for $\vec u^L\in\mathcal{K}^L$, let $\vec u^{L(\text{circ})}=(u^L_{s+1},...,u^L_{s+m})$

\begin{applemma}[Equivalence of syndrome in circuit and spacetime code]\label{lemma: Equivalence of syndrome in circuit and spacetime code}
    For any circuit-level Pauli error corresponding to $e\in\mathcal{P}_{n(T+1)}$ in the spacetime code in a Clifford circuit, its syndrome bitstring $\vec s(e)$ corresponding to the parity check $\mathcal{O}^{\perp}$ of the outcome measurements in the circuit is equal to the scalar commutator with the measured stabilizer generators in the spacetime code $(\langle e,M(\vec u) \rangle)_{\vec u \in \mathcal{O}^{\perp}}$.
\end{applemma}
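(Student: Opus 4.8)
This lemma is the precise link between the circuit-picture syndrome---the flips of the parity checks $\vec u\in\mathcal{O}^{\perp}$ on the measurement record---and its code-picture counterpart, the commutation of the spacetime error $e$ with $M(\vec u)$. The plan is to prove it by combining the effect-of-faults formula (\cref{lemma: effect of faults}), the adjoint relation between cumulant and back-cumulant (\cref{lemma: prop_adjoint}), and the $\mathbb{Z}_2$-bilinearity of the binary commutator; the conceptual content is entirely in those two cited lemmas, so the argument is short.

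First I would fix $\vec u\in\mathcal{O}^{\perp}$. By construction of $\mathcal{O}^{\perp}$, every noise-free record $\vec o=(\vec o^{(\text{ini})},\vec o^{(\text{circ})})\in\mathbb{Z}_2^{s+m}$ satisfies $\vec u\cdot\vec o=0$, so the syndrome bit attached to $\vec u$ is simply $\vec u$ dotted with the \emph{actual}, noisy record. By \cref{lemma: effect of faults}, an error $e\in\mathcal{P}_{n(T+1)}$ shifts the distribution of the circuit-measurement outcomes $\vec o^{(\text{circ})}$ by the fixed vector $\vec f^{(\text{circ})}(e)$, while the fictitious perfect initial-stabilizer measurements sit at $t=0$, before any spacetime error layer ($t\ge 0.5$), and are therefore unchanged. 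Hence $s(e)_{\vec u}=\vec u\cdot\bigl(\vec o+(\vec 0,\vec f^{(\text{circ})}(e))\bigr)=\sum_{i=s+1}^{s+m}u_i\,f^{(\text{circ})}(e)_{i-s}$, which is independent of the underlying noise-free record---and insensitive to a global flip of the initial syndrome values, since those only change $\vec o^{(\text{ini})}$---so it is well defined.

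Next I would rewrite each summand. By \cref{lemma: effect of faults}, $f^{(\text{circ})}(e)_{i-s}=\langle\overrightarrow{e}_{t_{E_i}-0.5},E_i\rangle$; since $\kappa_{t_{E_i}-0.5}(E_i)$ is supported on the single layer $t_{E_i}-0.5$ and the binary commutator of two spacetime operators is the $\bmod\,2$ sum of their single-layer binary commutators, this equals $\langle\overrightarrow{e},\kappa_{t_{E_i}-0.5}(E_i)\rangle$. Absorbing $u_i\in\{0,1\}$ into the exponent (so that $u_i=0$ gives $E_i^{0}=I$ with vanishing commutator) and using additivity of $\langle A,\cdot\,\rangle$ in its second argument,
\begin{equation}
s(e)_{\vec u}=\sum_{i=s+1}^{s+m}\bigl\langle\overrightarrow{e},\kappa_{t_{E_i}-0.5}(E_i^{u_i})\bigr\rangle=\Bigl\langle\overrightarrow{e},\ \prod_{i=s+1}^{s+m}\kappa_{t_{E_i}-0.5}(E_i^{u_i})\Bigr\rangle .
\end{equation}
Applying \cref{lemma: prop_adjoint} to move the accumulator from the first argument onto the second, and recognizing the definition \cref{equation: spacetime stabilizer backpropagation} of $M(\vec u)$,
\begin{equation}
s(e)_{\vec u}=\Bigl\langle e,\ \overleftarrow{\prod_{i=s+1}^{s+m}\kappa_{t_{E_i}-0.5}(E_i^{u_i})}\Bigr\rangle=\langle e,M(\vec u)\rangle .
\end{equation}
Letting $\vec u$ range over $\mathcal{O}^{\perp}$ gives the asserted equality of bitstrings.

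The only real obstacle is bookkeeping: keeping straight the split of $\vec u\in\mathbb{Z}_2^{s+m}$ into an initial-stabilizer block, which does not enter $M(\vec u)$, and a circuit-measurement block; and verifying carefully that the binary commutator of Pauli operators is bilinear over $\mathbb{Z}_2$, so that a sum of single-layer commutators folds into one commutator with a product operator and the accumulator identity can be applied in a single step. Given \cref{lemma: effect of faults} and \cref{lemma: prop_adjoint}, there is no difficulty beyond this.
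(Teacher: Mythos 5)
Your proof is correct and follows essentially the same route as the paper's: it uses \cref{lemma: effect of faults} to identify the syndrome bit with $\vec u\cdot\vec f(e)$, folds the sum of single-layer commutators into $\langle\overrightarrow{e},\prod_i\kappa_{t_{E_i}-0.5}(E_i^{u_i})\rangle$, and then applies \cref{lemma: prop_adjoint} to recover $\langle e,M(\vec u)\rangle$ via \cref{equation: spacetime stabilizer backpropagation}. The only difference is that you make a few bookkeeping points (the unaffected initial-stabilizer block, $\mathbb{Z}_2$-bilinearity of the commutator) more explicit, which the paper treats implicitly.
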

\begin{proof}
    The syndrome bitstring, by construction, is deterministically $\vec{0}$ for an noiseless circuit, i.e. $\mathbb{P}(\vec o)\neq 0$ if and only if $\vec o\in\text{ker}\,\mathcal{O}^{\perp}$. With error $e$, according to \cref{lemma: effect of faults}, $\mathbb{P}^{(e)}(\vec o^{(\text{circ})})=\mathbb{P}(\vec o^{(\text{circ})}+\vec f^{(\text{circ})}(e))$. Equivalently, we have
    \begin{equation}
        \mathbb{P}^{(e)}(\vec o)=\mathbb{P}(\vec o+\vec f(e)),\label{equation: full outcome effect}
    \end{equation}
    where $\vec f(e)=$ is obtained by prepending $\vec f^{(\text{circ})}(e)$ with $\vec 0\in\mathbb{Z}^s_2$. Therefore, $\vec o$ must satisfy $\vec o+\vec f(e)\in\text{ker}\,\mathcal{O}^{\perp}$, which means $\vec s(e)=(\vec{u}\cdot \vec o)_{\vec u \in \mathcal{O}^{\perp}}=(\vec{u}\cdot \vec f(e))_{\vec u \in \mathcal{O}^{\perp}}$. For every $\vec u \in\mathcal{O}^{\perp}$, we also have
    \begin{align}
        \vec{u}\cdot \vec f(e)=&\vec u^{(\text{circ})}\cdot\vec f^{(\text{circ})}(e)\\
        =&\sum^m_{i=1}u^{(\text{circ})}_i\left\langle \overrightarrow{e}_{t(M^{(\text{circ})}_i)-0.5},M^{(\text{circ})}_i\right\rangle (\text{from }\cref{lemma: effect of faults})\\
        =&\sum^{s+m}_{i=s}\left\langle \overrightarrow{e}_{t_i-0.5},E^{u_i}_i\right\rangle\\
        =&\left\langle \overrightarrow{e},\prod^{s+m}_{i=s}\kappa_{t_i-0.5}(E^{u_i}_i)\right\rangle\\
        =&\left\langle e,\overleftarrow{\prod^{s+m}_{i=s}\kappa_{t_i-0.5}(E^{u_i}_i)}\right\rangle (\text{from }\cref{lemma: prop_adjoint})\\
        =&\left\langle e,M(\vec u)\right\rangle.\label{equation: equivalence to commutator}
    \end{align}
\end{proof}

\begin{applemma}[Equivalence of logical error in the spacetime code]\label{lemma: Equivalence of logical error in the spacetime code}
    Denote as $\vec o_L\in\mathbb{Z}_2^{m_L}$ the logical measurement results after syndrome measurement and correction in a Clifford circuit with some input codeword.
    Consider any undetectable circuit-level Pauli error corresponding to $e\in\mathcal{P}_{n(T+1)}$ in the spacetime code due to the circuit error and subsequent correction from some decoder. We have the logical outcome distribution
    \begin{equation}
        \mathbb{P}_L^{(e)}(\vec o_L)=\mathbb{P}_L(\vec o_L + \vec f_L(e)),
    \end{equation}
    where $\mathbb{P}_L$ is the ideal logical outcome distribution and we refer to $\vec f_L(e)\in\mathbb{Z}_2^{m_L}$ as the logical error on the logical measurements. Moreover, $\vec f_L(e)$ is equal to the binary commutation of $e$ with the measured logical operators $\vec f_L(e) = (\langle e,L'(\vec u^L)\rangle)_{\vec u^L\in\mathcal{K}^L}$.
\end{applemma}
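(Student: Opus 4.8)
The plan is to mirror the proof of \cref{lemma: Equivalence of syndrome in circuit and spacetime code}, now tracking the linear functionals that compute the logical measurement outcomes instead of the parity checks. First I would recall that, by construction in \cref{algorithm: spacetime stabilizer}, each ideal logical outcome is a fixed $\mathbb{Z}_2$-linear function of the full measurement record: $o^L_i = \vec u^L_i \cdot \vec o$ for $\vec u^L_i \in \mathcal{K}^L$, where $\vec o = (\vec o^{(\text{ini})},\vec o^{(\text{circ})}) \in \mathbb{Z}_2^{s+m}$. Since we take $e$ to already incorporate the decoder's Pauli correction (and $e$ is undetectable by hypothesis, so its residual syndrome is trivial and the correction is well defined), the post-correction logical outcome is precisely this functional evaluated on the noisy measurement record.

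Next I would invoke \cref{lemma: effect of faults} in the form of \cref{equation: full outcome effect}, $\mathbb{P}^{(e)}(\vec o) = \mathbb{P}(\vec o + \vec f(e))$, where $\vec f(e)$ vanishes on the $s$ initial-stabilizer positions. Pushing this distribution forward through the linear map $\vec o \mapsto (\vec u^L_i \cdot \vec o)_i$ and changing summation variables $\vec o \mapsto \vec o + \vec f(e)$ gives $\mathbb{P}^{(e)}_L(\vec o_L) = \mathbb{P}_L(\vec o_L + \vec f_L(e))$ with $f_L(e)_i = \vec u^L_i \cdot \vec f(e)$; this is the same routine pushforward calculation used to establish $\vec s(e) = (\vec u \cdot \vec f(e))_{\vec u}$ there.

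Finally, to identify $f_L(e)_i$ with $\langle e, L'(\vec u^L_i)\rangle$, I would reuse the chain of equalities \cref{equation: equivalence to commutator} verbatim: expand $\vec u^L \cdot \vec f(e)$ via \cref{lemma: effect of faults}, collect the layer-wise scalar commutators into the single commutator $\langle \overrightarrow{e}, \prod_j \kappa_{t_j-0.5}(E_j^{u^L_j})\rangle$, and then apply \cref{lemma: prop_adjoint} to move the accumulator onto $e$, which by \cref{equation: spacetime logical backpropagation} equals $\langle e, L'(\vec u^L)\rangle$. Note the components of $\vec u^L$ on the initial stabilizer and logical positions drop out because $\vec f(e)$ is zero there, so only $\vec u^{L(\text{circ})}$ contributes — exactly as in the syndrome case.

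The step I expect to require the most care is the first one: pinning down what ``$\vec o_L$ after syndrome measurement and correction'' means and why folding the decoder's Pauli correction into $e$ is legitimate and input-independent. One must check that a Pauli correction applied at the end acts on the recorded logical outcomes purely as the bit flip $\vec f_L(\text{correction})$, so that the net logical effect of fault-plus-correction is $\vec f_L$ of their product, and that none of this depends on the input codeword — the argument closes only because every quantity involved ($\vec f(e)$, the functionals $\vec u^L_i$, the commutators) is fixed by the circuit and the error, not by the state. Everything after that is bookkeeping already carried out in the proof of \cref{lemma: Equivalence of syndrome in circuit and spacetime code}.
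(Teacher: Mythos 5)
Your proposal is correct and follows essentially the same route as the paper: write each logical outcome as the fixed linear functional $\vec u^L\cdot\vec o$ from \cref{algorithm: spacetime stabilizer}, push the shifted outcome distribution of \cref{lemma: effect of faults} through that map (your change of summation variables is the same computation the paper writes out via a right inverse of $\mathcal{K}^L$ and a sum over its kernel), and identify $\vec u^L\cdot\vec f(e)=\langle e, L'(\vec u^L)\rangle$ by repeating the accumulator-adjoint chain of \cref{equation: equivalence to commutator}. Your extra care about folding the decoder's correction into $e$ and the input-independence of $\vec f(e)$ is consistent with how the paper sets up the lemma, so no gap remains.
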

\begin{proof}
     Given the corrected circuit outcome $\vec o$, the logical measurement result is $\vec o_L=(\vec u^L \cdot \vec o)_{\vec u^L\in\mathcal{K}^L}$. If the circuit has $m_L$ independent logical measurements, then $\mathcal{K}^L$ has full row rank with $\rank \mathcal{K}^L=m_L$. Therefore, the logical output distribution is
    \begin{align}
        \mathbb{P}_L^{(e)}(\vec o_L)&=\sum_{\vec o'\in\mathrm{ker}{\mathcal{K}^{L}}}\mathbb{P}^{(e)}\left((\mathcal{K}^{L})^{+}\vec o_L+\vec o\,'\right)\\
        &=\sum_{\vec o'\in\mathrm{ker}{\mathcal{K}^{L}}}\mathbb{P}\left((\mathcal{K}^{L})^{+}\vec o_L+\vec o\,' + \vec f(e)\right)\quad(\text{from }\cref{equation: full outcome effect})\\
        &=\sum_{\vec o'\in\mathrm{ker}{\mathcal{K}^{L}}}\mathbb{P}\left((\mathcal{K}^{L})^{+}(\vec o_L+ \mathcal{K}^L\vec f(e))+\vec o\,'\right)\\
        &=\mathbb{P}_L(\vec o_L + \vec f_L(e))
    \end{align}
where $(\mathcal{K}^{L})^{+}$ is any right inverse of $\mathcal{K}^{L}$, i.e., $\mathcal{K}^{L}(\mathcal{K}^{L})^{+}=I_{m_L}$ and the logical error $\vec f_L(e)=\mathcal{K}^L \vec f(e)$. Similar to the derivation of \cref{equation: equivalence to commutator}, we have $\vec u^L \cdot \vec f(e)=\langle e, L'(\vec u^L)\rangle,\, \forall \vec u^L\in\mathcal{K}^L$.
\end{proof}

\begin{appcorollary}[Correctness condition for spacetime code]\label{corollary: Correctness condition for spacetime code}
    For any circuit-level Pauli error corresponding to $e\in\mathcal{P}_{n(T+1)}$ in the spacetime code, $e$ is undetectable from the circuit measurements if and only if $e$ commutes with all elements $M\in\mathcal{M}$ in the measured spacetime stabilizer group. An undetectable error $e$ (from decoding and correction) does not affect the distribution of logical measurement results for any input codeword, if and only if $e$ commutes with all elements in the measured spacetime logical group $\mathcal{L}'$.
\end{appcorollary}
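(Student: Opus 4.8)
The plan is to deduce both biconditionals almost immediately from the two preceding lemmas, \cref{lemma: Equivalence of syndrome in circuit and spacetime code} and \cref{lemma: Equivalence of logical error in the spacetime code}, together with the elementary fact that the $\mathbb{Z}_2$-valued Pauli commutator is additive, $\langle e, AB\rangle = \langle e, A\rangle + \langle e, B\rangle \pmod 2$, so that an operator commutes with a generating set of a Pauli subgroup if and only if it commutes with the whole subgroup.

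For the detectability claim, \cref{lemma: Equivalence of syndrome in circuit and spacetime code} says the syndrome of $e$ equals the vector $(\langle e, M(\vec u)\rangle)_{\vec u \in \mathcal{O}^\perp}$ of scalar commutators with the generators of $\mathcal{M}$. Hence $e$ is undetectable iff all of these commutators vanish, i.e. iff $e$ commutes with every generator $M(\vec u)$, which by additivity is equivalent to $e$ commuting with all of $\mathcal{M} = \langle M(\vec u)\rangle_{\vec u\in\mathcal{O}^\perp}$.

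For the logical claim, first restrict to undetectable $e$, which is both the hypothesis of \cref{lemma: Equivalence of logical error in the spacetime code} and automatic whenever $e$ commutes with $\mathcal{L}'$ since $\mathcal{M}\le\mathcal{L}'$. That lemma gives $\mathbb{P}_L^{(e)}(\vec o_L) = \mathbb{P}_L(\vec o_L + \vec f_L(e))$ with $\vec f_L(e) = (\langle e, L'(\vec u^L)\rangle)_{\vec u^L\in\mathcal{K}^L}$, and $\vec f_L(e)$ does not depend on the input codeword. If $\vec f_L(e) = \vec 0$ then $\mathbb{P}_L^{(e)} = \mathbb{P}_L$ for every input codeword, so $e$ has no logical effect. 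Conversely, if $\vec f_L(e)\ne\vec 0$, one exhibits an input codeword witnessing a change of the output distribution: since the $\vec u^L$ in $\mathcal{K}^L$ are independent modulo the parity checks $\mathcal{O}^\perp$, a parity $\vec\alpha$ of logical outcomes with $\vec\alpha\cdot\vec f_L(e)=1$ (which forces $\vec\alpha\ne\vec 0$) corresponds to a nontrivial spacetime logical operator $L'(\vec u^L_\alpha)\notin\mathcal{M}$, whose back-propagation to the input layer is a signed product of stabilizer and logical Paulis of the input code; choosing an input codeword that is an eigenstate of that logical observable makes $\vec\alpha\cdot\vec o_L$ deterministic in the noiseless circuit, so $\mathbb{P}_L$ is supported on the coset $\{\vec o_L : \vec\alpha\cdot\vec o_L = 0\}$, which $\vec f_L(e)$ maps onto the complementary coset, and hence $\mathbb{P}_L^{(e)}\ne\mathbb{P}_L$. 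Therefore "$e$ has no logical effect for any input codeword" is equivalent to $\vec f_L(e) = \vec 0$, i.e. $e$ commutes with every $L'(\vec u^L)$; combining with the detectability part and $\mathcal{L}' = \langle\mathcal{M}, L'_1,\dots,L'_{m_L}\rangle$ (\cref{equation: measured spacetime logical subgroup}) and additivity of the commutator, this is exactly $e$ commuting with all of $\mathcal{L}'$.

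The only non-routine point, and what I expect to be the main obstacle, is the converse direction of the logical claim: producing, for each nonzero $\vec f_L(e)$, a concrete admissible input codeword on which the logical distribution actually changes. One must check that such a witnessing codeword exists within the input code, i.e. that the logical observable selected by $\vec f_L(e)$ can be made deterministic by a valid choice of input (or already is deterministic, if it back-propagates to a pure stabilizer of the input code). Everything else is pure bookkeeping: translating between scalar and binary commutators and passing from generators to the full subgroups $\mathcal{M}$ and $\mathcal{L}'$.
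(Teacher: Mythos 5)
Your proposal is correct and follows the same route the paper intends: the corollary is stated without an explicit proof precisely because it is meant to follow from \cref{lemma: Equivalence of syndrome in circuit and spacetime code} and \cref{lemma: Equivalence of logical error in the spacetime code}, exactly as you argue. The only step the paper leaves implicit is the converse of the logical claim, and your witness construction (choosing an input codeword that is an eigenstate of the back-propagated logical observable selected by a parity $\vec\alpha$ with $\vec\alpha\cdot\vec f_L(e)=1$, so the shift by $\vec f_L(e)$ moves the support between cosets) fills that gap correctly.
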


\section{\label{appendix: circuit logical learning}Logical error rate learnability}

In this section, we prove \cref{theorem: logical error rate learnability} in the main text regarding the learnability of logical error rate for a group of logical measurements in the non-adaptive Clifford circuit from syndrome data. Using the transformed eigenvalue basis introduced in the \cref{appendix: log u basis} together with the spacetime code formalism introduced in \cref{appendix: spacetime code}, we give a simpler proof with both sufficient and necessary conditions of the logical error class probability learnability results in the context of fault-tolerant circuits with logical measurements, compared to the proof of Ref.~\textcite{wagner2023learning} for sufficient condition in the code capacity setting. We note that with a circuit decoder, the learnability of the logical class error rate implies the learnability of the logical error rate. We omit the superscript $^{(\text{ST})}$ for operators in the spacetime codes.

\noindent\textbf{\cref{theorem: logical error rate learnability} (Restated)}.     Given a non-adaptive Clifford circuit with a set of logical measurements $\mathcal{M}_L$ under a circuit-level Pauli noise model (satisfies \cref{assumption: open set probability} when mapped to noise model on spacetime code), the logical error class probabilities $\overline{P}$ for $\mathcal{M}_L$ can be learned from the syndrome expectation value of the circuit if and only if the circuit is fault-tolerant (\cref{definition: fault tolerant}) to the noise model for $\mathcal{M}_L$. As a consequence, given a decoder of the circuit, the associated logical error rates $P_{\text{fail}}$ for the set of logical measurements $\mathcal{M}_L$ can be learned from the syndrome data if the circuit is fault-tolerant to the noise model for $\mathcal{M}_L$.
\begin{proof}
We map linear constraints of the measurement in the circuits to the measured spacetime stabilizer group $\mathcal{M}$ and map measured logical operators in the circuit to the group of spacetime logical operators $\mathcal{L}'$. The gauge group $\mathcal{G}'$ is given by $\mathcal{G}'=N(\mathcal{L}')$. We denote as $P_L(e)$ the logical error class probability for the error $e\in\mathcal{P}_{n(T+1)}$ of the spacetime code for $\mathcal{L}'$ as
\begin{align}
    P_L(e)&=\overline{P}(\vec s(e),\vec \ell(e))\\
    &=\sum_{g\in\mathcal{G}'}P(eg)\label{equation: logical error class error appendix}\\
    &=\sum_{e'\in\mathcal{P}_n}P(e')\mathbb{I}_{\mathcal{G}'}(ee')\\
    &=P\ast\mathbb{I}_{\mathcal{G}'}(e)
\end{align}
where $\mathbb{I}_{\mathcal{H}}$ is the indicator function of the group $\mathcal{H}$, taking $\mathbb{I}_{\mathcal{H}}(e)=1$ if $e\in \mathcal{H}$ and $0$ otherwise. The Fourier transform of $P_L$ is therefore:
\begin{align}\label{equation: fourier transform of logical error class}
    \widehat{P_L}(O)=\widehat{P}(O)\,\widehat{\mathbb{I}}_{\mathcal{G}'}(O)=|\mathcal{G}'|\Lambda(O)\,\mathbb{I}_{\mathcal{L}'}(O)
\end{align}
where $\mathcal{L}'$ is the group of "bare" logical operators defined as the group of operators $\mathcal{L}=N(\mathcal{G})$ that commute with the gauge group. As a result, the Fourier transform of the logical error class probability $\widehat{P_L}(O)$ is only non-zero when $O\in\mathcal{L}'$, i.e., it only depends on the Pauli eigenvalue $\Lambda(L'), L\in\mathcal{L}'$ of the measured bare logical group.

Similar to \cref{equation: log equation mu}, we could write the logarithm of the Pauli eigenvalue of elements in the measured logical group $L\in\mathcal{L}'$ in terms of the transformed eigenvalues $\vec \mu$:
\begin{equation}
    \log\vec{\Lambda}^{(\mathcal{\mathcal{L}'})}=D^{(\mathcal{\mathcal{L}'})}\log \vec{\mu}.
    \label{equation: log equation logical}
\end{equation}
where $D^{(\mathcal{\mathcal{L}'})}[L,e]=\langle L,e\rangle$. Here, the Pauli eigenvalue of $\Lambda(M)$ for $M$ in the measured stabilizer code $\mathcal{M}$ of the spacetime code can be calculated from the syndrome expectation value of the circuit using the parity check $\mathcal{O}^{\perp}$ according to \cref{lemma: Equivalence of syndrome in circuit and spacetime code}.

Given the fault-tolerant condition (\cref{definition: fault tolerant}) is satisfied, to prove that the logical error rate is determined by the syndrome measurement, i.e., Pauli eigenvalues of operators in $\mathcal{M}$, one just needs to show all rows of $D^{(\mathcal{\mathcal{L}})}$ is linear dependent on rows of the matrix for the syndrome data $D^{(\mathcal{\mathcal{M}})}$ in \cref{equation: log equation}. Consider a circuit with circuit-level Pauli noise defined by $\mathcal{N}_{\Gamma}$ that is fault-tolerant to the set of logical operators in the circuit in the sense of \cref{definition: fault tolerant}. This fault-tolerant condition, according to \cref{corollary: Correctness condition for spacetime code}, is equivalent to following statement when mapped to the spacetime code: For any undetectable error $e\in\mathcal{E}_{\Gamma}$ in the spacetime code, $e$ commute with all logical operators in $\mathcal{L}'$, and for any two errors $e_1,e_2\in\mathcal{E}_{\Gamma}$ with the same syndrome, then $e_1e_2$ commute with all logical operators in $\mathcal{L}'$, i.e., $e_1$ and $e_2$ commute/anticommute with the same set of operators in $\mathcal{L}'$. Therefore, the syndrome class (\cref{definition: syndrome class}) $\mathcal{C}^{*(\mathcal{M})}$ and $\mathcal{C}^{*(\mathcal{L}')}$ with respect to the measured stabilizer group $\mathcal{M}$ and the measured logical group $\mathcal{L}'$ respectively, are the same. As a result, according to $\cref{lemma: rank D}$ we have
\begin{equation}
    \rank D^{(\mathcal{L})}=\rank D^{(\mathcal{M})}.
\end{equation}
Since $\mathcal{M}\subseteq\mathcal{L}'$, all rows of $D^{(\mathcal{L})}$ are linearly dependent on rows of $D^{(\mathcal{M})}$ and the Pauli eigenvalue of all logical operators $\Lambda(L), \forall L\in\mathcal{L}$ only depends on the measured syndrome expectations $\{\Lambda(M)| \forall M \in \mathcal{M}\}$. Therefore, using the syndrome data, we can recover the logical error classes probability \cref{equation: logical error class error appendix} and also the logical error rates according to \cref{equation: logical error rate} and the equivalence of logical error in the circuit and the spacetime code (\cref{lemma: Equivalence of logical error in the spacetime code}).

To prove the other direction, if (case 1) there exists an error $e$ $\in\mathcal{E}_{\Gamma}$, or (case 2) $e=e_1e_2$ for some $e_1,e_2\in\mathcal{E}_{\Gamma}$, such that $e$ is undetectable from the syndrome but anticommutes with some operator in $\mathcal{L}'$, then the rank of $\mathcal{D}^{(\mathcal{L}')}$ is strictly larger than the rank of $\mathcal{D}^{(\mathcal{M})}$. This is because the new set of non-trivial syndrome error classes (\cref{definition: syndrome class}) with respect to $\mathcal{L}'$ can be obtained by either adding some undetectable error $e$ as a new non-trivial class for case 1, or further partitioning the existing non-trivial syndrome classes (case 2). Therefore, according to \cref{lemma: rank D} we have 
\begin{equation}
    \rank \mathcal{D}^{(\mathcal{L}')} > \rank \mathcal{D}^{(\mathcal{M})}.
\end{equation}
As a result, for an existing transformed eigenvalues $\vec{\mu}$ satisfying the syndrome constraints and resulting in certain Pauli eigenvalues $\vec{\Lambda}^{(\mathcal{L}')}$, there exists a vector $\vec{\epsilon}\in \mathrm{ker}(\mathcal{D}^{(\mathcal{M})})$ such that $\vec{\mu}'=\vec{\mu}+\vec{\epsilon}$ is still valid given the \cref{assumption: open set probability} (see \cref{lemma: valid mu is an openset}) and $\vec{\mu}'$ produces a different set of Pauli eigenvalues $\vec{\Lambda}^{(\mathcal{L}')}$. Since the logical error class probabilities are related to $\vec{\Lambda}^{(\mathcal{L}')}$ by an invertible Walsh-Hadamard transformation as in \cref{equation: fourier transform of logical error class}, $\vec{\mu}'$ and $\vec{\mu}'$ lead to different sets of logical error class probabilities. Therefore, the logical error class probabilities are not learnable.

\end{proof}

Note that in the learnable case, the log of syndrome expectation values and the learnable transformed eigenvalues $\log \vec{\nu}$ are related by full column rank matrix $D'^{(\mathcal{M})}$ (\cref{lemma: rank D}), therefore the logical error rates are also determined by $\log \vec{\nu}$.

\section{\label{appendix: sample complexity} Sample complexity}

\subsection{Sample complexity for learning physical error rates\label{subsection: physical sample complexity}}

Here, we study the sample complexity of learning syndrome class error rate, where we first consider solving for the learnable transformed eigenvalues $\log \vec{\nu}$ in \cref{equation: D matrix reduced} and then estimate the syndrome class error rate using the approximation $P_C \approx-\frac{\log \nu_C}{2}$, as explained in \cref{appendix: physical proof}. Note that $P_C \approx-\frac{\log \nu_C}{2}\approx \frac{1-\nu_C}{2}$ when error rates are low, and $\frac{1-\nu_C}{2}$ in practice is a slightly more accurate estimator. Moreover, $\frac{1-\nu_C}{2}$ equals the probability of an odd number of errors occurring in $C$ for the detector error model, as shown in \cref{equation: detector error rates learnability}. However, for simplicity in error propagation, we will argue based on the approximation $-\frac{\log \nu_C}{2}$.

We first introduce some notation. Given a set of measured stabilizer generators $\{M_i\}^m_i$ for the measured stabilizer group $\mathcal{M}=\langle M_1,\dots, M_m\rangle$, we denote the indices of the stabilizer generators violated for each syndrome class $C\in\mathcal{C}$ as $J(C)\subseteq [m]$. We define the variables to solve $X_{C}=\log\nu_{C}$ as the learnable transformed eigenvalues, and for any $R\subseteq [m]$, let $Y_R=\log\Lambda(\prod_{i\in R}M_i)$ be the logarithm of the Pauli eigenvalues of the stabilizer elements, i.e., syndrome expectation values.

\begin{applemma}\label{lemma: recursive solving solves the linear equation}
    \cref{equation: recursive solving} is the solution of the system of linear equations \cref{equation: D matrix reduced}, where $\mathcal{M}'$ is the subset of measured stabilizer group $\mathcal{M}'=\bigcup_{C\in\mathcal{C}}\{\prod_{i\in J'}M_i|J'\subseteq J(C)\}$.
\end{applemma}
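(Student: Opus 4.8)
The plan is to verify the closed form \cref{equation: recursive solving} by direct substitution into the linear system \cref{equation: D matrix reduced}, reducing the whole claim to a single binomial identity, and then to record well-foundedness and uniqueness as bookkeeping.

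First I would restate the system in the $X_C$/$Y_R$ notation. By \cref{equation: eigenvalues and mu bar} (equivalently \cref{equation: syndrome matrix} after collapsing columns of errors sharing a syndrome) together with additivity of the binary commutator, $\langle\prod_{i\in R}M_i,\,e\rangle\equiv|R\cap J(C)|\pmod 2$ whenever $e$ lies in the syndrome class $C$; hence for every $R$ with $\prod_{i\in R}M_i\in\mathcal M'$,
\[
Y_R \;=\; \sum_{C\in\mathcal C^*}\bigl(|R\cap J(C)|\bmod 2\bigr)\,X_C ,\qquad X_C=\log\nu_C .
\]
These are exactly the rows of \cref{equation: D matrix reduced} indexed by $\mathcal M'=\bigcup_C\{\prod_{i\in J'}M_i : J'\subseteq J(C)\}$, and since $Y_\emptyset=\log\Lambda(I)=0$ the empty-set term is harmless.

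The core step is the coefficient computation. Fix $C\in\mathcal C^*$, so that $J(C)\neq\emptyset$ by \cref{definition: syndrome class}, and substitute the display above into $\widetilde X_C:=2^{-(|J(C)|-1)}\sum_{S\subseteq J(C)}(-1)^{|S|-1}Y_S$. Exchanging the order of summation gives $\widetilde X_C=\sum_{C'\in\mathcal C^*}\alpha_{C,C'}X_{C'}$ with $\alpha_{C,C'}=2^{-(|J(C)|-1)}\sum_{S\subseteq J(C)}(-1)^{|S|-1}\bigl(|S\cap J(C')|\bmod 2\bigr)$. Writing $A=J(C)$, $B=J(C')$ and decomposing each $S\subseteq A$ as $S=S_1\sqcup S_2$ with $S_1=S\cap A\cap B$ and $S_2=S\cap(A\setminus B)$ (so $|S\cap B|=|S_1|$), the sum over $S$ factors as $\bigl(\sum_{S_2\subseteq A\setminus B}(-1)^{|S_2|}\bigr)\cdot\bigl(-\sum_{S_1\subseteq A\cap B}(-1)^{|S_1|}(|S_1|\bmod 2)\bigr)$. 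The first factor vanishes unless $A\setminus B=\emptyset$, i.e. $A\subseteq B$; when $A\subseteq B$ it equals $1$, one has $A\cap B=A$ which is nonempty, and the second factor equals $\sum_{k\ \mathrm{odd}}\binom{|A|}{k}=2^{|A|-1}$. Hence $\alpha_{C,C'}=[\,J(C)\subseteq J(C')\,]$.

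Finally, since distinct non-trivial syndrome classes have distinct, non-empty index sets $J(C)$, one has $\{C'\in\mathcal C^*: J(C)\subseteq J(C')\}=\{C\}\sqcup\{C': J(C)\subsetneq J(C')\}$, so $\widetilde X_C=X_C+\sum_{C': J(C)\subsetneq J(C')}X_{C'}$, which is exactly \cref{equation: recursive solving} rearranged. This exhibits $(X_C)_C\mapsto(\widetilde X_C)_C$ as unitriangular with respect to any ordering of the classes by decreasing $|J(C)|$, hence invertible; therefore the subsystem of \cref{equation: D matrix reduced} indexed by $\mathcal M'$ already has full rank $|\mathcal C^*|$ (consistent with \cref{lemma: rank D} and with the output of \cref{algorithm: reduced stabilizer group}), the recursion \cref{equation: recursive solving} is well-founded — its right-hand side references only classes with strictly larger $J(C')$, terminating at $J$-maximal classes — and it returns the unique solution; evaluating it amounts to forward substitution, i.e. Gaussian elimination on this unitriangular system. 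The hard part is the coefficient computation of the preceding paragraph, combining the factorization with the identity $\sum_{k\ \mathrm{odd}}\binom{n}{k}=2^{n-1}$ for $n\ge 1$; the remainder is bookkeeping, the only subtlety being to keep track of $J(C)\neq\emptyset$ on $\mathcal C^*$ and of the impossibility of repeated $J$-values.
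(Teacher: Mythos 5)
Your proposal is correct and takes essentially the same route as the paper's proof: both hinge on evaluating the alternating sum $\sum_{S\subseteq J(C)}(-1)^{|S|-1}\bigl(|S\cap J(C')|\bmod 2\bigr)$ by splitting $S$ over $J(C)\cap J(C')$ and $J(C)\setminus J(C')$ and invoking $\sum_{k\ \mathrm{odd}}\binom{n}{k}=2^{n-1}$, which yields the indicator $[\,J(C)\subseteq J(C')\,]$ and hence the recursion together with its triangular (well-founded) structure. The only cosmetic difference is that the paper derives this via the general M\"obius-type row transform $Z_R=\sum_{S\subseteq R}(-1)^{|S|-|R|}Y_S$ and then specializes $R=J(C)$, whereas you verify the claimed formula directly by substitution.
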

\begin{proof}
We denote $D_{R,C}$ as the matrix element of the syndrome matrix $D$ in the row labeled by the stabilizer element $\prod_{i\in R}M_i$ and column labeled by the syndrome class $C$ and it is given by
\begin{equation}
    D_{R,C}=
    \begin{cases}
        1 & \text{, if }R\cap J(C)\text{ is odd,}\\
        0 & \text{, if }R\cap J(C)\text{ is even.}
    \end{cases}
\end{equation}
The system of linear equations \cref{equation: D' matrix main} can be rewritten as
\begin{equation}
    Y_R=\sum_{C\in\mathcal{C}}D_{R,C}X_C.\label{equation: D reduced relabel}
\end{equation}
where $R\subseteq [m]$.
Let us define a linear transformation of $Y_R, \forall R\subseteq [m]$:
\begin{equation}
    Z_R=\sum_{S\subseteq R}(-1)^{|S|-|R|}Y_S.
\end{equation}
Since the power set of the stabilizer generator indices $[m]$ is a partially ordered set by the inclusion relation, we could order the rows $R$ in the $D$ matrix according to this partial order, and this is essentially row operations on \cref{equation: D reduced relabel}. Plugging this into \cref{equation: D reduced relabel}:
\begin{align}
    Z_R
    &=\sum_{S\subseteq R}(-1)^{|S|}\sum_{C\in\mathcal{C}}D_{S,C}X_C\\
    &=\sum_{C\in\mathcal{C}}(\sum_{S\subseteq R}(-1)^{|S|}D_{S,C})X_C\\
    &=\sum_{C\in\mathcal{C}}\chi_{R,C} X_C\label{equation: Z in X}
\end{align}
The matrix $\chi$ is "upper triangular" by looking at its elements
\begin{align}
    \chi_{R,C}= \sum_{S\subseteq R} (-1)^{|S|} D_{S,C}
    = \sum_{\substack{S\subseteq R \\ \text{$|S\cap J(C)|$ is odd}}} (-1)^{|S|}
    = \begin{cases}
        -2^{|R|-1}, & \text{if } R\subseteq J(C), \\
        0, & \text{else.}
    \end{cases}
\end{align}
In particular, $Z_R=0$ if $R$ satisfy that for any $C\in\mathcal{C}$, $R\not \subseteq J(C)$. Therefore, only stabilizer elements in $\mathcal{M}'=\bigcup_{C\in\mathcal{C}}\{\prod_{i\in J'}M_i|J'\subseteq J(C)\}$ are used.

Combining \cref{equation: D reduced relabel} and \cref{equation: Z in X} we obtain:
\begin{equation}
    -Z_R=2^{|R|-1}\sum_{\substack{C\in\mathcal{C}\\J(C)\supseteq R}}X_C=\sum_{S\subseteq R}(-1)^{|S|-1}Y_S
\end{equation}
From the second equality, we obtain the solution
\begin{equation}
    X_C=-\sum_{\substack{C'\in\mathcal{C}\\ J(C')\supset J(C)}}X_{C'}+\frac{1}{2^{|J(C)|-1}}\sum_{S\subseteq J(C)}(-1)^{(|S|-1)}Y_S. \label{equation: solve recursive appendix}
\end{equation}
\end{proof}

We now show that the sample complexity is constant given a desired precision on the logarithm of the learnable transformed eigenvalues $X_{C}=\log \nu_C$ for a local sparse qLDPC code and a $(r_{\Gamma}, c_{\Gamma})$-Pauli channel $\mathcal{N}_{\Gamma}$.

\begin{applemma}\label{lemma: sample size log nu}
    Consider a local sparse $(r_s,c_s)$-Pauli channel on an $n$-qubit qLDPC code. Choose $\epsilon, \delta>0$ and any syndrome class $C\in\mathcal{C}^*$, then the sample size $N$ required to estimate the logarithm of the learnable transformed eigenvalues $\log \nu_C$ to additive error within $\epsilon$ with confidence $1-\delta$ using \cref{equation: recursive solving} is independent of $n$ and scales as
    \begin{equation}
        N=\mathcal{O}\left(\frac{1}{\epsilon^2}\log(\frac{1}{\delta})\right).
    \end{equation}
\end{applemma}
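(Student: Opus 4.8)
The plan is to track how the additive shot-noise error on each syndrome expectation value $\widehat{\Lambda}(M)$ propagates through the explicit formula \cref{equation: recursive solving} for $X_C = \log\nu_C$. First I would observe that every stabilizer element $\prod_{i\in R}M_i$ appearing in \cref{equation: recursive solving} has $R\subseteq J(C)$ for some syndrome class $C$, so by \cref{lemma: recursive solving solves the linear equation} only elements of the set $\mathcal{M}' = \bigcup_{C\in\mathcal{C}}\{\prod_{i\in J'}M_i \mid J'\subseteq J(C)\}$ are used. For a $(r_s,c_s)$-qLDPC code with a local sparse $(r_\Gamma,c_\Gamma)$-Pauli channel, each element $E := \prod_{i\in R}M_i$ with $R\subseteq J(C)$ is a product of at most $|J(C)|\le r_\Gamma c_s = \mathcal{O}(1)$ stabilizer generators, each of weight at most $r_s$, so $E$ has weight at most $r_s r_\Gamma c_s = \mathcal{O}(1)$; in particular the number of distinct such $E$ contributing to a fixed $C$ is at most $2^{|J(C)|} = \mathcal{O}(1)$. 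Each $\Lambda(E)$ is the expectation of a $\pm 1$-valued syndrome parity, so from $N$ i.i.d.\ syndrome samples the empirical mean $\widehat{\Lambda}(E)$ is an average of $N$ independent random variables in $[-1,1]$; Hoeffding's inequality gives $\Pr[|\widehat{\Lambda}(E) - \Lambda(E)| \ge \tau] \le 2\exp(-N\tau^2/2)$.

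Next I would convert the additive error on $\Lambda(E)$ to an additive error on $Y_E = \log\Lambda(E)$. Under \cref{assumption: open set probability} together with the locality bound, each $\Lambda(E)$ for $E\in\mathcal{M}'$ is bounded below by a positive constant $\lambda_{\min}>0$ depending only on $r_\Gamma,c_\Gamma,r_s,c_s$ and the assumption $P_\gamma(I)>1/2$ (this is exactly the bound used in \cref{lemma: bounded first and second derivative}). Hence on the event $|\widehat{\Lambda}(E)-\Lambda(E)| \le \lambda_{\min}/2$ the logarithm is $2/\lambda_{\min}$-Lipschitz there, so $|\widehat{Y}_E - Y_E| \le \tfrac{2}{\lambda_{\min}}|\widehat{\Lambda}(E)-\Lambda(E)|$. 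Now I would run the recursion \cref{equation: recursive solving} from the top of the poset: $X_C$ is a signed sum of $Y_S$ for $S\subseteq J(C)$ with coefficients of magnitude $2^{-(|J(C)|-1)} \le 1$, plus a correction $-\sum_{J(C')\supset J(C)} X_{C'}$ over at most $|\{C' : J(C')\supset J(C)\}| \le r_s c_\Gamma = \mathcal{O}(1)$ strictly larger classes. Since the length of any strictly increasing chain of subsets of $[m]$ bounded inside some $J(C)$ is at most $|J(C)|\le r_\Gamma c_s = \mathcal{O}(1)$, the recursion terminates after a constant number of levels, and an easy induction on the poset height shows that the accumulated coefficient on each $Y_S$ is bounded by a constant $K$ depending only on $(r_s,c_s,r_\Gamma,c_\Gamma)$, and the total number of $Y_S$'s entering $X_C$ is likewise a constant $m_0$. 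Therefore $|\widehat{X}_C - X_C| \le K\sum_S|\widehat{Y}_S - Y_S| \le \tfrac{2K m_0}{\lambda_{\min}}\max_S|\widehat{\Lambda}(S)-\Lambda(S)|$.

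Finally I would assemble the bound. Setting $\tau = \min\{\lambda_{\min}/2,\ \lambda_{\min}\epsilon/(2Km_0)\}$ in Hoeffding and union-bounding over the at most $m_0$ relevant products $E$ gives
\begin{equation}
\Pr\bigl[|\widehat{X}_C - X_C| > \epsilon\bigr] \le 2 m_0 \exp\!\left(-\tfrac{N\tau^2}{2}\right),
\end{equation}
which is at most $\delta$ once $N \ge \tfrac{2}{\tau^2}\log(2m_0/\delta) = \mathcal{O}\!\bigl(\epsilon^{-2}\log(1/\delta)\bigr)$, with all hidden constants depending only on $(r_s,c_s,r_\Gamma,c_\Gamma)$ and not on $n$. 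This is exactly the claimed scaling. I would then note the immediate corollary that feeds into \cref{theorem: syndrome class learning sample complexity}: since $P_C = -\tfrac12\log\nu_C + \mathcal{O}(\|\vec p^{(\Gamma_C)}\|^2)$ by \cref{theorem: syndrome class learning}, controlling $X_C = \log\nu_C$ to precision $\epsilon$ controls $P_C$ to precision $\tfrac{\epsilon}{2} + \mathcal{O}(\|\vec p\|_\infty^2)$, so under the stated hypothesis $\|\vec p\|_\infty^2 < \epsilon/(2C_c)$ the estimate is within $\epsilon$ additively.

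The main obstacle I anticipate is the bookkeeping in the middle paragraph: showing that the recursion \cref{equation: recursive solving}, which depends on $X_{C'}$ for $C' $ "above" $C$, does not blow up the coefficients or the number of involved $Y_S$ as the system grows. The key facts that make this work are (i) the qLDPC/local-sparse assumptions bound both $|J(C)|$ and $|\{C' : J(C')\supset J(C)\}|$ by constants, and (ii) the poset of relevant index sets has constant height, so the recursion has constant depth; with these, an induction on height keeps everything $\mathcal{O}(1)$. The lower bound $\lambda_{\min}>0$ is also essential for the $\log$-to-linear conversion, and it is precisely where \cref{assumption: open set probability} ($P_\gamma(I)>1/2$) is used.
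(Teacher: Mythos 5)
Your proposal is correct and follows essentially the same route as the paper's proof: bound $|J(C)|$ and the number of classes above $C$ by constants via the qLDPC/local-sparse assumptions, lower-bound $\Lambda(E)$ for the constant-weight elements of $\mathcal{M}'$ using \cref{assumption: open set probability} to make the log-to-linear conversion, note that the recursion \cref{equation: recursive solving} involves only a constant number of $Y_S$ terms, and finish with Hoeffding plus a union bound. Your explicit induction on the poset height just spells out what the paper asserts as "a bounded number of recursions with a bounded number of terms," and the minor wording slip about chains "inside $J(C)$" (they are chains of supersets, all of size at most $r_\Gamma c_s$) does not affect the argument.
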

\begin{proof}
We use the notation in the proof of \cref{lemma: recursive solving solves the linear equation}. In the local error model setting, we have $|J(C)|\leq r_{\Gamma}c_s$, and $|\{C'\in\mathcal{C}|J(C')\supset J(C)\}|\leq r_{s}c_{\Gamma}$ for any syndrome class $C$ since each generator acts on at most $r_s$ qubits and each qubit is affected by at most $c_{\Gamma}$ channels. Let $\mathcal{M}'=\bigcup_{C\in\mathcal{C}}\{\prod_{i\in J'}M_i|J'\subseteq J(C)\}$ be the subset of stabilizer elements involved in the recursive solution \cref{equation: recursive solving}. For any measured stabilizer element $M$ in $\mathcal{M}'$. 
Given \cref{assumption: open set probability} of the local channels, the Pauli eigenvalues of any local channels are lower bounded by a constant $\lambda_{\text{min}}$, i.e., $\Lambda_{\gamma}(e)\geq\lambda_{\text{min}}>0, \forall \gamma\in\Gamma$.
\begin{equation}
    \Lambda(M)=\prod_{\substack{\gamma\in\Gamma\\\text{supp}(M)\subseteq\gamma}}\Lambda_{\gamma}(M_{\gamma})\geq \lambda_{\text{min}}^{r_{\Gamma} s_{\Gamma} c_s c_{\Gamma}}=\Theta(1)
\end{equation}
Given the additive error $\epsilon_{\Lambda(M)}$ of syndrome expectation values $\Lambda(M)$ is less than $\Lambda(M)$, the additive error $\epsilon_{Y_S}$ of $Y_S=\log \Lambda(M)$ due to shot noise is related to the error $\epsilon_{\Lambda(M)}$ as:
\begin{equation}
    \epsilon_{Y_S}\leq \frac{\epsilon_{\Lambda(M)}}{\Lambda(M)-\epsilon_{\Lambda(M)}}\leq \frac{\epsilon_{\Lambda(M)}}{C_{\Lambda}}
\end{equation}
for some constant $C_{\Lambda}$.

For any syndrome class $C$, $X_{C}=\log \nu_{C}$ can be written as linear function of $\{Y_{S}\}$ using a bounded number of recursions and resulting in a bounded number $C_r$ of terms of $Y_S$ according to \cref{equation: recursive solving}. Therefore, it is sufficient that the additive error for each $\epsilon_{\Lambda(M)}\leq \min\left(\frac{C_{\Lambda}}{C_r}\epsilon, \lambda_{\text{min}}^{r_{\Gamma} s_{\Gamma} c_s c_{\Gamma}}\right)=\epsilon^{\text{max}}_{\Lambda}$ with confidence $\leq\frac{\delta}{C_r}$ (union bound).
Then we can use Hoeffding's inequality:
\begin{equation}
    P\left(\left|\frac{1}{N}\sum_i m_i-\Lambda(M)\right|\geq \epsilon^{\text{max}}_{\Lambda}\right)\leq 2\exp(-\frac{N(\epsilon^{\text{max}}_{\Lambda})^2}{2})\leq \frac{\delta}{C_r}.
\end{equation}
to obtain the required sample size
\begin{equation}
    N\geq \frac{2}{(\epsilon^{\text{max}}_{\Lambda})^2}\log(\frac{2 C_r}{\delta})= \mathcal{O}\left(\frac{1}{\epsilon^2}\log(\frac{1}{\delta})\right).
\end{equation}
\end{proof}

With the above lemma, we can prove the sample complexity of estimating syndrome class error rates to additive precision for general qLDPC stabilizer codes (qLDPC spacetime subsystem code) under local sparse Pauli channels, when the second-order correction is smaller than the shot noise due to finite sample size.

\noindent\textbf{\cref{theorem: syndrome class learning sample complexity} (Restated)}[Sample complexity for syndrome class learning]
    Consider a local sparse Pauli channel $\mathcal{N}_{\Gamma}$ on an $n$-qubit qLDPC code. Given $\epsilon>0,0<\delta<1$, there exists a constant $C_c>0$ such that if $\|\vec{p}\|^2_{\infty}<\frac{\epsilon}{2C_c}$, the sample size $N$ required to estimate the syndrome error class $P_C, \forall C\in\mathcal{C}^*$ to additive error within $\epsilon$ with confidence $1-\delta$ using \cref{equation: recursive solving} is
    \begin{equation}\label{equation: sample scaling syndrome class}
        N=\mathcal{O}(\frac{1}{\epsilon^2}\log(\frac{1}{\delta})),
    \end{equation}
    independent of the system size $n$.

\begin{proof}
    For any syndrome class $C\in \mathcal{C}^{*}$, we can approximate the syndrome error class using
    \begin{align}
        P_C&=-\log \nu_C + \mathcal{O}(\|\vec{p}\|^2)\\
    \end{align}
    And we can bound the correction term
    \begin{equation}\label{equation: define C_c}
        |P_C+\log\nu_C|\leq C_c p_{\text{max}}^2
    \end{equation}
    for some $C_c>0$ due to Taylor's theorem. Moreover, $C_c$ is constant in the context of local sparse Pauli channel on qLDPC codes (see \cref{lemma: bounded first and second derivative}).
    If error rates are small enough, i.e. $p_{\text{max}}<\left(\frac{\epsilon}{2C_c}\right)^{\frac{1}{2}}$, then shot noise on $\log \nu_C$ dominates the additive error and the sample size scaling is given by \cref{lemma: sample size log nu}.
    
\end{proof}

\subsection{Sample complexity for learning logical error rates\label{subsection: logical sample complexity}}
In this section, we show that for a non-adaptive Clifford circuit that is fault-tolerant to a circuit-level Pauli noise, which corresponds to a local sparse Pauli Channel on the spacetime code of the circuit, then the logical error rate can be learned to the same relative precision as the physical error rate with only an additional overhead in the sample complexity that in the worst case scales quadratically in the spacetime volume. In the low physical error rate regime where the minimum weight fault paths dominate, the overhead is $\Theta(d^2)$.

We prove the sample complexity of learning the logical error to a given relative precision by first proving the sample size required to learn each physical error rate to a certain relative precision and then proving the bounded error propagation to the logical error rate.

\begin{applemma}[Sample complexity for physical error rate estimation]\label{lemma: Physical error rate relative precision}
Consider a non-adaptive logical Clifford circuit with $n$ physical qubits, and of physical depth $T$ that corresponds to a qLDPC spacetime code. Then consider a local sparse Pauli channel $\mathcal{N}_{\Gamma}$ with local error rates $\vec{p}$, mapped from the circuit-level Pauli error model (Definition 4), that acts on the spacetime code. Let $\vec{p}\,'$ be the error rates that generate the same logical error rate as $\vec{p}$ and satisfy any chosen extra constraints $B\vec{p}\,'=0$. Let $L_H, K_{G_B}$ be the two constants in \cref{lemma: existence of solution}. For any chosen $\eta>0, 0<\delta<1$, if the true error rates satisfy $\|\vec{p}\|_{\infty}\leq \min\left(\frac{2}{L_H, K_{G_B}},\quad\left(\frac{\eta P_{Cmin}}{2C_c}\right)^{\frac{1}{2}}\right)$, the sample size $N$ required to learn all physical error rate with relative error within $\eta$ with confidence $1-\delta$, when compared to $\vec{p}\,'$, is given by
    \begin{equation}\label{equation: sample scaling physical relative precision}
        N=\mathcal{O}\left(\frac{1}{\eta^2P^2_{Cmin}}\log(\frac{nT}{\delta})\right).
    \end{equation}
\end{applemma}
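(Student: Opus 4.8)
The plan is to reduce the statement to the syndrome-class sample-complexity result \cref{theorem: syndrome class learning sample complexity} and then convert additive precision at the level of syndrome classes into relative precision at the level of individual physical error rates. First I would fix the target additive precision on each syndrome-class rate to be $\epsilon = \eta\,P_{Cmin}$. Since every nontrivial class satisfies $P_C \ge P_{Cmin}$, an additive error $\epsilon_C \le \eta\,P_{Cmin}$ on the estimate of $P_C$ automatically gives a relative error $\epsilon_C/P_C \le \eta$. The hypothesis $\|\vec p\|_\infty \le (\eta P_{Cmin}/2C_c)^{1/2}$ is exactly the condition $\|\vec p\|_\infty^2 < \epsilon/(2C_c)$ of \cref{theorem: syndrome class learning sample complexity}, so the second-order bias $|P_C + \log\nu_C| \le C_c\|\vec p\|_\infty^2$ in the estimator $P_C \approx -\tfrac12\log\nu_C$ is controlled, and that theorem (via \cref{lemma: sample size log nu} and the recursive solution \cref{equation: recursive solving}) yields, for each fixed class, a per-class sample size $\mathcal{O}(\epsilon^{-2}\log(1/\delta')) = \mathcal{O}(\eta^{-2}P_{Cmin}^{-2}\log(1/\delta'))$.

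Second, I would pass from learned syndrome-class rates to individual physical rates using the linear map $\vec p\,' = K_B\vec x$ of \cref{appendix: existence of solution with B constraints}. Each row of $K_B$ has a single nonnegative entry $[K_B]_{e,C_e} = r_e/\sum_{e'\in C_e} r_{e'} \le 1$ (\cref{equation: element of K_B}), so $p'_e = [K_B]_{e,C_e}\,P_{C_e}$ and a relative error $\eta$ on $P_{C_e}$ is inherited verbatim by $p'_e$, the positive scalar cancelling in the ratio. Existence and local uniqueness of the reference point $\vec p\,'$ — the vector against which the relative error is measured — is \cref{lemma: existence of solution}, whose hypothesis $\|\vec p^{\,*}\|_\infty < p_{\max} = 2/(K_{G_B}L_H)$ is precisely the first term of the stated minimum; and fault tolerance of the circuit together with \cref{theorem: logical error rate learnability} guarantees that $\vec p\,'$ produces the same logical error rate as $\vec p$, which is why the comparison is made against $\vec p\,'$ rather than the true (non-identifiable-within-a-class) $\vec p$.

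Third, I would take a union bound over the $|\mathcal{C}^*|$ nontrivial syndrome classes, equivalently over the $|\mathcal{E}_\Gamma|$ individual errors. For a local sparse Pauli channel on the $n(T+1)$-qubit spacetime code one has $|\Gamma| = \mathcal{O}(nT)$ and hence $|\mathcal{E}_\Gamma|, |\mathcal{C}^*| = \mathcal{O}(nT)$, and by \cref{lemma: sample size log nu} each $\log\nu_C$ depends on only $\mathcal{O}(1)$ syndrome expectation values $Y_S = \log\Lambda(\prod_{i\in S}M_i)$, each a constant-weight stabilizer element with eigenvalue lower-bounded by a constant $\lambda_{\min}$ uniformly in $n,T$. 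Choosing the per-class failure probability $\delta' = \delta/|\mathcal{C}^*| = \Theta(\delta/nT)$ then gives the claimed total $N = \mathcal{O}(\eta^{-2}P_{Cmin}^{-2}\log(nT/\delta))$.

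The main obstacle I anticipate is the bookkeeping of error propagation: I must check simultaneously, and for \emph{every} class, that (i) the $\mathcal{O}(\|\vec p^{(\Gamma_C)}\|^2)$ bias of the $P_C \approx -\tfrac12\log\nu_C$ approximation, (ii) the shot-noise error on each $Y_S$ after propagation through the bounded number of recursions in \cref{equation: recursive solving}, and (iii) the constraint rescaling by $K_B$ all stay inside the budget $\eta\,P_{Cmin}$, while the locality and qLDPC assumptions keep the number of recursions, the lower bound $\lambda_{\min}$, and the constant $C_c$ independent of $n$ and $T$ (as already established in \cref{lemma: bounded first and second derivative} and \cref{lemma: sample size log nu}). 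A secondary subtlety worth stating explicitly is that within a syndrome class only the total $P_C$ is identifiable, so the relative-error guarantee is necessarily phrased against the unique $B$-consistent representative $\vec p\,'$.
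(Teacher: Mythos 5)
Your proposal is correct and follows essentially the same route as the paper's proof: set $\epsilon=\eta P_{C\mathrm{min}}$ and invoke \cref{theorem: syndrome class learning sample complexity} for the per-class additive guarantee, use \cref{lemma: existence of solution} for the $B$-consistent reference point $\vec p\,'$, distribute the class rates to individual rates via the constraints so the relative error is inherited, and take a union bound over the $\Theta(nT)$ nontrivial syndrome classes to get the $\log(nT/\delta)$ factor. The extra remarks on $K_B$ and on why the comparison must be against $\vec p\,'$ are consistent with (and slightly more explicit than) the paper's argument.
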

\begin{proof}
    We use the circuit-to-code mapping in \cref{section: spacetime code mapping} to map the $n$-qubit non-adaptive Clifford circuit of depth $T$ to an $n(T+1)$-qubit spacetime code, which is assumed to be qLDPC.
    According to \cref{lemma: existence of solution} and \cref{theorem: syndrome class learning sample complexity}, given any extra constraints $B\vec{p}=\vec{0}$ and by setting $\epsilon=P_{C\text{min}}\eta$, when 
    \begin{equation}
    \|\vec{p}\|_{\infty}\leq \min\left(\frac{2}{L_H, K_{G_B}},\quad\left(\frac{\eta P_{Cmin}}{2C_c}\right)^{\frac{1}{2}}\right),
    \end{equation}
    there exist error rates $\vec{p}\,'$ satisfying the extra constraints $B\vec{p}=0$ and having identical learnable transformed eigenvalues $\log \vec{\nu}(\vec{p}\,')=\log \vec{\nu}(\vec{p})$ with the ones of the true noise. And we need $N=\frac{\alpha}{\epsilon^2}\log(\frac{1}{\delta})$ to reach $\epsilon$ additive error in estimating each syndrome class error rate, for some constant $\alpha$. We further estimate each component of $\vec{p}\,'$ by applying $B\vec{p}=0$ to each estimated syndrome class error rate. To reach relative error $\eta$ for all syndrome class with confidence $1-\delta$, for each syndrome class $C$, it is sufficient that $\epsilon\leq \eta P_{C\text{min}}$ with confidence $1-\delta_C$ where $\delta_C=\frac{\delta}{|\mathcal{C}^*|}$ (union bound). Moreover, the number of non-trivial syndrome classes $|\mathcal{C}^*|=\Theta(nT)$ since each syndrome class has a constant size. Therefore, plugging in \cref{equation: sample scaling syndrome class}, we get the scaling as shown in \cref{equation: sample scaling physical relative precision}. Each physical error rate is estimated with relative precision within $\eta$ compared with $\vec{p}\,'$ since $B\vec{p}\,'=0$ is satisfied.

\end{proof}

\begin{applemma}[Relative error propagation to logical level]\label{lemma: Relative error propagation to logical level}
    Consider a logical Clifford circuit with $n$ physical qubits and of physical depth $T$, and a circuit-level Pauli noise channel that corresponds to a local sparse Pauli channel $\mathcal{N}_{\Gamma}$ on the spacetime code. We can achieve learning the probability of any subset of logical Pauli errors that affect logical measurements in the circuit with relative error within $\eta_L$ if the maximal relative error $\eta^{\text{max}}_p$ of the learned physical error rates is $\eta^{\text{max}}_p\leq \frac{1}{2|\Gamma|}\eta_L=\Omega(\frac{\eta_L}{nT})$, where $|\Gamma|$ is the number of local channels.
\end{applemma}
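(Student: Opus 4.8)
The plan is to express any logical-error-class probability as a polynomial in the physical error rates $\vec p$ and then bound how the relative error in $\vec p$ propagates through this polynomial. From \cref{appendix: circuit logical learning}, the probability of any logical error class $\overline P(\vec s,\vec\ell)$ — and more generally the probability $P_L(\vec f_L)$ of any prescribed logical effect $\vec f_L$ on the circuit's logical measurements, or a sum over any subset of such classes — is determined by the Pauli eigenvalues $\Lambda(L)$, $L\in\mathcal L'$, hence by $\log\vec\nu$, hence by $\vec p$. The first step is to make the functional form explicit: since the total channel is the composition $\mathcal N_\Gamma=\circ_{\gamma\in\Gamma}\mathcal N_\gamma$, the total error distribution $P(e)$ is the convolution $\Asterisk_{\gamma}P_\gamma$, and so $P_L(\vec f_L)=\sum_{e:\,\vec f_L(e)=\vec f_L}P(e)$ is a \emph{multilinear} polynomial in the entries of $\vec p$: it is a sum of monomials $\prod_{\gamma\in\Gamma}p_{e_\gamma}$ (one factor $P_\gamma(e_\gamma)$ per local channel, with $P_\gamma(I)=1-\sum_{e\in\mathcal E_\gamma}p_e$ also affine in $\vec p$), where the sum runs over all ways of choosing one Pauli from each local channel whose product lands in the prescribed logical class. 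Crucially, because each $P_\gamma$ is a genuine probability distribution over $\mathcal P_\gamma$, the coefficients are nonnegative and $P_L$ is a \emph{convex} combination / subsum of such products, so $P_L\le 1$ and the monomials do not cancel.

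The second step is the relative-error bound for multilinear polynomials with nonnegative coefficients. If each $\tilde p_i$ satisfies $|\tilde p_i-p_i'|\le\eta^{\max}_p\,p_i'$ (where $\vec p\,'$ is the reference solution satisfying the $B$-constraints, as in \cref{lemma: Physical error rate relative precision}), then for a single monomial $m(\vec p)=\prod_{\gamma\in\Gamma}f_\gamma(\vec p)$ with $f_\gamma$ affine and each factor perturbed with relative error at most $\eta^{\max}_p$, we get
\begin{equation}
\frac{|m(\tilde{\vec p})-m(\vec p\,')|}{m(\vec p\,')}\le (1+\eta^{\max}_p)^{|\Gamma|}-1.
\end{equation}
Here I use that each monomial has at most $|\Gamma|$ factors (one per local channel). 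Since $P_L$ is a nonnegative combination of such monomials and all of them move in a controlled band $[(1-\eta^{\max}_p)^{|\Gamma|},\,(1+\eta^{\max}_p)^{|\Gamma|}]$ relative to their reference values, the \emph{same} bound passes to $P_L$ (and to any subset-sum of logical classes), giving
\begin{equation}
\frac{|P_L(\tilde{\vec p})-P_L(\vec p\,')|}{P_L(\vec p\,')}\le (1+\eta^{\max}_p)^{|\Gamma|}-1.
\end{equation}
Finally, using $(1+x)^{k}-1\le 2kx$ whenever $kx\le 1$ (i.e. $|\Gamma|\,\eta^{\max}_p\le 1$, which holds under the stated hypothesis $\eta^{\max}_p\le\eta_L/(2|\Gamma|)$ since then $|\Gamma|\eta^{\max}_p\le\eta_L/2\le 1$), we obtain a logical relative error at most $2|\Gamma|\,\eta^{\max}_p\le\eta_L$. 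The scaling claim $\eta^{\max}_p=\Omega(\eta_L/(nT))$ then follows because $|\Gamma|=\mathcal O(nT)$ for a local sparse channel on the depth-$T$, $n$-qubit circuit (each of the $\mathcal O(nT)$ spacetime locations contributing $\mathcal O(1)$ channels), so $1/(2|\Gamma|)=\Omega(1/(nT))$.

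The main obstacle — and the step that needs care rather than a one-line estimate — is the monomial bookkeeping: ensuring that the careful handling of the identity-weight factors $P_\gamma(I)=1-\sum_{e}p_e$ does not break the "nonnegative-coefficient multilinear polynomial" structure, and that partitioning by the logical effect $\vec f_L(\cdot)$ (which, by \cref{lemma: Equivalence of logical error in the spacetime code}, is linear in the commutators with $\mathcal L'$) indeed produces a clean subsum with no sign cancellations. One also has to be slightly careful that the reference point of comparison is $\vec p\,'$, not the true $\vec p$, but since $\vec p\,'$ generates the same logical error rate as $\vec p$ (by the learnability argument of \cref{theorem: logical error rate learnability}), comparing $\tilde{\vec p}$ to $\vec p\,'$ is exactly comparing the learned logical rate to the true one. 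A secondary subtlety is that individual factors $f_\gamma(\vec p\,')$ could be small or that a monomial's reference value could be tiny; but because we bound the relative error of each monomial uniformly and then combine convexly, no lower bound on individual $P_L$ values is needed, which is precisely why the logical rate — even when exponentially small in the code distance — inherits only the polynomial overhead $2|\Gamma|$ in relative precision.
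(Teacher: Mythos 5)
Your argument is correct and is essentially the paper's own proof: you decompose the logical (class) probability as a nonnegative sum over fault paths (one factor per local channel, including the identity weights $P_\gamma(I)$), bound each term's relative error by $(1+\eta^{\max}_p)^{|\Gamma|}-1$, pass the bound to the sum by nonnegativity, and close with an elementary inequality showing $\eta^{\max}_p\le\eta_L/(2|\Gamma|)$ suffices, with $|\Gamma|=\mathcal O(nT)$. The only difference is cosmetic — you run the final estimate forward via $(1+x)^{|\Gamma|}-1\le 2|\Gamma|x$ rather than the paper's inversion through $e^x-1\ge x$ — and your handling of the reference point $\vec p\,'$ and of the identity-weight factors matches the paper's.
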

\begin{proof}
    We define a fault path $f$ in a spacetime code as a subset of the possible errors $\mathcal{E}_{\Gamma}$ (exclusive in each local channel). We call $f$ a failing fault path if it leads to any logical error considered after decoding, and we denote the set of all such failing paths as $\mathcal{F}_{\text{fail}}$. For any local channel $\mathcal{N}_{\gamma},\, \gamma \in \Gamma$ and a local error $e$ in the channel, we denote the error rate as $p_e$, and denote $q_{\gamma}=1-\sum_{e\in\mathcal{E}_{\gamma}}p_e$. We denote the set of local supports not involved in a path $f$ as $\Gamma_f^c$. The logical error rate is then given by:
\begin{equation}
P_{\text{fail}}=\sum_{f\in\mathcal{F}_{\text{fail}}}\prod_{e_i\in f}p'_{e_i}\prod_{\gamma\in \Gamma^{c}_f}q'_{\gamma}.
\end{equation}
We bound the relative error for each term $P^{(f)}=\prod_{e_i\in f}p'_{e_i}\prod_{\gamma\in \Gamma^{c}_f}q'_{\gamma}$ as
\begin{align}
    \eta_{P^{(f)}}&\leq \frac{|\prod_{e_i\in f}p'_{e_i}(1+\eta^{\text{max}}_{p})\prod_{\gamma\in \Gamma^{c}_f}q'_{\gamma}(1+\eta^{\text{max}}_{q})-P^{(f)}|}{P^{(f)}}\\
    &\leq (1+\eta^{\text{max}}_{p})^{|\Gamma|}-1
\end{align}
where $\eta^{\text{max}}_q$ refers to the maximal relative error of $q_{\gamma}$ over $\gamma\in\Gamma$, and we used the fact that for small error rates, $\eta^{\text{max}}_q < \eta^{\text{max}}_p$. Since $P_{\text{fail}}$ is just a sum of these positive terms, the relative error $\eta_{P_{\text{fail}}}$ of $P_{\text{fail}}$ is just bounded by the largest $\eta_{P^{(f)}}$ among all failing paths $f$:
\begin{equation}
    \eta_{P_{\text{fail}}}\leq (1+\eta^{\text{max}}_{p})^{|\Gamma|}-1.
\end{equation}
Therefore we have
\begin{equation}
    \eta^{\text{max}}_{p}\geq (1+\eta_{P_{\text{fail}}})^{\frac{1}{|\Gamma|}}-1 \geq e^{\frac{1}{|\Gamma|}\ln (1+\eta_{P_{\text{fail}}})}-1\geq \frac{1}{2|\Gamma|}\eta_{P_{\text{fail}}}, \qquad \forall\: 0\leq \eta_{P_{\text{fail}}}\leq 1,\: |\Gamma|\geq 1,
\end{equation}
where $|\Gamma|=\mathcal{O}(nT)$ is the number of local Pauli channels since we assume $\mathcal{N}_{\Gamma}$ is a local sparse Pauli channel, and we used the fact that $e^x - 1\geq x$. As a result, if $\eta^{\text{max}}_{p}\leq \frac{1}{2|\Gamma|}\eta_L$, then the relative error $\eta_{P_{\text{fail}}}\leq\eta_L$.
\end{proof}

\noindent\textbf{\cref{theorem: logical sample complexity} (Restated)}
    Consider a non-adaptive logical Clifford circuit with $n$ physical qubits and of physical depth $T$ that corresponds to a qLDPC spacetime code. We assume the circuit is fault-tolerant to a circuit-level noise that corresponds to a local sparse Pauli channel $\mathcal{N}_{\Gamma}$ on the spacetime code, for a set of logical measurements $\mathcal{M}_L$. Denote by $P_{C\text{min}}$ the minimal syndrome class error rate. We can learn the logical error rates $P_{\text{fail}}$ for $\mathcal{M}_L$ with relative error within $\eta_L$ with confidence $1-\delta$ using a sample size
    \begin{equation}
                N=\mathcal{O}\left(\frac{n^2T^2}{\eta_L^2P^2_{Cmin}}\log(\frac{nT}{\delta})\right),
    \end{equation}
    if the true error rates satisfy $\|\vec{p}\|_{\infty}\leq \min\left(\frac{2}{L_H, K_{G_B}},\quad\left(\frac{\eta_L P_{Cmin}}{4|\Gamma|C_c}\right)^{\frac{1}{2}}\right)$, for constants $L_H, K_{G_B}, C_c$ defined in \cref{equation: define K_{G_B},equation: define L_H,equation: define C_c}.

\begin{proof}
    Directly follows from \cref{lemma: Physical error rate relative precision} and \cref{lemma: Relative error propagation to logical level}.

\end{proof}
For a Clifford circuit encoded in a family of stabilizer codes that have a non-trivial fault-tolerant threshold, when the physical error rates are well below the threshold, then the failing fault path is dominated by the minimum-weight paths. Let $d$ denote the circuit distance (distance of the spacetime code). Then the minimum weight is the minimum number of local channels required to generate a failing path and it is upper bounded by $t = \lfloor\frac{d+1}{2}\rfloor$. In this setting, the sample size needed to reach the same relative precision at the logical level as the physical level has an extra overhead that scales as $\mathcal{O}(d^2)$ instead of $\text{exp}(\frac{d}{2})$ from the logical level fidelity estimation/tomography directly.

However, \cref{theorem: logical sample complexity} does require the upper bound of physical error rate to drop as $\sim\frac{1}{n}$, or $\sim\frac{1}{d}$ when the min-weight fault paths dominate the logical error rate. This is because the correction term in the syndrome class error rate eventually dominates as the physical relative precision required increases with larger $n$. Therefore, if the scale of physical error rate is independent of the system size, then the quadratic overhead in the theorem holds within a finite-sized system, which is still practically relevant. See numerical results in \cref{fig: surface_logical_full_appendix}, where we tested on surface code of various sizes in the simple channel coding setting.

\section{\label{appendix: Intraclass uniformity}Intraclass uniformity constraints}

\begin{figure}[h!]
\centering
\begin{subfigure}{0.23\textwidth}
\phantomcaption
\stackinset{l}{2pt}{t}{2pt}{\captiontext*}
    {\includegraphics[width=\textwidth]{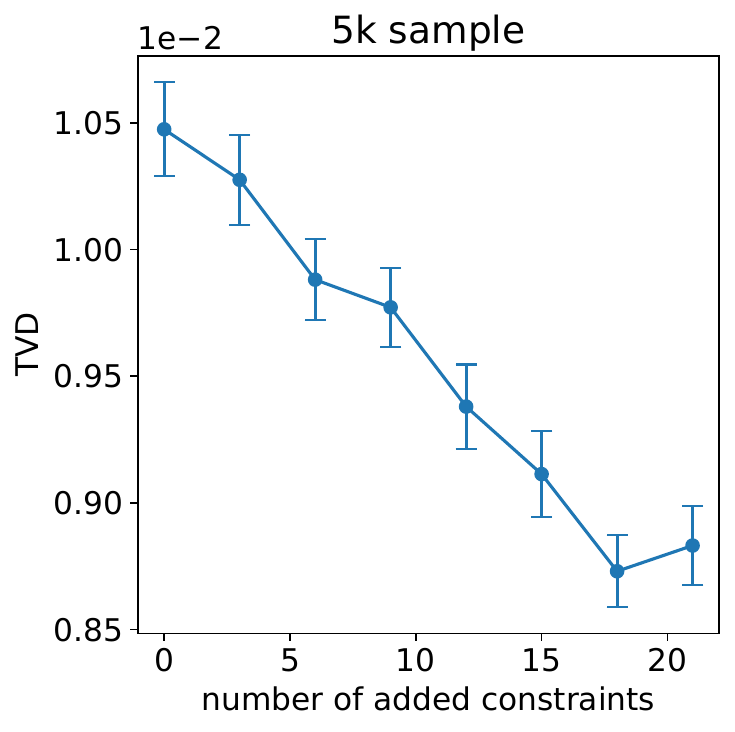}}
\end{subfigure}
\begin{subfigure}{0.23\textwidth}
\phantomcaption
\stackinset{l}{2pt}{t}{2pt}{\captiontext*}
    {\includegraphics[width=\textwidth]{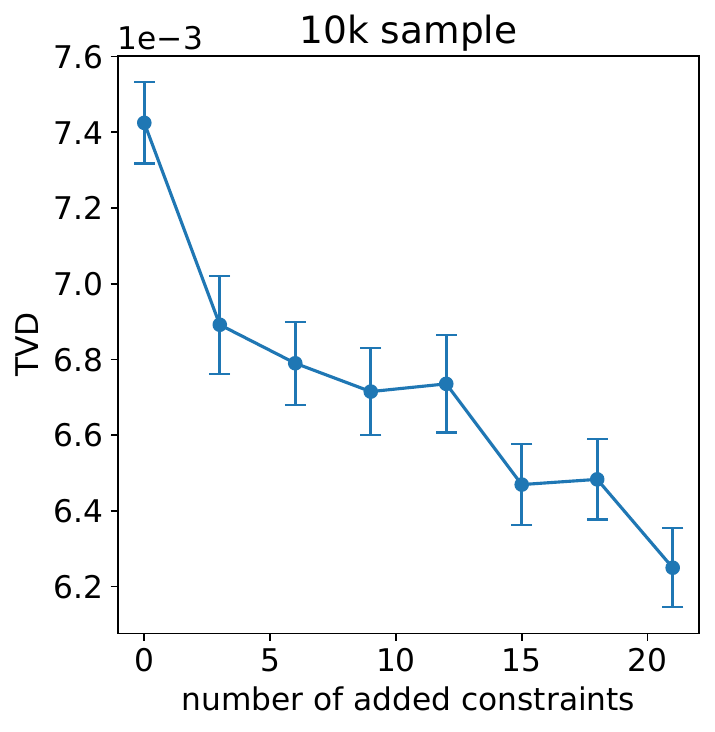}}
\end{subfigure}
\begin{subfigure}{0.23\textwidth}
\phantomcaption
\stackinset{l}{2pt}{t}{2pt}{\captiontext*}
    {\includegraphics[width=\textwidth]{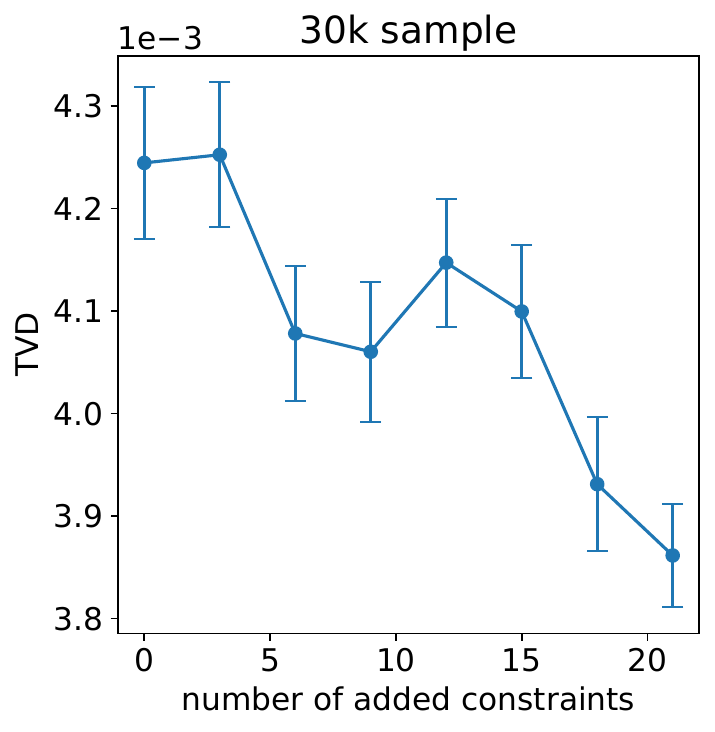}}
\end{subfigure}
\begin{subfigure}{0.23\textwidth}
\phantomcaption
\stackinset{l}{2pt}{t}{2pt}{\captiontext*}
    {\includegraphics[width=\textwidth]{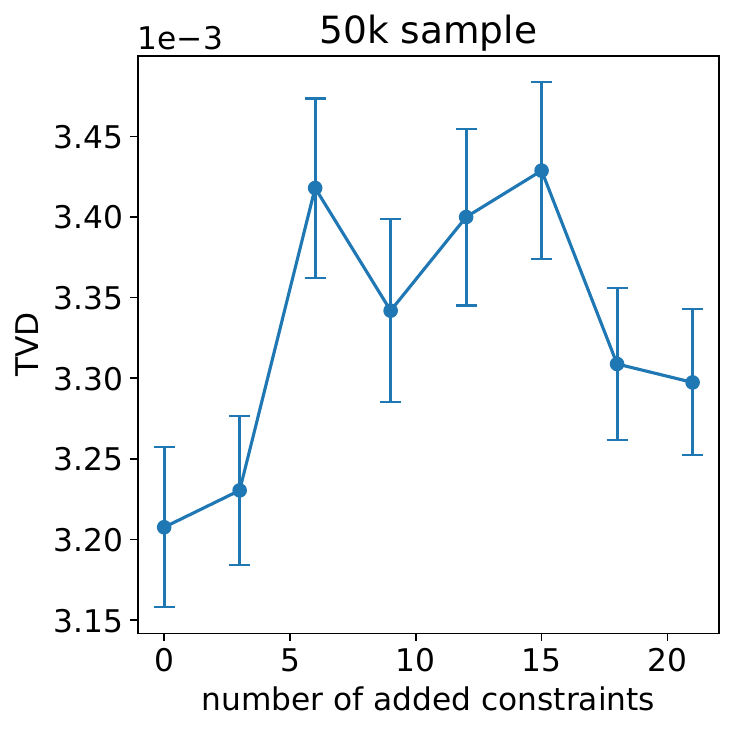}}
\end{subfigure}
\begin{subfigure}{0.23\textwidth}
\phantomcaption
\stackinset{l}{2pt}{t}{2pt}{\captiontext*}
    {\includegraphics[width=\textwidth]{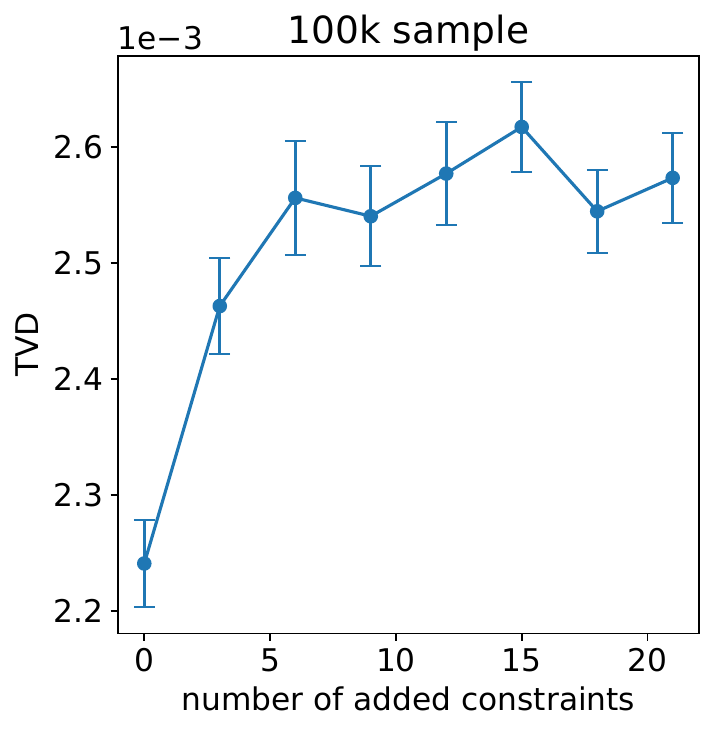}}
\end{subfigure}
\begin{subfigure}{0.23\textwidth}
\phantomcaption
\stackinset{l}{2pt}{t}{2pt}{\captiontext*}
    {\includegraphics[width=\textwidth]{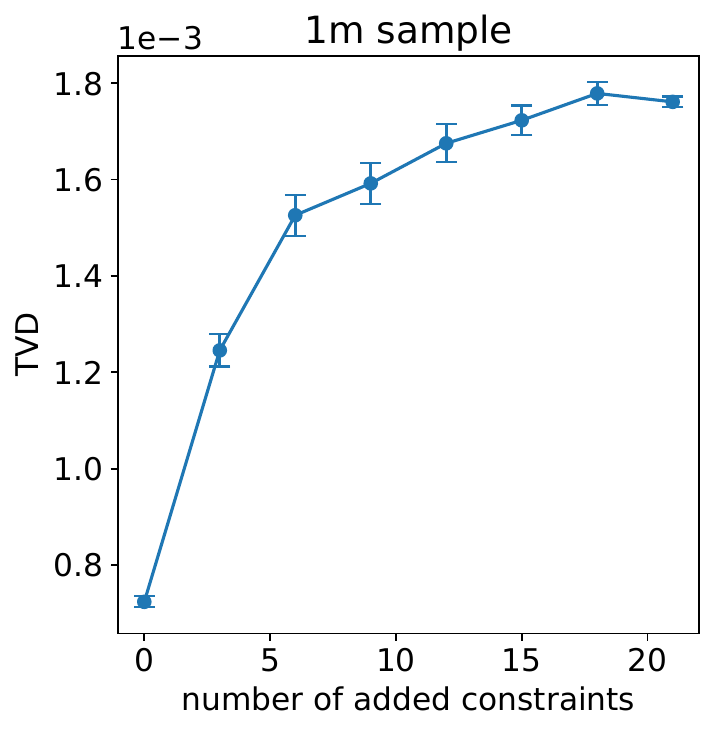}}
\end{subfigure}
\begin{subfigure}{0.23\textwidth}
\phantomcaption
\stackinset{l}{2pt}{t}{2pt}{\captiontext*}
    {\includegraphics[width=\textwidth]{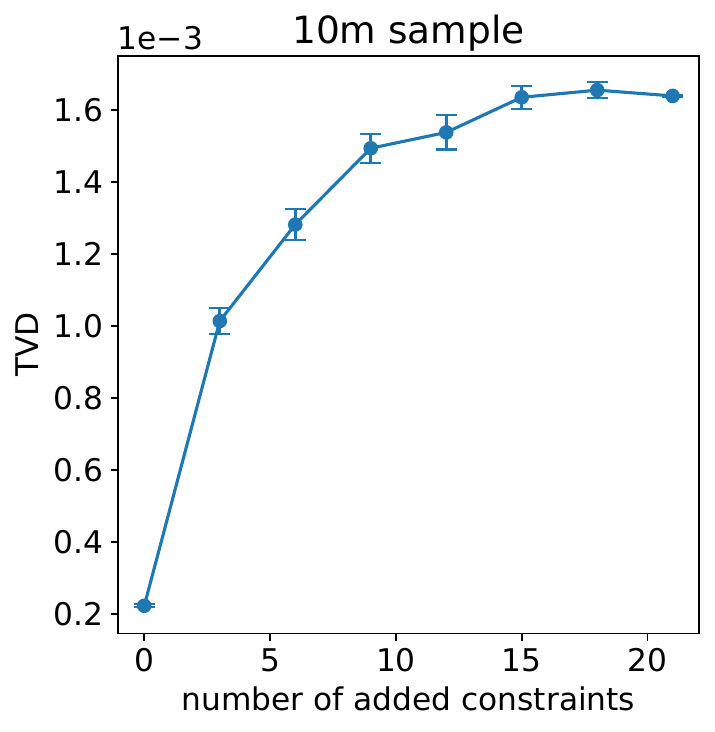}}
\end{subfigure}
\begin{subfigure}{0.23\textwidth}
\phantomcaption
\stackinset{l}{2pt}{t}{2pt}{\captiontext*}
    {\includegraphics[width=\textwidth]{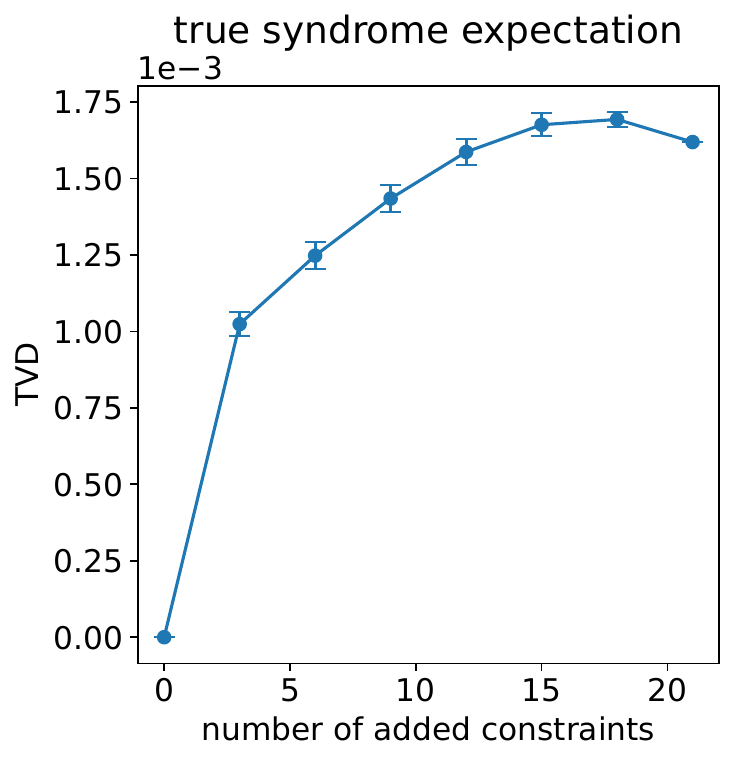}}
\end{subfigure}
\caption{Numerical simulation of the effect of extra uniformity constraints. The horizontal axis is the number of uniformity constraints acting on random error components. The vertical axis corresponds to the TVD of the learned error rate and the true noise. The uniform constraints benefit small sample sizes and large shot noise samples while degrading the quality of learning in the case of large sample sizes (weak shot noise).}
\label{fig: extra uniformity constraints numerics}
\end{figure}
In \cref{section: logical error rate and fidelity}, we add additional constraints to mitigate the overfitting due to shot noise for learning the experimental data, and we denote the learned result with $^*$. For each randomly selected qubit component, we add a new row to the $B$ matrix in \cref{equation: extra constraints}. If the $k$th error $e_k$ in $\mathcal{E}_{\Gamma}$ (corresponds to the $k$th column of $B$) is randomly selected for an additional row of $B$ (say the $i$th row), then the matrix elements of row $i$ are:
\begin{equation}
    B_{ij}=
    \begin{cases}
    1, & \text{if } i=k, \\
    \frac{-r(e_k)}{(|\mathcal{E}_{\Gamma}|-2)\,r(e_{j})} , & \text{else}.
\end{cases}
\label{equation: extra uniform constraints}
\end{equation}

Note that the constraints added here can be optimized depending on the strength of the shot noise. We did numerical simulations on the seven-qubit Steane code using single-qubit Pauli channels. Therefore, we have $21$ parameters corresponding to the $21$ Pauli error components. We see benefits in the large shot noise case, but the extra constraints will have a negative impact on the learning quality when the sample size exceeds a certain threshold. See \cref{fig: extra uniformity constraints numerics}.

\section{\label{appendix: more plots}Extra plots}
\begin{figure*}
    \centering
    \includegraphics[width=1\linewidth]{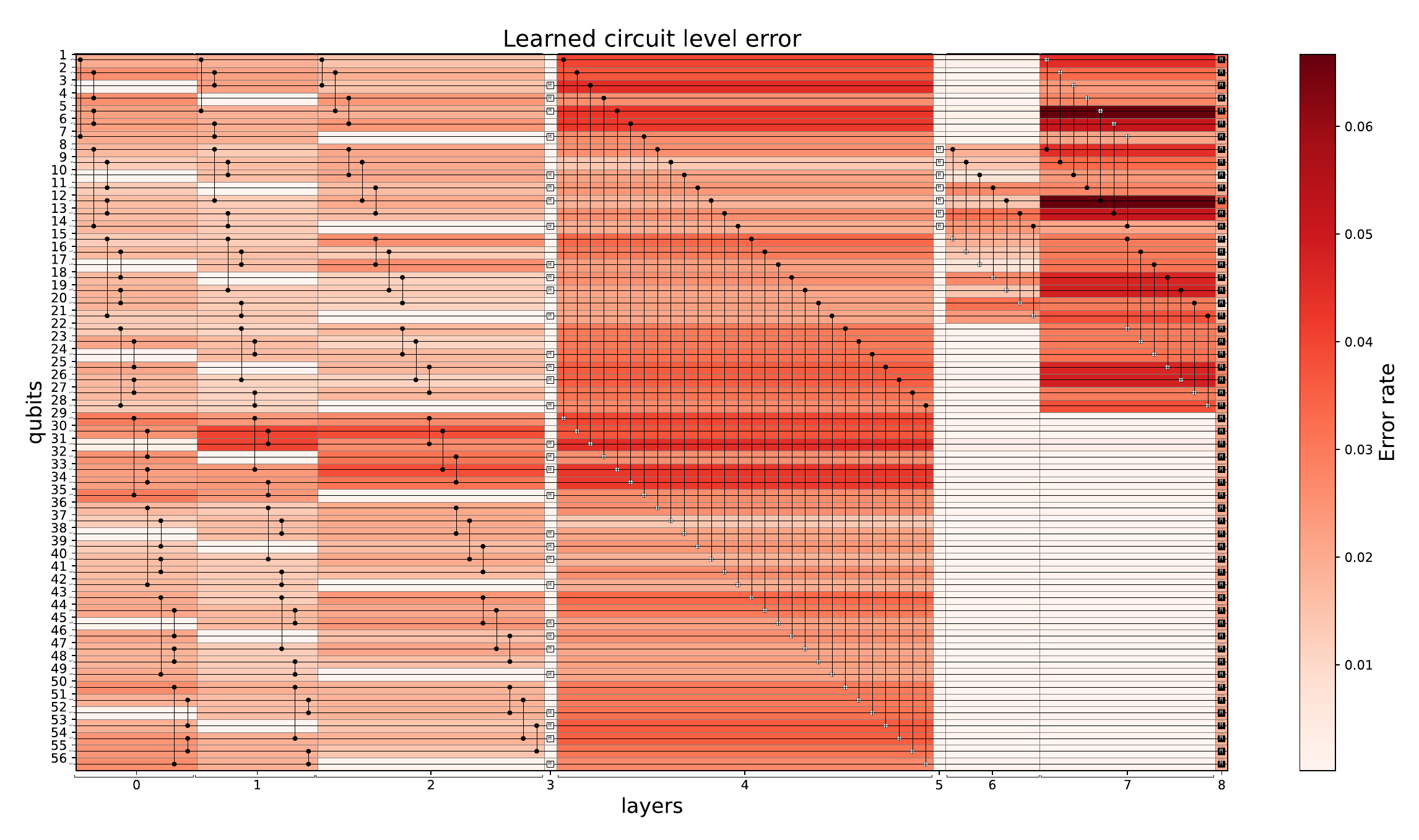}
    \caption{Physical Pauli error rate learning for the $\overline{ZZZZ}$ measurement circuits}
    \label{fig: logical GHZ physical learning appendix}
\end{figure*}
\begin{figure*}
    \centering
    \includegraphics[width=1\linewidth]{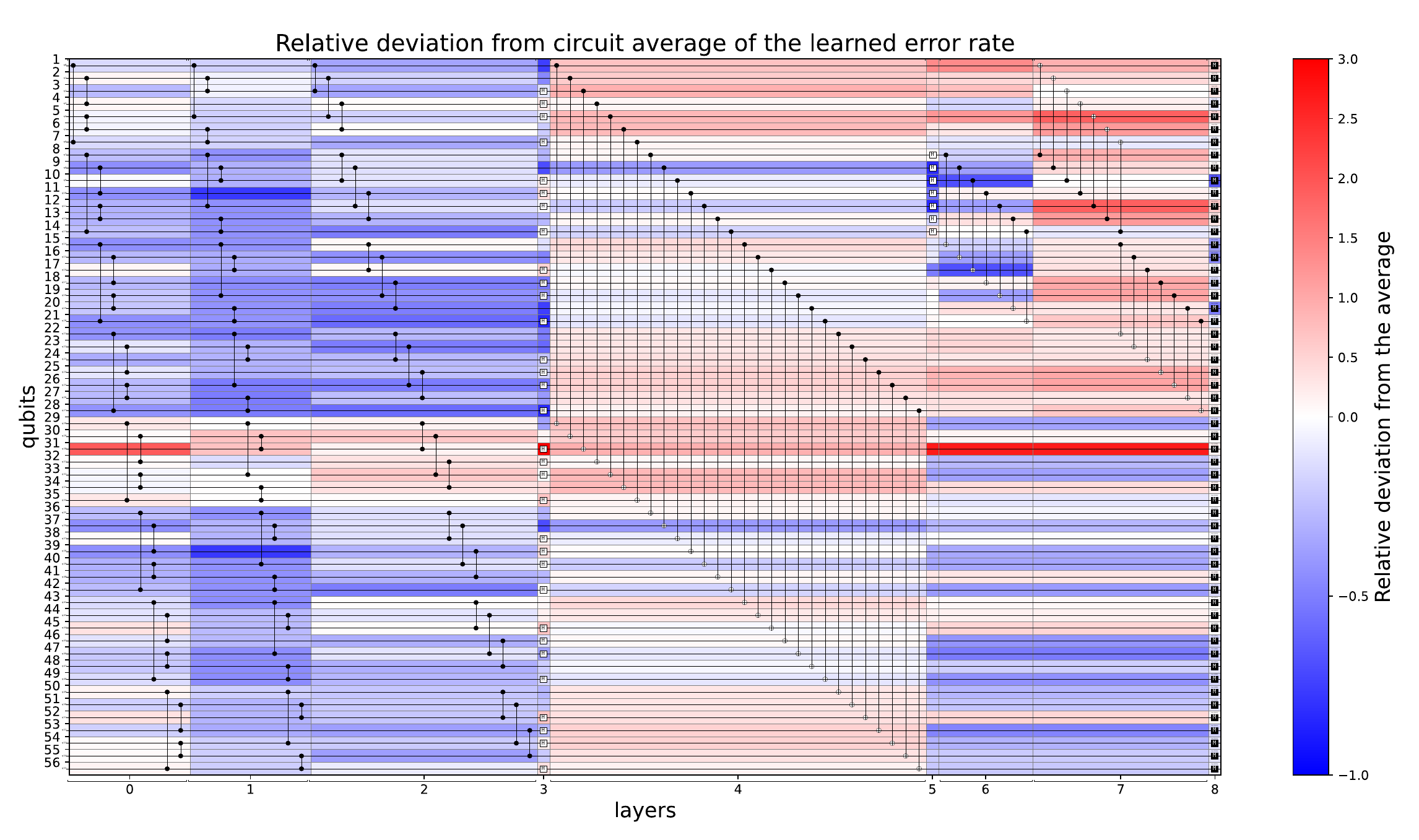}
    \caption{Relative deviation of gate error rates to the circuit average.}
    \label{fig: logical GHZ physical learning relative appendix}
\end{figure*}

\begin{figure*}[t]
\centering
\begin{subfigure}{0.3\textwidth}
    \phantomcaption
    \stackinset{l}{-2pt}{t}{-3pt}{\captiontext*}
    {\hspace{0.3cm}\vspace{1cm}\includegraphics[width=\textwidth]{plots/posts_selection_fidelity.pdf}}
    \label{fig: post_select_full}
\end{subfigure}
\hspace{0.5cm}
\begin{subfigure}{0.3\textwidth}
\phantomcaption
\stackinset{l}{-2pt}{t}{2pt}{\captiontext*}
    {\hspace{0.25cm}\includegraphics[width=\textwidth]{plots/partial_post_selection_flag_0.pdf}}
    \label{fig: flag_0_full}
\end{subfigure}
\hspace{0.5cm}
\begin{subfigure}{0.3\textwidth}
    \phantomcaption
    \stackinset{l}{-2pt}{t}{2pt}{\captiontext*}
    {\hspace{0.25cm}\includegraphics[width=\textwidth]{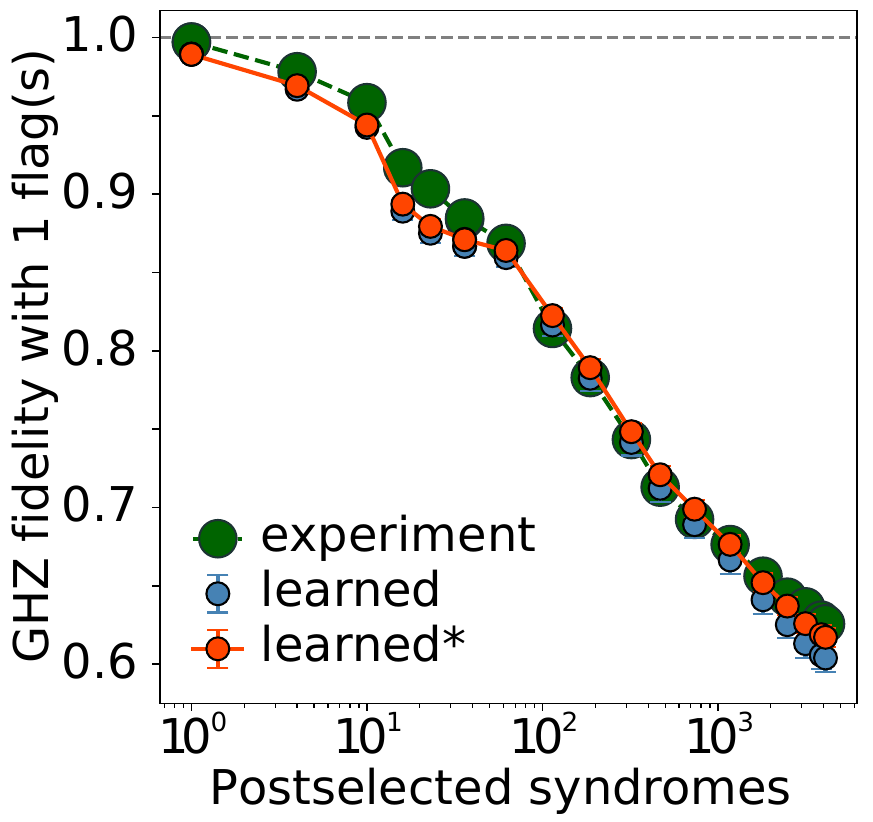}}
    \label{fig: flag_1_full}
\end{subfigure}
\begin{subfigure}{0.3\textwidth}
    \phantomcaption
    \stackinset{l}{-2pt}{t}{2pt}{\captiontext*}
    {\hspace{0.25cm}\includegraphics[width=\textwidth]{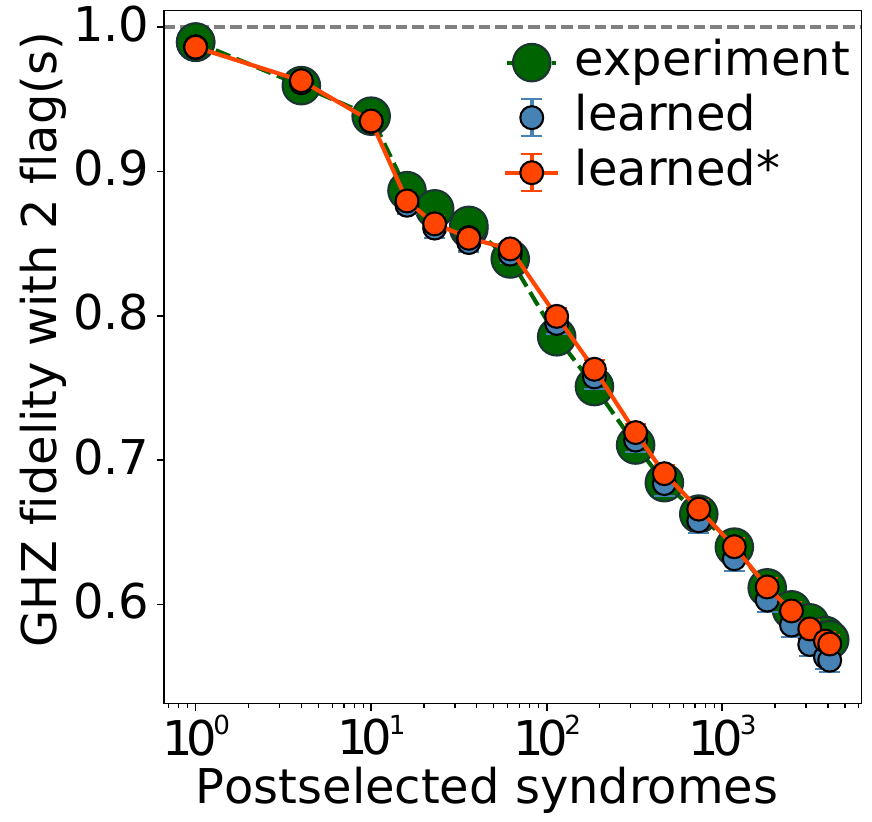}}
    \label{fig: flag_2_full}
\end{subfigure}
\hspace{0.5cm}
\begin{subfigure}{0.3\textwidth}
    \phantomcaption
    \stackinset{l}{-2pt}{t}{2pt}{\captiontext*}
    {\hspace{0.25cm}\includegraphics[width=\textwidth]{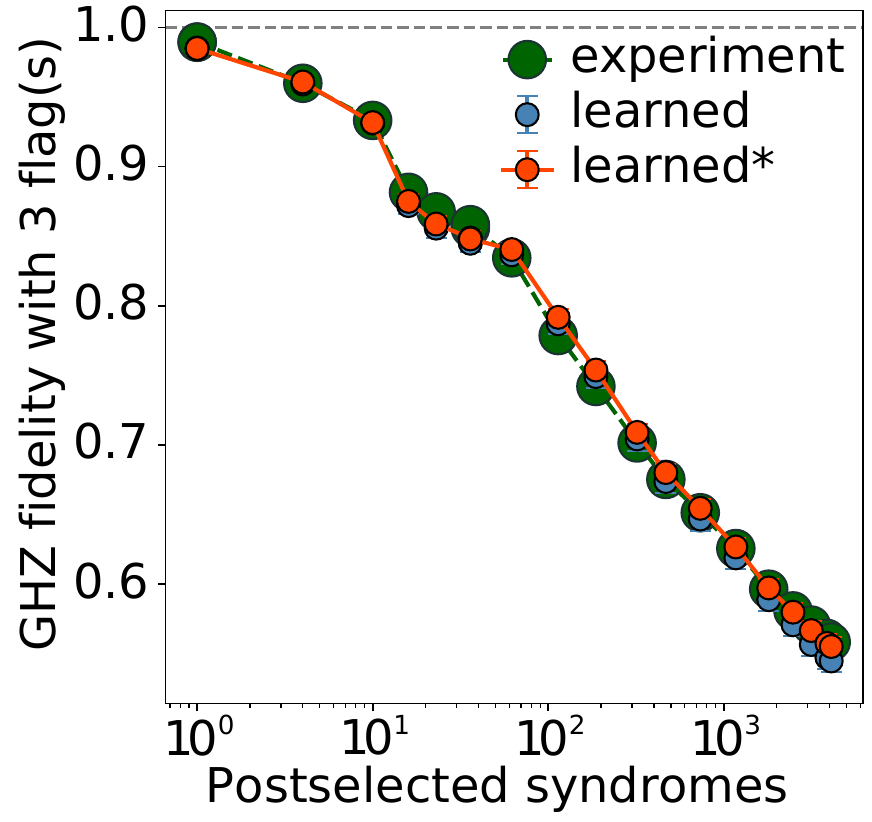}}
    \label{fig: flag_3_full}
\end{subfigure}
\hspace{0.5cm}
\begin{subfigure}{0.3\textwidth}
    \phantomcaption
    \stackinset{l}{-2pt}{t}{2pt}{\captiontext*}
    {\hspace{0.25cm}\includegraphics[width=\textwidth]{plots/partial_post_selection_flag_4.pdf}}
    \label{fig: flag_4_full}
\end{subfigure}
 \caption{Full version of \cref{fig: logical fidelity learning}. Comparison of logical direct fidelity estimation between experimental results and learned noise models from syndromes. (a) The error detecting results for the logical GHZ fidelity from experimental measurement, learned logical fidelity, and the learned fidelity with extra uniformity constraints, denoted with $^*$ (see \cref{appendix: Intraclass uniformity}). (b)-(f) The same comparison of logical fidelity with partial postselection on the syndrome is performed. Each subplot is conditioned on a different number of ancilla blocks that have a non-trivial syndrome (flag).}
\label{fig: logical fidelity learning (full)}
\end{figure*}

\begin{figure*}
    \centering
    \includegraphics[width=0.7\linewidth]{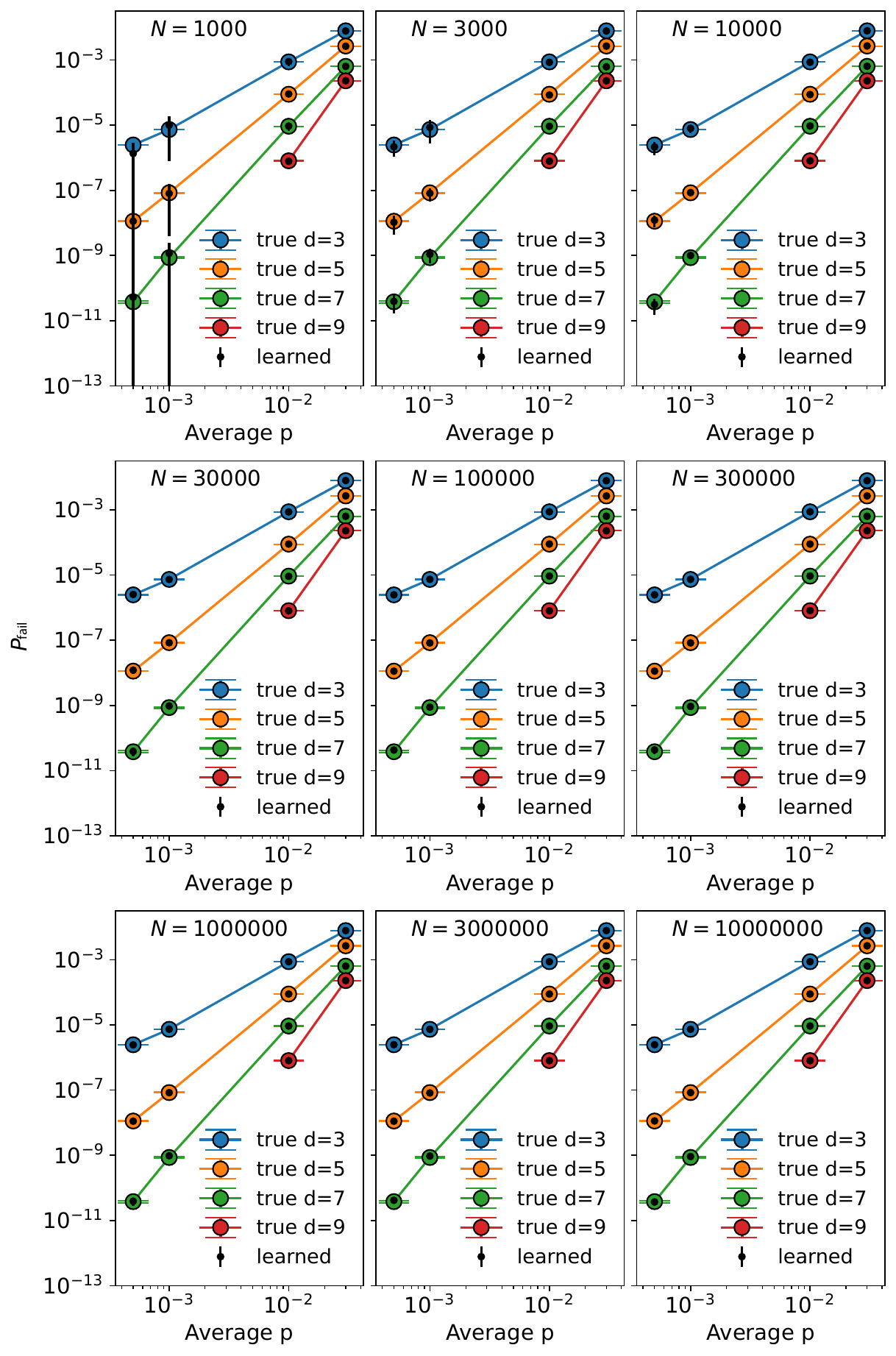}
    \caption{Surface code logical error rate learning in the code capacity setting for distances $d=3,5,7,9$. The learning is done by learning the physical error rate with different sample sizes $N$ of syndrome data.}
    \label{fig: surface_logical_full_appendix}
\end{figure*}

\end{appendix}
\end{document}